\newtheorem{theorem}{Theorem}[section]
\newtheorem*{theorem*}{Theorem}
\newtheorem{conjecture}[theorem]{Conjecture}
\newtheorem{corollary}[theorem]{Corollary}
\newtheorem{lemma}[theorem]{Lemma}
\newtheorem{prop}{Proposition}
\newtheorem{observation}{Observation}
\newtheorem*{definition}{Definition}
\newcommand{\gbar}{\overline G}
\newcommand{\fs}{\mathcal S}
\newcommand{\fz}{\mathcal Z}
\newcommand{\cC}{\mathcal C}
\newcommand{\cZ}{\mathcal Z}
\newcommand{\TC}{\mathcal{T}\hspace{-.2ex}\mathcal{C}}
\newcommand{\TTC}{\mathcal{T}\hspace{-.2ex}\mathcal{T}\hspace{-.2ex}\mathcal{C}}
\newenvironment{enumerate*}{%reduces item spacing
\begin{enumerate}
  \setlength{\itemsep}{5pt}
  \setlength{\parskip}{0pt}
  \setlength{\parsep}{0pt}
}{\end{enumerate}}
\newenvironment{itemize*}{%reduces item spacing
\begin{itemize}
  \setlength{\itemsep}{5pt}
  \setlength{\parskip}{0pt}
  \setlength{\parsep}{0pt}
}{\end{itemize}}
\title{Claw-free graphs, skeletal graphs, and a stronger conjecture on $\omega$, $\Delta$, and $\chi$}
\author{Andrew D.\ King\thanks{Corresponding author: {\tt andrew.d.king@gmail.com}, Departments of Mathematics and Computing Science, Simon Fraser University, Burnaby, BC.  Supported by a PIMS Postdoctoral Fellowship and the NSERC Discovery Grants of Pavol Hell and Bojan Mohar.}\ \ and Bruce A.\ Reed\thanks{School of Computer Science, McGill University, Montreal. Research supported in part by a Canada Research Chair.}}
\begin{document}

\maketitle

\begin{abstract}
The second author's $\omega$, $\Delta$, $\chi$ conjecture proposes that every graph satisties $\chi \leq \lceil \frac 12 (\Delta+1+\omega)\rceil$.  In this paper we prove that the conjecture holds for all claw-free graphs.  Our approach uses the structure theorem of Chudnovsky and Seymour.

Along the way we discuss a stronger local conjecture, and prove that it holds for claw-free graphs with a three-colourable complement.  To prove our results we introduce a very useful $\chi$-preserving reduction on homogeneous pairs of cliques, and thus restrict our view to so-called {\em skeletal} graphs.
\end{abstract}

%%%%%%%%%%%%%%%%%%%%%%%%%%%%%%%%%%%%%%%%%%%%%%%%%%%%%%%%%%%%%%%%%%%%%%%%%%%%%%%%
%%%%%%%%%%%%%%%%%%%%%%%%%%%%%%%%%%%%%%%%%%%%%%%%%%%%%%%%%%%%%%%%%%%%%%%%%%%%%%%%
%%%%%%%%%%%%%%%%%%%%%%%%%%%%%%%%%%%%%%%%%%%%%%%%%%%%%%%%%%%%%%%%%%%%%%%%%%%%%%%%
%%%%%%%%%%%%%%%%%%%%%%%%%%%%%%%%%%%%%%%%%%%%%%%%%%%%%%%%%%%%%%%%%%%%%%%%%%%%%%%%
%%%%%%%%%%%%%%%%%%%%%%%%%%%%%%%%%%%%%%%%%%%%%%%%%%%%%%%%%%%%%%%%%%%%%%%%%%%%%%%%
\section{Introduction}

In this paper the graphs we consider are simple, loopless, and finite.  The multigraphs we consider are finite and may have loops.  We say that a graph $G$ is {\em claw-free} if it does not contain the complete bipartite graph $K_{1,3}$ as an induced subgraph, i.e.\ if no vertex of $G$ has three mutually nonadjacent neighbours.  Claw-free graphs are a natural generalization of line graphs and quasi-line graphs (which we define in Section \ref{sec:classes}), and have been the subject of substantial interest since Parthasarathy and Ravindra's proof of the Strong Perfect Graph Conjecture for claw-free graphs \cite{parthasarathyr76}.  Chv\'atal and Sbihi \cite{chvatals88} offered the first deep insight into the structure of claw-free graphs, proving a decomposition theorem for Berge claw-free graphs that was later refined by Maffray and Reed \cite{maffrayr99}.

Chudnovsky and Seymour recently gave a refined description of the structure of all claw-free graphs \cite{cssurvey}.  Their structure theorems for claw-free graphs have led to a wealth of recent results, for example a new algorithm for the maximum-weight stable set problem \cite{oriolops08} and new results on the stable set polytope \cite{eisenbrandosv05,gallucciogv08}.

In this paper we give a new bound on the chromatic number $\chi(G)$ when $G$ is claw-free.  The bound is in terms of the maximum degree $\Delta(G)$ and the clique number $\omega(G)$.\\

\noindent{\bf Remark: }Since we first proved these results, which appear in the first author's thesis \cite{kingthesis}, several related results have appeared, e.g.\ \cite{chudnovskykps12, edwardsk12b}.  To minimize the length of this paper we take advantage of this wherever possible.

%%%%%%%%%%%%%%%%%%%%%%%%%%%%%%%%%%%%%%%%%%%%%%%%%%%%%%%%%%%%%%%%%%%%%%%%%%%%%%%%
%%%%%%%%%%%%%%%%%%%%%%%%%%%%%%%%%%%%%%%%%%%%%%%%%%%%%%%%%%%%%%%%%%%%%%%%%%%%%%%%
\subsection{$\omega$, $\Delta$, and $\chi$}

It is easy to show that $\omega(G)\leq \chi(G)\leq \Delta(G)+1$ for any graph.  The second author conjectured that modulo a round-up, $\chi$ is closer to its trivial lower bound than its trivial upper bound \cite{reed98}.  We use $\gamma(G)$ to denote $\chi(G)\leq \lceil \frac 12 (\Delta(G)+1+\omega(G))\rceil$.

\begin{conjecture}[Reed]\label{con:main}
For any graph $G$, $\chi(G)\leq \gamma(G)$.
\end{conjecture}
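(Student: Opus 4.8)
The plan is to attack Conjecture~\ref{con:main} by splitting the problem according to how close $\omega(G)$ is to $\Delta(G)+1$, and then interpolating between two very different colouring regimes; write $\del=\Delta(G)$ and $\om=\omega(G)$. When $\om$ is within a constant factor of $\del$ the target $\gamma(G)$ is essentially $\del+1$, so the task is only to shave a few colours below the trivial bound and one can hope to lean on Brooks-type arguments. When $\om$ is small relative to $\del$ the target is close to $\tfrac12\del$, far below $\del+1$, and a genuinely sparse-colouring argument of the kind used for triangle-free and locally sparse graphs is needed. A uniform treatment must therefore produce, for every $G$, a partition of $V(G)$ whose parts are either near-cliques — each almost a clique on at most $\om$ vertices — or sets in which every vertex sees many non-edges inside its neighbourhood.

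As a warm-up I would first establish the fractional relaxation $\chi_f(G)\le\tfrac12(\del+1+\om)$, which follows by assembling a fractional colouring from a suitable fractional weighting of maximal cliques; this confirms that the bound is morally correct and isolates integrality as the source of all the difficulty. Next I would build the dense/sparse decomposition just described and colour the two sides separately: on the dense side, list-colour each near-clique using roughly $\om$ colours and argue that the boundary between near-cliques is sparse enough to be absorbed; on the sparse side, run a semi-random (R\"odl-nibble) procedure — reserve a random palette, colour a random sub-assignment, iterate — showing that because each vertex's neighbourhood misses many edges its list of available colours stays large, so that the residual chromatic number is about $\tfrac12\del$. Finally one glues the two partial colourings, spending a fresh block of colours on the dense parts, to bring the total to at most $\lceil\tfrac12(\del+1+\om)\rceil$.

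The main obstacle — and the reason this has been open since \cite{reed98} — is that the probabilistic step, as currently understood, yields only $\chi(G)\le\lceil(1-\varepsilon)(\del+1)+\varepsilon\om\rceil$ for some small fixed $\varepsilon>0$, not the exact coefficient $\tfrac12$. Closing this gap would require either a much sharper analysis of the nibble that controls the variance of the list sizes throughout all iterations, not merely their expectations, or a fundamentally different near-optimal local argument — an entropy-compression or iterated-reservation scheme that does not leak constants. This is the step I expect to be genuinely hard, and I do not see how to carry it out in full generality with present techniques.

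A realistic intermediate target, and the one pursued in this paper, is to verify the conjecture on structurally restricted classes where the ad hoc dense/sparse split can be replaced by an exact decomposition. For claw-free $G$ the Chudnovsky--Seymour structure theorem \cite{cssurvey} partitions the problem into basic pieces — line graphs, quasi-line graphs, circular-interval graphs, and the various antiprismatic and ``hat'' parts — and on each piece $\gamma(G)$ can be attained exactly, typically by reducing along homogeneous pairs of cliques to a skeletal graph and then colouring greedily or via a translation into edge-colouring. The general conjecture, by contrast, seems to need a new idea precisely at the $\varepsilon$-to-$\tfrac12$ step above.
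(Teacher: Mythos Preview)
The statement you were asked to prove is Conjecture~\ref{con:main}, which is \emph{open}; the paper does not prove it and does not claim to. What the paper proves is Theorem~\ref{thm:main}, the special case for claw-free graphs, and Theorem~\ref{thm:local}, the local version for three-cliqued claw-free graphs. So there is no ``paper's own proof'' to compare against here.

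Your write-up is not a proof and you are candid about this: the probabilistic step you describe is exactly the $\varepsilon$-versus-$\tfrac12$ gap that keeps the conjecture open, and you say so. That part is an accurate summary of the state of the art rather than a proof attempt, and it should be labelled as such.

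Your final paragraph, on the other hand, is a fair one-sentence sketch of what the paper actually does for claw-free graphs: use the Chudnovsky--Seymour structure theorem to break into basic pieces, reduce along nonskeletal homogeneous pairs of cliques to skeletal graphs, and then handle each piece (antiprismatic thickenings, three-cliqued graphs via hex-joins, compositions of strips via generalized 2-joins, icosahedral thickenings) by direct colouring arguments. If your intent was to propose a proof of the \emph{claw-free} case, that last paragraph is the right high-level plan and matches the paper's approach; but it is only a plan, and the substance of the paper lies in carrying out each of those pieces, particularly the extension lemmas across the five types of generalized 2-join and the good-triad arguments for the three-cliqued base classes. If your intent was to prove the full conjecture, you have correctly identified why no one can.
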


In 2008 the first author proposed a local strengthening of this conjecture \cite{kingthesis}.  Before stating it we introduce some more notation.  For a vertex $v$, let $\tilde N(v)$ denote the closed neighbourhood of $v$, i.e.\ $\{v\}\cup N(v)$.  For $S\subseteq V(G)$, let $G[S]$ denote the subgraph of $G$ induced on $S$.  Let $\omega(v)$ denote the maximum size of a clique containing $v$, i.e.\ $\omega(G[\tilde N(v)])$.  Finally, let $\gamma_\ell(v)$ denote $\gamma(G[\tilde N(v)])$ and let $\gamma_\ell(G)$ denote $\max_{v\in V(G)}\gamma_\ell(v)$.

\begin{conjecture}[King]\label{con:local}
For any graph $G$, $\chi(G) \leq \gamma_\ell(G)$.
\end{conjecture}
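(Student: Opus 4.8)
The plan is to run the standard minimal-counterexample argument, but localised. Suppose $G$ is a vertex-minimal graph with $\chi(G)>\gamma_\ell(G)$, and set $k=\gamma_\ell(G)$, so $\chi(G)\ge k+1$. Minimality forces $G$ to be vertex-$\chi$-critical, hence $\delta(G)\ge\chi(G)-1\ge k$; in particular every $\tilde N(v)$ has at least $k+1$ vertices, and $v$ has degree at least $k$ inside $G[\tilde N(v)]$. Unwinding $\gamma$, the assumption $\gamma_\ell(v)\le k$ for all $v$ says precisely that $\Delta(G[\tilde N(v)])+\omega(v)\le 2k-1$ for every $v$, so the whole problem reduces to showing that every vertex-$\chi$-critical graph has a vertex $v$ with $\Delta(G[\tilde N(v)])+\omega(v)\ge 2\chi(G)-2$ --- a vertex whose closed neighbourhood is simultaneously ``locally dense somewhere'' and ``locally cliquey''. (The passage from critical graphs to all graphs is routine, since $\gamma$, and hence $\gamma_\ell$, is monotone under subgraphs while $\chi(G)$ is witnessed by a critical subgraph.) Before anything else I would strip the minimal counterexample of reducible configurations: twins, clique cutsets, and --- the main device of this paper --- homogeneous pairs of cliques all admit a $\chi$-preserving reduction, and one verifies that none of them increases $\gamma_\ell$, so the counterexample may be taken \emph{skeletal}. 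Together with Gallai's theorem that the subgraph of a critical graph spanned by the vertices of degree $\chi-1$ has only complete graphs and odd cycles as blocks, this pins down the local picture around exactly the vertices where we must find the asserted density.

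With a skeletal critical counterexample in hand, the argument splits on $\Delta(G)$. For bounded $\Delta$ one finishes by a finite case analysis, using Brooks' Theorem for the extremal cases $\chi=\Delta+1$ and $\chi=\Delta$ and the skeletal hypothesis to dispose of the small configurations that remain. For large $\Delta$ the natural route is to localise the semi-random colouring method of Molloy and Reed: a vertex $v$ with $\Delta(G[\tilde N(v)])+\omega(v)$ small has a ``spread-out'' neighbourhood, so one reserves colours around the dense and the low-clique regions and completes the colouring via the Lov\'asz Local Lemma. For claw-free $G$ this probabilistic step becomes unnecessary: a skeletal claw-free graph has, by the Chudnovsky--Seymour structure theorem, one of a short list of forms --- essentially line graphs, circular-interval graphs, and a bounded family of basic exceptional graphs --- and on each of these $\chi\le\gamma_\ell$ can be checked directly (for line graphs this is essentially an edge-colouring estimate of Goldberg--Seymour type; for circular-interval graphs a direct count). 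This is the mechanism that will yield Conjecture~\ref{con:local} for claw-free graphs with three-colourable complement, and Conjecture~\ref{con:main} for all claw-free graphs.

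The step I expect to be the genuine obstacle is the large-$\Delta$, unstructured case. For an arbitrary graph there is no decomposition theorem to substitute for Chudnovsky--Seymour, and the probabilistic method --- though it comfortably beats the trivial bound $\chi\le\Delta+1$ --- is not known to attain the constant $\tfrac12$ that Conjecture~\ref{con:local}, or even the weaker Conjecture~\ref{con:main}, demands; the best unconditional results only replace $\tfrac12$ by $1-\varepsilon$ for a tiny $\varepsilon$. Consequently I do not expect a proof of Conjecture~\ref{con:local} for all graphs to be within reach of present techniques: the realistic plan is to push the reductions through to the skeletal case and then invoke the structure theorem, which settles the conjecture for claw-free graphs --- with the three-colourable-complement restriction for the local form, and unrestricted for Reed's form.
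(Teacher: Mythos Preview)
The statement is an open \emph{conjecture}; the paper does not prove it and offers no proof to compare against. You recognise this yourself in your final paragraph, and your assessment --- that the general case is beyond current methods while the claw-free restricted forms are tractable via skeletal reduction plus Chudnovsky--Seymour --- is correct and matches the paper's actual achievements (Theorems~\ref{thm:main} and~\ref{thm:local}). So as a ``proof proposal'' for the conjecture as stated, yours is the right non-proof.

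Your sketch of the claw-free argument, however, substantially understates what is involved. The structure theorem does not hand you ``line graphs, circular-interval graphs, and a bounded family of exceptions'' on which to check $\chi\le\gamma_\ell$ directly; rather, a skeletal claw-free graph is either an antiprismatic thickening, three-cliqued, an icosahedral thickening, quasi-line, or admits one of five flavours of generalised 2-join (canonical interval, antihat, strange, pseudo-line, gear) whose far side is a structured strip. The paper's real engine for the three-cliqued case (Theorem~\ref{thm:local}) is not a ``direct check'' but a hex-join decomposition into six base classes $\TTC_1,\ldots,\TTC_6$, together with a repeated search for a \emph{good triad}: a stable triple whose removal drops $\gamma_\ell$ by one, so that induction applies. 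Four of the six base classes always yield a good triad; the line-graph-type class $\TTC_1$ sometimes does not, and there the argument falls back on an ad hoc edge-deletion that preserves $\chi$. None of Gallai's criticality structure, Brooks-type case splits on $\Delta$, or semi-random colouring plays any role. The 2-join cases (for Theorem~\ref{thm:main}) are handled by extending partial colourings across the join, strip-by-strip, with delicate bookkeeping of an auxiliary invariant $\gamma^j$; this is where the paper loses locality and only gets $\chi\le\gamma$ rather than $\chi\le\gamma_\ell$ for general claw-free graphs, the pseudo-line strips being the obstruction.
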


Both conjectures hold in the fractional setting.  Reed proved that any graph satisfies $\chi_f(G)\leq \frac 12(\Delta(G)+1+\omega(G))$ \cite{molloyrbook}.  McDiarmid observed that the proof could be modified to give a stronger result:

\begin{theorem}\label{thm:fractional}
For any graph $G$, $\chi_f(G) \leq \max_{v\in V(G)} \frac 12 (d(v)+1+\omega(v))$.
\end{theorem}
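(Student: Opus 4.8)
The plan is to localise Reed's proof of the global bound $\chi_f(G)\le\frac12(\Delta+1+\omega)$ and check that, carried out vertex by vertex, it already yields the sharper statement. The first step is a standard reformulation: $\chi_f(G)\le t$ holds precisely when there is a probability distribution $\mathcal D$ on the independent sets of $G$ with $\Pr_{I\sim\mathcal D}[v\in I]\ge 1/t$ for every vertex $v$ (scaling such a distribution by $t$ gives a fractional colouring of weight $t$, and normalising an optimal fractional colouring gives such a distribution). Writing $K=\max_{v}\frac12(d(v)+1+\omega(v))$, one has $1/K=\min_{v}\frac{2}{d(v)+1+\omega(v)}$, so it suffices to produce a single distribution $\mathcal D$ with $\Pr_{I\sim\mathcal D}[v\in I]\ge\frac{2}{d(v)+1+\omega(v)}$ for each $v$. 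This per-vertex inequality is exactly the shape of what Reed's argument establishes, with $\Delta$ and $\omega$ playing the roles of $d(v)$ and $\omega(v)$; McDiarmid's observation is simply that the global quantities are never needed.

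Next I would construct $\mathcal D$ via the recursive random procedure behind Reed's argument: while $G$ is nonempty, select a maximal clique $C$ (with an appropriate weighting, to be pinned down below), select a vertex $x\in C$ uniformly at random, place $x$ in the independent set, delete $N[x]$, and recurse on $G-N[x]$; this outputs an independent set $I$ without ever inspecting a global parameter. One then bounds $\Pr[v\in I]$ by induction on $|V(G)|$, conditioning on the first vertex $x$ placed in $I$: if $x=v$ then $v\in I$; if $x$ is a neighbour of $v$ then $v$ is lost; otherwise we recurse on $G'=G-N[x]$, which still contains $v$, and there $d_{G'}(v)\le d_G(v)$ and $\omega_{G'}(v)\le\omega_G(v)$, so the inductive estimate $\Pr[v\in I\mid G']\ge\frac{2}{d_{G'}(v)+1+\omega_{G'}(v)}$ is at least $\frac{2}{d_G(v)+1+\omega_G(v)}$. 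Assembling the three cases reduces the inductive step to controlling how the first-vertex distribution apportions mass among the vertices of $N[v]$; this is the one place where a clique through $v$ enters, and one uses a largest such clique, which is why $\omega(v)$ and not $\omega$ appears.

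The localisation is then essentially bookkeeping: every occurrence of $\Delta$ in the analysis is an upper bound for some $d(u)$, every occurrence of $\omega$ is an upper bound for the size of a clique through some $u$, and along the recursion these local quantities only decrease, so keeping the estimates vertex-indexed replaces $\Delta+1+\omega$ by $d(v)+1+\omega(v)$ throughout and delivers Theorem~\ref{thm:fractional}.

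The main obstacle is getting the induction to close. The naive invariant, that $\Pr[v\in I]\ge\frac{2}{d(v)+1+\omega(v)}$ for the current graph, is by itself too weak: on an odd cycle it holds only with equality, and only because removing a closed neighbourhood leaves a path on which the bound carries slack that the invariant discards. So the argument must carry a stronger bookkeeping device, either a strengthened hypothesis that tracks this slack, or a precise choice of the random-clique weighting that forces $\Pr[x=v\mid x\in N[v]]\ge\frac{2}{d(v)+1+\omega(v)}$ directly; identifying and maintaining the right invariant is the crux. The reassuring point, and the reason the localisation is automatic once the invariant is in hand, is that any invariant strong enough to prove the global statement is already phrased purely in terms of local degrees and local clique sizes.
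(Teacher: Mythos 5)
The paper itself does not prove Theorem~\ref{thm:fractional}; it records McDiarmid's observation and defers the proof to \S 2.2 of the first author's thesis, so there is no in-text argument to compare against and your proposal has to stand on its own. As written it does not, and you say as much: you set up the natural recursive random-stable-set procedure, observe that the naive inductive invariant $\Pr[v\in I]\ge\frac{2}{d(v)+1+\omega(v)}$ fails to close because it discards the drop in $d(v)$ after deleting $N[x]$, and then leave the essential step --- ``identifying and maintaining the right invariant'' --- unresolved. A sketch that flags the hard part and stops is not a proof.

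To make the gap concrete: if one only uses the monotonicity $d_{G'}(v)\le d_G(v)$ and $\omega_{G'}(v)\le\omega_G(v)$ that you invoke, the inductive step reduces to requiring $\Pr[x=v\mid x\in N[v]]\ge\frac{2}{d(v)+1+\omega(v)}$ of the first-vertex distribution, which is precisely the second alternative you float. But this cannot be arranged in general. On $C_5$ every vertex has $d(v)=\omega(v)=2$, so the requirement is $\Pr[x=v\mid x\in N[v]]\ge\frac25$; writing $q_j$ for the probability of selecting edge $e_j=\{j,j+1\}$ and taking $x$ uniform in the chosen edge, the constraint at vertex $i$ becomes $q_{i-1}+q_i\ge 2q_{i-2}+2q_{i+1}$, and summing over $i$ yields $2\ge 4$, a contradiction. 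So the per-step inequality is infeasible for \emph{any} clique weighting, and the extra slack coming from the genuine decrease of $d_{G-N[x]}(v)$ is not an optional refinement but the whole content of the argument. You would need to formulate the strengthened hypothesis that tracks this decrease and prove that it propagates through the recursion. The closing assertion that ``any invariant strong enough to prove the global statement is already phrased purely in terms of local degrees and local clique sizes'' is exactly McDiarmid's observation, i.e., the thing to be verified; stating it is not the same as proving it.
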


The full proof appears in \cite{kingthesis}, \S 2.2.  Thus we know that for any graph, $$\chi_f(G)\leq \gamma_\ell(G)\leq \gamma(G)\leq \Delta(G)+1.$$

Conjecture \ref{con:main} was proved for line graphs by King, Reed, and Vetta \cite{kingrv07}; we extended this to all quasi-line graphs \cite{kingr08}.  Chudnovsky, King, Plumettaz and Seymour recently proved Conjecture \ref{con:local} for line graphs \cite{chudnovskykps12}; the reductions from \cite{kingr08} also extend this result to all quasi-line graphs.
\begin{theorem}\label{thm:quasiline}
Given a quasi-line graph $G$, we can colour $G$ using at most $\gamma_\ell(G)$ colours in polynomial time.
\end{theorem}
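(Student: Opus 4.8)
The plan is to deduce the theorem by combining the local line-graph result of \cite{chudnovskykps12} with the Chudnovsky--Seymour structure theorem for quasi-line graphs \cite{cssurvey} and a short chain of $\chi$-preserving reductions, essentially the local analogues of those used for Conjecture~\ref{con:main} in \cite{kingr08}. Since $\gamma_\ell$ is monotone under taking induced subgraphs, it is enough to produce, for every quasi-line $G$, a proper colouring with $\gamma_\ell(G)$ colours; every step below is a polynomial-time operation and every structural decomposition it invokes is polynomial-time computable, so the algorithmic claim comes for free once the combinatorial bound is established. The one subtlety to watch throughout is that each reduction must respect the \emph{local} parameter $\gamma_\ell$ (not merely $\chi$ or the global $\gamma$), since that is what lets us feed the output to \cite{chudnovskykps12}.

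First I would dispose of the two generic decompositions. If $G$ has a clique cutset $K$ splitting it into $G_1$ and $G_2$ with $G_1\cap G_2=K$, colour $G_1$ and $G_2$ recursively, permute the colour classes of $G_2$ so the two colourings agree on $K$, and glue; as $\gamma_\ell(G_i)\le\gamma_\ell(G)$ nothing is lost, so we may assume $G$ has no clique cutset. Next, apply the $\chi$-preserving reduction on homogeneous pairs of cliques developed in this paper: contracting such a pair yields a smaller quasi-line graph whose $\gamma_\ell$ does not exceed that of $G$ and any $\gamma_\ell$-colouring of which lifts back to $G$, and iterating terminates after polynomially many steps, leaving a \emph{skeletal} quasi-line graph with no clique cutset.

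Now I would invoke the structure theorem on this reduced $G$, which leaves two cases. In the first, $G$ is built by composition from linear interval strips; once $G$ is skeletal and clique-cutset-free the strips are essentially trivial, so $G$ is (up to a further harmless local reduction on any non-trivial strip) the line graph of a multigraph, and \cite{chudnovskykps12} colours it with $\gamma_\ell(G)$ colours — here one only has to check that passing between the quasi-line composition and the underlying multigraph preserves the local degree-and-clique data that governs $\gamma_\ell$. In the second case $G$ is a fuzzy circular interval graph, and this is the heart of the matter.

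For the fuzzy circular interval case I would argue directly: such a $G$ is an interval graph wrapped around a circle with a bounded amount of ``fuzziness,'' so one can cut the circle at a well-chosen point to expose a linear (hence perfect) interval part together with a controlled set of cliques straddling the cut, colour the perfect part optimally, and patch across the cut, using the fractional bound of Theorem~\ref{thm:fractional} to certify that the number of colours stays at most $\gamma_\ell(G)$. I expect this to be the main obstacle: the circular regime is exactly where $\chi$ can climb to the top of the interval between $\omega$ and $\Delta+1$, so the patching must be tight, and the fuzziness prevents the straddling cliques from nesting as cleanly as in a genuine interval graph, making the bookkeeping near the cut delicate. Once this case is in hand, reversing the homogeneous-pair contractions and the clique-cutset recursion — each of which only permutes or re-expands colour classes — yields the required $\gamma_\ell(G)$-colouring of the original graph in polynomial time.
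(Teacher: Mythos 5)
The paper does not include a self-contained proof of Theorem~\ref{thm:quasiline}; it cites the line-graph result of \cite{chudnovskykps12} together with ``the reductions from \cite{kingr08}.'' Your reconstruction captures the overall framing (clique cutsets, then homogeneous-pair reduction, then structure theorem), but it goes wrong at precisely the two places where the real work is.

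First, for the composition-of-strips case you assert that once $G$ is skeletal and clique-cutset-free ``the strips are essentially trivial, so $G$ is~\ldots the line graph of a multigraph.'' This is not true. A linear interval strip of arbitrary size with trivial boundary cliques (say, a long path) produces no homogeneous pair of cliques and no clique cutset, yet is nowhere near a line-graph edge. Being skeletal only cleans up the \emph{boundary} structure of the strips; it does not collapse their interior. The machinery actually needed here is the canonical-interval 2-join extension lemma (Lemma~\ref{lem:quasilinemce2} in this paper, from \cite{kingr08}/\cite{chudnovskykps12}): given an $l$-colouring of $G_1$ with $l\ge\gamma_\ell^j(H_2)$, it extends the colouring across the strip without increasing the colour count. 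The argument then recurses on 2-joins, not on a one-shot reduction to a line graph, and its base case is the circular-interval case rather than a line graph.

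Second, you flag the (fuzzy) circular interval case as ``the heart of the matter'' and propose an ad hoc cut-the-circle patching argument. In fact this is the easy case. After the skeletal reduction the ``fuzziness'' disappears and you have a genuine circular interval graph, for which Niessen and Kind's round-up property (Lemma~\ref{lem:nk}, $\chi=\lceil\chi_f\rceil$) combined with Theorem~\ref{thm:fractional} gives $\chi(G)\le\lceil\chi_f(G)\rceil\le\gamma_\ell(G)$ immediately, and Shih--Hsu (Lemma~\ref{lem:shihhsu}) supplies the polynomial-time colouring. Your cut-and-patch argument would be hard to push through precisely for the reasons you list, and it is unnecessary. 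Taken together, your sketch inverts where the difficulty actually lies: the delicate part is extending colourings across 2-joins, not the circular interval base case.
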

Even more recently, Edwards and King proved that a stronger local version holds in the fractional setting and for quasi-line graphs \cite{edwardsk12b}, and conjectured that it always holds:
 
\begin{conjecture}[Edwards and King]
For any graph $G$, $\chi(G) \leq \max_{uv\in E(G)}\lceil \frac 12 (\gamma_\ell(u)+\gamma_\ell(v))\rceil$.
\end{conjecture}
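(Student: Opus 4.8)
The natural plan is to follow the minimal-counterexample paradigm behind the known partial results, using the edge-averaging in the statement through a stability property of ``heavy'' vertices.  Suppose the conjecture fails and let $G$ be a counterexample with $|V(G)|$ minimum; put $k=\max_{uv\in E(G)}\lceil\frac12(\gamma_\ell(u)+\gamma_\ell(v))\rceil$, so $\chi(G)=k+1$.  Since maximum degree and clique number are monotone under taking induced subgraphs, $\gamma_\ell(v)$ computed inside any induced subgraph is at most its value in $G$; hence every proper induced subgraph of $G$ is $k$-colourable, so $G$ is $(k{+}1)$-vertex-critical and $\delta(G)\ge k$.  Call $v$ \emph{heavy} if $\gamma_\ell(v)\ge k+1$ and let $H$ be the set of heavy vertices.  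Two facts drive the argument.  First, $H$ is stable: for heavy $v$ and $u\in N(v)$, $\lceil\frac12(\gamma_\ell(u)+\gamma_\ell(v))\rceil\le k$ forces $\gamma_\ell(u)\le k-1$.  Second, no heavy vertex has a complete neighbourhood, since otherwise $\tilde N(v)$ would be a clique of size $d(v)+1\ge k+1$, placing every neighbour of $v$ in a $(k{+}1)$-clique and hence making it heavy, contradicting stability.  (If $H=\emptyset$ then $\gamma_\ell(G)\le k$ and $G$ would already refute Conjecture \ref{con:local}; so a proof of the present statement necessarily contains a proof of Conjecture \ref{con:local}, and we may assume $H\ne\emptyset$.)

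Next I would normalize $G$: using the $\chi$-preserving reduction on homogeneous pairs of cliques introduced in this paper we may assume $G$ is skeletal, which together with $(k{+}1)$-criticality (Gallai-tree structure on the degree-$k$ vertices, no vertex whose closed neighbourhood is contained in a neighbour's, Kempe-chain constraints) controls the local structure around the non-heavy vertices.  The colouring strategy is then to take a proper $k$-colouring $c$ of $G-H$, which exists by minimality, and extend it across $H$.  Because $H$ is stable, extending amounts to arranging that each heavy $v$ has a colour of $\{1,\dots,k\}$ missing from $c(N(v))$ --- equivalently, that $c$ repeats some colour inside each $N(v)$, which is feasible locally because $\alpha(G[N(v)])\ge2$ by the second fact.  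The crux is to secure these repeats simultaneously; I would cast it as a feasibility problem on the ``colour-deficiency'' hypergraph of the heavy vertices, solving it by Hall-type/flow arguments when the heavy neighbourhoods are well spread and by local Kempe swaps along $2$-coloured paths leaving each $N(v)$ when they overlap, using $d(v)\le\Delta$ and the clique of size $\omega(v)$ in $\tilde N(v)$ to bound the obstruction.

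The argument then splits on neighbourhood density.  In the dense regime --- some heavy $v$ with $\omega(v)$ close to $d(v)$, hence $\gamma_\ell(v)$ close to $\Delta+1$ --- greedy colouring in a degeneracy order, together with the fact that a large clique in a neighbourhood ``pays for itself'' in colours (the mechanism behind the known cases with $\omega$ near $\Delta$), should close the gap.  In the sparse regime --- $\omega(v)$ bounded away from $d(v)$ for every $v$ --- one would upgrade the fractional bound of Theorem \ref{thm:fractional} to an integral one by the semi-random (nibble) method in the style of Molloy and Reed, reserving a small palette near each heavy vertex so that the extension step goes through and using the skeletal and critical structure to control dependencies.

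The main obstacle, and the reason this remains a conjecture, is the sparse regime: delivering the \emph{exact} bound $\chi\le\max_v\lceil\frac12(d(v)+1+\omega(v))\rceil$ there is essentially the local form of Reed's conjecture, for which current probabilistic techniques give only $\chi\le(1-\varepsilon)(\Delta+1)$-type savings rather than the sharp additive constant, and a naive nibble analysis loses precisely the slack one cannot afford.  Thus I expect the realistic route to be: first establish Conjecture \ref{con:local} (or the special cases one needs), after which the genuinely new content of the present statement is the stability of $H$ plus the Hall-/Kempe-type extension across $H$, which is exactly what upgrades a local ($\max_v$) bound to the stronger edge-averaged ($\max_{uv\in E}$) one.
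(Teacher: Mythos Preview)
This statement is a \emph{conjecture} in the paper, not a theorem; the paper offers no proof, and the conjecture remains open. Your proposal is not a proof either, and you concede as much: you correctly observe that the case $H=\emptyset$ (no heavy vertex) is precisely Conjecture~\ref{con:local}, which is open in general, and that the sparse regime of your density split is where current techniques fall short of the exact bound. So the honest summary is that your outline reduces the Edwards--King conjecture to Conjecture~\ref{con:local} plus an extension lemma, and neither piece is established.

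On the extension piece specifically: your plan to $k$-colour $G-H$ and then extend across the stable set $H$ is not obviously feasible even granting Conjecture~\ref{con:local}. Each heavy $v$ has $d(v)\ge k$, so extension requires a colour repeated in $N(v)$; knowing $\alpha(G[N(v)])\ge 2$ only says a repeat is \emph{possible} at $v$, not that a single $k$-colouring of $G-H$ can be massaged, via Kempe chains or a Hall-type argument, to achieve repeats at \emph{all} heavy vertices simultaneously. The interactions between Kempe swaps at different heavy vertices are uncontrolled in your sketch, and no hypergraph or Hall condition is stated, let alone verified. Your structural observations --- stability of $H$, and the bound $\gamma_\ell(u)\le k-1$ for neighbours of heavy vertices --- are correct and pleasant, but they are the easy part; the hard part is exactly what you flag as the obstacle, and nothing here closes it.
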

  
In this paper we prove that Conjecture \ref{con:main} holds for all claw-free graphs, and Conjecture \ref{con:local} holds for all claw-free graphs with a three-colourable complement, i.e.\ {\em three-cliqued} claw-free graphs:

\begin{theorem}\label{thm:main}
For any claw-free graph $G$, $\chi(G)\leq \gamma(G)$.
\end{theorem}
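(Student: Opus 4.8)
The plan is to run a minimal counterexample through the Chudnovsky--Seymour structure theorem, but only after two colour-preserving reductions that dismantle the ``composition'' operations appearing in that theorem, leaving a short list of basic graphs on which the inequality can be checked.

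\textbf{Reductions to skeletal graphs.} First I would remove clique cutsets: if $G$ has a clique cutset $C$, with $V(G)=V_1\cup V_2$, $V_1\cap V_2=C$, $C$ a clique and no edge between $V_1\setminus C$ and $V_2\setminus C$, then $\chi(G)=\max_i\chi(G[V_i])$, while $\gamma(G[V_i])\le\gamma(G)$ because neither $\Delta$ nor $\omega$ grows when one passes to an induced subgraph; so a minimal counterexample has no clique cutset, hence is connected. Second --- and this is the technical heart of the paper --- I would define a $\chi$-preserving reduction on any non-trivial \emph{homogeneous pair of cliques}: a pair $(A,B)$ of disjoint cliques such that every vertex outside $A\cup B$ is complete or anticomplete to $A$ and, independently, complete or anticomplete to $B$. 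The point is that the rest of the graph only ``sees'' a bounded amount of data about $G[A\cup B]$ --- essentially, for each relevant external neighbourhood pattern, how many colours a proper colouring must spend inside $A\cup B$ --- so one can splice in a strictly smaller gadget realizing the same data, obtaining a smaller claw-free graph $G'$ with $\chi(G')=\chi(G)$ and $\gamma(G')\le\gamma(G)$. Call a claw-free graph \emph{skeletal} if it has no clique cutset and no non-trivial homogeneous pair of cliques; it then suffices to prove $\chi\le\gamma$ for skeletal graphs.

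\textbf{Structure theorem and cases.} Apply the Chudnovsky--Seymour structure theorem to a skeletal claw-free graph $G$. Homogeneous pairs of cliques generalize the thickenings and $W$-joins of that theorem, and clique cutsets absorb its $0$- and $1$-joins, so the composition machinery largely disappears and $G$ is pushed into one of three situations: (i) $G$ is quasi-line; (ii) $\alpha(G)\le 3$, in which case $G$ is three-cliqued or one of the restricted antiprismatic/small basic graphs in the Chudnovsky--Seymour catalogue; or (iii) $G$ lies in a bounded list of remaining exceptional classes (the icosahedron and its subgraphs, and so on). In case (i), Theorem~\ref{thm:quasiline} gives $\chi(G)\le\gamma_\ell(G)\le\gamma(G)$. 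In case (iii) the graphs are small or rigid enough that the bound is a direct verification. Case (ii) carries the weight: for it I would invoke the separate Chudnovsky--Seymour structure theorem for three-cliqued claw-free graphs, which under the skeletal hypothesis describes $G$ as a composition (by hex-joins and the like) of members of an explicit short list; I would then prove, by induction along this structure together with a hands-on analysis of the basic pieces, the \emph{stronger} bound $\chi(G)\le\gamma_\ell(G)$. This is precisely Conjecture~\ref{con:local} for three-cliqued graphs, and it is attainable here because $\omega(G)\ge|V(G)|/3$ for three-cliqued graphs, so the target $\gamma_\ell(G)$ is already comfortably large.

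\textbf{Main obstacle.} The crux is the homogeneous-pair-of-cliques reduction: one must (a) pin down exactly which colouring constraints $G[A\cup B]$ imposes on its exterior, (b) construct a smaller gadget imposing the identical constraints, and (c) guarantee simultaneously that $G'$ remains claw-free and that neither $\Delta$ nor $\omega$ increases enough to raise $\gamma$. Getting (a)--(c) to hold at once, and confirming that ``skeletal'' is exactly the hypothesis under which the structure theorem applies without residual composition, is where the real work lies. The case analysis for three-cliqued graphs in (ii) is lengthy but, with the relevant structure theorem in hand, amounts to bounded bookkeeping.
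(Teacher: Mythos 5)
Your high-level plan (minimal counterexample, clique cutsets, a $\chi$-preserving homogeneous-pair reduction, structure theorem, then case analysis with the quasi-line and three-cliqued cases handled via $\gamma_\ell$) matches the paper's architecture. But there are two substantive gaps.

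First, your reduction and your notion of ``skeletal'' are too aggressive. You propose to eliminate all homogeneous pairs of cliques by splicing in a smaller gadget, and you define skeletal as ``no non-trivial homogeneous pair of cliques.'' That is not achievable: the Chudnovsky--Seymour thickenings intrinsically create homogeneous pairs of cliques (any two multiplied vertices joined by a semi-adjacent pair give one), so no amount of reduction lands you in a world with none. The paper instead defines a homogeneous pair of cliques $(A,B)$ to be \emph{skeletal} when no edge between $A$ and $B$ can be deleted without dropping $\omega(G[A\cup B])$, and the reduction (Lemma~\ref{lem:reduction}) only \emph{removes edges} between $A$ and $B$ until this holds. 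Edge removal is crucial: it gives a genuine subgraph $G'\subseteq G$, so $\gamma$ and $\gamma_\ell$ are automatically non-increasing, and claw-freeness and the three-cliqued property are preserved by an elementary local argument. A vertex-replacing gadget would need a separate (and much harder) justification that $\Delta$, $\omega$, and claw-freeness stay under control. Moreover, even after the reduction, skeletal graphs still contain homogeneous pairs of cliques --- just ones with the simple structure described in Section~\ref{sec:skelsub}, where the edges between $A$ and $B$ form a single clique $\Omega(A,B)$. That residual structure is what makes the later case analysis tractable; it does not vanish.

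Second, and more seriously, your case analysis omits the main case. You place everything with $\alpha\ge 4$ that is not quasi-line into a ``bounded list of remaining exceptional classes.'' But that class is not bounded: it is the class of \emph{compositions of strips}, which decompose via generalized 2-joins of five kinds (canonical interval, antihat, strange, pseudo-line, gear). These are the claw-free analogue of line graphs and require their own inductive colouring argument: one colours $G_1$ with $l\ge\gamma_\ell^j(H_2)$ or $l\ge\gamma_g^j(H_2)$ colours and must show the partial colouring extends across the 2-join into $G_2$. This is the content of Lemmas~\ref{lem:quasilinemce2}--\ref{lem:composition}, and it occupies Section~\ref{sec:compositions}. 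Without it, the minimal-counterexample argument does not close. (The icosahedral thickenings in Section~\ref{sec:icosahedral} are the only genuinely ``bounded'' exceptional family, and are handled separately in Theorem~\ref{thm:icosahedral}.) The three-cliqued, antiprismatic, quasi-line, and icosahedral cases alone do not cover skeletal claw-free graphs without clique cutsets; see Theorem~\ref{thm:structure}, whose fourth alternative is precisely what your plan is missing.
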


\begin{theorem}\label{thm:local}
For any three-cliqued claw-free graph $G$, $\chi(G)\leq \gamma_\ell(G)$.
\end{theorem}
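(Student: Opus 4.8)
The plan is to combine the homogeneous-pair-of-cliques reduction advertised in the abstract with the Chudnovsky--Seymour description of three-cliqued claw-free graphs, running an induction on the number of vertices. First I would check that the reduction respects the local parameter: if a three-cliqued claw-free graph $G$ contains a nontrivial homogeneous pair of cliques, then there is a strictly smaller three-cliqued claw-free graph $G'$ with $\chi(G')=\chi(G)$ and $\gamma_\ell(G')\le\gamma_\ell(G)$, so it suffices to prove the theorem when $G$ is \emph{skeletal}. Since the Chudnovsky--Seymour theorem is stated for trigraphs, I would also fold in the standard passage to graphs: every three-cliqued claw-free graph is a thickening of a three-cliqued claw-free trigraph, and thickenings are handled by a routine weighted refinement of the whole argument (replacing a vertex by a clique scales the relevant quantities in a controlled way, and after the reduction the remaining ``un-thickened'' object has no nontrivial homogeneous pair).

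By the structure theorem, a skeletal three-cliqued claw-free trigraph either is a hex join $G_1\ast G_2$ of two strictly smaller three-cliqued claw-free trigraphs, or lies in one of the basic classes $\TC_1,\dots,\TC_5$. For the hex-join step I would prove the following inductive lemma: if $\chi(H)\le\gamma_\ell(H)$ for every three-cliqued claw-free graph $H$ with fewer vertices than $G_1\ast G_2$, then $\chi(G_1\ast G_2)\le\gamma_\ell(G_1\ast G_2)$. Its engine is an explicit dictionary: every clique of $G_1\ast G_2$ is the union of a clique of $G_1$ and a clique of $G_2$ lying in ``hexagonally compatible'' parts, and for $v\in V(G_1)$ the closed neighbourhood $\tilde N_{G_1\ast G_2}(v)$ equals $\tilde N_{G_1}(v)$ together with exactly two of the three parts of $G_2$, so that $G_1\ast G_2$ restricted to $\tilde N(v)$ is itself a (degenerate) hex join of strictly smaller three-cliqued claw-free graphs. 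Consequently $\omega(G_1\ast G_2)$, and the values of $\omega(v)$ and $d(v)$ at every vertex, are pinned down by the pieces; I would then colour $G_1$ and $G_2$ separately and merge the two palettes --- forcing the colour sets on $A_1$ and the two parts of $G_2$ it sees to be disjoint, and so on around the hexagon --- so that the six complete bipartite pairs between $G_1$ and $G_2$ stay properly coloured while the total is at most $\gamma_\ell(G_1\ast G_2)$. When one side consists of a single part this is a direct count; the general case is the same coordinated-palette bookkeeping (backed, if needed, by the free fractional bound $\chi_f\le\gamma_\ell$ implied by Theorem~\ref{thm:fractional}) that already underlies Theorem~\ref{thm:quasiline}.

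It then remains to verify the bound on the basic classes $\TC_1,\dots,\TC_5$. The sporadic classes comprise finitely many trigraphs, so --- after the reduction of the first paragraph --- there are only finitely many graphs to inspect, and $\chi,\omega,\Delta$, hence $\gamma_\ell$, can be read off for each. The line-graph-like and long-circular-interval classes are, once thickened, quasi-line, and for quasi-line graphs the local bound is precisely Theorem~\ref{thm:quasiline}. The genuinely new base case is that of the (near-)antiprismatic trigraphs: here $\overline G$ is close to triangle-free, so a proper colouring of $G$ is a partition of $V(G)$ into cliques of $\overline G$ (each of size at most $\alpha(G)\le 3$), and finding a smallest one is essentially a matching computation on $\overline G$. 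I would use the Chudnovsky--Seymour normal form for such graphs --- three cliques equipped with a cross-edge pattern governed by a small, explicitly bounded set of exceptional vertices, already trimmed by the homogeneous-pair reduction --- to build a colouring of size at most $\gamma_\ell(G)$ and to check that the local parameters $\omega(v),d(v)$ do certify it.

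The step I expect to be the main obstacle is the interaction between the \emph{locality} of $\gamma_\ell$ and the hex-join composition. Because $\gamma_\ell$ looks inside closed neighbourhoods, controlling $\gamma_\ell(G_1\ast G_2)$ forces us to understand $\gamma$ of graphs that are themselves (degenerate) hex joins; this is why the induction must be carried out over \emph{all} three-cliqued claw-free graphs simultaneously and why the clique/neighbourhood dictionary for the hex join must be made fully explicit rather than invoked as a black box. The second, more technical, hurdle is the near-antiprismatic base case, whose many exceptional vertices have to be absorbed: that is exactly what the $\chi$-preserving homogeneous-pair reduction is designed to do, and most of the care goes into verifying that this reduction also respects $\gamma_\ell$ and keeps us inside the three-cliqued claw-free class, not merely that it preserves $\chi$.
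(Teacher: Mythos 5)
Your plan diverges from the paper's argument at the exact point you yourself flag as the main obstacle, and that obstacle is not surmounted. The paper does \emph{not} colour the two sides of a hex-join separately and merge palettes. The hexagonal constraint is severe: if a colour class of $G_1$ touches two of $A_1,B_1,C_1$ (a ``diad'') it is barred from all three parts of $G_2$, and if it touches all three it is barred entirely; there is no uniform bookkeeping that bounds the merged total by $\gamma_\ell(G_1\ast G_2)$ when $G_1$ uses many diads and triads. The paper circumvents this with a single technique that your proposal never mentions: \emph{good triads}. A triad $T$ is good if every other vertex has two neighbours in $T$, or has a twin in $T$, or is trumped by a vertex of $T$; removing such a $T$ lowers $\gamma_\ell(G)$ by at least one, so a minimum counterexample cannot contain one. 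The crucial observation (immediately after Theorem \ref{thm:3decomp} in the paper) is that a good triad in $G_1$ remains good in the full hex-join $G$, because every $G_2$ vertex automatically gains two neighbours in any $G_1$-triad. Hence the paper never needs to merge colourings across the hex-join at all: Theorem \ref{thm:3decomp} places the base graph $G_1$ in one of $\TTC_1,\ldots,\TTC_6$, Lemmas \ref{lem:min32}--\ref{lem:min36} find a good triad in $G_1$ for $\TTC_2,\TTC_3,\TTC_5,\TTC_6$, Theorem \ref{thm:antiprismaticlocal} handles $\TTC_4$ separately, and Lemma \ref{lem:min31} does the rest.

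Two further points are off target. First, the decomposition is not ``a hex-join of two strictly smaller three-cliqued claw-free trigraphs, or basic''; it is a hex-join in which \emph{one specific side} $G_1$ lies in a base class while $G_2$ is arbitrary (Theorem \ref{thm:3decomp}), and this asymmetry is what lets a good triad in the structured side control the whole graph. Second, you cannot dispatch the $\TTC_1$ (line-graph-like) case by invoking Theorem \ref{thm:quasiline}: the ambient graph $G$ is generally not quasi-line even when $G_1\in\TTC_1$, since the hex-join can create $W_5$'s, and the closed neighbourhoods straddle the join. The paper's Lemma \ref{lem:min31} is by far the hardest step: it argues about cliques spanning the hex-join, uses perfection of line graphs of bipartite multigraphs, and in two of its three cases deletes edges from $G$ (preserving $\chi$ and claw-freeness) to contradict minimality rather than finding a good triad. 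None of this is a routine application of the quasi-line result or a finite check, so the proposal's treatment of the hardest base case is missing.
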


Furthermore, both proofs yield polynomial-time algorithms.  Theorem \ref{thm:main} complements a recent result of Chudnovsky and Seymour \cite{clawfree6} for claw-free graphs with stability number $\alpha(G)$ at least three:

\begin{theorem}
For any claw-free graph $G$ with $\alpha(G)\geq 3$, $\chi(G)\leq 2\omega(G)$.
\end{theorem}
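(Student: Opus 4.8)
\emph{Proof proposal.} The plan is to argue by induction on $|V(G)|$, the engine being the Chudnovsky--Seymour structure theorem for claw-free graphs \cite{cssurvey}. First a remark and two cheap reductions. Connectivity is genuinely needed here --- a disjoint union of two complements of appropriately chosen triangle-free graphs is claw-free, has $\alpha\ge 3$, yet has chromatic number unbounded in terms of clique number --- so we assume $G$ connected. If $\omega(G)\le 2$ then $G$ is triangle-free and claw-free, hence a path, a cycle, or one of a handful of small graphs, so $\chi(G)\le 3\le 2\omega(G)$ unless $G$ is edgeless, in which case $\chi(G)=1$. Finally, by Theorem~\ref{thm:main} we have $\chi(G)\le\gamma(G)=\lceil\frac12(\Delta(G)+1+\omega(G))\rceil$, and this is at most $2\omega(G)$ whenever $\Delta(G)\le 3\omega(G)-1$; so we may assume $\omega(G)\ge 3$ and $\Delta(G)\ge 3\omega(G)$. (The real content thus lies with graphs having a vertex whose neighbourhood --- which, $G$ being claw-free, has stability number at most two --- is much larger than $\omega$; there the bound $\gamma(G)$ no longer suffices and the structure theorem is indispensable.)

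The main work is to run through the structure theorem, and the regimes $\alpha(G)\ge 4$ and $\alpha(G)=3$ behave quite differently and should be separated. When $\alpha(G)\ge 4$ the structure is comparatively tame: $G$ is either a thickening of a member of a short list of basic classes, or admits one of a short list of decompositions --- a clique cutset, a $0$-join, a non-trivial strip decomposition (equivalently a $1$-join), or a hex-join. The basic classes are handled directly. For a line graph $L(H)$ we have $\omega(L(H))\ge\Delta(H)$, so Vizing's theorem gives $\chi(L(H))=\chi'(H)\le\Delta(H)+1\le 2\omega(L(H))$. A long circular interval graph is quasi-line and has $\Delta<3\omega$, so Theorem~\ref{thm:quasiline} gives $\chi\le\gamma_\ell\le 2\omega$ there. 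The remaining sporadic graphs (the icosahedron and its relatives) are a finite check, and passing to a thickening scales $\chi$ and $\omega$ together, so changes nothing. For the decompositions I would colour each piece by induction and reassemble, watching three things: (i) $\omega(G)$ is at least the clique number of each piece --- and at least the sum of the two clique numbers across a hex-join --- so a $2\omega(G)$-colouring has enough colours on every piece; (ii) the interface between two pieces is a union of at most two cliques, so the colour classes can be merged with exactly the two-clique bookkeeping King and Reed carry out for the sharper bound $\gamma$, but with a whole factor of slack this is far easier; (iii) a piece that happens to have stability number at most two is not an arbitrary such graph (for which the bound can fail) but an induced subgraph of $G$ attached along at most two cliques, which constrains it enough to colour it within budget directly, as a base case.

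When $\alpha(G)=3$ the clean decomposition above is unavailable and one must use the separate structure theorem for claw-free graphs of independence number exactly three --- the ``antiprismatic''-flavoured structure --- which is the technical heart of the argument. It is convenient to note that $G$ being claw-free with $\alpha(G)\ge 3$ is equivalent to $\overline G$ having a triangle, all of whose triangles are dominating sets of $\overline G$, and that the conclusion to be proved is that such a graph has clique cover number at most $2\alpha(\overline G)$. One elementary observation streamlines the whole case: a maximum matching $M$ of $\overline G$ leaves $|V(G)|-2|M|$ vertices that are pairwise nonadjacent in $\overline G$, i.e.\ a clique of $G$, so $\omega(G)\ge|V(G)|-2|M|$; combining this with the trivial bound $\chi(G)\le|V(G)|-|M|$ (take the $|M|$ nonedges of $G$ supplied by $M$ as stable pairs and the rest as singletons) already yields $\chi(G)\le 2\omega(G)$ whenever $|M|\le|V(G)|/3$. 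So one is left with $|M|>|V(G)|/3$, a density constraint which, together with the inequality for $\omega(G)$, forces $\overline G$ into the rigid configurations described by the structure theorem; for each of those one reads off a partition of $V(G)$ into at most $2\omega(G)$ stable sets.

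The step I expect to be the main obstacle is precisely this $\alpha(G)=3$ case: quarrying the structure theorem for graphs of independence number three so as to produce, in every configuration it throws up, a partition of the vertex set into at most $2\omega$ stable sets, and doing so with the constant $2$ exact rather than merely $2+o(1)$. A secondary difficulty, present already for $\alpha(G)\ge 4$, is discipline about the inductive class --- ensuring every reduction returns pieces that are claw-free and, wherever possible, retain $\alpha\ge 3$; correctly treating as base cases (not by induction) the pieces that drop to $\alpha\le 2$; and tracking $\omega$ accurately through the strip and join operations so that the inductive inequality genuinely closes.
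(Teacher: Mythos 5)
The paper does not prove this theorem: it is quoted without proof from Chudnovsky and Seymour's companion work \cite{clawfree6}, stated only to situate Theorem~\ref{thm:main}. So there is no internal proof against which to compare your proposal; what follows is an assessment of the proposal on its own terms.

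Your opening moves are reasonable, and the connectivity caveat is a real (if pedantic) observation: a component with stability number at most two can have $\chi/\omega$ unbounded, so the hypothesis must be read componentwise. The reduction via Theorem~\ref{thm:main} to the regime $\Delta(G)\ge 3\omega(G)$ is a good idea. But you then miss the shortcut the paper itself records in the sentence immediately following the theorem's statement: $\Delta+1\le 3\omega$ holds automatically whenever $\alpha(G)\ge 4$ or $G$ is three-cliqued. After your reduction, the entire case $\alpha(G)\ge 4$ is therefore vacuous, and your paragraph working through the structure theorem in that regime --- Vizing/Shannon for line graphs, the quasi-line bound for circular interval graphs, reassembly across strips and hex-joins --- is wasted effort. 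What survives is exactly $\alpha(G)=3$ with $G$ not three-cliqued, i.e.\ the (near-)antiprismatic regime, which you do correctly single out as the heart of the matter. There your matching observation is correct and neat ($\omega\ge n-2|M|$ and $\chi\le n-|M|$ dispose of $|M|\le n/3$), but the residual case $|M|>n/3$ is left as a gesture. Saying the density constraint ``forces $\overline G$ into rigid configurations'' is not an argument --- the antiprismatic structure theorem applies irrespective of $|M|$, and what is actually needed is a concrete traversal of those classes producing, in each, a cover of $V(G)$ by at most $2\omega(G)$ stable sets. That traversal is the entire content of the proof in \cite{clawfree6} and is missing from your proposal, so as written this is an outline with the central case unproved.
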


Thus our result is stronger when $\Delta(G)+1\leq 3\omega(G)$ (in fact this is always the case when $\alpha(G)\geq 4$ or $G$ is three-cliqued).

%%%%%%%%%%%%%%%%%%%%%%%%%%%%%%%%%%%%%%%%%%%%%%%%%%%%%%%%%%%%%%%%%%%%%%%%%%%%%%%%
%%%%%%%%%%%%%%%%%%%%%%%%%%%%%%%%%%%%%%%%%%%%%%%%%%%%%%%%%%%%%%%%%%%%%%%%%%%%%%%%
\subsection{Overview}

The structure theorem for claw-free graphs naturally divides our work into three types of claw-free graphs: those with a three-colourable complement, those constructed as a generalization of a line graph, and some remaining exceptional cases.  Each of the first two categories involves some basic classes and a composition operation, such that every graph in that category is either basic or can be built from the basic graphs using the composition operation.  Therefore our approach is to prove that Conjecture \ref{con:local} holds for the basic classes, then prove that Conjecture \ref{con:main} (and usually Conjecture \ref{con:local}) continues to hold when the composition operations are applied.  Finally we deal with any remaining cases.

Before we do this, we introduce some machinery that allows us to simplify the class of graphs we need to consider.  This is the notion of a {\em nonskeletal homogeneous pair of cliques}, or {\em NHPOC}.  An NHPOC can be thought of as a type of defect or ``fuzziness'', and if one exists in a claw-free graph $G$, we can reduce to a proper claw-free subgraph $G'$ without changing the chromatic number.  Since $\gamma$ and $\gamma_\ell$ are monotone graph invariants, a minimum counterexample to Theorem \ref{thm:main} or Theorem \ref{thm:local} cannot contain an NHPOC.

Nonskeletal (and other) homogeneous pairs of cliques are fundamental to the structure of claw-free graphs because of {\em thickenings}, a method of expanding vertices in claw-free graphs that generalizes the idea of {\em augmentations} introduced by Maffray and Reed \cite{maffrayr99}.  In the next section we introduce thickenings and NHPOCs, and explain how we can restrict our focus to colouring {\em skeletal graphs}.  Using skeletal graphs, we can easily prove that $\chi(G) \leq \gamma_\ell(G)$ for {\em antiprismatic thickenings}, an important class of claw-free graphs with $\alpha \leq 3$.  These include all graphs with $\alpha \leq 2$, which are trivially claw-free.  Thus we spend Section \ref{sec:skeletal} introducing our tools and showing how to apply them effectively to some straightforward classes of claw-free graphs.

In Section \ref{sec:classes} we present some important types of claw-free graphs that are fundamental to later constructions.  In Section \ref{sec:3} we describe claw-free graphs with a three-colourable complement ({\em three-cliqued} claw-free graphs).  They are built from several basic classes by a composition operation known as {\em hex-chains}.  With both three-cliqued claw-free graphs and antiprismatic thickenings, our approach is to remove a stable set $S$ for which $\gamma_\ell(G-S) < \gamma_\ell(G)$.  This is not always possible; some types of three-cliqued graphs take a little more work.  In Section \ref{sec:3} we complete the proof of Theorem \ref{thm:local}, and then move on to proving Theorem \ref{thm:main}.

To do this, we first need to deal with {\em compositions of strips}, whose structure generalizes that of line graphs and quasi-line graphs.  In Section \ref{sec:compositions} we describe their structure and generalize our approach from \cite{kingr08}.  In Section \ref{sec:icosahedral}, we deal with the remaining case: the exceptional class of {\em icosahedral thickenings} (we deal with these after compositions of strips in order to introduce a certain decomposition where it is most sensible).  This allows us to complete the proof of Theorem \ref{thm:main}.  Finally, in Section \ref{sec:algorithmic} we prove that our approach yields polynomial-time algorithms for constructing colourings that achieve our new bounds.

%%%%%%%%%%%%%%%%%%%%%%%%%%%%%%%%%%%%%%%%%%%%%%%%%%%%%%%%%%%%%%%%%%%%%%%%%%%%%%%%
%%%%%%%%%%%%%%%%%%%%%%%%%%%%%%%%%%%%%%%%%%%%%%%%%%%%%%%%%%%%%%%%%%%%%%%%%%%%%%%%
%%%%%%%%%%%%%%%%%%%%%%%%%%%%%%%%%%%%%%%%%%%%%%%%%%%%%%%%%%%%%%%%%%%%%%%%%%%%%%%%
%%%%%%%%%%%%%%%%%%%%%%%%%%%%%%%%%%%%%%%%%%%%%%%%%%%%%%%%%%%%%%%%%%%%%%%%%%%%%%%%
%%%%%%%%%%%%%%%%%%%%%%%%%%%%%%%%%%%%%%%%%%%%%%%%%%%%%%%%%%%%%%%%%%%%%%%%%%%%%%%%
%%%%%%%%%%%%%%%%%%%%%%%%%%%%%%%%%%%%%%%%%%%%%%%%%%%%%%%%%%%%%%%%%%%%%%%%%%%%%%%%
%%%%%%%%%%%%%%%%%%%%%%%%%%%%%%%%%%%%%%%%%%%%%%%%%%%%%%%%%%%%%%%%%%%%%%%%%%%%%%%%
%%%%%%%%%%%%%%%%%%%%%%%%%%%%%%%%%%%%%%%%%%%%%%%%%%%%%%%%%%%%%%%%%%%%%%%%%%%%%%%%
%%%%%%%%%%%%%%%%%%%%%%%%%%%%%%%%%%%%%%%%%%%%%%%%%%%%%%%%%%%%%%%%%%%%%%%%%%%%%%%%
%%%%%%%%%%%%%%%%%%%%%%%%%%%%%%%%%%%%%%%%%%%%%%%%%%%%%%%%%%%%%%%%%%%%%%%%%%%%%%%%
%%%%%%%%%%%%%%%%%%%%%%%%%%%%%%%%%%%%%%%%%%%%%%%%%%%%%%%%%%%%%%%%%%%%%%%%%%%%%%%%
%%%%%%%%%%%%%%%%%%%%%%%%%%%%%%%%%%%%%%%%%%%%%%%%%%%%%%%%%%%%%%%%%%%%%%%%%%%%%%%%
%%%%%%%%%%%%%%%%%%%%%%%%%%%%%%%%%%%%%%%%%%%%%%%%%%%%%%%%%%%%%%%%%%%%%%%%%%%%%%%%
%%%%%%%%%%%%%%%%%%%%%%%%%%%%%%%%%%%%%%%%%%%%%%%%%%%%%%%%%%%%%%%%%%%%%%%%%%%%%%%%
%%%%%%%%%%%%%%%%%%%%%%%%%%%%%%%%%%%%%%%%%%%%%%%%%%%%%%%%%%%%%%%%%%%%%%%%%%%%%%%%
\section{Skeletal graphs and thickenings}\label{sec:skeletal}

Chudnovsky and Seymour introduced thickenings, which generalize the operations of augmentation and multiplication, as a way to distill the essential structure of a graph or trigraph \cite{clawfree5}.  Here we describe thickenings and discuss how to reduce non-minimal structure that arises as a result of the thickening operation.

We {\em multiply} a vertex $v$ by taking the disjoint union of $G-v$ and a nonempty clique $I(v)$, then making each vertex of $N(v)$ adjacent to each vertex of $I(v)$.  In this case any two vertices of $I(v)$ are {\em twins}, i.e.\ they have the same closed neighbourhood.  A clique $C$ is a {\em homogeneous clique} if it has size between $2$ and $n-1$, and every vertex outside $C$ sees either none or all of $C$.  So as long as $I(v)$ is not a singleton or the entire graph, it is a homogeneous clique.  Note that vertex multiplication will never introduce a claw when applied to a claw-free graph.

To generalize this operation, we consider edges whose deletion does not introduce a claw.  We say that an edge $e$ in a claw-free graph $G$ is {\em claw-neutral} if $G-e$ is claw-free.  A matching $M$ is {\em claw-neutral} if every edge of $M$ is claw-neutral.  Observe that if $M$ is claw-neutral, then $G-M$ is claw-free.  

Let $M$ be a claw-neutral matching in a claw-free graph $G$.  We say that $G'$ is a {\em thickening of $G$ under $M$} (or sometimes just a thickening of $G$) if we can construct it from $G$ in the following way.  First we multiply each vertex.  Then for every $uv\in M$, we remove from $G'$ a nonempty proper subset of the edges between $I(u)$ and $I(v)$.  If $M$ is empty we say that $G'$ is a {\em proper thickening} of $G$; in this case $G'$ simply arises from $G$ by vertex multiplication.  For a set $S\subseteq V(G)$ we use $I(S)$ to denote $\cup_{v\in S}I(v)$.%  Chudnovsky and Seymour introduced thickenings and actually characterized thickenings of {\em claw-free trigraphs}, in which two vertices connected by an edge of $M$ are declared to be ``semi-adjacent'' \cite{clawfree5}.

Just as proper thickenings give rise to homogeneous cliques, thickenings give rise to {\em homogeneous pairs of cliques}.  A pair $(A,B)$ of disjoint nonempty cliques is a homogeneous pair of cliques if $|A\cup B|\geq 3$ and every vertex outside $A\cup B$ sees all or none of $A$, and all or none of $B$.  So for $u,v\in V(G)$, if $|I(u)|+|I(v)|\geq 3$ then $(I(u),I(v))$ is a homogeneous pair of cliques regardless of whether or not $uv\in E(G)$ or $uv\in M$.

It turns out that in a minimum counterexample to Theorem \ref{thm:main} or \ref{thm:local}, we can guarantee that every homogeneous pair of cliques has a very simple structure.  We address this issue now.

%%%%%%%%%%%%%%%%%%%%%%%%%%%%%%%%%%%%%%%%%%%%%%%%%%%%%%%%%%%%%%%%%%%%%%%%%%%%%%%%
%%%%%%%%%%%%%%%%%%%%%%%%%%%%%%%%%%%%%%%%%%%%%%%%%%%%%%%%%%%%%%%%%%%%%%%%%%%%%%%%
%%%%%%%%%%%%%%%%%%%%%%%%%%%%%%%%%%%%%%%%%%%%%%%%%%%%%%%%%%%%%%%%%%%%%%%%%%%%%%%%
\subsection{Skeletal graphs and skeletal homogeneous pairs}

Given a homogeneous pair of cliques $(A,B)$ in a graph $G$, we want to remove edges between $A$ and $B$ in $G$ to reach a subgraph $G'$ such that:
\begin{itemize*}
\item $G'$ is easier to describe and colour than $G$
\item given a $k$-colouring of $G'$ we can easily find a $k$-colouring of $G$.
\end{itemize*}

In this paper we use two such reductions.  A homogeneous pair of cliques $(A,B)$ is {\em linear}\footnote{These were originally called {\em nontrivial} homogeneous pairs of cliques by Chudnovsky and Seymour, who used them in their description of quasi-line graphs \cite{cssurvey}.  We prefer the more descriptive term {\em nonlinear} in part because they are less trivial than {\em skeletal} homogeneous pairs of cliques.} precisely if $G[A,B]$ contains no induced $C_4$ (equivalently, $G[A\cup B]$ is a {\em linear interval graph}, which we define later).  Chudnovsky and Seymour used these to describe quasi-line graphs \cite{cssurvey}, and Chudnovsky and Fradkin used them to colour quasi-line graphs \cite{chudnovskyo07}, as did we \cite{kingr08}.

For claw-free graphs we need a stronger reduction.  Observe that if we remove an edge between $A$ and $B$ without changing the chromatic number of the subgraph induced on $A\cup B$, the chromatic number of the graph will not change.  Furthermore, since $G[A\cup B]$ is cobipartite and therefore perfect, $\chi(G[A\cup B])= \omega(G[A\cup B])$.  We say that $(A,B)$ is {\em skeletal} if we cannot remove an edge between $A$ and $B$ without changing the clique number of $G[A\cup B]$.  We say that $G$ is {\em skeletal} if it contains no nonskeletal homogeneous pair of cliques.  Observe that every skeletal homogeneous pair of cliques is linear.

Now for the reduction result.  The following theorem immediately implies that a minimum counterexample to Theorem \ref{thm:main} or Theorem \ref{thm:local} must be skeletal.

\begin{theorem}\label{thm:skelhp}
Let $G$ be a nonskeletal graph.  Then there is a skeletal subgraph $G'$ of $G$ such that:
\begin{enumerate*}
\item If $G$ is quasi-line (resp.\ claw-free) then $G'$ is also quasi-line (resp.\ claw-free).
\item $\chi(G')=\chi(G)$ and $\chi_f(G')=\chi_f(G)$.
\item If $\chi(\overline{G})=3$ then $\chi(\overline{G'})=3$.
\end{enumerate*}
Furthermore we can find $G'$ in $O(m(m^2+n^{5/2}))$ time, and given a $k$-colouring of $G'$ we can construct a $k$-colouring of $G$ in $O(mn^{5/2})$ time.
\end{theorem}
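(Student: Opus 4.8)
The plan is to reduce non-skeletal structure one homogeneous pair of cliques at a time, showing that a single reduction step preserves everything we need, and then iterate. So first I would set up a single reduction step: given a nonskeletal homogeneous pair of cliques $(A,B)$ in $G$, by definition there is an edge $ab$ with $a \in A$, $b \in B$ such that $\omega(G[A\cup B] - ab) = \omega(G[A \cup B])$. Let $G_1 = G - ab$. Since $G[A\cup B]$ is cobipartite and hence perfect, deleting $ab$ does not change $\chi(G[A\cup B]) = \omega(G[A\cup B])$; because $(A,B)$ is homogeneous, every maximal clique of $G$ meeting both $A$ and $B$ lives inside $A\cup B$ together with a common external neighbourhood that is unchanged, so one checks $\chi(G_1) = \chi(G)$ and, by the same colour-class-counting argument applied fractionally, $\chi_f(G_1) = \chi_f(G)$. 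Crucially $(A,B)$ is still a homogeneous pair of cliques in $G_1$ (homogeneity is about external vertices, which are untouched), so we may keep deleting edges between $A$ and $B$ until $(A,B)$ becomes skeletal inside the current graph, or $A\cup B$ stops being a homogeneous pair (e.g.\ one of $A$, $B$ collapses). Then repeat with any other nonskeletal homogeneous pair. Each step deletes an edge, so the process terminates in at most $m$ steps; the final graph $G'$ is skeletal and is a subgraph of $G$ with $\chi(G') = \chi(G)$, $\chi_f(G') = \chi_f(G)$.

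Next I would verify the class-preservation claims (1) and (3). For (1): the graphs $G_i$ in the sequence are obtained by deleting an edge $ab$ between cliques $A$ and $B$. I claim this never creates a claw. Any claw created by deleting $ab$ must use the non-edge $ab$, so it has the form of a vertex $x$ adjacent to $a$, $b$, and some $c$ with $a,b,c$ pairwise non-adjacent; but $a\in A$, $b\in B$ are in a homogeneous pair, so $x$ (being outside or inside $A\cup B$) either already had this configuration — contradicting claw-freeness of $G$ — or the homogeneity forces enough adjacency to kill the claw; a short case analysis on whether $x,c \in A\cup B$ or not finishes it. The quasi-line case is analogous since quasi-line is hereditary under this kind of edge deletion inside a homogeneous pair (indeed this is exactly the linear-homogeneous-pair reduction used in \cite{kingr08}, and every skeletal homogeneous pair is linear). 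For (3): if $\chi(\overline G) = 3$, then $\overline{G_1} = \overline G + ab$ is obtained by \emph{adding} an edge; I need $\chi(\overline{G_1}) = 3$ still, i.e.\ adding $ab$ to $\overline G$ does not force a fourth colour. Here I would use that $a$ and $b$ lie in a common clique of $G$ (namely they have a common external neighbourhood plus parts of $A, B$), which translates to a statement about the complement that lets one recolour; alternatively, observe $a$ and $b$ are non-adjacent in $G_1$ only because we removed $ab$, and in any proper $3$-colouring of $\overline G$ one can shuffle $a$ into a colour class avoiding $b$ using the homogeneity of $(A,B)$ in $\overline G$. This is the step I expect to be the most delicate and the main obstacle — keeping a $3$-colouring of the complement alive under repeated edge additions requires more than a one-line argument, and may need the structural fact that skeletal (hence linear) homogeneous pairs of cliques in a three-cliqued graph are very restricted.

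Finally, for the algorithmic bound I would track the cost. Each reduction step needs to (a) find a nonskeletal homogeneous pair of cliques, and (b) find an edge between $A$ and $B$ whose deletion does not drop $\omega(G[A\cup B])$. Enumerating/maintaining homogeneous pairs of cliques can be done with the known $O(n^{5/2})$-type routines for homogeneous-set detection adapted to pairs of cliques; for each candidate pair, testing an edge amounts to a maximum-matching / König-type computation on the cobipartite graph $G[A\cup B]$ (since $\omega$ of a cobipartite graph is $|A\cup B|$ minus a maximum matching in the bipartite complement between $A$ and $B$), costing $O(m^2)$ or so per pair via repeated matching augmentations. Multiplying the per-step cost by the $O(m)$ steps gives the claimed $O(m(m^2 + n^{5/2}))$ bound. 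For the colour-lifting direction: given a $k$-colouring of $G'$, we undo the deletions in reverse order; at each re-insertion of an edge $ab$ we must fix a possible conflict (both endpoints same colour), which we do by a single alternating-path / Kempe-chain style swap confined to $A \cup B$ and its homogeneous neighbourhood — each such fix is an $O(n^{3/2})$ (bipartite-matching-augmentation) operation, and with $O(m)$ edges to re-insert we get $O(mn^{5/2})$ total, as claimed. The main work in writing this up cleanly is item (3) and making the matching-based accounting for $\omega$ of the cobipartite pieces precise.
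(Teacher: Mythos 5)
Your overall strategy matches the paper's: iterate a reduction on nonskeletal homogeneous pairs of cliques until the graph is skeletal, and argue each reduction preserves membership in the relevant class and both chromatic invariants. The paper structures this as two lemmas (``find a nonskeletal HPOC in $O(m^2)$'' and ``reduce one HPOC to skeletal in $O(n^{5/2})$'') iterated $O(m)$ times, which is exactly your plan at the meta level. The one genuine difference is that you delete a single edge at a time, whereas the paper does a batch reduction: choose a maximum clique $X$ of $G[A\cup B]$ (taking $X=A$ if $A$ is maximum, assuming $|A|\geq|B|$), then delete \emph{all} edges between $A$ and $B$ not in $X$. The batch formulation is cleaner --- it makes $(A,B)$ skeletal in one shot, it immediately exposes the $\Omega(A,B)=X$ structure needed downstream, and it lets the claw-free/quasi-line/$\chi_f$/3-colourability arguments be argued once per pair rather than per edge.

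The concrete gap is item (3), and you correctly flag it yourself. It is less delicate than you fear, but you do have to actually prove it. The paper's argument is short: if $\chi(\gbar)=3$, it suffices to exhibit a 3-colouring of $\gbar$ in which no colour appears in both $A$ and $B$, since then adding any edges between $A$ and $B$ in the complement (equivalently, deleting any edges between $A$ and $B$ in $G$) cannot create a conflict. To construct such a colouring, take any 3-colouring of $\gbar$; if some colour $c_1$ appears in both $A$ and $B$, observe that $G[A\cup B]$ is not a clique (this is exactly where nonskeletality is used --- a complete cobipartite $G[A\cup B]$ would have a unique maximum clique and would be skeletal), so a second colour $c_2$ appears on $A\cup B$, say on $A$; now recolour all of $A$ with $c_2$ and all of $B$ with $c_1$. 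Homogeneity of $(A,B)$ ensures the resulting colour classes are still cliques of $G$. Your instinct to ``shuffle $a$ into a colour class avoiding $b$ using the homogeneity of $(A,B)$'' is pointing at this, but you should prove the stronger monochromatic statement once per pair rather than try to maintain 3-colourability edge by edge --- the latter is what makes it look hard.

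Two further corrections. First, your parenthetical about $(A,B)$ ceasing to be a homogeneous pair (``one of $A$, $B$ collapses'') is confused: you only delete edges \emph{between} $A$ and $B$, so $A$ and $B$ remain cliques, no vertices disappear, and homogeneity concerns only external vertices, which are untouched; $(A,B)$ stays a HPOC throughout. Second, the complexity accounting needs tightening: you attribute an ``$O(n^{5/2})$-type'' cost to finding a nonskeletal HPOC, but the paper's bound for this subroutine is $O(m^2)$ (it reduces to detecting nonlinear HPOCs plus a structural characterization of nonskeletal linear ones), and the $O(n^{5/2})$ term comes from the reduction and colour-lift steps (maximum clique of a cobipartite graph via bipartite matching). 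Also, the colour lift is more safely argued as the paper does --- once $(A,B)$ is skeletal in $G'$, take the $k$-colouring of $G'$ and find a proper colouring of the perfect graph $G[A\cup B]$ using the same multiset of colours on $A$ and on $B$, which is a single matching computation --- rather than re-inserting edges and running unbounded Kempe chains, since you have not shown those chains stay confined to $A\cup B$.
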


This theorem follows immediately from at most $m$ applications of the following two lemmas.

\begin{lemma}\label{lem:findhp}
For any graph $G$, we can find a nonskeletal homogeneous pair of cliques, or determine that none exists, in $O(m^2)$ time.
\end{lemma}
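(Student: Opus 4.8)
The plan is to detect a nonskeletal homogeneous pair of cliques by first enumerating all \emph{homogeneous pairs of cliques} efficiently, and then testing each for nonskeletality via a local clique-number computation. For the enumeration step, recall that in a homogeneous pair of cliques $(A,B)$, every vertex outside $A\cup B$ sees all or none of $A$ and all or none of $B$; hence the vertices of $V(G)\setminus(A\cup B)$ are partitioned according to which of the four ``types'' (sees both, sees $A$ only, sees $B$ only, sees neither) they fall into. The standard approach here — as used by Chudnovsky--Seymour and in the quasi-line colouring work — is to guess a small number of anchor vertices. Concretely, I would guess an edge or nonedge spanning $A$ and $B$ together with a witness vertex in each relevant neighbourhood type; this fixes the partition of the remaining vertices and hence determines $A$ and $B$ (as maximal cliques consistent with the homogeneity constraints) in polynomial time. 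Since a graph has at most $O(m)$ nonskeletal homogeneous pairs of cliques that matter for our reduction (pairs can be taken inclusion-maximal, and each ``loses'' at least one edge of $G[A,B]$ relative to a complete bipartite join), the bookkeeping stays within the claimed $O(m^2)$ budget if each guess-and-verify round costs $O(m)$ or $O(n^2)$.

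The core subroutine is, given a candidate pair $(A,B)$, to decide whether it is skeletal. By definition $(A,B)$ is nonskeletal iff some edge $ab$ with $a\in A$, $b\in B$ can be deleted without dropping $\omega(G[A\cup B])$. Since $G[A\cup B]$ is cobipartite, $\omega(G[A\cup B])$ is the size of a maximum independent set in the bipartite complement $\overline{G}[A,B]$, i.e.\ by König's theorem $|A|+|B|$ minus the size of a maximum matching in $\overline{G}[A,B]$. Deleting an edge $ab$ of $G$ adds the edge $ab$ to $\overline{G}[A,B]$; this decreases $\omega(G[A\cup B])$ exactly when $ab$ lies in \emph{every} maximum matching of $\overline{G}[A,B]\cup\{ab\}$, equivalently when $\overline{G}[A,B]$ has a maximum matching \emph{saturating neither $a$ nor $b$}... more simply: the edge deletion is harmless iff adding $ab$ to $\overline{G}[A,B]$ does not increase its maximum matching size, which happens iff $a$ and $b$ are both saturated by some common maximum matching of $\overline{G}[A,B]$ (a classical augmenting-path characterisation). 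So one maximum bipartite matching computation on $\overline{G}[A,B]$, followed by a reachability analysis of the ``inessential''/``essential'' vertices (Dulmage--Mendelsohn decomposition), identifies whether a harmless edge deletion exists, and if so produces it. This costs $O(n^{5/2})$ by Hopcroft--Karp, comfortably inside $O(m^2)$; and it is also how the companion colouring-lifting bound $O(mn^{5/2})$ in Theorem~\ref{thm:skelhp} arises.

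The main obstacle I anticipate is not the matching argument but controlling the enumeration so that the total is genuinely $O(m^2)$ rather than, say, $O(n^4)$ or worse: a naive search over all quadruples of anchor vertices is too expensive. The fix is to exploit that we only need \emph{one} nonskeletal pair, or a certificate that none exists, so we can afford to be greedy — fix one anchor edge/nonedge of $G[A,B]$, and then note that $A$ (resp.\ $B$) is forced to be a subset of the common neighbourhood (within one partition class) of the anchors, so only $O(m)$ anchor choices need be examined, each resolvable in near-linear time. A secondary subtlety is correctly handling the degenerate boundary cases in the definition — $|A\cup B|\ge 3$, both cliques nonempty, and the ``sees all or none'' condition when $A$ or $B$ is a singleton — and ensuring the pair we output is one on which an \emph{actual} edge (not a nonedge) of $G$ can be removed; these are routine but must be checked so that Lemma~\ref{lem:findhp} dovetails cleanly with the reduction in Theorem~\ref{thm:skelhp}.
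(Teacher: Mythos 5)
There is a genuine gap here, and it lies exactly where you flagged it: the enumeration. Fixing an anchor edge (or nonedge) $ab$ with $a\in A$, $b\in B$, plus a witness in each of the four outside classes, does \emph{not} determine $A$ and $B$. All it gives you is that $A$ is some clique containing $a$ inside a certain common neighbourhood, and similarly for $B$; there can be exponentially many such pairs, and nothing in the homogeneity condition pins down a canonical one. You acknowledge the problem but the proposed fix (``$A$ is forced to be a subset of the common neighbourhood of the anchors'') only constrains $A$, it doesn't select it, so there is no $O(m)$-bounded list of candidates to iterate over. The claim that only $O(m)$ inclusion-maximal nonskeletal pairs ``matter'' is also unsupported: distinct maximal homogeneous pairs of cliques can share non-edges, and nothing you have said bounds their number. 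Separately, even if the enumeration were fixed, your per-candidate skeletality test runs Hopcroft--Karp at $O(n^{5/2})$ per pair, so $O(m)$ candidates would cost $O(mn^{5/2})$, not $O(m^2)$ --- the two sentences ``each round costs $O(m)$ or $O(n^2)$'' and ``this costs $O(n^{5/2})$'' are not reconciled.

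The paper avoids both problems by splitting on linearity. The key observation you are missing is that every skeletal homogeneous pair of cliques is linear, so a \emph{nonlinear} homogeneous pair of cliques is automatically nonskeletal; these are found (or ruled out) in $O(m^2)$ by the algorithm of Chudnovsky and King (Lemma~\ref{lem:nonlinearhpalgorithmic}), no matching computation required. Once nonlinear pairs are excluded, the paper proves a structure lemma saying that any remaining nonskeletal (hence linear) pair $(A,B)$ can be shrunk to a canonical triple $(A_1,A_2,B_1)$ where $A_1,A_2$ are twin-classes, $B_1$ is a homogeneous clique or singleton, $A_1\cup A_2$ and $A_2\cup B_1$ are cliques, $A_1$ is anticomplete to $B_1$, and $|A_1|\ge|B_1|$. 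That collapses the search to: compute the twin partition in $O(m)$, then look for three vertices $a_1,a_2,b_1$ inducing a path with the right twin/domination properties, achievable in $O(nm)$ by guessing $b_1$, deleting it, and rechecking twins. No per-candidate matching is needed because in this canonical form nonskeletality is read off from $|A_1|\ge|B_1|$ directly. Your bipartite-complement/K\"onig/Dulmage--Mendelsohn observations about testing whether a given edge can be deleted are correct in spirit (and closely match the $O(n^{5/2})$ reduction step inside Lemma~\ref{lem:reduction}), but they belong to the \emph{reduction} once a pair is in hand, not to the \emph{search}.
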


\begin{lemma}\label{lem:reduction}
Given a graph $G$ and a nonskeletal homogeneous pair of cliques $(A,B)$, in $O(n^{5/2})$ time we can remove edges between $A$ and $B$ to reach a proper subgraph $G'$ such that:
\begin{enumerate*}
\item $(A,B)$ is a skeletal homogeneous pair of cliques in $G'$.
\item If $G$ is quasi-line (resp.\ claw-free) then $G'$ is also quasi-line (resp.\ claw-free).
\item $\chi(G')=\chi(G)$ and $\chi_f(G')=\chi_f(G)$.
\item If $\chi(\overline{G})=3$ then $\chi(\overline{G'})=3$.
\end{enumerate*}
Furthermore given a $k$-colouring of $G'$ we can construct a $k$-colouring of $G$ in $O(n^{5/2})$ time.
\end{lemma}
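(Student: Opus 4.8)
The plan is to identify, inside the cobipartite graph $G[A\cup B]$, a canonical ``skeleton'' subgraph on the same vertex set that has the same clique number and is a linear interval graph, and then show that replacing $G[A\cup B]$ by this skeleton (i.e.\ deleting the appropriate $A$--$B$ edges) achieves all four properties. Since $G[A\cup B]$ is cobipartite, its complement is a bipartite graph $H$ on parts $A$ and $B$; an $A$--$B$ edge of $G$ corresponds to a non-edge of $H$, and the cliques of $G[A\cup B]$ correspond to independent sets of $H$. Adding edges to $H$ (deleting edges of $G$) can only shrink $\alpha(H)=\omega(G[A\cup B])$; so ``skeletal'' means $H$ is edge-maximal subject to $\alpha(H)$ being unchanged. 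First I would show that among all such edge-maximal bipartite graphs on $(A,B)$ with the prescribed independence number, one can be chosen whose complement is a linear interval graph — concretely, order $A=\{a_1,\dots,a_s\}$ and $B=\{b_1,\dots,b_t\}$ and make $a_i\sim b_j$ in $G'$ exactly when $i+j$ exceeds an appropriate threshold determined by $\omega(G[A\cup B])$, a ``staircase'' pattern. One checks directly that this pattern has the right clique number, is skeletal (removing any remaining edge creates a larger clique), and contains no induced $C_4$, giving items (1) and (the linearity half of) (2).

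Next I would verify that this edge deletion preserves claw-freeness and the quasi-line property. Because $(A,B)$ is homogeneous, every vertex $x\notin A\cup B$ sees all of $A$ or none of it, and all of $B$ or none of it; so the neighbourhood structure of $x$ is unaffected by which $A$--$B$ edges are present, and a claw centred at $x$ would have to use a nonedge inside $A\cup B$ that is unchanged, or... the only new nonedges lie between $A$ and $B$. A claw centred at some $a\in A$ using a new nonedge $ab'$ would need two more neighbours of $a$ pairwise nonadjacent and nonadjacent to $b'$; I would rule this out using homogeneity together with the fact that $a$'s neighbours outside $A\cup B$ are shared across all of $A$, so the same triple would have been a claw (or forced a claw) in $G$ — this is the standard argument that thickening-type modifications inside a homogeneous pair preserve claw-freeness. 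The quasi-line case is analogous, using that a quasi-line graph is one in which every neighbourhood is covered by two cliques, and that $G[A\cup B]$ being replaced by another cobipartite graph keeps each relevant neighbourhood cobipartite-coverable. For item (4), note $\overline{G'}$ is obtained from $\overline G$ by adding edges only inside $A\cup B$, where $\overline{G'}[A\cup B]=H'$ is bipartite hence $2$-colourable; combined with a $3$-colouring of $\overline G$ one shows $\chi(\overline{G'})\le 3$, while $\overline{G'}\supseteq\overline G$ forces $\chi(\overline{G'})\ge\chi(\overline G)=3$ (here the homogeneity of $(A,B)$ is what lets the two $2$-colourings of the bipartite parts be reconciled with the colours of the rest).

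The colour-lifting direction is where the main work lies, and it is the step I expect to be the principal obstacle: given a $k$-colouring $c'$ of $G'$, I must produce a $k$-colouring of $G$. On $V(G)\setminus(A\cup B)$ I keep $c'$; the task is to recolour $A\cup B$ so that the colouring is proper in $G[A\cup B]$ while still avoiding, for each vertex of $A$ (resp.\ $B$), the colours used on its fixed external neighbourhood. Since $(A,B)$ is homogeneous, all vertices of $A$ forbid the same colour set $F_A$ and all of $B$ forbid the same set $F_B$; $c'$ already gives a proper colouring of the skeleton $G'[A\cup B]$ respecting $F_A,F_B$, and I need one respecting the denser constraint $G[A\cup B]$. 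This is an assignment/matching problem: I want to properly colour the cobipartite graph $G[A\cup B]$ — equivalently, partition $A\cup B$ into cliques of $G[A\cup B]$, i.e.\ into independent sets of the bipartite complement $H$ — using colours from the allowed palettes, with exactly the multiset of colour classes forced by $c'$ on the two sides. I would set this up as a bipartite $b$-matching / flow problem between colour classes and the ``slots'' of the staircase and invoke Hall's theorem (or a defect version) to show a feasible reassignment exists, using that $\omega(G[A\cup B])$ is the same in $G$ and $G'$ so the pigeonhole counts match; the $O(n^{5/2})$ running time comes from solving this bipartite matching problem via Hopcroft--Karp. The delicate point, and the crux of the lemma, is checking that Hall's condition genuinely holds — that the clique-number equality is exactly the combinatorial invariant that makes the reassignment always possible — and that is where I would concentrate the detailed argument.
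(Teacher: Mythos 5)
Your central construction does not yield a skeletal homogeneous pair of cliques, which is item (1) of the lemma and the whole point of the reduction. Recall the definition: $(A,B)$ is skeletal if no $A$--$B$ edge can be deleted without lowering $\omega(G[A\cup B])$. A threshold/staircase pattern ($a_i\sim b_j$ iff $i+j>T$) fails this whenever it has more than one ``step'': for instance with $|A|=|B|=2$ and $T=2$ the edges are $a_1b_2,\,a_2b_1,\,a_2b_2$, the clique number is $3$, and deleting $a_1b_2$ leaves the maximum clique $\{a_2,b_1,b_2\}$ intact, so the pair is \emph{not} skeletal. More generally a staircase admits several maximum cliques and edges outside any one of them are removable. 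The structure forced by the skeletal condition is much more rigid: as the paper observes in \S\ref{sec:skelsub}, \emph{all} the $A$--$B$ edges of a skeletal pair must be contained in a single clique $\Omega(A,B)$. The paper's construction achieves exactly this by fixing one maximum clique $X$ of $G[A\cup B]$ (taken to be $A$ itself when $A$ is maximum) and deleting every $A$--$B$ edge not inside $X$; after that, removing any remaining edge destroys the unique maximum clique. Your parenthetical ``removing any remaining edge creates a larger clique'' also reads as a slip — removing an edge of $G'$ cannot enlarge a clique, and edge-maximality of the bipartite complement $H$ is a property of the starting graph $G$, not something your $G'$ is required to inherit.

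The other steps are recoverable but overbuilt. For the colour lift you do not need a $b$-matching/flow formulation or a Hall-type feasibility argument: since $(A,B)$ is homogeneous, it suffices to recolour $G[A\cup B]$ using the same colour set on $A$ and the same colour set on $B$ as the given $k$-colouring of $G'$. Each of $A$, $B$ is a clique, so a proper colouring of a cobipartite graph on $A\cup B$ is determined by choosing how many (and which) colours are shared — i.e.\ a matching in the bipartite complement. The number of shared colours in the $G'$-colouring is at most the maximum matching of $\overline{G'[A\cup B]}$, which by K\H{o}nig/Gallai equals $|A|+|B|-\omega(G'[A\cup B])=|A|+|B|-\omega(G[A\cup B])$, the same as for $G$; so an appropriate matching in $\overline{G[A\cup B]}$ exists and the recolouring goes through directly. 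The $O(n^{5/2})$ bound comes from one Hopcroft--Karp call, as you suggest, but the argument behind it is just this counting identity, not a delicate feasibility check. Your claw-free and quasi-line preservation sketches invoke ``the standard argument''; the actual proof needs the specific observation that any claw in $G'$ would have to use a new $A$--$B$ nonedge together with a vertex outside $A\cup B$ adjacent to all of $A\cup B$, which already forces a claw in $G$ because $A$ and $B$ were not complete to each other in $G$ (as $(A,B)$ was nonskeletal). You should spell this out rather than gesture at it.
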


Theorem \ref{thm:skelhp} strengthens Lemma 9 from \cite{kingr08}, which itself expands on Lemma 5.1 from \cite{chudnovskyo07}.  We defer the proofs of Lemmas \ref{lem:findhp} and \ref{lem:reduction} to Section \ref{sec:lemmas}.  If we only wanted to reduce nonlinear homogeneous pairs of cliques, we could use the faster and more sophisticated algorithm from \cite{chudnovskyk11}.

%%%%%%%%%%%%%%%%%%%%%%%%%%%%%%%%%%%%%%%%%%%%%%%%%%%%%%%%%%%%%%%%%%%%%%%%%%%%%%%%
%%%%%%%%%%%%%%%%%%%%%%%%%%%%%%%%%%%%%%%%%%%%%%%%%%%%%%%%%%%%%%%%%%%%%%%%%%%%%%%%
%%%%%%%%%%%%%%%%%%%%%%%%%%%%%%%%%%%%%%%%%%%%%%%%%%%%%%%%%%%%%%%%%%%%%%%%%%%%%%%%
\subsubsection{The importance of being skeletal}\label{sec:skelsub}

If $(A,B)$ is skeletal then the edges between $A$ and $B$ are contained in a single clique $\Omega(A,B)$, which we consider to be empty if there are no edges between $A$ and $B$ (see Figure \ref{fig:hpoc}).  Thus $A\cup B$ can be partitioned into the four sets $A\cap \Omega(A,B)$, $B\cap \Omega(A,B)$, $A\setminus \Omega(A,B)$, $B\setminus \Omega(A,B)$, each of which is a homogeneous clique, a singleton, or empty.  For convenience, when talking about a thickening we often use $\Omega(v_i,v_j)$ to denote $\Omega(I(v_i),I(v_j))$.  We now explain why the structure of a skeletal homogeneous pair of cliques is so useful.

\begin{figure}
\begin{center}
\includegraphics[scale=0.6]{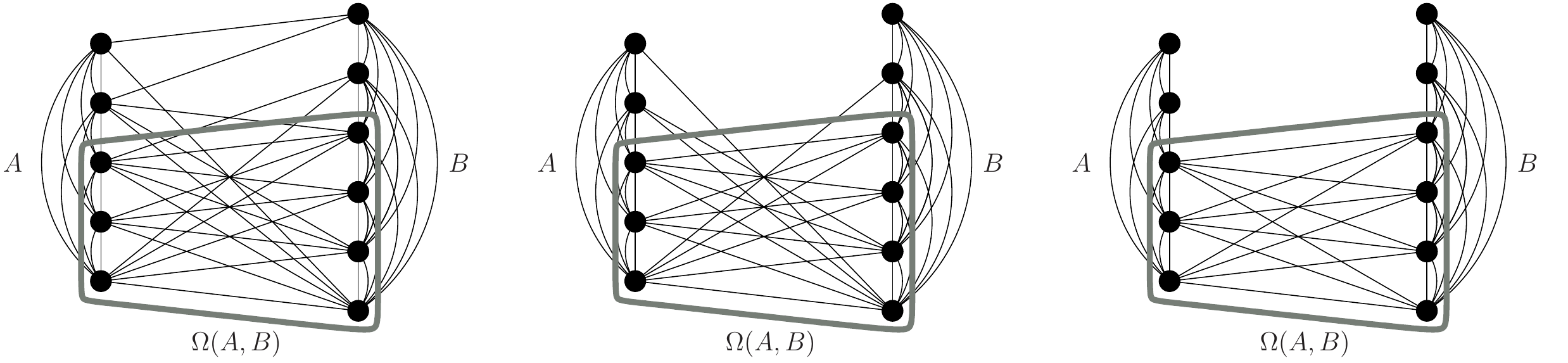}
\end{center}
\caption{\small{Three homogeneous pairs of cliques:  one nonlinear (left),  one nonskeletal linear (middle), and one skeletal (right).  We reduce a nonskeletal homogeneous pair of cliques $(A,B)$ by removing edges without changing the size of a maximum clique in $G[A\cup B]$.}}
\label{fig:hpoc}
\end{figure}

Our approach to colouring often involves removing a stable set $S$ from a supposedly minimum counterexample $G$ and confirming that for a given vertex set $C$, the removal of $S$ causes $\max_{v\in C}(d(v)+\omega(v))$ to drop by two.  We can easily insist that $S$ be a maximal stable set, so $d(v)+\omega(v)$ drops by at least one for every vertex in $G-S$.  In this case, removing $S$ lowers $\max_{v\in C}\gamma_\ell(v)$.  Thus we only need to worry about vertices in $C$ maximizing $d(v)+\omega(v)$.  In particular, if there is a vertex $v$ in $C$ whose closed neighbourhood properly contains the closed neighbourhood of another vertex $v'$, we can safely disregard $v'$ in our analysis.  In this case we say that $v$ {\em trumps}\index{trump} $v'$.  

Now consider the vertices in a skeletal homogeneous pair of cliques $(A,B)$.  We can make several simple observations, all of which are symmetric with respect to $A$ and $B$:
\begin{enumerate*}
\item Every vertex in $A\setminus \Omega(A,B)$ is trumped by every vertex in $A\cap \Omega(A,B)$.
\item Removing a vertex from $A\cap \Omega(A,B)$ lowers $d(v)$ for any $v\in A\cup \Omega(A,B)$.
\item Removing a vertex from $A\cap \Omega(A,B)$ lowers $\omega(v)$ for any $v\in A$.
\item Removing a vertex from $A\cap \Omega(A,B)$ and a vertex from $B\setminus \Omega(A,B)$ lowers $d(v)$ by two for any $v\in B\cap \Omega(A,B)$, and lowers $\omega(v)$ for any $v\in B\setminus \Omega(A,B)$.  In particular, it lowers $\max_{v\in A\cup B}(d(v)+\omega(v))$ by two.
\end{enumerate*}

We now prove that Theorem \ref{thm:local} holds for {\em antiprismatic thickenings} by exploiting the simplicity of skeletal homogeneous pairs of cliques.\\

%%%%%%%%%%%%%%%%%%%%%%%%%%%%%%%%%%%%%%%%%%%%%%%%%%%%%%%%%%%%%%%%%%%%%%%%%%%%%%%%
%%%%%%%%%%%%%%%%%%%%%%%%%%%%%%%%%%%%%%%%%%%%%%%%%%%%%%%%%%%%%%%%%%%%%%%%%%%%%%%%
%%%%%%%%%%%%%%%%%%%%%%%%%%%%%%%%%%%%%%%%%%%%%%%%%%%%%%%%%%%%%%%%%%%%%%%%%%%%%%%%
\subsection{Antiprismatic thickenings}

A {\em triad} is a stable set of size three.  A graph $G$ is {\em antiprismatic} if every triad $T$ contains exactly two neighbours of every vertex in $G-T$.  Such graphs are clearly claw-free, and they were described in detail by Chudnovsky and Seymour \cite{clawfree1, clawfree2}.  We say that an edge $e=uv$ in an antiprismatic graph $G$ is {\em changeable} if $G-e$ is also antiprismatic.  If this is the case, then (i) in $G$, neither $u$ nor $v$ is in a triad, and (ii) in $G-e$, $u$ and $v$ are in at most one triad (see \cite{clawfree1}, \S 16).

Given a matching $M$, we say that $M$ is a {\em changeable matching} in $G$ if for every $M'\subseteq M$, $G-M'$ is antiprismatic.  If $M$ is a changeable matching in $G$, then $M$ is claw-neutral in $G$.  If $G'$ is a thickening of an antiprismatic graph $G$ under a changeable matching $M$, then we say that $G'$ is an {\em antiprismatic thickening}.  In this section we prove that $\chi \leq \gamma_\ell$ for antiprismatic thickenings.

%%%%%%%%%%%%%%%%%%%%%%%%%%%%%%%%%%%%%%%%%%%%%%%%%%%%%%%%%%%%%%%%%%%%%%%%%%%%%%%%
\subsubsection{The case $\alpha\leq 2$}

We begin with trivially antiprismatic graphs, i.e.\ graphs containing no triad.  In these graphs, a colouring corresponds to a matching in the complement, and we can therefore appeal to well-known results in matching theory.

\begin{theorem}\label{thm:2local}
Let $G$ be any graph with $\alpha(G)\leq 2$.  Then $\chi(G)\leq \gamma_\ell(G)$.
\end{theorem}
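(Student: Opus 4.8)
The plan is to pass everything to the complement and then use matching theory. Since $\alpha(G)\le 2$, every colour class of a proper colouring of $G$ has size at most two, and such a pair is a non-edge of $G$, i.e.\ an edge of $\overline G$; so a proper $k$-colouring of $G$ is precisely a set of $n-k$ vertex-disjoint edges of $\overline G$, a matching. Hence $\chi(G)=n-\nu(\overline G)$, where $\nu$ denotes the maximum size of a matching. Write $\mathrm{def}:=n-2\nu(\overline G)$ for the Berge--Tutte deficiency of $\overline G$, so $\mathrm{def}\equiv n\pmod 2$ and $\chi(G)=\tfrac12(n+\mathrm{def})$. Since $v$ is adjacent to every other vertex of $G[\tilde N(v)]$ we have $\Delta(G[\tilde N(v)])=d(v)$ and $\omega(G[\tilde N(v)])=\omega(v)$, so $\gamma_\ell(v)=\lceil\tfrac12(d(v)+1+\omega(v))\rceil$; using $\mathrm{def}\equiv n\pmod 2$, it therefore suffices to produce a single vertex $v$ with $d(v)+\omega(v)\ge n+\mathrm{def}-2$, since then $\gamma_\ell(v)\ge\lceil\tfrac12(n+\mathrm{def}-1)\rceil=\tfrac12(n+\mathrm{def})=\chi(G)$.

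Two facts about $\overline G$ will be used throughout: it is triangle-free, so the $\overline G$-neighbourhood of any vertex is a clique of $G$; and distinct components of an induced subgraph $\overline G[D]$ are anticomplete. First I would dispose of the case $\mathrm{def}\le 1$ directly. If $n\ge 2$ and $\mathrm{def}\le 1$ then $\overline G$ has an edge; take $u$ of maximum degree in $\overline G$ and any $\overline G$-neighbour $w$ of $u$. The clique $N_{\overline G}(u)$ of $G$ contains $w$, so $\omega(w)\ge \Delta(\overline G)\ge d_{\overline G}(w)$, whence $d(w)+\omega(w)\ge (n-1-d_{\overline G}(w))+d_{\overline G}(w)=n-1\ge n+\mathrm{def}-2$.

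For $\mathrm{def}\ge 2$ I would induct on $n$ and invoke the Gallai--Edmonds structure theorem for $\overline G$: let $D$ be the set of vertices missed by some maximum matching of $\overline G$, let $A=N_{\overline G}(D)\setminus D$, and let $C$ be the rest; every component of $\overline G[D]$ is factor-critical and $\mathrm{def}=c(D)-|A|$, where $c(D)$ is the number of these components. Since $\mathrm{def}\ge 2$ forces $c(D)\ge 2$, every component $K$ of $\overline G[D]$ has $|K|<n$; fix one. As $K$ is factor-critical, $|K|$ is odd, so $\chi(G[K])\ge\lceil|K|/2\rceil=\tfrac12(|K|+1)$; applying the induction hypothesis to $G[K]$ (which inherits $\alpha\le 2$) gives $\gamma_\ell(G[K])\ge\tfrac12(|K|+1)$, and unwinding the ceiling yields a vertex $v\in K$ with $d_{G[K]}(v)+\omega_{G[K]}(v)\ge |K|-1$ (for $|K|=1$ this is trivially true of the unique vertex, so no recursion is needed there). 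Now I lift $v$ to $G$. A maximum clique of $G[K]$ through $v$ is an independent set of $\overline G$, and adjoining one vertex from each of the other $c(D)-1$ components of $\overline G[D]$ (pairwise anticomplete, each anticomplete to $K$) keeps it independent, so $\omega(v)\ge\omega_{G[K]}(v)+c(D)-1$. Any $\overline G$-neighbour of $v$ outside $K$ lies in $A$, so $v$ has at least $(n-|K|)-|A|$ neighbours in $G$ outside $K$, i.e.\ $d(v)\ge d_{G[K]}(v)+(n-|K|)-|A|$. Adding these and using $d_{G[K]}(v)+\omega_{G[K]}(v)\ge|K|-1$ gives $d(v)+\omega(v)\ge (n-|K|)-|A|+(|K|-1)+c(D)-1=n-2+(c(D)-|A|)=n+\mathrm{def}-2$, exactly as required.

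The routine parts — the colouring/matching dictionary and the standard Gallai--Edmonds facts — carry no real difficulty, so the content is the bookkeeping in the last step: arranging that the clique extension (one vertex per other $D$-component) and the neighbour count outside $K$ add up to precisely $n+\mathrm{def}-2$, and checking the parity so that this bound on $d(v)+\omega(v)$ truly delivers $\gamma_\ell(v)\ge\chi(G)$ rather than $\chi(G)-1$ (this is where $\mathrm{def}\equiv n\pmod 2$ is essential). The one genuine subtlety is well-foundedness of the induction: a component of $\overline G[D]$ can be all of $\overline G$ — this is exactly the case $\mathrm{def}=1$ with $\overline G$ connected and factor-critical — so that case must be, and is, peeled off and settled by the max-degree argument instead of by recursion.
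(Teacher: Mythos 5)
Your proof is correct, and while it starts the same way as the paper's — pass to $\overline G$, identify colourings with matchings, invoke the Gallai--Edmonds structure theorem — it diverges in a genuinely different direction afterwards. The paper argues by minimal counterexample and splits on the three Gallai--Edmonds alternatives: an essential vertex (killed by minimality), a disconnected $\overline G$ (killed by additivity across a join), and a near-perfect matching in $\overline G$. The last and decisive case is not settled by counting at all: the paper observes $\chi(G)=\lceil n/2\rceil=\lceil\chi_f(G)\rceil$ and then invokes the nontrivial fractional bound of Theorem~\ref{thm:fractional} to conclude. Your argument avoids Theorem~\ref{thm:fractional} entirely. You reduce the whole statement to exhibiting one vertex $v$ with $d(v)+\omega(v)\ge n+\mathrm{def}-2$, handle $\mathrm{def}\le 1$ by a short extremal argument using a maximum-degree vertex of $\overline G$ and the fact that $\overline G$ is triangle-free, and for $\mathrm{def}\ge 2$ you induct on a single factor-critical component $K$ of $\overline G[D]$, lift the witness vertex from $K$ to $G$, and verify that the $c(D)-1$ extra clique vertices (one per other $D$-component) and the at most $|A|$ lost neighbours account exactly for $n+\mathrm{def}-2$. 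Your version is more elementary and self-contained — Gallai--Edmonds is the only black box — at the cost of the deficiency bookkeeping and an explicit parity check; the paper's is shorter because it outsources the hard case to the fractional theorem (which the paper itself remarks, right after the proof, is not strictly necessary — your proof makes that remark concrete). Your care with the degenerate cases ($|K|=1$, $n\le 1$, and the well-foundedness when $\overline G$ is itself factor-critical) is exactly the kind of attention the induction needs.
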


Our proof relies on the observation that an optimal colouring of a graph with $\alpha\leq 2$ corresponds to a maximum matching in the complement $\gbar$.  Rabern \cite{rabern08} independently proved that $\chi \leq \gamma$ for such graphs using a similar approach.

\begin{proof}[Proof of Theorem \ref{thm:2local}]
Let $G$ be a minimum counterexample to the theorem.  Applying the Edmonds-Gallai structure theorem (\cite{edmonds65b, gallai59}, see also \cite{kingthesis} \S 2.5) for maximum matchings tells us that either (i) there is a vertex $v\in G$ such that $\chi(G)=\chi(G-v)$, (ii) $\gbar$ is not connected, or (iii) $\gbar$ has a matching of size $\lfloor \frac n2 \rfloor$ and consequently $\chi(G) = \lceil \frac n2 \rceil$.  Minimality of $G$ tells us that (i) is impossible.

Suppose $\gbar$ is not connected.  Then $V(G)$ can be partitioned into nonempty $V_1$ and $V_2$ such that $V_1$ is joined to $V_2$, i.e.\ every possible edge between $V_1$ and $V_2$ exists.  It is easy to confirm that $\chi(G)=\chi(G[V_1])+\chi(G[V_2]) \leq \gamma_\ell(G[V_1])+\gamma_\ell(G[V_2]) \leq \gamma_\ell(G)$, the middle inequality following from the minimality of $G$.

Therefore (iii) must be the case, so $\chi(G)=\lceil \frac n2 \rceil$.  Since $\chi_f(G) \geq \frac{n}{\alpha(G)}$, we have $\chi(G)= \lceil \chi_f(G)\rceil$.  By Theorem \ref{thm:fractional},
$$\chi(G)\leq \lceil \chi_f(G) \rceil \leq  \max_{v\in V(G)}\left\lceil \tfrac 12 (d(v)+1+\omega(v))\right\rceil.
$$
This proves the theorem.
\end{proof}

It is not hard to prove the case $\chi(G)=\lceil \frac n 2 \rceil$ without using Theorem \ref{thm:fractional}.  However, this application of Theorem \ref{thm:fractional} is a useful trick and we will use it again later in the paper.

%%%%%%%%%%%%%%%%%%%%%%%%%%%%%%%%%%%%%%%%%%%%%%%%%%%%%%%%%%%%%%%%%%%%%%%%%%%%%%%%
\subsubsection{The case $\alpha=3$}

It remains to show that $\chi(G) \leq \gamma_\ell(G)$ for any antiprismatic thickening $G$ containing a triad.  This case is fairly easy, and is a perfect example of a method we will use repeatedly:  Given a supposed minimum counterexample $G$, we remove a stable set $T$ (in this case a triad) such that $\gamma_\ell(G-T)< \gamma_\ell(G)$.  This immediately contradicts the minimality of our supposed counterexample, since we can make the triad $T$ a colour class in a $\chi(G-T)+1$ colouring of $G$.  We first define the type of triad we seek; we will use them repeatedly.  Recall from Section \ref{sec:skelsub} that a vertex $u$ {\em trumps} a vertex $v$ if $\tilde N(v) \subset \tilde N(u)$.

\begin{definition}
Let $T$ be a triad in a graph $G$.  If every vertex $v$ in $G-T$ has two neighbours in $T$ or a twin in $T$ or is trumped by a vertex in $T$, then we say that $T$ is a {\em good triad}.
\end{definition}

Observe that any good triad $T$ has the property that $\gamma_\ell(G-T)\leq \gamma_\ell(G)-1$.

\begin{theorem}\label{thm:antiprismaticlocal}
Let $G$ be an antiprismatic thickening.  Then $\chi(G) \leq \gamma_\ell(G)$.
\end{theorem}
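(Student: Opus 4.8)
The plan is a minimality argument. Suppose the theorem fails and let $G$ be a counterexample minimizing first $|V(G)|$ and then $|E(G)|$; fix a presentation of $G$ as a thickening of an antiprismatic graph $G_0$ under a changeable matching $M$, with $I(v)$ the clique replacing each $v\in V(G_0)$. Two reductions follow from the minimality of $G$. First, $\alpha(G)\le 3$: since $G_0-M$ is antiprismatic we have $\alpha(G_0-M)\le 3$, and if $S\subseteq V(G)$ is stable then $S_0:=\{v: S\cap I(v)\neq\emptyset\}$ has $|S_0|=|S|$ and is stable in $G_0-M$ (a $G_0$-edge inside $S_0$ not lying in $M$ would join two vertices of $S$). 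If $\alpha(G)\le 2$ then Theorem~\ref{thm:2local} gives $\chi(G)\le\gamma_\ell(G)$, so $\alpha(G)=3$. Second, for every $uv\in M$ the pair $(I(u),I(v))$ is a skeletal homogeneous pair of cliques: it has at least two edges between its sides (a nonempty proper subset of these was deleted in forming $G$), so it is a homogeneous pair of cliques, and were it not skeletal we could delete one such edge without changing $\omega(G[I(u)\cup I(v)])$ and hence, since $G[I(u)\cup I(v)]$ is perfect, without changing $\chi(G)$, leaving a graph that is still an antiprismatic thickening (of $G_0$ under $M$, or of $G_0-uv$ under $M-uv$ if no cross edge survives) and hence a counterexample with fewer edges. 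So for $uv\in M$ the edges between $I(u)$ and $I(v)$ lie in a single clique $\Omega(I(u),I(v))$ meeting both sides, each of the four pieces of $I(u)\cup I(v)$ is empty, a singleton, or a homogeneous clique, and we may use Observations~1--4 of Section~\ref{sec:skelsub} together with the fact that the vertices of a homogeneous clique are mutually twins.

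It now suffices to exhibit a good triad $T$ in $G$: then $\gamma_\ell(G-T)\le\gamma_\ell(G)-1$, and since $G-T$ is again an antiprismatic thickening (antiprismaticity and changeable matchings being preserved by vertex deletion) but strictly smaller, it satisfies $\chi(G-T)\le\gamma_\ell(G-T)$, so colouring $T$ with one new colour gives $\chi(G)\le\gamma_\ell(G)$, a contradiction. Since $\alpha(G)=3$, fix a triad $\{x,y,z\}$ of $G$ with $x\in I(p)$, $y\in I(q)$, $z\in I(r)$ for distinct $p,q,r\in V(G_0)$. A $G_0$-edge inside $\{p,q,r\}$ must lie in $M$ (otherwise the two corresponding cliques are completely joined in $G$, contradicting that $\{x,y,z\}$ is a triad), so $\{p,q,r\}$ spans at most one edge of $G_0$, and we split into two cases.

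If $\{p,q,r\}$ spans no edge of $G_0$ then it is a triad of $G_0$, so none of $p,q,r$ is $M$-matched (an endpoint of an edge of $M$ lies in no triad of $G_0$, by the properties of changeable edges) and hence $I(p),I(q),I(r)$ are sets of mutually twins; with $T=\{x,y,z\}$, every vertex of $I(p)\setminus\{x\}$ is a twin of $x$ (and symmetrically for $q,r$), and by antiprismaticity of $G_0$ every $w\notin\{p,q,r\}$ has exactly two $G_0$-neighbours in $\{p,q,r\}$, with the corresponding cliques completely joined to $I(w)$, so every vertex of $I(w)$ has two neighbours in $T$; hence $T$ is good. Otherwise, say $pq\in M$, so $p$ and $q$ are matched to each other, $pr,qr\notin E(G_0)$, and $\{p,q,r\}$ is a triad of $G_0-M$. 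Write $\Omega=\Omega(I(p),I(q))$; since some cross edge was deleted, not both of $I(p),I(q)$ lie in $\Omega$, so we may assume $I(q)\setminus\Omega\neq\emptyset$. Take $x\in I(p)\cap\Omega$, $y\in I(q)\setminus\Omega$, and take $z\in I(r)\cap\Omega(I(r),I(s))$ if $r$ is $M$-matched to some $s$ (necessarily $s\notin\{p,q\}$) and $z\in I(r)$ arbitrary otherwise; one checks $T:=\{x,y,z\}$ is a triad. For goodness: a vertex of $I(p)\setminus\{x\}$ is a twin of $x$ if it lies in $\Omega$ and is trumped by $x$ otherwise (Observation~1); a vertex of $I(q)\setminus\{y\}$ is a twin of $y$ if it lies outside $\Omega$, and otherwise is adjacent to both $y$ (via the clique $I(q)$) and $x$ (via the clique $\Omega$); a vertex of $I(r)\setminus\{z\}$ is a twin of $z$ or is trumped by $z$ via the pair $(I(r),I(s))$; and for $w\notin\{p,q,r\}$, antiprismaticity of $G_0-M$ gives $w$ exactly two neighbours $t$ in $\{p,q,r\}$ in $G_0-M$, each with $wt\notin M$ (its $M$-partner is elsewhere if $t\in\{p,q\}$, and $w\sim r$ in $G_0-M$ if $t=r$), so $I(w)$ is completely joined in $G$ to each of the two corresponding cliques and every vertex of $I(w)$ has two neighbours in $T$. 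Hence $T$ is good, which completes the proof.

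The main obstacle is the second case, where the chosen triad of $G$ projects to a non-triad $\{p,q,r\}$ of $G_0$ (but to a triad of $G_0-M$): antiprismaticity of $G_0$ is then unavailable, so one must work inside $G_0-M$, use the clique $\Omega(I(p),I(q))$ (and $\Omega(I(r),I(s))$) to choose representatives so that the stray vertices of $I(p),I(q),I(r)$ are absorbed as twins or trumped vertices, and carefully track how an outside vertex attaches to $\{p,q,r\}$ in $G_0$ versus in $G_0-M$. The remaining arguments are routine given the tools of Section~\ref{sec:skelsub}.
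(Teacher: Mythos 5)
Your proof is correct and follows essentially the same strategy as the paper: reduce to the case $\alpha(G)=3$, ensure all $M$-pairs $(I(u),I(v))$ are skeletal by minimality, then exhibit a good triad by casing on whether a triad of $G$ projects to a triad of $G_0$ or only to a triad of $G_0-M$. The one substantive divergence is in the second case: the paper proves that the third projected vertex (your $r$, its $w$) cannot lie in $V(M)$ at all, via the observation that removing a changeable edge from an antiprismatic graph creates at most one triad; you instead allow $r$ to be $M$-matched and neutralize the extra structure by choosing the representative $z \in I(r)\cap\Omega(I(r),I(s))$. Both routes work; yours trades away one lemma-level observation about changeable edges for a slightly more careful choice of $T$, and is a valid and arguably more self-contained variant of the same argument.
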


\begin{proof}
Let $G$ be a minimum counterexample to the theorem.  We already know that $\alpha(G)=3$.  If $G$ contains a good triad $T$, then since $\chi(G-T)\leq \gamma_\ell(G-T)$ and $\chi(G-T)\geq \chi(G)-1$, we know that $\chi(G)\leq \gamma_\ell(G)$.  Therefore to reach a contradiction it suffices to prove the existence of a good triad.  Suppose $G$ is a thickening of an antiprismatic graph $H$ under a changeable matching $M$.

Suppose there is a triad $\{u,v,w\}$ in $H$.  Then note that by the properties of a changeable edge, none of $u,v,w$ is an endpoint of any edge $e$ in $M$: the other endpoint $y$ would either form a claw with $T$, or $y$ would have only one neighbour in $T$ in $G-e$, contradicting the fact that $M$ is changeable.  Let $T$ be a triad in $I(u)\cup I(v)\cup I(w)$.  Every vertex in $(I(u)\cup I(v)\cup I(w))\setminus T$ has a twin in $T$, and every vertex in $G-(I(u)\cup I(v)\cup I(w))$ has two neighbours in $T$.  Therefore $T$ is a good triad and we are done.

So there is no triad in $H$.  Since $\alpha(G)=3$, there are vertices $u,v,w$ in $H$ such that $e=uv\in M$ and $\{u,v,w\}$ is a triad in $H-e$.  By the definition of a thickening, $I(u)\cup I(v)$ is not a clique but there is at least one edge between $I(u)$ and $I(v)$.

We claim that $(I(u),I(v))$ is a skeletal homogeneous pair of cliques in $G$.  For if this is not the case, Lemma \ref{lem:reduction} tells us that we can remove edges between $I(u)$ and $I(v)$ to reach a proper subgraph $G'$ of $G$ with $\chi(G')=\chi(G)$; one can easily confirm that $G'$ is either a thickening of $H$ under $M$, or a thickening of $H-e$ under $M-e$.  Either way, $G'$ is an antiprismatic thickening and contradicts the minimality of $G$.  Therefore $(I(u),I(v))$ is skeletal, $\Omega(u,v)$ is nonempty, and at least one of $I(u)\setminus \Omega(u,v)$ and $I(v)\setminus \Omega(u,v)$ is nonempty.  Assume $I(u)\setminus \Omega(u,v)$ is nonempty.  Let $a,b,c\in V(G)$ be vertices in $I(u)\setminus \Omega(u,v)$, $I(v)\cap \Omega(u,v)$, and $I(w)$ respectively, and note that $T=\{a,b,c\}$ is a triad.  It suffices to show that it is a good triad, which we do now.

Observe that $w$ is not in $V(M)$, for if there were an edge $wx\in M$ then since $H-e$ is antiprismatic, $x$ would have two neighbours in $\{u,v,w\}$ in $H-e$, contradicting the fact that $H-e-wx$ must also be antiprismatic since $M$ is a changeable matching in $H$.  Since $H-e$ is antiprismatic, any vertex of $G$ without two neighbours in $T$ must be in $I(u)\cup I(v)\cup I(w)$.  Therefore a vertex in $I(w)\setminus T$ has a twin in $T$, a vertex in $I(u)\setminus T$ has two neighbours or a twin in $T$ (depending on whether or not it is in $\Omega(u,v)$), and a vertex in $I(v)\setminus T$ has a twin in $T$ or is trumped by a vertex in $T$ (again depending on whether or not it is in $\Omega(u,v)$).  Therefore $T$ is a good triad and we are done.
\end{proof}

The proof actually implies a slightly different result, which is worth stating separately:

\begin{corollary}
Let $G$ be a skeletal antiprismatic thickening with $\alpha(G)\geq 3$.  Then $G$ contains a good triad.
\end{corollary}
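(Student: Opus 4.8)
The plan is to extract the relevant portion of the proof of Theorem~\ref{thm:antiprismaticlocal} and observe that it already proves more than stated. Recall that the proof of Theorem~\ref{thm:antiprismaticlocal} proceeds by taking a minimum counterexample $G$ — necessarily with $\alpha(G)=3$ — and then exhibiting a good triad, which contradicts minimality. The good-triad construction came in two cases depending on whether the underlying antiprismatic graph $H$ (of which $G$ is a thickening under a changeable matching $M$) contains a triad. In the first case a good triad was produced directly. In the second case one used a nonskeletal-homogeneous-pair argument via Lemma~\ref{lem:reduction}: if $(I(u),I(v))$ were nonskeletal we could reduce to a smaller antiprismatic thickening, contradicting minimality, so $(I(u),I(v))$ is skeletal and the skeletal structure lets us pick the triad $T=\{a,b,c\}$ with $a\in I(u)\setminus\Omega(u,v)$, $b\in I(v)\cap\Omega(u,v)$, $c\in I(w)$.

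So here is how I would write the corollary's proof. Let $G$ be a skeletal antiprismatic thickening with $\alpha(G)\geq 3$; write $G$ as a thickening of an antiprismatic graph $H$ under a changeable matching $M$. Since $G$ is a thickening of an antiprismatic graph it is antiprismatic, so $\alpha(G)\leq 3$ and hence $\alpha(G)=3$. Now run exactly the argument from the proof of Theorem~\ref{thm:antiprismaticlocal}. If $H$ contains a triad $\{u,v,w\}$, the endpoints of such a triad avoid $V(M)$ and any triad $T\subseteq I(u)\cup I(v)\cup I(w)$ is good, as shown there. Otherwise there are $u,v,w$ with $e=uv\in M$ and $\{u,v,w\}$ a triad in $H-e$; here the only place the original proof invoked minimality of $G$ was to rule out $(I(u),I(v))$ being nonskeletal — but now we are simply assuming $G$ is skeletal, so $(I(u),I(v))$ is skeletal outright, and the rest of the construction of the good triad $T=\{a,b,c\}$ goes through verbatim. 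In either case $G$ has a good triad.

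The only real subtlety — and the one sentence worth spelling out — is that in the original proof the step ``$(I(u),I(v))$ is skeletal'' was justified by minimality (a nonskeletal pair would let us reduce to a smaller antiprismatic thickening), whereas here it follows directly from the hypothesis that $G$ itself is skeletal, since $G$ skeletal means $G$ contains no nonskeletal homogeneous pair of cliques whatsoever, in particular $(I(u),I(v))$ is not one. Everything else — the fact that $u,v,w$ (or $w$) avoid $V(M)$, the partition of $I(u)\cup I(v)$ using $\Omega(u,v)$, and the case analysis showing each vertex of $G-T$ has two neighbours in $T$ or a twin in $T$ or is trumped by a vertex of $T$ — is unchanged, because none of it used that $G$ was a counterexample; it only used that $G$ is an antiprismatic thickening with the stated skeletal structure on $(I(u),I(v))$. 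I do not anticipate any genuine obstacle: the corollary is essentially a repackaging of the inner workings of the theorem's proof, and the main task is just to present it cleanly without re-deriving the good-triad verification.

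\begin{proof}
Let $G$ be a skeletal antiprismatic thickening with $\alpha(G)\geq 3$, and write $G$ as a thickening of an antiprismatic graph $H$ under a changeable matching $M$.  Since a thickening of an antiprismatic graph is antiprismatic, $\alpha(G)\leq 3$, so in fact $\alpha(G)=3$.  We show $G$ contains a good triad; the argument is exactly that of the proof of Theorem~\ref{thm:antiprismaticlocal}.

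If $H$ contains a triad $\{u,v,w\}$, then as observed there none of $u,v,w$ lies in $V(M)$, and any triad $T\subseteq I(u)\cup I(v)\cup I(w)$ is good: every vertex of $(I(u)\cup I(v)\cup I(w))\setminus T$ has a twin in $T$, and every other vertex of $G$ has two neighbours in $T$.

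Otherwise, since $\alpha(G)=3$ there are vertices $u,v,w$ in $H$ with $e=uv\in M$ and $\{u,v,w\}$ a triad in $H-e$, so $I(u)\cup I(v)$ is not a clique but there is at least one edge between $I(u)$ and $I(v)$.  Because $G$ is skeletal, the homogeneous pair of cliques $(I(u),I(v))$ is skeletal; hence $\Omega(u,v)$ is nonempty and, without loss of generality, $I(u)\setminus\Omega(u,v)$ is nonempty.  (This is the one step where the proof of Theorem~\ref{thm:antiprismaticlocal} appealed to minimality of $G$ via Lemma~\ref{lem:reduction}; here it is immediate from the hypothesis.)  Pick $a\in I(u)\setminus\Omega(u,v)$, $b\in I(v)\cap\Omega(u,v)$, and $c\in I(w)$, and set $T=\{a,b,c\}$, a triad.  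As in the proof of Theorem~\ref{thm:antiprismaticlocal}, $w\notin V(M)$, so since $H-e$ is antiprismatic every vertex of $G$ without two neighbours in $T$ lies in $I(u)\cup I(v)\cup I(w)$; a vertex of $I(w)\setminus T$ has a twin in $T$, a vertex of $I(u)\setminus T$ has two neighbours or a twin in $T$ according as it lies in $\Omega(u,v)$ or not, and a vertex of $I(v)\setminus T$ has a twin in $T$ or is trumped by a vertex of $T$ according as it lies in $\Omega(u,v)$ or not.  Thus $T$ is a good triad.
\end{proof}
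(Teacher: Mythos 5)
Your overall approach is exactly what the paper intends: it says explicitly that the corollary is extracted from the proof of Theorem~\ref{thm:antiprismaticlocal}, and your repackaging correctly identifies that minimality was only used to guarantee $(I(u),I(v))$ is skeletal, which now comes for free from the hypothesis. The good-triad verification is quoted faithfully in both cases.

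One intermediate claim is wrong, though. You write ``Since a thickening of an antiprismatic graph is antiprismatic, $\alpha(G)\leq 3$.'' Antiprismatic thickenings need \emph{not} be antiprismatic as graphs: if $\{u,v,w\}$ is a triad in $H$ and $u$ is multiplied to $\{u_1,u_2\}$, then $\{u_1,v,w\}$ is a triad in the thickening and $u_2$ has only one neighbour in it (its twin $u_1$), violating the definition of antiprismatic. Indeed, the paper's own good-triad verification explicitly allows vertices in $(I(u)\cup I(v)\cup I(w))\setminus T$ that have only a \emph{twin} in $T$ rather than two neighbours, which is exactly this situation. The conclusion $\alpha(G)\leq 3$ is still correct, but the reason is different: a stable set $S$ of $G$ meets each $I(x)$ at most once, and if $S$ hits $I(x_1),\ldots,I(x_k)$ then $\{x_1,\ldots,x_k\}$ is a stable set in $H-M'$ where $M'$ consists of those edges of $M$ with both ends among the $x_i$; since $M$ is changeable, $H-M'$ is antiprismatic and hence $\alpha(H-M')\leq 3$, so $k\leq 3$. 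With that one sentence replaced, your proof is correct and essentially identical to the paper's.
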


In Section \ref{sec:algorithmic} we will show that given an antiprismatic thickening $G$ of an antiprismatic graph $H$ under a changeable matching $M$, we can find $H$ and $M$ in polynomial time.

%%%%%%%%%%%%%%%%%%%%%%%%%%%%%%%%%%%%%%%%%%%%%%%%%%%%%%%%%%%%%%%%%%%%%%%%%%%%%%%%
%%%%%%%%%%%%%%%%%%%%%%%%%%%%%%%%%%%%%%%%%%%%%%%%%%%%%%%%%%%%%%%%%%%%%%%%%%%%%%%%
%%%%%%%%%%%%%%%%%%%%%%%%%%%%%%%%%%%%%%%%%%%%%%%%%%%%%%%%%%%%%%%%%%%%%%%%%%%%%%%%
%%%%%%%%%%%%%%%%%%%%%%%%%%%%%%%%%%%%%%%%%%%%%%%%%%%%%%%%%%%%%%%%%%%%%%%%%%%%%%%%
%%%%%%%%%%%%%%%%%%%%%%%%%%%%%%%%%%%%%%%%%%%%%%%%%%%%%%%%%%%%%%%%%%%%%%%%%%%%%%%%
%%%%%%%%%%%%%%%%%%%%%%%%%%%%%%%%%%%%%%%%%%%%%%%%%%%%%%%%%%%%%%%%%%%%%%%%%%%%%%%%
%%%%%%%%%%%%%%%%%%%%%%%%%%%%%%%%%%%%%%%%%%%%%%%%%%%%%%%%%%%%%%%%%%%%%%%%%%%%%%%%
%%%%%%%%%%%%%%%%%%%%%%%%%%%%%%%%%%%%%%%%%%%%%%%%%%%%%%%%%%%%%%%%%%%%%%%%%%%%%%%%
%%%%%%%%%%%%%%%%%%%%%%%%%%%%%%%%%%%%%%%%%%%%%%%%%%%%%%%%%%%%%%%%%%%%%%%%%%%%%%%%
%%%%%%%%%%%%%%%%%%%%%%%%%%%%%%%%%%%%%%%%%%%%%%%%%%%%%%%%%%%%%%%%%%%%%%%%%%%%%%%%
%%%%%%%%%%%%%%%%%%%%%%%%%%%%%%%%%%%%%%%%%%%%%%%%%%%%%%%%%%%%%%%%%%%%%%%%%%%%%%%%
%%%%%%%%%%%%%%%%%%%%%%%%%%%%%%%%%%%%%%%%%%%%%%%%%%%%%%%%%%%%%%%%%%%%%%%%%%%%%%%%
%%%%%%%%%%%%%%%%%%%%%%%%%%%%%%%%%%%%%%%%%%%%%%%%%%%%%%%%%%%%%%%%%%%%%%%%%%%%%%%%
%%%%%%%%%%%%%%%%%%%%%%%%%%%%%%%%%%%%%%%%%%%%%%%%%%%%%%%%%%%%%%%%%%%%%%%%%%%%%%%%
\section{Some important types of claw-free graphs}\label{sec:classes}

To fully describe skeletal claw-free graphs we must first define some fundamental subclasses, the first of which was antiprismatic thickenings.  Here we describe line graphs, linear and circular interval graphs, and antihat thickenings.

%%%%%%%%%%%%%%%%%%%%%%%%%%%%%%%%%%%%%%%%%%%%%%%%%%%%%%%%%%%%%%%%%%%%%%%%%%%%%%%%
\subsection{Line graphs}

Given a multigraph $H$, its {\em line graph} $L(H)$ is the graph with one vertex for each edge of $H$, in which two vertices are adjacent precisely if their corresponding edges in $H$ share at least one endpoint.  We say that $G$ is a line graph if $G=L(H)$ for some multigraph $H$.  Thus the neighbours of any vertex $v$ in a line graph $L(H)$ are covered by two cliques, one for each endpoint of the edge in $H$ corresponding to $v$.  Observe that every line graph is claw-free.  When considering the line graph of $H$ we may assume that $H$ is loopless, since replacing a loop with a pendant edge in $H$ will not change $L(H)$.

Suppose $G$ is the line graph of $H$, and that $G$ contains a matching $M$ in which each edge corresponds to the two edges in $H$ incident to some vertex of degree $2$.  Then $M$ is a claw-neutral matching, and any thickening of $G$ under $M$ is a thickening what Chudnovsky and Seymour call a {\em thickening of a line trigraph} \cite{clawfree5}.  Now suppose $G'$ is a skeletal thickening of $G$ under $M$.  We claim that $G'$ is actually a line graph as well:

\begin{prop}\label{prop:line}
If a graph $G'$ is a thickening of a line trigraph and is skeletal, then $G$ is a line graph.
\end{prop}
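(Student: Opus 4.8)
The plan is to start from a skeletal thickening $G'$ of a line trigraph, built from some line graph $G = L(H)$ under a claw-neutral matching $M$ of the special type described (each edge of $M$ corresponding to the two $H$-edges incident to a degree-$2$ vertex of $H$), and to exhibit a multigraph $H'$ with $G' = L(H')$. The natural candidate for $H'$ is obtained from $H$ by a local surgery at each vertex $u$ of $H$ and at each matching edge $uv \in M$: multiplying a vertex $w$ of $G$ corresponds to subdividing or multiplying the corresponding edge of $H$ in a controlled way, so that the clique at an endpoint-vertex $x$ of $H$ grows to accommodate the clones $I(w)$ for all $w$ incident to $x$. First I would set up this correspondence carefully: for each $x \in V(H)$, the edges of $H$ at $x$ give a clique $K_x$ in $G$, and in $G'$ the set $\bigcup_{w \in K_x} I(w)$ should again be a clique (it is, since within a thickening the only edges deleted are between $I(u)$ and $I(v)$ for $uv \in M$, and no such pair lies in a common $K_x$). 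So $G'$ is locally covered by the two cliques $\bigcup_{w \in K_x} I(w)$ and $\bigcup_{w \in K_y} I(w)$ around each vertex arising from an edge $xy$, which is exactly the local picture of a line graph; the content is to globalize this.

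The key steps, in order, are: (1) Reduce to understanding what the thickening does edge-by-edge. For $uv \notin M$ (so either $uv \notin E(G)$ or $uv$ is an ordinary edge), multiplying $u$ and $v$ just replaces the corresponding edges of $H$ by parallel copies, which keeps the graph a line graph. The only subtle place is a matching edge $uv \in M$, where $u,v$ are the two $H$-edges at a degree-$2$ vertex, say $t$, with $N_H(t) = \{x,y\}$: here we delete a nonempty proper subset of the edges between $I(u)$ and $I(v)$. (2) Invoke skeletality: by the structure of a skeletal homogeneous pair of cliques recalled in Section~\ref{sec:skelsub}, the edges remaining between $I(u)$ and $I(v)$ lie in a single clique $\Omega(u,v)$, and $I(u)\cup I(v)$ partitions into $I(u)\cap\Omega$, $I(v)\cap\Omega$, $I(u)\setminus\Omega$, $I(v)\setminus\Omega$, each a clique/singleton/empty set, with $I(u)\setminus\Omega$ and $I(v)\setminus\Omega$ both nonempty (else $uv$ wasn't really split) — wait, one of them may be empty; in any case the four-part structure holds. (3) Realize this picture in $H'$: replace the vertex $t$ of $H$ (with neighbours $x,y$) by the gadget that, in the line graph, produces exactly this four-clique configuration — concretely, introduce a bundle of parallel edges between new vertices so that $I(u)\cap\Omega$ and $I(v)\cap\Omega$ become edges meeting at a common vertex $t_1$, while $I(u)\setminus\Omega$ attaches at $x$ only and $I(v)\setminus\Omega$ at $y$ only, and the non-edges between $I(u)\setminus\Omega$ and $I(v)\cap\Omega$ etc.\ are automatic because the corresponding $H'$-edges share no endpoint. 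One checks that a skeletal linear interval / cobipartite configuration of this restricted shape is always a line graph of a small multigraph, and these local gadgets glue consistently because they only modify $H$ inside the "stars" at $x$ and $y$. (4) Verify globally that $L(H') = G'$: two clones are adjacent in $G'$ iff the corresponding $H'$-edges share an endpoint — adjacencies within a clone-set $I(w)$ and within each $K_x$-cluster are immediate, and the only cross adjacencies to check are exactly those governed by the $\Omega$-cliques handled in step~(3).

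I expect the main obstacle to be step~(3)–(4): proving that the local configuration forced by skeletality at a split matching edge is \emph{exactly} realizable as a line-graph gadget and that these gadgets are mutually consistent when a vertex $x$ of $H$ is incident to several degree-$2$ vertices whose matching edges are all split. The danger is interaction — one must confirm that the clique $K_x$ of $G$ expands to a clique in $G'$ in a way compatible with \emph{all} the $\Omega(u,v)$'s touching $x$ simultaneously, i.e.\ that there is a single coherent ordering/clique-structure on $\bigcup_{w\in K_x} I(w)$ realizing every local constraint at once. Skeletality is doing real work here: without it, the deleted-edge pattern between $I(u)$ and $I(v)$ could be an arbitrary cobipartite graph (even containing an induced $C_4$), which is not a line graph, so the reduction of Lemma~\ref{lem:reduction} — which we may assume has already been applied — is precisely what rules this out. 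A secondary, more bookkeeping-heavy obstacle is the degenerate cases: when some $I(w)$ is a singleton, when $\Omega(u,v)$ is empty (the matching edge got completely deleted, so $t$ should be split into two pendant-ish pieces), or when $H$ has low-degree or multi-edge anomalies; these should all be routine but need to be enumerated. Modulo these, the argument is a matter of defining $H'$ explicitly from $H$, $M$, and the family $\{\Omega(u,v)\}_{uv\in M}$, and checking $L(H')=G'$ edge by edge.
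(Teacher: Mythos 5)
Your proposal matches the paper's proof: both build $H'$ by the same local surgery at each matching edge $uv\in M$ (parallel edges at the shared degree-$2$ endpoint for $I(u)\cap\Omega(u,v)$ and $I(v)\cap\Omega(u,v)$, pendant edges at the unshared endpoints for $I(u)\setminus\Omega(u,v)$ and $I(v)\setminus\Omega(u,v)$), using skeletality to obtain the clean four-block decomposition of the homogeneous pair. The interaction worry you raise in steps (3)--(4) is unfounded: the paper sidesteps it by observing that every vertex of $(I(u)\cup I(v))\setminus\Omega(u,v)$ is simplicial in $G'$ and then processing the matching edges one at a time (each step replaces $M$ by $M\setminus\{uv\}$), which works because distinct matching edges involve disjoint pairs of $I$-sets and hence their $\Omega$-cliques never interfere.
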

\begin{proof}
Let $G'$ be a skeletal thickening of a line graph $G$ under a matching $M$ as described in the paragraph above.
Consider an edge $uv\in M$ and the corresponding homogeneous pair of cliques $(I(u),I(v))$ in $G'$.  Every vertex in $(I(u)\cup I(v)) \setminus \Omega(u,v)$ is simplicial.  Therefore $G'$ is a thickening of a line graph $L(H')$ under a matching $M\setminus \{uv\}$, where $H'$ is constructed from $H$ looking at the unshared endpoints of $u$ and $v$ and adding a pendant edge to each.  Repeating this process for each edge in $M$ proves the claim.
\end{proof}

It is useful to bear this fact in mind when we define the class $\TTC_1$ in Section \ref{sec:3}.

%%%%%%%%%%%%%%%%%%%%%%%%%%%%%%%%%%%%%%%%%%%%%%%%%%%%%%%%%%%%%%%%%%%%%%%%%%%%%%%%
\subsection{Linear interval graphs, circular interval graphs, and quasi-line graphs}

One class of graphs lying between line graphs and claw-free graphs is the class of {\em quasi-line graphs}.  A graph is quasi-line if the neighbourhood of every vertex induces the complement of a bipartite graph.  We now present two fundamental types of quasi-line graphs.

A {\em linear interval graph} is a graph $G=(V,E)$ with a {\em linear interval representation}, which is a point on the real line for each vertex and a set of intervals such that vertices $u$ and $v$ are adjacent in $G$ precisely if there is an interval containing both corresponding points on the real line.  Linear interval graphs are chordal and therefore perfect.

In the same vein, a {\em circular interval graph} is a graph with a {\em circular interval representation}, which consists of $|V|$ points on the unit circle and a set of intervals (arcs) on the unit circle such that two vertices of $G$ are adjacent precisely if some arc contains both corresponding points.  This class contains all linear interval graphs.  Deng, Hell, and Huang proved that we can identify and find a representation of a circular or linear interval graph in linear time \cite{denghh96}.

A circular interval graph is a {\em long circular interval graph} if it has a circular interval representation in which no three intervals cover the entire circle.  Note that it is still possible for three intervals to cover all vertices.

Theorem \ref{thm:quasiline} tells us that every quasi-line graph satisfies $\chi(G) \leq \gamma_\ell(G)$.  For circular interval graphs, this bound follows easily from known results.  First, Niessen and Kind \cite{niessenk00} proved that circular interval graphs have the {\em round-up property}:

\begin{lemma}\label{lem:nk}
For any circular interval graph $G$, $\chi(G) = \lceil \chi_f(G)\rceil$.
\end{lemma}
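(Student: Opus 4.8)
The plan is to prove Lemma \ref{lem:nk} by appealing directly to the work of Niessen and Kind, but since the statement is short it is worth recording how the round-up property is established for this class. First I would reduce to the case where $G$ is a long circular interval graph: if a circular interval representation uses three intervals covering the whole circle, then $G$ contains a spanning subgraph that is a ``near-clique'' (the union of three cliques covering all vertices, plus possibly extra edges), and one checks directly that such graphs have the round-up property because $\chi_f$ and $\chi$ both lie within a narrow band determined by $n$ and $\alpha\le 3$; alternatively, these extremal cases can be handled by the antiprismatic/line-graph machinery already in the paper. So the main content is the long case.

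For a long circular interval graph $G$, I would use the representation to view the vertices as points on a circle and the maximal cliques as arising from the intervals. The key structural fact is that the fractional chromatic number of a long circular interval graph is governed by a single ``bottleneck'' ratio: there is a closed arc $W$ on the circle such that $\chi_f(G) = |W \cap V| / p$ for some integer $p$ equal to the maximum number of times one can ``wind around'' the arc using independent sets, and moreover $\chi_f(G) = \max(\omega(G), \max_{\text{arcs } W} |W\cap V|/q_W)$ where $q_W$ is the winding number. I would then argue that an optimal fractional colouring can be rounded up: one orders the vertices cyclically, greedily assigns colour classes as ``blocks'' of consecutive vertices (each block an independent set, since no short arc is a clique in the relevant range), and shows that $\lceil \chi_f(G)\rceil$ blocks suffice. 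This is essentially Niessen and Kind's argument, and it is the kind of cyclic-scheduling / Eulerian-type rounding argument that is standard for circular-arc structures.

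The main obstacle is the rounding step in the long case: one must show that the fractional optimum, which is a ratio $|W\cap V|/q$, can always be realized integrally using exactly $\lceil |W\cap V|/q\rceil$ colours, and this requires care when $q$ does not divide $|W\cap V|$. The trick is to exploit that long circular interval graphs have a very rigid independent-set structure — every maximal independent set is determined by choosing one ``representative'' point in each of $q$ consecutive gaps around the circle — so that a near-perfect fractional cover can be converted into an integral cover by a discrete pigeonhole / averaging argument over cyclic shifts. Since this is precisely the theorem proved in \cite{niessenk00}, for the purposes of this paper I would simply cite it: the lemma statement is exactly their result, specialized to the claim that $\chi(G) = \lceil \chi_f(G)\rceil$ for circular interval graphs, and no further work is needed here.
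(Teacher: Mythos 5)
Your bottom line—cite Niessen and Kind \cite{niessenk00} for the round-up property—is exactly what the paper does; the lemma is stated as an immediate consequence of their result with no further proof. The speculative reduction to the long case and the sketch of the cyclic rounding argument are unnecessary for the paper's purposes (and the Niessen--Kind result applies to all circular interval graphs directly), but since you ultimately defer to the citation, your approach coincides with the paper's.
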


A result of Shih and Hsu \cite{shihh89} tells us that we can optimally colour circular interval graphs efficiently:

\begin{lemma}\label{lem:shihhsu}
Given a circular interval graph $G$, we can find an optimal colouring of $G$ in $O(n^{3/2})$ time.
\end{lemma}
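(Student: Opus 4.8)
Given a circular interval graph $G$, we can find an optimal colouring of $G$ in $O(n^{3/2})$ time.

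The plan is to reduce optimal colouring of a circular interval graph to a sequence of flow or matching computations, following the approach of Shih and Hsu. First I would observe that by Lemma \ref{lem:nk} we know $\chi(G) = \lceil \chi_f(G)\rceil$, so it suffices to (a) compute $\chi_f(G)$, which for a circular interval graph reduces to a well-understood LP over the circular structure, and then (b) actually construct a colouring with that many colours. For step (a), I would use the circular interval representation (obtainable in linear time by \cite{denghh96}): label the points $p_1, \dots, p_n$ in cyclic order around the circle, and note that every maximal clique of $G$ corresponds to a set of consecutive points all pairwise covered by some arc. The fractional chromatic number is then the optimum of a covering LP whose constraint matrix has the circular-ones property, and such LPs are solvable by reduction to a small number of min-cost-flow / max-flow instances on an auxiliary network built from the cyclic ordering.

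The key steps, in order, would be: (1) compute a circular interval representation and the cyclic vertex ordering in $O(n)$ time; (2) enumerate the relevant maximal cliques — there are at most $n$ of them, each an interval of consecutive vertices — and record for each vertex the cliques containing it; (3) set up the auxiliary flow network encoding the "circular" colouring constraints, where a feasible integral circulation of a given value $k$ corresponds to a proper $k$-colouring in which colour classes are chosen greedily as independent transversals of the cyclic structure; (4) binary-search (or directly compute via the LP optimum rounded up) the least $k$ for which the network admits such a circulation, each feasibility test costing $O(n^{3/2})$ via a unit-capacity-style max-flow bound; (5) from the optimal circulation, read off the actual colour classes in $O(n)$ time. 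The $O(n^{3/2})$ running time comes from the fact that the auxiliary networks are sparse (the circular-ones structure gives $O(n)$ edges) and the flow values are small, so Hopcroft–Karp-type or unit-capacity max-flow bounds apply.

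The main obstacle is step (3)–(4): correctly designing the flow network so that integral feasible circulations biject with proper colourings respecting the cyclic interval structure, and arguing that the feasibility threshold is exactly $\chi(G) = \lceil \chi_f(G) \rceil$ rather than something larger. This is precisely the technical heart of Shih and Hsu's argument, and it relies on the interplay between the round-up property (Lemma \ref{lem:nk}), which guarantees no integrality gap beyond the ceiling, and the combinatorial structure of interval orders on a circle, which guarantees that a fractional optimum can be "unrolled" into integral colour classes without wasting colours. Since this is an already-published result of Shih and Hsu \cite{shihh89}, in the paper I would simply cite it; but to reprove it I would follow the outline above, with the flow-network construction and its correctness proof being where essentially all the work lies.
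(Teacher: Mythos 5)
The paper does not prove this lemma: it is stated as a known result and cited directly from Shih and Hsu \cite{shihh89}, exactly as you observe at the end of your proposal. Your reconstruction sketch is therefore superfluous for the paper's purposes, and note that it is somewhat anachronistic --- Shih and Hsu's 1989 algorithm does not pass through the round-up property of Lemma~\ref{lem:nk}, which is due to Niessen and Kind (2000) and is cited separately in the paper --- but since you correctly identify that the intended treatment is simply to cite, your proposal matches the paper's approach.
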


These results, along with Theorem \ref{thm:fractional}, immediately imply that Theorem \ref{thm:local} holds for circular interval graphs.

\begin{lemma}
If $G$ is a circular interval graph, we can find a $\gamma_\ell(G)$-colouring of $G$ in polynomial time.
\end{lemma}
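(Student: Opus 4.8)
The plan is to combine the three ingredients that have just been assembled: Theorem~\ref{thm:fractional}, Lemma~\ref{lem:nk} (the round-up property for circular interval graphs), and Lemma~\ref{lem:shihhsu} (efficient optimal colouring). First I would invoke Theorem~\ref{thm:fractional} to bound the fractional chromatic number: $\chi_f(G)\leq \max_{v\in V(G)}\tfrac12(d(v)+1+\omega(v))$. Taking ceilings on both sides and using the fact that the right-hand side, once we pass to the closed neighbourhood of the maximizing vertex $v$, is exactly $\gamma_\ell(v)$ (recall $\gamma_\ell(v)=\gamma(G[\tilde N(v)])=\lceil\tfrac12(\Delta(G[\tilde N(v)])+1+\omega(G[\tilde N(v)]))\rceil$ and that $d(v)=|N(v)|$ with $\omega(v)=\omega(G[\tilde N(v)])$), we get $\lceil\chi_f(G)\rceil\leq \gamma_\ell(G)$. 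The one subtlety here is checking that the per-vertex quantity $\lceil\tfrac12(d(v)+1+\omega(v))\rceil$ is bounded above by $\gamma_\ell(v)$; this holds because $\Delta(G[\tilde N(v)])\geq$ the degree of any vertex of $G[\tilde N(v)]$ and in particular the relevant degree inside the closed neighbourhood is at least... actually more simply, $\gamma_\ell(v)\geq \lceil\tfrac12(d(v)+1+\omega(v))\rceil$ follows since in $G[\tilde N(v)]$ the vertex $v$ itself has degree $d(v)$, so $\Delta(G[\tilde N(v)])\geq d(v)$, and $\omega(G[\tilde N(v)])=\omega(v)$.

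Next I would apply Lemma~\ref{lem:nk}: since $G$ is a circular interval graph, $\chi(G)=\lceil\chi_f(G)\rceil$. Chaining this with the previous display yields $\chi(G)=\lceil\chi_f(G)\rceil\leq \gamma_\ell(G)$, which is the claimed bound. This handles the existential part of the statement.

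For the algorithmic part, I would note that circular interval graphs can be recognized and a circular interval representation found in linear time (Deng--Hell--Huang~\cite{denghh96}), and then Lemma~\ref{lem:shihhsu} gives an optimal colouring of $G$ in $O(n^{3/2})$ time. Since we have just shown $\chi(G)\leq\gamma_\ell(G)$, an optimal colouring uses at most $\gamma_\ell(G)$ colours, so this optimal colouring is the desired $\gamma_\ell(G)$-colouring and is produced in polynomial time.

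There is essentially no hard step here: the lemma is a bookkeeping assembly of Theorem~\ref{thm:fractional} with the two cited structural facts about circular interval graphs. The only place to be careful is the first paragraph's inequality relating the maximum over vertices of $\lceil\tfrac12(d(v)+1+\omega(v))\rceil$ to $\gamma_\ell(G)$ — one must make sure the ceiling is applied correctly (a maximum of ceilings equals the ceiling of the maximum when the inner expressions are half-integers, which they are here) and that $\Delta(G[\tilde N(v)])\geq d(v)$ so that $\gamma_\ell(v)$ genuinely dominates the fractional bound contributed by $v$.
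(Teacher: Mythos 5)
Your proposal is correct and matches the paper's intended argument: the paper presents this lemma as an immediate consequence of Theorem~\ref{thm:fractional}, Lemma~\ref{lem:nk}, and Lemma~\ref{lem:shihhsu}, with the chain $\chi(G)=\lceil\chi_f(G)\rceil\leq\gamma_\ell(G)$ and the Shih--Hsu algorithm supplying the polynomial-time colouring. The small verification you include (that $\Delta(G[\tilde N(v)])\geq d(v)$, hence $\lceil\tfrac12(d(v)+1+\omega(v))\rceil\leq\gamma_\ell(v)$, and that the ceiling commutes with the max) is exactly the bookkeeping behind the displayed chain $\chi_f(G)\leq\gamma_\ell(G)\leq\gamma(G)$ stated earlier in the paper.
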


%%%%%%%%%%%%%%%%%%%%%%%%%%%%%%%%%%%%%%%%%%%%%%%%%%%%%%%%%%%%%%%%%%%%%%%%%%%%%%%%
\subsection{Antihat thickenings}\label{sec:structureantihat}

We need to consider certain thickenings of graphs that are nearly antiprismatic.  Let $k\geq 2$.  We first define a graph $H$ with vertex set $A\cup B\cup C$ as follows.  Let $A = \{ a_0, a_1, \ldots, a_k\}$, $B = \{ b_0, b_1, \ldots, b_k\}$, and $C = \{c_1, \ldots, c_k\}$ be disjoint cliques.  Adjacency between the cliques is as follows:

\begin{itemize*}
\item $a_0$ has no neighbour outside $A\cup \{b_0\}$, and $b_0$ has no neighbour outside $B\cup\{a_0\}$.
\item For $1 \leq i,j \leq k$, $a_i$ and $b_j$ are nonadjacent if $i \neq j$ and adjacent if $i=j$.
\item For $1 \leq i,j \leq k$, $a_i$ and $b_i$ are adjacent to $c_j$ if $i \neq j$, and nonadjacent to $c_j$ if $i=0$ or if $i = j$.
\end{itemize*}
Let $X \subset A \cup B \cup C \setminus \{a_0,b_0\}$ such that $|C \setminus X| \geq 2$, and let $G=H-X$.  We say that $G$ is an {\em antihat graph}.  To define antihat thickenings, we first define a set $M\in V(G)^2$ as follows:

\begin{itemize*}
\item $M$ is a matching in $G\cup M$ containing no edge of $G[A]$, $G[B]$, or $G[C]$.
\item $a_0b_0$ is in $M$ if $a_0$ and $b_0$ are adjacent in $G$.
\item If $1\leq i,j$ and $a_ib_j\in M$ then $i=j$ and $c_i\in X$.
\item If $1\leq i,j$ and $b_ic_j\in M$ then $i=j$ and $a_i\in X$.
\item If $1\leq i,j$ and $a_ic_j\in M$ then $i=j$ and $b_i\in X$.
\end{itemize*}

In this case $G\cup M$ is claw-free and $M$ is a claw-neutral matching in $G\cup M$.  If $G'$ is a thickening of $G\cup M$ under $M$ then we say that it is an {\em antihat thickening}.  Observe that given an antihat graph $G$, adding an edge between $a_0$ and $b_0$ gives us an antiprismatic graph, as does deleting one or both of $a_0$ and $b_0$.

Having presented these graph classes, we can move on to the next step: describing and colouring three-cliqued claw-free graphs.

%%%%%%%%%%%%%%%%%%%%%%%%%%%%%%%%%%%%%%%%%%%%%%%%%%%%%%%%%%%%%%%%%%%%%%%%%%%%%%%%
%%%%%%%%%%%%%%%%%%%%%%%%%%%%%%%%%%%%%%%%%%%%%%%%%%%%%%%%%%%%%%%%%%%%%%%%%%%%%%%%
%%%%%%%%%%%%%%%%%%%%%%%%%%%%%%%%%%%%%%%%%%%%%%%%%%%%%%%%%%%%%%%%%%%%%%%%%%%%%%%%
%%%%%%%%%%%%%%%%%%%%%%%%%%%%%%%%%%%%%%%%%%%%%%%%%%%%%%%%%%%%%%%%%%%%%%%%%%%%%%%%
%%%%%%%%%%%%%%%%%%%%%%%%%%%%%%%%%%%%%%%%%%%%%%%%%%%%%%%%%%%%%%%%%%%%%%%%%%%%%%%%
%%%%%%%%%%%%%%%%%%%%%%%%%%%%%%%%%%%%%%%%%%%%%%%%%%%%%%%%%%%%%%%%%%%%%%%%%%%%%%%%
%%%%%%%%%%%%%%%%%%%%%%%%%%%%%%%%%%%%%%%%%%%%%%%%%%%%%%%%%%%%%%%%%%%%%%%%%%%%%%%%
%%%%%%%%%%%%%%%%%%%%%%%%%%%%%%%%%%%%%%%%%%%%%%%%%%%%%%%%%%%%%%%%%%%%%%%%%%%%%%%%
%%%%%%%%%%%%%%%%%%%%%%%%%%%%%%%%%%%%%%%%%%%%%%%%%%%%%%%%%%%%%%%%%%%%%%%%%%%%%%%%
%%%%%%%%%%%%%%%%%%%%%%%%%%%%%%%%%%%%%%%%%%%%%%%%%%%%%%%%%%%%%%%%%%%%%%%%%%%%%%%%
%%%%%%%%%%%%%%%%%%%%%%%%%%%%%%%%%%%%%%%%%%%%%%%%%%%%%%%%%%%%%%%%%%%%%%%%%%%%%%%%
%%%%%%%%%%%%%%%%%%%%%%%%%%%%%%%%%%%%%%%%%%%%%%%%%%%%%%%%%%%%%%%%%%%%%%%%%%%%%%%%
%%%%%%%%%%%%%%%%%%%%%%%%%%%%%%%%%%%%%%%%%%%%%%%%%%%%%%%%%%%%%%%%%%%%%%%%%%%%%%%%
%%%%%%%%%%%%%%%%%%%%%%%%%%%%%%%%%%%%%%%%%%%%%%%%%%%%%%%%%%%%%%%%%%%%%%%%%%%%%%%%
%%%%%%%%%%%%%%%%%%%%%%%%%%%%%%%%%%%%%%%%%%%%%%%%%%%%%%%%%%%%%%%%%%%%%%%%%%%%%%%%
\section{Three-cliqued claw-free graphs}\label{sec:3}

We now consider claw-free graphs with a three-colourable complement.  Given cliques $A$, $B$, and $C$ that partition the vertices of a claw-free graph $G$, we say that $(G,A,B,C)$ is a {\em three-cliqued claw-free graph}.  We also sometimes just call $G$ a three-cliqued claw-free graph without specifying a 3-colouring of $\gbar$.  As we will state formally in Theorem \ref{thm:3decomp}, any skeletal three-cliqued claw-free graph either admits a {\em hex-join}, which we describe shortly, or belongs to one of several base classes.

%%%%%%%%%%%%%%%%%%%%%%%%%%%%%%%%%%%%%%%%%%%%%%%%%%%%%%%%%%%%%%%%%%%%%%%%%%%%%%%%
%%%%%%%%%%%%%%%%%%%%%%%%%%%%%%%%%%%%%%%%%%%%%%%%%%%%%%%%%%%%%%%%%%%%%%%%%%%%%%%%
\subsection{Base classes of three-cliqued claw-free graphs}\label{sec:3cbase}

Since we restrict our attention to skeletal claw-free graphs, we can restrict the base classes of hex-joins that we describe.  However, it is possible to compose two nonskeletal three-cliqued claw-free graphs with a hex-join and reach a skeletal graph, so we cannot assume the base graphs are skeletal.  We therefore consider {\em weakly skeletal} base graphs, i.e.\ those in which every nonskeletal homogeneous pair of cliques has one clique intersecting at least two of $A$, $B$, and $C$:

\begin{definition}
Let $(X,Y)$ be a homogeneous pair of cliques in a three-cliqued graph $(G,A,B,C)$.  Then $(X,Y)$ is {\em straddling} if at least one of $X$ or $Y$ intersects more than one of $A$, $B$, and $C$.  We say that $(G,A,B,C)$ is {\em weakly skeletal} if every nonskeletal homogeneous pair of cliques is straddling.
\end{definition}

The first four classes we define contain weakly skeletal thickenings of members of the classes $\TC_1,\ldots,\TC_4$ as defined by Chudnovsky and Seymour \cite{clawfree5}.

\begin{itemize*}
\item {\bf A type of line graph}.  Let $H$ be a multigraph with pairwise nonadjacent vertices $a,b,c$ such that each of $a,b,c$ has at least three neighbours, and such that every edge of $H$ has an endpoint in $\{a,b,c\}$.  We further insist that for each $S \subset \{a,b,c\}$ there is at most one vertex $u$ outside $\{a,b,c\}$ whose neighbourhood is $S$.  Let $G = L(H)$, and let cliques $A$, $B$, and $C$ in $G$ correspond to the edges incident to $a$, $b$, and $c$ respectively in $H$.  Then $(G,A,B,C)$ is a three-cliqued claw-free graph.  Let $\TTC_1$ be the set of all such three-cliqued graphs such that every vertex is in a triad, with the added condition of being weakly skeletal.\footnote{To see that these graphs correspond to weakly skeletal thickenings of trigraphs in $\TC_1$ from \cite{clawfree5}, recall Proposition~\ref{prop:line} and its proof.}
\item {\bf Long circular interval graphs.}  Let $(G,A,B,C)$ be a three-cliqued long circular interval graph with a circular interval representation such that each of $A$, $B$, $C$ is a set of consecutive vertices in circular order.  Let $\TTC_2$ be the set of all such graphs that are weakly skeletal, such that every vertex is in a triad.
\item {\bf Antihat thickenings.}  Let $G$ be an antihat thickening, and let $A,B,C$, and $X$ be as they are in the definition of $G$.  Let $A'=A\setminus X$ and define $B'$ and $C'$ accordingly.  Then $(G-I(X),I(A'),I(B'),I(C'))$ is a three-cliqued claw-free graph.  Let $\TTC_3$ be the class of all such three-cliqued graphs with the added condition of being weakly skeletal.
\item {\bf Antiprismatic thickenings.}  Let $(G,A,B,C)$ be a three-cliqued antiprismatic graph, and let $(G',I(A),I(B),I(C))$ be a thickening of $G$ under a changeable matching $M$.  Let $\TTC_4$ be the class of all such graphs $(G',I(A),I(B),I(C))$ that are weakly skeletal.
\end{itemize*}
The final two exceptional cases correspond to thickenings of graphs in Chudnovsky and Seymour's class $\TC_5$ \cite{clawfree5}.
\begin{itemize*}
\item {\bf Exception I.}  Let $G$ be a graph on vertices $v_1,\ldots, v_8$ with adjacency as follows: $v_1$ is adjacent to $v_2,v_3,v_6,v_7$; $v_2$ is adjacent to $v_3,v_4$; $v_3$ is adjacent to $v_4,v_5$; $v_4$ is adjacent to $v_5,v_6$; $v_5$ is adjacent to $v_6$; $v_6$ and $v_8$ are adjacent to $v_7$; $v_2$ may or may not be adjacent to $v_5$.  There are no other edges.  Now let $M$ be a matching containing $v_1v_4$, $v_3v_6$, and also $v_2v_5$ if $v_2v_5\in E(G)$.  Let $X\subseteq \{v_3,v_4\}$.  Let $G'$ be a thickening of $(G\cup M)-X$ under $M$ (see Figure \ref{fig:exception}).  Then $(G', I(\{v_1,v_2,v_3\}),I(\{v_4,v_5,v_6\}),I(\{v_7,v_8\}))$ is a three-cliqued claw-free graph.  Let $\TTC_5$ be the set of all such graphs with the added condition of being weakly skeletal.

\begin{figure}
\begin{center}
\includegraphics[scale=.7]{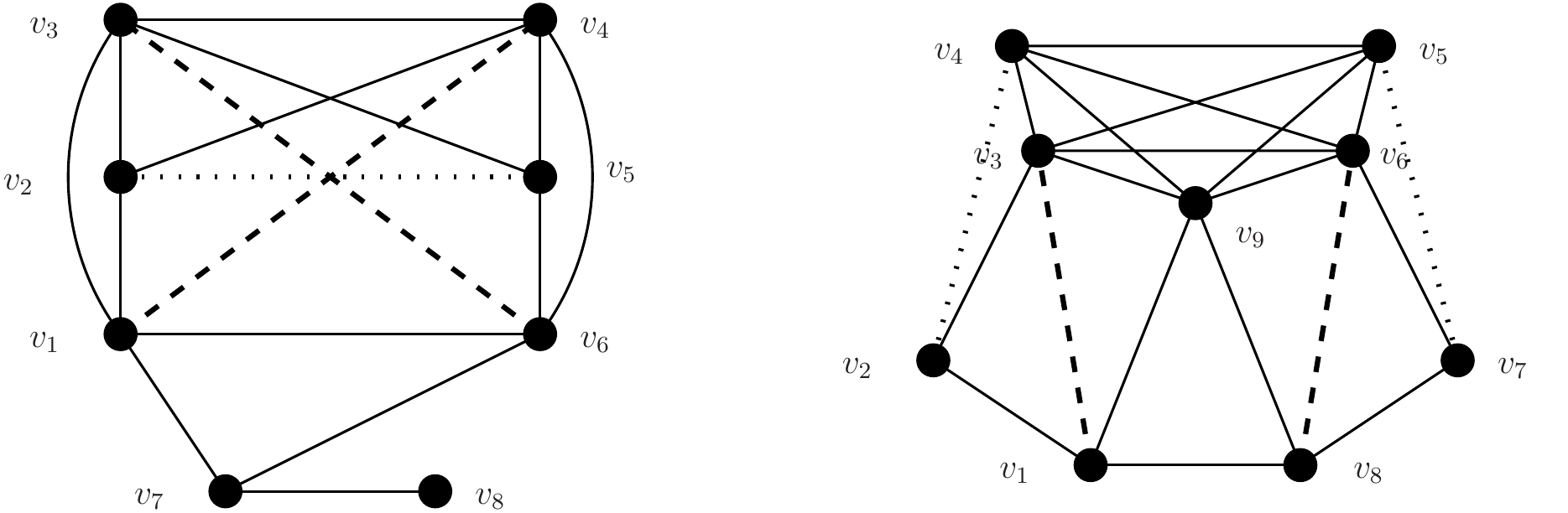}
\end{center}
\caption{\small{The graphs underlying exceptional thickenings in $\TC_5'$ (left) and $\TC_5''$ (right).  Solid, dashed, and dotted lines represent adjacent vertices, edges in $M$, and unspecified adjacency respectively.  All other pairs are nonadjacent.}}
\label{fig:exception}
\end{figure}

\item {\bf Exception II.}  Let $G$ be a graph on vertices $v_1,\ldots, v_9$ with the following structure.  Let $A= \{v_1,v_2\}$, $B=\{v_7,v_8\}$, and $C=\{v_3,v_4,v_5,v_6,v_9\}$ be cliques.  Let $v_1$ be adjacent to $v_3$, $v_8$, and $v_9$.  Let $v_8$ be adjacent to $v_6$ and $v_9$.  Let $v_2$ be adjacent to $v_3$ and possibly $v_4$.  Let $v_7$ be adjacent to $v_6$ and possibly $v_5$.  Now let $M$ be a matching in $G$ containing $v_1v_3$ and $v_6v_8$, as well as possibly $v_2v_4$ and $v_5v_7$ (see Figure \ref{fig:exception}).  Let $X$ be a subset of $\{v_3,v_4,v_5,v_6\}$ such that:
\begin{itemize*}
\item $v_2$ and $v_7$ each have a neighbour in $C\setminus X$.
\item If $X$ contains neither $v_4$ nor $v_5$ then $v_2$ is adjacent to $v_4$ and $v_7$ is adjacent to $v_5$.
\end{itemize*}
We insist that every vertex of $(G-M)-X$ is in a triad.  Let $G'$ be a thickening of $G-X$ under $M$.  Then $(G', I(A),I(B),I(C\setminus X))$ is a three-cliqued claw-free graph.  Let $\TTC_6$ be the set of all such graphs with the added condition of being weakly skeletal.
\end{itemize*}

We allow permutations of the sets $A,B,C$ for any of these classes, so if $(G,A,B,C)$ is in $\TTC_i$ for some $1\leq i\leq 6$ and $\{A',B',C'\} = \{A,B,C\}$, then $(G,A',B',C')$ is also in $\TTC_i$.  Having described the building blocks for three-cliqued claw-free graphs, we now move on to how they are combined (or from our perspective, decomposed).

%%%%%%%%%%%%%%%%%%%%%%%%%%%%%%%%%%%%%%%%%%%%%%%%%%%%%%%%%%%%%%%%%%%%%%%%%%%%%%%%
%%%%%%%%%%%%%%%%%%%%%%%%%%%%%%%%%%%%%%%%%%%%%%%%%%%%%%%%%%%%%%%%%%%%%%%%%%%%%%%%
%%%%%%%%%%%%%%%%%%%%%%%%%%%%%%%%%%%%%%%%%%%%%%%%%%%%%%%%%%%%%%%%%%%%%%%%%%%%%%%%
%%%%%%%%%%%%%%%%%%%%%%%%%%%%%%%%%%%%%%%%%%%%%%%%%%%%%%%%%%%%%%%%%%%%%%%%%%%%%%%%
\subsection{Decomposition: hex-joins}\label{sec:hexjoin}

We can decompose skeletal three-cliqued claw-free graphs into the base classes we just defined using a single decomposition operation: hex-joins.  Let $(G,A,B,C)$ be a three-cliqued graph, and suppose we partition $A$ into $A_1, A_2$, $B$ into $B_1, B_2$, $C$ into $C_1, C_2$.  Let $G_1 = G[A_1\cup B_1\cup C_1]$ and let $G_2=G[A_2\cup B_2\cup C_2]$.  Suppose we can construct $G$ from the disjoint union of $G_1$ and $G_2$ by adding every possible edge between $A_1$ and $A_2$, $A_2$ and $B_1$, $B_1$ and $B_2$, $B_2$ and $C_1$, $C_1$ and $C_2$, and $C_2$ and $A_1$.  Then we say that $(G,A,B,C)$ admits a {\em hex-join} into $(G_1, A_1,B_1,C_1)$ and $(G_2,A_2,B_2,C_2)$.

A simple observation explains our focus on weakly skeletal base graphs:
\begin{observation}
Let $(X,Y)$ be a nonskeletal, non-straddling homogeneous pair of cliques in a three-cliqued graph $(G_1,A_1,B_1,C_1)$.  If $(G,A,B,C)$ admits a hex-join into $(G_1,A_1,B_1,C_1)$ and any three-cliqued graph $(G_2,A_2,B_2,C_2)$, then $(X,Y)$ is a nonskeletal homogeneous pair of cliques in $(G,A,B,C)$.  In particular, $(G,A,B,C)$ is not skeletal.
\end{observation}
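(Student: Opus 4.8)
The plan is to show directly that the homogeneous pair $(X,Y)$ of $G_1$ remains a homogeneous pair of cliques in $G$, and that its internal adjacency structure is unchanged, so that nonskeletality is preserved. First I would observe that since the hex-join only adds edges between the parts $A_1,B_1,C_1$ of $G_1$ and the parts $A_2,B_2,C_2$ of $G_2$, the subgraph $G[A_1\cup B_1\cup C_1]$ is exactly $G_1$; in particular $G[X\cup Y]=G_1[X\cup Y]$, so $X$ and $Y$ are still disjoint nonempty cliques with $|X\cup Y|\geq 3$, and the edges between $X$ and $Y$ are identical in $G$ and $G_1$. Hence the clique number $\omega(G[X\cup Y])=\omega(G_1[X\cup Y])$, and an edge of $G[X,Y]$ can be deleted without changing $\omega(G[X\cup Y])$ if and only if the same is true in $G_1$. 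So nonskeletality of $(X,Y)$ as a pair is inherited verbatim, provided $(X,Y)$ is still a \emph{homogeneous} pair of cliques in $G$.

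The remaining task is the homogeneity condition: every vertex of $V(G)\setminus(X\cup Y)$ must see all or none of $X$, and all or none of $Y$. Vertices in $V(G_1)\setminus(X\cup Y)$ satisfy this because their adjacency to $X\cup Y$ is unchanged from $G_1$, where $(X,Y)$ is homogeneous. For a vertex $w\in V(G_2)$, I would use the non-straddling hypothesis: $X$ lies entirely within one of $A_1$, $B_1$, $C_1$, and likewise $Y$ lies entirely within one of $A_1$, $B_1$, $C_1$. By the definition of the hex-join, whether $w$ is adjacent to a given vertex $x\in A_1$ depends only on which of $A_2,B_2,C_2$ contains $w$ (it is adjacent iff $w\in A_2\cup C_2$), and similarly for $x\in B_1$ and $x\in C_1$. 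Thus $w$'s adjacency to $X$ is constant across $X$ (all or none), and likewise for $Y$. Therefore $(X,Y)$ is a homogeneous pair of cliques in $G$, and combined with the previous paragraph it is a nonskeletal one; hence $G$ is not skeletal.

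The only mild subtlety — and the step I would be most careful about — is checking that the non-straddling condition is used correctly: it is precisely what guarantees that $X$ and $Y$ each sit inside a single one of the three cliques $A_1,B_1,C_1$, which is what makes the hex-join's uniform adjacency pattern apply cliquewise to $X$ and to $Y$. If $(X,Y)$ were straddling, a vertex of $G_2$ could be adjacent to part of $X$ and nonadjacent to the rest, destroying homogeneity, so the hypothesis is essential rather than cosmetic. Everything else is bookkeeping about which edges the hex-join introduces.
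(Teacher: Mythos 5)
Your proof is correct and is exactly the argument the paper implicitly intends (it states this as an unproved observation). You correctly identify the two ingredients: the hex-join does not alter $G[X\cup Y]$, so the nonskeletal condition on the internal edge structure carries over unchanged, and the non-straddling hypothesis together with the cliquewise-uniform adjacency pattern of a hex-join ensures every vertex of $V(G_2)$ sees all or none of $X$ and all or none of $Y$, preserving homogeneity.
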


We use the following decomposition theorem for skeletal three-cliqued claw-free graphs.  It is a straightforward weakening of Chudnovsky and Seymour's structure theorem for three-cliqued claw-free trigraphs (4.1 in \cite{clawfree5}), as discussed in Chapter 9 of \cite{kingthesis}.

\begin{theorem}\label{thm:3decomp}
Any skeletal three-cliqued claw-free graph $(G,A,B,C)$ not in $\TTC_4$ admits a hex-join into terms $(G_1,A_1,B_1,C_1)$ and $(G_2,A_2,B_2,C_2)$, where $(G_1,A_1,B_1,C_1)$ is in one of $\TTC_1$, $\TTC_2$, $\TTC_3$, $\TTC_5$, or $\TTC_6$.
\end{theorem}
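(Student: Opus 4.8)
The plan is to deduce Theorem~\ref{thm:3decomp} from Chudnovsky and Seymour's structure theorem for three-cliqued claw-free trigraphs (4.1 of \cite{clawfree5}), by translating it from trigraphs to graphs and from their trigraph classes $\TC_1,\ldots,\TC_5$ to our graph classes $\TTC_1,\ldots,\TTC_6$. That theorem asserts, in the ``peel-off'' form we need, that every three-cliqued claw-free trigraph is either a thickening (in the trigraph sense) of a member of one of $\TC_1,\ldots,\TC_5$, or admits a hex-join into terms $(T_1,A_1,B_1,C_1)$ and $(T_2,A_2,B_2,C_2)$ in which $(T_1,A_1,B_1,C_1)$ is a thickening of a member of $\TC_1\cup\TC_2\cup\TC_3\cup\TC_5$. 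Since a graph is exactly a trigraph with no semiadjacent pair, this applies directly to $(G,A,B,C)$, and any trigraph hex-join of $G$ is an ordinary graph hex-join.

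The first ingredient is the dictionary between the two notions of thickening. When a thickening of a trigraph happens to be an honest graph, the semiadjacent pairs of the underlying trigraph play exactly the role of a claw-neutral matching, so that a graph obtained by thickening a member of $\TC_i$ is precisely a thickening, in the sense of Section~\ref{sec:skeletal}, of the corresponding basic graph class; Proposition~\ref{prop:line} records this for line graphs, and the argument is analogous for the other classes. The classes $\TTC_1,\ldots,\TTC_6$ were constructed in Section~\ref{sec:3cbase} so as to be exactly the \emph{weakly skeletal} graphs that arise this way: $\TTC_i$ corresponds to $\TC_i$ for $i\in\{1,2,3,4\}$, while the single sporadic class $\TC_5$ splits, according to which of the two underlying configurations of Figure~\ref{fig:exception} is used, into $\TTC_5$ and $\TTC_6$. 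In particular a thickening of a member of $\TC_4$ is an antiprismatic thickening, so if it is skeletal, hence weakly skeletal, it lies in $\TTC_4$.

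With the dictionary in hand the argument is short. Apply Chudnovsky and Seymour's theorem to $(G,A,B,C)$. Suppose first that $G$ is a thickening of a member of some $\TC_i$. Since $G$ is skeletal it is weakly skeletal, so by the dictionary $G\in\TTC_i$, reading $\TTC_5\cup\TTC_6$ when $i=5$; as $G\notin\TTC_4$ by hypothesis this forces $i\neq4$, so $G\in\TTC_1\cup\TTC_2\cup\TTC_3\cup\TTC_5\cup\TTC_6$, and then $G$ admits the degenerate hex-join in which $(G_1,A_1,B_1,C_1)=(G,A,B,C)$ and the other term is empty, so the conclusion holds. Otherwise $G$ admits a hex-join into $(G_1,A_1,B_1,C_1)$ and $(G_2,A_2,B_2,C_2)$ with $(G_1,A_1,B_1,C_1)$ a thickening of a member of $\TC_1\cup\TC_2\cup\TC_3\cup\TC_5$. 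By the observation in Section~\ref{sec:hexjoin}, a nonskeletal non-straddling homogeneous pair of cliques in $(G_1,A_1,B_1,C_1)$ would be a nonskeletal homogeneous pair of cliques in $G$, which is impossible since $G$ is skeletal; hence $(G_1,A_1,B_1,C_1)$ is weakly skeletal, so by the dictionary it lies in one of $\TTC_1$, $\TTC_2$, $\TTC_3$, $\TTC_5$, $\TTC_6$, as required.

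I expect the substantive work to lie entirely in the dictionary invoked above, namely in checking that the explicit combinatorial descriptions of $\TTC_3$, $\TTC_5$ and $\TTC_6$ (antihat thickenings, Exception~I, Exception~II), together with their side conditions such as ``every vertex is in a triad,'' really do reproduce the weakly skeletal thickenings of $\TC_3$ and $\TC_5$; this is where the case analysis for the sporadic configurations lives, and it is carried out when those classes are defined in Section~\ref{sec:3cbase} and in detail in Chapter~9 of \cite{kingthesis}. Granting that, the hex-join step is routine: the key point is simply that the hypothesis of skeletality makes every thickening description automatically a weakly skeletal one, so that no further reduction (Lemma~\ref{lem:reduction}) is needed inside this proof.
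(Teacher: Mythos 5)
Your overall strategy matches the paper's: the paper itself proves Theorem~\ref{thm:3decomp} only by pointing to Chudnovsky and Seymour's structure theorem 4.1 in \cite{clawfree5}, translating trigraphs to graphs and $\TC_i$ to $\TTC_i$ (with $\TC_5$ splitting into $\TTC_5$ and $\TTC_6$), and referring the ``dictionary'' work to Chapter~9 of \cite{kingthesis}, which is exactly the structure of your argument. Your use of the observation from Section~\ref{sec:hexjoin} to get weak skeletality of the term $(G_1,A_1,B_1,C_1)$ from skeletality of $G$, and your handling of the degenerate (empty $G_2$) hex-join, are also in line with the paper.

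There is, however, a genuine gap: you silently assume that Chudnovsky and Seymour's theorem produces an ordinary hex-join, whereas it actually produces a \emph{worn} hex-join, a weaker decomposition in which some of the edges required by a hex-join may be missing. Your sentence ``any trigraph hex-join of $G$ is an ordinary graph hex-join'' does not address this, because the worn-versus-ordinary distinction is not a trigraph-versus-graph distinction. The paper devotes a separate remark precisely to this point: it argues that when $(G_1,A_1,B_1,C_1)$ lies in one of $\TTC_1$, $\TTC_2$, $\TTC_3$, $\TTC_5$, $\TTC_6$, the defining side condition that every vertex arises from a vertex that is in a triad (in the trigraph sense) forces the worn hex-join to be an honest hex-join. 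Without this step your argument establishes only a worn hex-join, which is strictly weaker than what the theorem claims, and which would not suffice for the later colouring arguments built on hex-joins. Relatedly, Chudnovsky and Seymour's 4.1 does not hand you a term directly excluded from $\TC_4$; the paper supplies the separate observation that a hex-join of two $\TTC_4$ terms is again in $\TTC_4$, so that the hypothesis $G\notin\TTC_4$ lets one choose a term outside $\TTC_4$. Your proof treats this exclusion as built into the cited theorem rather than deriving it.
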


The omission of $\TTC_4$ from the list of possibilities comes from the easy fact that a graph admitting a hex-join into two terms, both of which are in $\TTC_4$, will itself be in $\TTC_4$.\\

\noindent{\bf Remark: }The reader familiar with the structure of {\em claw-free trigraphs} may object to our omission of {\em worn hex-joins}, described in \cite{clawfree5}.  This omission is possible because if $(G,A,B,C)$ admits a worn hex-join into $(G_1,A_1,B_1,C_1)$ and $(G_2,A_2,B_2,C_2)$, where $(G_1,A_1,B_1,C_1)$ is in one of $\TTC_1$, $\TTC_2$, $\TTC_3$, $\TTC_5$, or $\TTC_6$, then that worn hex-join is actually a hex-join, since every vertex in one of these classes arises as the image, in a thickening, of a vertex that is in a triad in the {\em trigraph} sense.

%%%%%%%%%%%%%%%%%%%%%%%%%%%%%%%%%%%%%%%%%%%%%%%%%%%%%%%%%%%%%%%%%%%%%%%%%%%%%%%%
%%%%%%%%%%%%%%%%%%%%%%%%%%%%%%%%%%%%%%%%%%%%%%%%%%%%%%%%%%%%%%%%%%%%%%%%%%%%%%%%
%%%%%%%%%%%%%%%%%%%%%%%%%%%%%%%%%%%%%%%%%%%%%%%%%%%%%%%%%%%%%%%%%%%%%%%%%%%%%%%%
%%%%%%%%%%%%%%%%%%%%%%%%%%%%%%%%%%%%%%%%%%%%%%%%%%%%%%%%%%%%%%%%%%%%%%%%%%%%%%%%
%%%%%%%%%%%%%%%%%%%%%%%%%%%%%%%%%%%%%%%%%%%%%%%%%%%%%%%%%%%%%%%%%%%%%%%%%%%%%%%%
%%%%%%%%%%%%%%%%%%%%%%%%%%%%%%%%%%%%%%%%%%%%%%%%%%%%%%%%%%%%%%%%%%%%%%%%%%%%%%%%
%%%%%%%%%%%%%%%%%%%%%%%%%%%%%%%%%%%%%%%%%%%%%%%%%%%%%%%%%%%%%%%%%%%%%%%%%%%%%%%%
%%%%%%%%%%%%%%%%%%%%%%%%%%%%%%%%%%%%%%%%%%%%%%%%%%%%%%%%%%%%%%%%%%%%%%%%%%%%%%%%
%%%%%%%%%%%%%%%%%%%%%%%%%%%%%%%%%%%%%%%%%%%%%%%%%%%%%%%%%%%%%%%%%%%%%%%%%%%%%%%%
%%%%%%%%%%%%%%%%%%%%%%%%%%%%%%%%%%%%%%%%%%%%%%%%%%%%%%%%%%%%%%%%%%%%%%%%%%%%%%%%
\subsection{Colouring three-cliqued claw-free graphs}

We now prove our first main result, Theorem \ref{thm:local}, which states that every three-cliqued claw-free graph $G$ satisfies $\chi(G)\leq \gamma_\ell(G)$.

To bound the chromatic number of antiprismatic thickenings, we removed a good triad whenever possible.  We will do the same for the remaining types of three-cliqued claw-free graphs.  A claw-free graph containing no triad is necessarily antiprismatic, but not all three-cliqued claw-free graphs containing a triad contain a good triad.  Observe that no minimum counterexample to Theorem \ref{thm:local} contains a good triad.  Furthermore, good triads behave nicely with respect to hex-joins:

\begin{observation}
Suppose that a three-cliqued claw-free graph $(G,A,B,C)$ admits a hex-join into $(G_1,A_1,B_1,C_1)$ and $(G_2,A_2,B_2,C_2)$.  If $T$ is a good triad in $G_1$, then it is also a good triad in $G$.
\end{observation}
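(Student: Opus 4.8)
The plan is to verify directly that the defining property of a good triad is preserved under the hex-join operation. Recall that $T$ is a good triad in $G_1$ means: $T$ is a triad (stable set of size three) in $G_1$, and every vertex $v \in V(G_1) \setminus T$ either has two neighbours in $T$, has a twin in $T$, or is trumped by a vertex of $T$. Since $G$ is obtained from the disjoint union of $G_1$ and $G_2$ by adding all edges in the ``hexagonal'' pattern between the parts $A_1,A_2,B_1,B_2,C_1,C_2$, the set $T$ is still a triad in $G$: no edges internal to $G_1$ are added, so $T$ remains stable. The real content is to check the good-triad condition for vertices of $G$ outside $T$, which split into two cases.

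First I would handle the vertices $v \in V(G_1) \setminus T$. Here the key point is that whichever of the three conditions held for $v$ in $G_1$ still holds in $G$, because the hex-join adds the \emph{same} set of new neighbours to every vertex of $G_1$ lying in a given one of $A_1$, $B_1$, $C_1$. Concretely, the triad $T$ must meet $A_1 \cup B_1 \cup C_1$ in a prescribed way; in the base classes $\TTC_1,\dots,\TTC_6$ a triad has at most one vertex in each of the three cliques, so $T$ has exactly one vertex in each of $A_1$, $B_1$, $C_1$ (if a triad vertex lay outside all three cliques that would contradict $A_1\cup B_1\cup C_1 = V(G_1)$). Thus for $v \in A_1 \setminus T$, say, the new neighbours of $v$ in $G$ are exactly $A_2 \cup C_2$, and these are precisely the new neighbours acquired by the vertex $t_A \in T \cap A_1$; so if $v$ is trumped by $t_A$ in $G_1$ it is still trumped by $t_A$ in $G$, if $v$ is a twin of $t_A$ in $G_1$ it is still a twin in $G$, and the ``two neighbours in $T$'' condition is unaffected since $T \subseteq V(G_1)$. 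The same argument works with $B_1$ or $C_1$ in place of $A_1$.

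Next I would handle the vertices $v \in V(G_2)$. Each such $v$ lies in exactly one of $A_2$, $B_2$, $C_2$. By the hex-join pattern, a vertex in $A_2$ is adjacent to all of $A_1 \cup B_1$, hence in particular to the triad vertex in $A_1 \cap T$ and the triad vertex in $B_1 \cap T$ — that is two neighbours in $T$. Symmetrically a vertex in $B_2$ is adjacent to $B_1 \cup C_1 \supseteq$ two vertices of $T$, and a vertex in $C_2$ is adjacent to $C_1 \cup A_1 \supseteq$ two vertices of $T$. So every vertex of $G_2$ has two neighbours in $T$, and the good-triad condition holds for it automatically. Combining the two cases, $T$ is a good triad in $G$.

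The main obstacle — really the only subtlety — is pinning down exactly how $T$ meets $A_1, B_1, C_1$ and confirming that the new neighbours attached to a vertex $v \in A_1$ coincide with those attached to the triad vertex in $A_1 \cap T$; this is what makes the ``twin'' and ``trumps'' relations robust under the hex-join. Once one observes that the hex-join acts uniformly on each of the three cliques of $G_1$, everything else is bookkeeping. (One should also note the statement is used in a context where $T$ being a good triad in $G$ immediately gives $\gamma_\ell(G - T) \le \gamma_\ell(G) - 1$ and hence, via $\chi(G - T) \le \gamma_\ell(G-T)$, contradicts minimality of a counterexample — but that is the application, not part of this observation's proof.)
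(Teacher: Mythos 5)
Your proof is correct and is the natural unpacking of this observation, which the paper states without proof. Both halves of your case analysis are right: for $v\in V(G_2)$, the hex-join pattern puts two vertices of $T$ among the neighbours of $v$; and for $v\in V(G_1)\setminus T$, the key point is that the hex-join adds the identical set of new neighbours to every vertex of a given part $A_1$ (resp.\ $B_1$, $C_1$), so twin and trump relations internal to a part are preserved, and the ``two neighbours'' condition never loses neighbours since $T\subseteq V(G_1)$.

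One step deserves to be made explicit rather than implicit. You assume that if $v\in A_1\setminus T$ is trumped by or a twin of a vertex of $T$, then that vertex is $t_A=T\cap A_1$, but the definition of a good triad allows the partner to be any vertex of $T$. This is in fact forced, but it needs a sentence: if $u\in T$ trumps $v$ or is a twin of $v$, then $t_A\in\tilde N_{G_1}(v)\subseteq\tilde N_{G_1}(u)$ (since $v$ and $t_A$ lie in the clique $A_1$ and $v\neq t_A$), so $u$ is adjacent to $t_A$; because $T$ is stable, this forces $u=t_A$. With that sentence in place, the argument is complete. A small stylistic point: your appeal to ``the base classes $\TTC_1,\dots,\TTC_6$'' to argue that $T$ meets each of $A_1,B_1,C_1$ exactly once is unnecessary; this follows for any three-cliqued $(G_1,A_1,B_1,C_1)$ from the fact that $A_1,B_1,C_1$ are cliques partitioning $V(G_1)$, which you in fact note parenthetically.
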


Let $G$ be a minimum counterexample to Theorem \ref{thm:local}.  Then $G$ is skeletal and is not an antiprismatic thickening.  So Theorem \ref{thm:3decomp} implies that $G$ admits a hex-join into $(G_1,A_1,B_1,C_1)$ and $(G_2,A_2,B_2,C_2)$, where $(G_1,A_1,B_1,C_1)$ is in $\TTC_1$, $\TTC_2$, $\TTC_3$, $\TTC_5$, or $\TTC_6$.  We deal with these five possibilities individually.  Note that $G_2$ may be empty, but this does not affect our approach.

%%%%%%%%%%%%%%%%%%%%%%%%%%%%%%%%%%%%%%%%%%%%%%%%%%%%%%%%%%%%%%%%%%%%%%%%%%%%%%%%
%%%%%%%%%%%%%%%%%%%%%%%%%%%%%%%%%%%%%%%%%%%%%%%%%%%%%%%%%%%%%%%%%%%%%%%%%%%%%%%%
%%%%%%%%%%%%%%%%%%%%%%%%%%%%%%%%%%%%%%%%%%%%%%%%%%%%%%%%%%%%%%%%%%%%%%%%%%%%%%%%
%%%%%%%%%%%%%%%%%%%%%%%%%%%%%%%%%%%%%%%%%%%%%%%%%%%%%%%%%%%%%%%%%%%%%%%%%%%%%%%%
%%%%%%%%%%%%%%%%%%%%%%%%%%%%%%%%%%%%%%%%%%%%%%%%%%%%%%%%%%%%%%%%%%%%%%%%%%%%%%%%
\subsubsection{Five classes to consider}

We now prove a set of lemmas that together imply Theorem \ref{thm:local}, dealing with the easier cases first.

%%%%%%%%%%%%%%%%%%%%%%%%%%%%%%%%%%%%%%%%%%%%%%%%%%%%%%%%%%%%%%%%%%%%%%%%%%%%%%%%
%%%%%%%%%%%%%%%%%%%%%%%%%%%%%%%%%%%%%%%%%%%%%%%%%%%%%%%%%%%%%%%%%%%%%%%%%%%%%%%%
%%%%%%%%%%%%%%%%%%%%%%%%%%%%%%%%%%%%%%%%%%%%%%%%%%%%%%%%%%%%%%%%%%%%%%%%%%%%%%%%
\subsubsection*{Long circular interval graphs ($\TTC_2$)}

\begin{lemma}\label{lem:min32}
Any three-cliqued graph $(G_1,A_1,B_1,C_1)$ in $\TTC_2$ contains a good triad.
\end{lemma}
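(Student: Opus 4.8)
The plan is to show that a three-cliqued graph $(G_1,A_1,B_1,C_1)$ in $\TTC_2$ always contains a good triad, by working directly with its circular interval representation. Recall that in a $\TTC_2$ graph we have a long circular interval representation in which $A_1$, $B_1$, $C_1$ each form a contiguous block of points in circular order, every vertex lies in some triad, and the graph is weakly skeletal. First I would fix such a representation and observe that, since every vertex lies in a triad and $\alpha \leq 3$, the three cliques are genuinely ``spread out'' around the circle: a triad must consist of one vertex from each of $A_1$, $B_1$, $C_1$, because two points in the same contiguous clique-block are adjacent and because $\alpha \le 3$ forces any triad to meet all three blocks. So triads correspond exactly to triples $(a,b,c)$ with $a \in A_1$, $b \in B_1$, $c \in C_1$ that are pairwise non-adjacent.

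Next I would identify a \emph{canonical} candidate triad. Going around the circle we pass through the block $A_1$, then $B_1$, then $C_1$ (up to relabelling). Within $A_1$ let $a$ be the ``last'' vertex of the block (the one closest to $B_1$ in circular order on the $A_1$ side), within $B_1$ let $b$ be the last vertex, and within $C_1$ let $c$ be the last vertex; equivalently, at each block-boundary take the extreme vertex on the appropriate side so that the chosen vertices are pairwise as far apart as possible. The key claim is: $T=\{a,b,c\}$ is a triad, and moreover it is \emph{good}, i.e.\ every other vertex $v$ either has two neighbours in $T$, or has a twin in $T$, or is trumped by a member of $T$. To verify $T$ is a triad I would use the ``long'' hypothesis (no three intervals cover the circle) together with the extremal choice: if, say, $a$ and $b$ were adjacent, some interval would contain both, and combined with the block structure this would force an interval configuration covering the circle or would contradict the assumption that the vertices strictly between the blocks lie in triads. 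For goodness I would take an arbitrary $v$, say $v \in A_1$ (the $B_1$, $C_1$ cases are symmetric under rotating the labels), and compare the set of intervals through $v$ with those through $a$: because $a$ is the extreme vertex of the $A_1$ block towards $B_1$, any interval reaching from $v$ into the $C_1$ side also passes through $a$ (or $v$ is adjacent to both $b$ and $c$, giving two neighbours in $T$). A careful bookkeeping of the three relevant boundary regions shows that in every case $\tilde N(v) \subseteq \tilde N(a)$, so $a$ trumps $v$ (or $v$ is a twin of $a$), which is exactly what ``good'' requires.

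I expect the main obstacle to be the interval-chasing in the goodness verification: one must handle vertices sitting near each of the three block-boundaries, and rule out the degenerate possibilities (e.g.\ a vertex adjacent to the whole circle, or a clique block of size one), using the ``long'' condition and weak skeletality in just the right places. Weak skeletality should be what prevents pathological near-twin configurations that would otherwise spoil the trumping argument — specifically it guarantees that any homogeneous pair of cliques lying inside a single one of $A_1,B_1,C_1$ is already skeletal, which constrains how two nearby vertices of the same block can differ in their neighbourhoods. Once the case analysis is organized around the three boundaries and these two structural hypotheses are invoked, each individual case is a short interval comparison, so the proof should be a page of routine but attentive checking rather than anything conceptually deep. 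Having produced a good triad $T$, Lemma applies as in the antiprismatic case: since no minimum counterexample to Theorem~\ref{thm:local} contains a good triad and good triads survive hex-joins, this handles the $\TTC_2$ term.
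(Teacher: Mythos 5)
The proposal picks a different triad from the paper's: you take $\{a_i, b_j, c_k\}$, the last vertex of each contiguous block, whereas the paper starts from $a_1$ (the first vertex of $A_1$) and greedily finds the first $b_p$ not seeing $a_1$ and then the first $c_q$ not seeing $b_p$. Your triad does in fact work, but for a cleaner reason than you give: since every vertex lies in a triad and each block is a contiguous arc, the (contiguous) non-neighbourhood of any $a\in A_1$ must contain a $b$ and a $c$, and hence, to stay connected inside $B_1\cup C_1$, must contain both $b_j$ and $c_1$; symmetrically for the other blocks. This directly yields $a_i\not\sim b_j$, $b_j\not\sim c_k$, and $c_k\not\sim a_i$, with no appeal to the ``long'' hypothesis, which you flag as needed but which in fact plays no role.

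The concrete flaw is in the goodness verification. The claim that ``any interval reaching from $v$ into the $C_1$ side also passes through $a$'' (with $a=a_i$, the $B_1$-end of $A_1$) is false: an interval joining $v\in A_1$ to a $C_1$-vertex must go around the $c_k$-$a_1$ side, because if it went through $a_i$ it would also cover $b_j$, forcing $v\sim b_j$ and contradicting the fact just established that every $A_1$-vertex misses $b_j$. The inclusion $\tilde N(v)\subseteq \tilde N(a_i)$ when $v$ has only one neighbour in $T$ holds for a different reason: in that case $v\not\sim c_k$, and combined with $v\not\sim c_1$ and contiguity of the non-neighbour arc, $v$ has no neighbours in $C_1$ at all; and $N(v)\cap B_1\subseteq N(a_i)\cap B_1$ by circular-interval monotonicity (any interval through $v$ and a $B_1$-vertex must avoid both $a_1$ and $b_j$, hence runs through $a_i$). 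With that correction, and the symmetric cases for $B_1$ and $C_1$, your plan goes through and is essentially of the same character as the paper's; the paper simply takes the opposite endpoint of $A_1$ and a greedy rather than canonical choice of the other two vertices. Weak skeletality is also a red herring: neither your argument nor the paper's uses it in this lemma.
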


\begin{proof}
Suppose that $(G_1,A_1,B_1,C_1)$ is in $\TTC_2$, and call the vertices of $G$ $\{a_1,\ldots, a_i,\allowbreak b_1,\ldots, b_j,\allowbreak c_1, \ldots, c_k \}$ in circular order.

We can find a triad $T$ containing $a_1$ by adding $b_p$ for the minimum $p$ such that $b_p$ does not see $a_1$, then adding $c_q$ for the minimum $q$ such that $b_p$ does not see $c_q$.  The triad $T$ exists since $a_1$ is in a triad, and it follows from the structure of circular interval graphs that $a_1$ and $b_p$ are in a triad together.  If some vertex in $(A \setminus \{a_1\}) \cup\{b_x \mid x < p\}$ does not see both $a_1$ and $b_p$, then we are in a degenerate case where $G_1$ is a linear interval graph, and the vertex in question is a twin of $a_1$ or $b_p$, or it is trumped by $a_1$ or $b_p$.  The same applies to every vertex in $\{b_x \mid x > p\} \cup \{c_y \mid y < q\}$:  each vertex has two neighbours in $T$ or a twin in $T$ or is trumped by a vertex in $T$.  Similarly, if some vertex $v$ in $\{c_l \mid l > q\}$ has only one neighbour in $T$ then it has no neighbours in $A$, hence it is trumped by or is a twin of $c_q$.  Thus $T$ is a good triad.
\end{proof}

%%%%%%%%%%%%%%%%%%%%%%%%%%%%%%%%%%%%%%%%%%%%%%%%%%%%%%%%%%%%%%%%%%%%%%%%%%%%%%%%
%%%%%%%%%%%%%%%%%%%%%%%%%%%%%%%%%%%%%%%%%%%%%%%%%%%%%%%%%%%%%%%%%%%%%%%%%%%%%%%%
%%%%%%%%%%%%%%%%%%%%%%%%%%%%%%%%%%%%%%%%%%%%%%%%%%%%%%%%%%%%%%%%%%%%%%%%%%%%%%%%
\subsubsection*{Antihat thickenings ($\TTC_3$)}

\begin{lemma}\label{lem:min33}
Any three-cliqued graph $(G_1,A_1,B_1,C_1)$ in $\TTC_3$ contains a good triad.
\end{lemma}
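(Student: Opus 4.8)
The plan is to exhibit a good triad in any $(G_1,A_1,B_1,C_1) \in \TTC_3$ directly from the explicit description of antihat graphs and antihat thickenings given in Section~\ref{sec:structureantihat}. Recall that $G_1$ is a weakly skeletal thickening of $(G\cup M) - X$ under $M$, where $G = H-X$ is an antihat graph with the very rigid adjacency pattern between $A=\{a_0,\dots,a_k\}$, $B=\{b_0,\dots,b_k\}$, $C=\{c_1,\dots,c_k\}$, and where $|C\setminus X|\geq 2$. The key observation is that for distinct indices $i,j,l$ with $c_i,c_j\notin X$ and, say, $a_l$ surviving in $G$, one gets candidate triads such as $\{a_l, c_i, c_j\}$ (since $c_i,c_j$ are nonadjacent to each other only if\ldots wait, $C$ is a clique) --- so rather the natural triads come from the bipartite-like structure between $A\cup B$ and $C$. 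Concretely, for $i\neq j$ with $c_i,c_j\notin X$, the pair $\{c_i,c_j\}$ is \emph{not} a triad since $C$ is a clique; instead I would look at triads of the form $\{a_i, b_j, c_l\}$ with $i,j,l$ pairwise distinct (here $a_ib_j$ nonadjacent since $i\neq j$, $a_ic_l$ nonadjacent forces $l=i$ --- so this needs care too), or most cleanly $\{a_i, c_i, \cdot\}$ type triads using $a_0$ or $b_0$, which see almost nothing.

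First I would exploit the special vertices $a_0$ and $b_0$: $a_0$ has no neighbour outside $A\cup\{b_0\}$ and $b_0$ has no neighbour outside $B\cup\{a_0\}$, so if both survive in $G$ and are nonadjacent, then any $c_l\in C\setminus X$ gives a triad $\{a_0,b_0,c_l\}$; and a vertex $v$ outside $I(a_0)\cup I(b_0)\cup I(c_l)$ must --- because of the antiprismatic-like global structure of $H$ (adding $a_0b_0$ makes $H$ antiprismatic) --- have two neighbours in $\{a_0,b_0,c_l\}$, while vertices inside those three fibres have twins in the triad or are trumped. The bulk of the argument is then a case analysis on which of $a_0,b_0$ survive in $G$ and whether $a_0b_0\in M$: if $a_0b_0\in M$ (so they are adjacent but the edge sits in the matching) I would instead pick the triad inside $I(a_0)\setminus\Omega(a_0,b_0)$, $I(b_0)\cap\Omega(a_0,b_0)$ (or symmetrically), together with a $c_l$ fibre, mirroring exactly the argument in the proof of Theorem~\ref{thm:antiprismaticlocal}; skeletality of the homogeneous pair $(I(a_0),I(b_0))$ guarantees that $\Omega(a_0,b_0)$ is nonempty and at least one of the two "outside" parts is nonempty, and weak skeletality is enough here since $(I(a_0),I(b_0))$ is non-straddling. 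If $a_0$ or $b_0$ has been deleted (is in $X$), then $G$ is a thickening of an antiprismatic graph and we may simply invoke the Corollary after Theorem~\ref{thm:antiprismaticlocal}, or reprove it in this setting.

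The only genuinely new wrinkle compared to the antiprismatic case is verifying the "two neighbours in $T$" condition for vertices in the $C$-part and for the matching $M$: I would first argue, exactly as in the proof of Theorem~\ref{thm:antiprismaticlocal}, that no vertex of the triad lies in $V(M)$ --- using the constraints on $M$ in the definition of antihat thickenings, namely that the only matching edges incident to $c_i$ are $b_ic_i$ or $a_ic_i$ and these require $a_i\in X$ or $b_i\in X$ respectively --- so choosing the triad's $C$-coordinate to be some $c_l$ whose partners $a_l,b_l$ are present rules out $c_l\in V(M)$; and similarly for the $a_0,b_0$ coordinates when they are not the matched pair. Then the local checks (twin, trumped, or two neighbours) follow mechanically from the three bulleted adjacency rules defining $H$. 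The main obstacle, and where I would spend the most care, is the bookkeeping across the several sub-cases --- which of $a_0,b_0,c_l$-partners survive in $G$, and the placement of $M$-edges --- to confirm in every case that \emph{every} vertex of $G_1-T$ is handled; this is routine but fiddly, and I would organize it by first disposing of the subcase where $\{a_0,b_0\}\not\subseteq V(G)$ (reducing to the antiprismatic corollary), then handling $a_0b_0\notin E(G)$ (immediate triad $\{a_0,b_0,c_l\}$), then $a_0b_0\in M$ (skeletal-pair argument as above).
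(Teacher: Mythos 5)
Your proposed triad $\{a_0,b_0,c_l\}$ is not, in general, a good triad, and this is a real gap rather than a bookkeeping issue. Take any index $l\geq 1$ with $c_l\notin X$ and suppose (as happens, e.g., whenever $X=\emptyset$) that $a_l\notin X$. A vertex $v\in I(a_l)$ then has exactly one neighbour in $T=\{a_0,b_0,c_l\}$: $a_l$ is adjacent to $a_0$ (both lie in the clique $A$), but $a_l$ is nonadjacent to $b_0$ (the only neighbour of $b_0$ outside $B$ is $a_0$) and nonadjacent to $c_l$ (same index). Moreover $v$ is not a twin of any vertex of $T$ and is not trumped by any vertex of $T$: $a_l$ sees $b_l$ while none of $a_0,b_0,c_l$ does, and $a_0$ (resp.\ $b_0$, resp.\ $c_l$) has neighbours that $a_l$ lacks, so $N[v]$ is incomparable with each of $N[a_0]$, $N[b_0]$, $N[c_l]$. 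The same failure occurs for $I(b_l)$. The source of the error is the appeal to the antiprismatic structure of $H+a_0b_0$: the ``every vertex outside a triad has two neighbours in it'' property applies only to sets that are triads \emph{in} $H+a_0b_0$, and $\{a_0,b_0,c_l\}$ is not one (since $a_0b_0$ is an edge there). The paper instead takes a triad of the form $\{a_0,b_i,c_i\}$ with $i\geq 1$, chosen to hit $\Omega(b_i,c_i)$ and $I(a_0)\cap\Omega(a_0,b_0)$ when nonempty; this set \emph{is} a triad in $H+a_0b_0$, so the antiprismatic two-neighbour count does apply to all vertices outside the three fibres (and $I(b_0)$, which is handled by trumping or $\Omega$), and the remaining checks inside $I(a_0)\cup I(b_i)\cup I(c_i)$ reduce to the skeletal homogeneous-pair observations from Section~\ref{sec:skelsub}. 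As a smaller point, your final case ``$a_0$ or $b_0$ has been deleted'' is vacuous: the definition of an antihat graph requires $X\subseteq (A\cup B\cup C)\setminus\{a_0,b_0\}$, so $a_0$ and $b_0$ always survive.
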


\begin{proof}
Let $T$ be a triad consisting of a vertex $a$ of $I(a_0)$ and vertices $b$ in $I(B\setminus \{b_0\})$ and $c$ in $I(C)$ respectively, following the definition of an antihat thickening.  If $b$ and $c$ are in $I(b_i)$ and $I(c_i)$ respectively, we insist that $T$ intersects $\Omega(b_i,c_i)$ if it is not empty.  We also insist that if $I(a_0)\cap \Omega(a_0b_0)$ is nonempty, then $T$ intersects it.  It is easy to confirm from the structure of an antihat thickening that $T$ exists and is a good triad.
\end{proof}

%%%%%%%%%%%%%%%%%%%%%%%%%%%%%%%%%%%%%%%%%%%%%%%%%%%%%%%%%%%%%%%%%%%%%%%%%%%%%%%%
%%%%%%%%%%%%%%%%%%%%%%%%%%%%%%%%%%%%%%%%%%%%%%%%%%%%%%%%%%%%%%%%%%%%%%%%%%%%%%%%
%%%%%%%%%%%%%%%%%%%%%%%%%%%%%%%%%%%%%%%%%%%%%%%%%%%%%%%%%%%%%%%%%%%%%%%%%%%%%%%%
\subsubsection*{Exception I ($\TTC_5$)}

\begin{lemma}\label{lem:min35}
Any three-cliqued graph $(G_1,A_1,B_1,C_1)$ in $\TTC_5$ contains a good triad.
\end{lemma}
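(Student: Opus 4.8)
The plan is to follow the same strategy as in Lemmas~\ref{lem:min32} and~\ref{lem:min33}: exhibit an explicit good triad using the concrete structure of the eight-vertex graph $G$ underlying a member of $\TTC_5$. First I would recall the definition: $G'$ is a thickening of $(G\cup M)-X$ under $M$, where $X\subseteq\{v_3,v_4\}$, and the three cliques are $I(\{v_1,v_2,v_3\})$, $I(\{v_4,v_5,v_6\})$, $I(\{v_7,v_8\})$. Before picking the triad I would settle on a natural triad in $G$ itself: since $v_1$ sees $v_2,v_3,v_6,v_7$ but not $v_4,v_5,v_8$, a good candidate base triad is $\{v_1,v_5,v_8\}$ (one should check this is stable: $v_1v_5\notin E$, $v_1v_8\notin E$, $v_5v_8\notin E$ — yes), or, if $v_3,v_4$ survive, possibly a triad using $v_3$ or $v_4$. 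I would then lift this to a triad $T$ in $G'$ by choosing one vertex from each of the corresponding $I(v_i)$, with the standard proviso borrowed from Lemma~\ref{lem:min33}: whenever a pair $v_iv_j\in M$ is present and $\Omega(v_i,v_j)$ is nonempty, insist that $T$ meet $\Omega(v_i,v_j)$, so that the dichotomy ``twin in $T$ versus trumped by a vertex of $T$'' from Section~\ref{sec:skelsub} applies cleanly to the vertices of $I(v_i)\cup I(v_j)$.

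The core of the argument is then a finite case check. For each of the (at most) eight vertices $v$ of $G$, and each vertex $x\in I(v)\setminus T$, I must verify that $x$ has two neighbours in $T$, or a twin in $T$, or is trumped by a vertex of $T$. Vertices $x\in I(v_i)$ where $v_i$ is one of the three chosen triad vertices are handled by the skeletal-homogeneous-pair observations: either $x$ is a twin of the chosen representative (if $v_i\notin V(M)$ or $x$ is on the same side of $\Omega$) or $x$ is trumped by it (if $x\in I(v_i)\setminus\Omega(v_i,v_j)$ for the matching edge $v_iv_j$). Vertices $x\in I(v_i)$ for the remaining $v_i$ must have two neighbours in $T$; this is where I read off adjacency from the definition of $G$ and confirm that each such $v_i$ sees exactly two of the three base-triad vertices. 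Because deleting $v_3$ and/or $v_4$ (via $X$) and because the optional edge $v_2v_5$ create a few sub-cases, I would organize the check by which elements of $\{v_3,v_4\}$ lie in $X$ and whether $v_2v_5\in E(G)$, adjusting the base triad if the first choice fails in some sub-case (e.g.\ if $v_5\in X$ is impossible but some other degeneracy occurs). Skeletality of $G_1$ is used to guarantee each relevant $\Omega(v_i,v_j)$ behaves as a clique and that we may speak of $I(v_i)\setminus\Omega$, $I(v_i)\cap\Omega$.

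The main obstacle I anticipate is not conceptual but bookkeeping: choosing a single base triad that remains a good triad across all the sub-cases induced by $X\subseteq\{v_3,v_4\}$ and by the optional edges $v_2v_5$ and possibly $v_5v_7$-type optional adjacencies. It is conceivable that no uniform choice works and one needs two or three choices of $T$ depending on which of $v_3,v_4$ survive; I would be prepared to split into, say, the cases $v_3\notin X$, and $v_3\in X$ (and symmetrically for $v_4$), picking the triad so that it always meets the relevant $\Omega$ sets and so that the surviving ``far'' vertices each hit exactly two triad members. Once the right triad is named in each case, the verification is the same routine adjacency-counting plus twin/trump dichotomy as in the proof of Lemma~\ref{lem:min33}, so I would present it compactly: state the triad, invoke the Section~\ref{sec:skelsub} observations for the three ``home'' cliques, and tabulate the two-neighbours-in-$T$ condition for the other vertices.
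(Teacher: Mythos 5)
Your overall strategy matches the paper's: pick an explicit triad in the base graph, lift it to one representative in each $I(v_i)$ while insisting it meet the relevant $\Omega$-sets, and verify the good-triad condition vertex-class by vertex-class using the skeletal homogeneous-pair dichotomy from Section~\ref{sec:skelsub}. But the concrete candidate triad you name, $\{v_1,v_5,v_8\}$, does not work: in $G$ the vertex $v_2$ is adjacent to $v_1$ but not to $v_8$, and it has $v_4$ as a neighbour while $v_1$ does not (and the $v_1v_4$-adjacency in the thickening is only partial, being an $M$-edge), so a generic vertex of $I(v_2)$ has just one neighbour in $T$, is not a twin of the $v_1$-representative, and is not trumped by it. So that specific step would fail outside of a few degenerate subcases.

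The paper instead takes the triad $\{v_2,v_5,v_7\}$ (one representative in each of $I(v_2)$, $I(v_5)$, $I(v_7)$, meeting $\Omega(v_2v_5)$ if nonempty). This choice has two advantages you didn't quite land on: it avoids $v_3$ and $v_4$ entirely, so no casework on $X\subseteq\{v_3,v_4\}$ is needed (deleting those vertices only removes things to check); and it touches only one $M$-edge, $v_2v_5$, so only one $\Omega$ must be handled, whereas your triad contains $v_1$, an endpoint of $v_1v_4\in M$, which introduces a second $\Omega$ to juggle. With $\{v_2,v_5,v_7\}$ the verification is uniform: $v_1,v_3,v_4,v_6$ each see two of $\{v_2,v_5,v_7\}$; $v_8$ is trumped by the $v_7$-representative (since $\tilde N(v_8)\subsetneq\tilde N(v_7)$); and the $I(v_2),I(v_5)$ vertices split into twins/trumped/two-neighbours cases according to $\Omega(v_2v_5)$ exactly as you describe. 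Also, your mention of ``$v_5v_7$-type optional adjacencies'' appears to be a slip — that belongs to $\TTC_6$, not $\TTC_5$.
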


\begin{proof}
Let $T$ be a triad including one vertex in each of $I(v_7)$, $I(v_2)$, and $I(v_5)$, such that $T$ intersects $\Omega(v_2v_5)$ if it is not empty.  It is easy to confirm that $T$ is a good triad:  Vertices in $I(\{v_1,v_3,v_4,v_6\})$ have two neighbours in $T$, and vertices in $I(\{v_7,v_8\})$ have a neighbour or a twin in $T$.  If $\Omega(v_2v_5)$ is empty then vertices in $I(\{v_2,v_5\})$ have a twin in $T$.  If not, then assume without loss of generality that $T$ intersects $I(v_2)\cap \Omega(v_2v_5)$ and $I(v_5)\setminus \Omega(v_2v_5)$.  Then vertices in $I(v_2)\cap \Omega(v_2v_5)$ have a twin in $T$, vertices in $I(v_2)\setminus \Omega(v_2v_5)$ are trumped by a vertex in $T$, vertices in $I(v_5)\cap \Omega(v_2v_5)$ have two neighbours in $T$, and vertices in $I(v_5)\setminus \Omega(v_2v_5)$ have a twin in $T$.  Therefore $T$ is a good triad.
\end{proof}

%%%%%%%%%%%%%%%%%%%%%%%%%%%%%%%%%%%%%%%%%%%%%%%%%%%%%%%%%%%%%%%%%%%%%%%%%%%%%%%%
%%%%%%%%%%%%%%%%%%%%%%%%%%%%%%%%%%%%%%%%%%%%%%%%%%%%%%%%%%%%%%%%%%%%%%%%%%%%%%%%
%%%%%%%%%%%%%%%%%%%%%%%%%%%%%%%%%%%%%%%%%%%%%%%%%%%%%%%%%%%%%%%%%%%%%%%%%%%%%%%%
\subsubsection*{Exception II ($\TTC_6$)}

\begin{lemma}\label{lem:min36}
Any three-cliqued graph $(G_1,A_1,B_1,C_1)$ in $\TTC_6$ contains a good triad.
\end{lemma}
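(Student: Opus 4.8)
The plan is to mimic the proofs of Lemmas \ref{lem:min32}, \ref{lem:min33}, and \ref{lem:min35}: exhibit an explicit good triad inside $(G_1,A_1,B_1,C_1)$, reading off its structure from the description of $\TTC_6$. Recall that $G_1$ is a weakly skeletal thickening of $G-X$ under the matching $M$, where $G$ is the sporadic $9$-vertex graph of Exception II with cliques $A=\{v_1,v_2\}$, $B=\{v_7,v_8\}$, $C=\{v_3,v_4,v_5,v_6,v_9\}$, and every vertex of $(G-M)-X$ lies in a triad. The natural candidate is a triad $T$ meeting $I(v_1)$, $I(v_7)$, and one of the ``free'' vertices of $I(C\setminus X)$: since $v_1$ is adjacent in $G$ only to $v_2,v_3,v_8,v_9$ (among $v_1,\dots,v_9$) and $v_7$ only to $v_6,v_8$ and possibly $v_5$, a vertex in $C\setminus X$ avoiding $v_3,v_9$ (for $v_1$) and $v_6$ (for $v_7$) completes a triad. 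The first step is to verify such a vertex of $C$ survives in $G-X$; this is exactly what the side conditions on $X$ (that $v_2$ and $v_7$ each retain a neighbour in $C\setminus X$, and the $v_4/v_5$ clause) are designed to guarantee, so a short case analysis on which of $v_3,v_4,v_5,v_6$ lie in $X$ pins down the correct choice --- $v_4$ or $v_5$ in the generic case, $v_9$ as a fallback.

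Next I would check that $T$ is \emph{good}, i.e.\ every vertex of $G_1-T$ has two neighbours in $T$, a twin in $T$, or is trumped by a vertex of $T$. This breaks into the fibres $I(v_i)$. For the fibres $I(v_i)$ with $v_i\in T$'s ``home'' vertices, vertices not chosen into $T$ are either twins of the chosen vertex (when they lie outside the relevant $\Omega$-clique) or, using the four bulleted observations of Section \ref{sec:skelsub} about skeletal homogeneous pairs, are trumped by or dominate the chosen vertex --- so as in Lemma \ref{lem:min35} I would insist $T$ meets $\Omega(v_1v_3)$, $\Omega(v_6v_8)$, and (if present) $\Omega(v_2v_4)$ or $\Omega(v_5v_7)$ on the ``right'' side. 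For the remaining fibres I would simply tabulate, from the adjacency list of $G$, which $v_i$ see all of $\{v_1,v_7,\text{(the }C\text{-vertex)}\}$: for instance $v_8$ sees $v_1$ and $v_7$ and has a neighbour-or-twin relationship to the $C$-vertex via $v_9$ or $\Omega(v_6v_8)$, $v_2$ sees $v_1$ and the $C$-vertex (by our choice) but maybe not $v_7$, in which case $v_2$ is trumped by the chosen vertex of $I(v_1)$ or has a twin in $I(v_2)\cap T$, and symmetrically for $v_7$. Because we deleted $X$ and removed edges of $M$ only along the prescribed pairs, every vertex of $(G-M)-X$ was assumed to be in a triad, which rules out the degenerate configurations that could otherwise break goodness.

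The main obstacle I anticipate is the bookkeeping around the optional edges and the optional members of $X$: Exception II has several ``possibly'' adjacencies ($v_2v_4$, $v_5v_7$, and the corresponding optional matching edges) and $X$ ranges over all subsets of $\{v_3,v_4,v_5,v_6\}$ satisfying two constraints, so naively there are many cases. I would organise this by first fixing the $C$-vertex of $T$ according to a short lookup: if $v_4\notin X$ take it (then $v_2v_4\in E(G)$ by the second constraint on $X$, which is exactly the adjacency we need to handle $I(v_2)$), symmetrically if $v_5\notin X$ take $v_5$; and if both $v_4,v_5\in X$ then the constraint forces $v_2,v_7$ to have neighbours among $\{v_3,v_6\}\setminus X$ and the second $X$-clause is vacuous, so $v_9$ works as the $C$-vertex. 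With the $C$-vertex fixed, the verification of goodness for each fibre is a one-line check against the adjacency list together with the skeletal observations, exactly parallel to Lemma \ref{lem:min35}, and I would present it as a single short paragraph rather than an exhaustive case split. The only subtlety worth flagging in the write-up is that, as in Lemma \ref{lem:min33}, one must choose $T$ to meet each nonempty $\Omega$-clique on the correct side so that the ``trumps'' relation points the right way; this is forced and costs one sentence.
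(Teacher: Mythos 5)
Your proposed triad is built on $v_1$ rather than $v_2$, and this choice breaks the argument in two places.

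First, the fallback ``$v_9$ works as the $C$-vertex'' is simply false: $v_1$ \emph{is} adjacent to $v_9$ in $G$ (the definition states $v_1$ is adjacent to $v_3$, $v_8$, and $v_9$), so $\{v_1,v_7,v_9\}$ is not a stable set. You even note this adjacency yourself earlier in the same paragraph (``a vertex in $C\setminus X$ avoiding $v_3,v_9$ (for $v_1$) and $v_6$ (for $v_7$)''), then contradict it. And the fallback is genuinely needed: $X=\{v_4,v_5\}$ satisfies all the constraints on $X$ (it leaves $v_3$ for $v_2$ and $v_6$ for $v_7$, and the second clause is vacuous), so there are graphs in $\TTC_6$ where neither $v_4$ nor $v_5$ survives into $G_1$.

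Second, you misread the second side condition on $X$. You write ``if $v_4\notin X$ take it (then $v_2v_4\in E(G)$ by the second constraint on $X$).'' The constraint only forces $v_2v_4\in E(G)$ when \emph{neither} $v_4$ nor $v_5$ lies in $X$. If $v_4\notin X$ but $v_5\in X$, then $v_2$ may fail to be adjacent to $v_4$, in which case a vertex $w\in I(v_2)$ has exactly one neighbour in your $T=\{t_1,t_7,t_4\}$ (namely $t_1$), is not a twin of $t_1$, and is not trumped by $t_1$ whenever $I(v_3)\setminus\Omega(v_1v_3)\neq\emptyset$ (because then $w$ sees that set but $t_1$ does not). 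Nor do $t_7$ or $t_4$ trump $w$, since neither sees $I(A)$. So the goodness check fails for $I(v_2)$.

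The paper's proof avoids both problems by choosing $T$ to meet $I(v_2)$, $I(v_7)$, $I(v_9)$ (with $T$ hitting $\Omega(v_2v_4)$ and $\Omega(v_5v_7)$ when nonempty). This is a single uniform choice that works regardless of which optional edges exist and which vertices lie in $X$: $v_2,v_7,v_9$ are always pairwise non-adjacent, and every other fibre $I(v_i)$ sees two of $\{v_2,v_7,v_9\}$ or is handled via twins/trumping through $t_9$. The crucial difference from your attempt is exactly the swap from $v_1$ to $v_2$: $v_2$ is non-adjacent to $v_9$, and the $I(v_2)$-coverage problem disappears because the $A$-side vertex of $T$ is already in $I(v_2)$.
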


\begin{proof}
Let $T$ be a triad including one vertex in each of $I(v_2)$, $I(v_7)$, and $I(v_9)$, such that $T$ intersects $\Omega(v_2v_4)$ if it is not empty, and intersects $\Omega(v_5v_7)$ if it is not empty.  It is easy to confirm that $T$ is a good triad (see Figure \ref{fig:exception}).
\end{proof}

%%%%%%%%%%%%%%%%%%%%%%%%%%%%%%%%%%%%%%%%%%%%%%%%%%%%%%%%%%%%%%%%%%%%%%%%%%%%%%%%
%%%%%%%%%%%%%%%%%%%%%%%%%%%%%%%%%%%%%%%%%%%%%%%%%%%%%%%%%%%%%%%%%%%%%%%%%%%%%%%%
%%%%%%%%%%%%%%%%%%%%%%%%%%%%%%%%%%%%%%%%%%%%%%%%%%%%%%%%%%%%%%%%%%%%%%%%%%%%%%%%
\subsubsection*{A type of line graph ($\TTC_1$)}

We now prove the necessary lemma for $G_1$ in $\TTC_1$.  This is by far the most difficult case.  We make extensive use of the fact that line graphs of bipartite multigraphs are perfect.

\begin{lemma}\label{lem:min31}
Let $(G,A,B,C)$ be a minimum counterexample to Theorem \ref{thm:local} and suppose it admits a hex-join into $(G_1,A_1,B_1,C_1)$ and $(G_2,A_2,B_2,C_2)$.  Then $(G_1,A_1,B_1,C_1)$ is not in $\TTC_1$.
\end{lemma}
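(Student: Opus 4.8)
The plan is to contradict minimality by exhibiting, inside $G_1 \in \TTC_1$, a stable set $S$ whose removal lowers $\gamma_\ell$. Since $G_1$ is (a weakly skeletal thickening of) a line graph $L(H)$ with the distinguished pairwise-nonadjacent vertices $a,b,c$, and every vertex of $G_1$ is in a triad, we can work almost entirely in $H$. A vertex of $G_1$ corresponds to an edge of $H$; since every edge of $H$ meets $\{a,b,c\}$ and every vertex of $G_1$ lies in a triad, each edge of $H$ has exactly one endpoint in $\{a,b,c\}$ (an edge with both endpoints in $\{a,b,c\}$ cannot be in a triad in $G_1$, as the triads of $G_1$ correspond to matchings of size three in $H$ using one edge at $a$, one at $b$, one at $c$). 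So the edge set of $H$ partitions into the three "stars" $E_a, E_b, E_c$ realizing $A_1, B_1, C_1$, and — after applying Proposition~\ref{prop:line} and the skeletality reduction — we may assume $G_1$ is in fact a line graph $L(H)$, with $H$ bipartite between $\{a,b,c\}$ and the remaining vertices. Line graphs of bipartite multigraphs are perfect, so $\chi(G_1) = \omega(G_1) = \Delta(H)$ and, for each vertex $e=xy$ of $G_1$ (with $x\in\{a,b,c\}$, $y$ the other endpoint), $\omega(e) = \max(\deg_H x, \deg_H y)$ and $d_{G_1}(e)+1 = \deg_H x + \deg_H y - 1$.

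The first step is to set up the removal target. In the hex-join, every vertex of $G_1$ is completely joined to two of the three cliques $A_2, B_2, C_2$ of $G_2$ (a vertex of $A_1$ to $A_2$ and $C_2$, etc.), so for $v \in V(G_1)$ we have $d_G(v) = d_{G_1}(v) + |A_2| + |C_2|$ (say) and $\omega(v) = \omega_{G_1}(v) + \max$-type contributions from $G_2$; in particular $\gamma_\ell(v)$ is monotone in $d_{G_1}(v)+\omega_{G_1}(v)$ among vertices of a fixed one of $A_1,B_1,C_1$. We want a triad $T$ of $G$ — equivalently a matching $\{e_a, e_b, e_c\}$ in $H$ with $e_x \in E_x$ — such that deleting $T$ from $G$ drops $d(v)+\omega(v)$ by two for every vertex $v$ that attains the maximum of $d_{G_1}(\cdot)+\omega_{G_1}(\cdot)$ in its clique. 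In $H$-language: pick edges $e_a$ at $a$, $e_b$ at $b$, $e_c$ at $c$, pairwise vertex-disjoint, so that deleting them decreases $\deg_H x$ for every $x$ that is an endpoint of some edge realizing a $(d+\omega)$-maximum vertex of $G_1$. The natural choice is to take $e_a$ incident to a maximum-degree neighbour of $a$ and to $a$ itself (and symmetrically), so that both endpoints of each $e_x$ have their $H$-degree reduced — then every edge of $H$ incident to a reduced vertex has its $d_{G_1}+\omega_{G_1}$ reduced, and an edge incident to two reduced vertices has it reduced twice. The combinatorial content is to choose $e_a,e_b,e_c$ vertex-disjoint while hitting, at both ends, the vertices of $H$ that "matter." A good triad would suffice and would be cleanest, but as the paper warns, $\TTC_1$ need not contain a good triad, so we must argue directly that the $(d+\omega)$-maximizers are all handled.

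The heart of the argument, and the main obstacle, is that a single triad deletes only one edge at each of $a,b,c$, so if the $H$-degree is attained (as an $\omega$-value) at some neighbour $y \notin \{a,b,c\}$ of high degree that is not covered by our matching, that vertex's contribution is not reduced. The weak-skeletality hypothesis is exactly what rescues this: the condition "for each $S\subseteq\{a,b,c\}$ at most one vertex $u\notin\{a,b,c\}$ has $N_H(u)=S$," together with skeletality of the relevant homogeneous pairs of cliques, bounds how badly a neighbour $y$ can dominate, and forces that any vertex of $G_1$ realizing the global maximum of $d_{G_1}+\omega_{G_1}$ has its $\omega$ governed by one of $\deg_H a, \deg_H b, \deg_H c$ — or else is trumped by such a vertex, or has a twin. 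I would structure this as a case analysis on which of $\deg_H a,\deg_H b,\deg_H c$ is largest and whether the $\omega$-value of a maximizing edge is attained at its $\{a,b,c\}$-endpoint or its other endpoint; in the latter case I would use weak-skeletality and the "at most one $u$ with $N_H(u)=S$" condition to show that after deleting a carefully chosen $T$ hitting $a$, $b$, $c$ together with the single high-degree private neighbour when one exists, all remaining vertices have $d+\omega$ strictly smaller, so $\gamma_\ell(G-T) \le \gamma_\ell(G)-1$. Then since $T$ is a triad (a stable set), $\chi(G) \le \chi(G-T)+1 \le \gamma_\ell(G-T)+1 \le \gamma_\ell(G)$, contradicting that $G$ is a counterexample; hence $(G_1,A_1,B_1,C_1)\notin\TTC_1$.
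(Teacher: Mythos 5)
Your plan is to always find a triad $T$ whose removal drops $\gamma_\ell$, with weak skeletality supposedly forcing every $(d+\omega)$-maximizer to be trumped or to lose two units after removing $T$. This is the approach the paper uses for only one of three cases; for the other two it abandons triad removal entirely and instead \emph{removes edges} from $G_1$ while showing $\chi$ is unchanged by a matching/packing argument in a perfect graph. The gap is concrete: after reducing to ``at least four centres of degree $\geq |C_1|$,'' a single matching of size three in $H$ can only hit three centres, so some high-degree centre $s$ survives untouched. The vertices of $G_1$ corresponding to edges at $s$ then keep their $\omega_{G_1}$-value $\deg_H(s)$, and the hex-join contributions from $A_2,B_2,C_2$ do not change, so their $d+\omega$ drops by at most one, and nothing in the weak-skeletality condition (which only bounds the \emph{number} of centres with a fixed simple neighbourhood, not their degrees) lets you conclude they are trumped. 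The paper circumvents this in Case 1 ($d(w)\geq d(a)$, $c$ sees $w$) by a trumping argument tied to a vertex in $C_w$, but in Case 2 ($c$ does not see $w$) and Case 3 ($d(w) < d(a)$) no single triad suffices and the proof instead deletes edges (e.g.\ all edges from $C_1$ to $G_1\setminus C_1$), verifies $G'$ is a proper claw-free three-cliqued subgraph, and rebuilds a $\chi(G')$-colouring of $G$ by re-packing diads and triads inside $G_1$. Without that second mechanism your proof cannot close.

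A secondary issue: you write $\omega(e)=\max(\deg_H x,\deg_H y)$, but that is $\omega_{G_1}$. In $G$ the maximal cliques through a vertex of $A_s$ also include $A_2\cup A_s\cup B_s$, $C_2\cup A_s\cup C_s$, and cliques of the form $A_1\cup A_2$ or $A_1\cup C_2$, so $\omega_G$ is driven by the hex-join data as well. The paper's explicit enumeration of these four clique types is what makes its degree bookkeeping go through; ``$\max$-type contributions from $G_2$'' glosses over exactly the quantities that decide whether removing a triad actually lowers $\omega(v)$.
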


\begin{proof}
Suppose $(G_1,A_1,B_1,C_1)$ is in $\TTC_1$.  Then $G_1$ is the line graph of some bipartite multigraph $H$ which has a stable set $\{a,b,c\}$ corresponding to $A_1$, $B_1$, and $C_1$.  Assume without loss of generality that $|C_1|\leq |B_1|\leq |A_1|$.  We call the other vertices of $H$ {\em centres}.  Depending on the structure of $G_1$ we will take one of two actions:

\begin{enumerate*}
\item Remove a triad from $G_1$, lowering $\gamma_\ell(G)$.
\item Remove edges from $G_1$ without changing $\chi(G)$ or changing the fact that $(G_1,A_1,B_1,C_1)$ is three-cliqued or claw-free.
\end{enumerate*}

Every vertex of $G_1$ is in a triad.  If there are only three centres then removing any triad $T$ will lower $\gamma_\ell(G)$ since every vertex in $G_1-T$ will have two neighbours or a twin in $T$ -- this can be confirmed easily since the graph underlying $H$ will be a subgraph of $K_{3,3}$.  So there are at least four centres.  Call the four centres of highest degree $w$, $x$, $y$, and $z$ such that $d(w)\geq d(x)\geq d(y)\geq d(z)$.

For any centre $s$, denote by $A_s$ the clique corresponding to the edges of $H$ between $a$ and $s$.  Define $B_s$ and $C_s$ accordingly.  Denote $A_s\cup B_s\cup C_s$ by $X_s$.  We now consider, for some vertex $v\in A_s$, what cliques of size $\omega(v)$ can contain $v$.  By the structure of a hex-join, observe that such a clique must be one of:
\begin{itemize*}
\item A clique in $G_1$ intersecting all of $A_1,B_1,C_1$.  Specifically, $A_s \cup B_s \cup C_s = X_s$.
\item A clique in $A_1\cup B_1 \cup A_2$ containing all of $A_2$.  Specifically, $A_2 \cup A_s \cup B_s$.
\item A clique in $A_1 \cup C_1 \cup C_2$ containing all of $C_2$.  Specifically, $C_2 \cup A_s \cup C_s$.
\item A clique in $A_1 \cup A_2 \cup C_2$ containing all of $A_1$.  Such a clique has size at least $|A_1|+\max\{|A_2|,|C_2|\}$.
\end{itemize*}
Note also that the closed neighbourhood of $v$ is $A_1\cup X_s\cup A_2 \cup C_2$.  We can make similar observations about the cliques of size $\omega(v)$ when $v$ is in $A_1 \setminus A_s$ or $B_1$ or $C_1$.  These observations help us characterize the situations in which removing a triad $T$ lowers $\omega(v)$ and therefore $\gamma_\ell(v)$.

Note that at most two centres have degree $\geq d(a)$, since there are at least four centres and the sum of their degrees is $d(a)+d(b)+d(c)$.  Suppose there are at most three centres with degree $\geq d(c)$.  Then the structure of $\TTC_1$ tells us that we can find a matching of size 3 in $H$ hitting each of these centres having degree greater than 1.  We will now show that removing the corresponding triad $T$ from $G$ will lower $\gamma_\ell(v)$ for all $v\in G_1$.  This triad $T$ will hit $A_1$, $B_1$, and $C_1$.  Any vertex $v$ in $G_1$ without two neighbours or a twin in $T$, that is not trumped by a vertex in $T$, will correspond to an edge in $H$ incident to some centre $s$, where $d(s)<d(c)$.  By our above observations about cliques of size $\omega(v)$, we can see that since $|X_s|<|C_1|\leq |B_1|\leq |A_1|$, any clique of size $\omega(v)$ containing $v$ must contain one of $C_1$, $B_1$, or $A_1$.  Therefore such a clique intersects $T$, so removing $T$ lowers $\omega(v)$ and also $\gamma_\ell(v)$.  This contradicts the minimality of $G$, so we can assume that there are at least four centres of degree $\geq |C|$, i.e.\ $d(z)\geq d(c)$.  Given this restriction we now consider several cases.
\\

\noindent{\bf Case 1:} $d(w) \geq d(a)$ and $c$ sees $w$.

Since $d(w)\geq d(a)$ it follows that $d(x)+d(y)+d(z) \leq |B_1|+|C_1|$, and so $d(x)+d(y) \leq |B_1|$ and $|C_1|\leq d(z)\leq \frac 13(|B_1|+|C_1|)$.  Therefore $2|C_1| \leq 2d(z)\leq d(x)+d(y) \leq |B_1|$.  Take a triad $T$ that hits $X_w$, $X_x$, and $X_y$, and consider a vertex $v$ for which $\omega(v)$ does not drop when $T$ is removed.  Clearly $v$ is not in $X_w\cup X_x\cup X_y$, so it is in $X_s$ for some centre $s$ with $d(s)\leq d(z)\leq \frac 12|B_1|$.  Since $|X_s|<|B_1|$ and $\omega(v)$ does not drop, $v$ must be in $C_s$.  Take some $u\in C_w$.  We will show that $d(u)+\omega(u)> d(v)+\omega(v)$, which implies that $\gamma_\ell(G-T)<\gamma_\ell(T)$.

Clearly $u$ has at least $|A_1|-|C_1|$ neighbours in $G_1-C_1$.  But $v$ has at most $\frac 12|B_1|-1$ neighbours in $G_1-C_1$.  Therefore $d(u)> d(v)+\frac 12|A_1|-|C_1|$.  Recall the structure of maximal cliques containing $u$ and $v$.  If $\omega(v)> \omega(u)$ then either $|C_s|+|A_s|>\max\{|C_w|+|A_w|,|C_1|\}$ or $|C_s|+|B_s|>\max\{|C_w|+|B_w|,|C_1|\}$.  But in this case $\omega(v) \leq \omega(u)+d(s)-|C_1|\leq \omega(u)+\frac 12|A_1|-|C_1|$.  It follows that $d(v)+\omega(v)<d(u)+\omega(u)$, completing the case.
\\

\noindent{\bf Case 2:} $d(w) \geq d(a)$ and $c$ does not see $w$.

Make the subgraph $G'$ of $G$ by removing all edges between $C_1$ and $G_1\setminus C_1$ -- observe that $(G',A,B,C)$ is claw-free and three-cliqued.  Further observe that because $H$ has at least four centres, if $G'=G$ then $(A_1,B_1)$ is a nonskeletal homogeneous pair of cliques in both $G_1$ and $G$, a contradiction.  Thus $G'$ is a proper subgraph of $G$.  We claim that $\chi(G')=\chi(G)$, contradicting the minimality of $G$.  Denote by $G_1'$ the subgraph of $G'$ induced on the vertices of $G_1$.

Take a $\chi(G')$-colouring $\cC'$ of $G'$.  We will rearrange the colour classes of $\cC'$ on $G_1'$ to reach a proper colouring of $G_1$.  Denote by $t$ the number of triad colour classes in $\cC'$ restricted to $G_1'$.  Denote by $d_{AB}$, $d_{AC}$, and $d_{BC}$ the number of diads (i.e.\ colour classes of size two) in $\cC'$ restricted to $G_1'$ intersecting $A_1$ and $B_1$, $A_1$ and $C_1$, and $B_1$ and $C_1$ respectively.  It suffices to show that we can pack the appropriate disjoint stable sets into $G_1$.  That is, we want to find $t$ triads in $G_1$, $d_{AB}$ diads intersecting $A_1$ and $B_1$, $d_{AC}$ diads intersecting $A_1$ and $C_1$, and $d_{BC}$ diads intersecting $B_1$ and $C_1$, such that all of these stable sets are disjoint.

We begin with $|A_1|+|B_1|-d(w) = |A_1|+|B_1|-\omega(G[A_1\cup B_1])$ diads intersecting $A_1$ and $B_1$.  Since $G[A_1\cup B_1]$ is cobipartite, these diads hit every vertex of $(A_1\cup B_1)\setminus X_w$.  Observe that $|A_1|+|B_1|-d(w) \geq t+d_{AB}$.  So we want to extend some of the diads to triads.  We can actually extend $|C_1|$ of them.  To see this, note that there are at least three centres of degree $\geq |C_1|$ other than $w$, so every vertex in $C_1$ has at least $C_1$ non-neighbours in $(A_1\cup B_1)\setminus X_w$.  So we have $|A_1|+|B_1|-d(w)-|C_1|$ disjoint diads intersecting $A$ and $B$ and a further $|C_1|$ disjoint triads.

Thus it is clear that we can find the desired disjoint stable sets, beginning with the diads intersecting $A$ and $B$.  When picking our $d_{AC}+d_{BC}$ remaining diads we take a vertex in $X_w$ not intersecting an $AB$ diad whenever possible.  Once we have found the necessary diads, we have enough $AB$ diads remaining so that we can extend them to triads.  These stable sets give us a $\chi(G')$-colouring of $G$, contradicting the minimality of $G$.
\\

\noindent{\bf Case 3:} $d(w) < d(a)$.

As in the previous case, we remove edges from $G_1$ without introducing a claw or changing the chromatic number of $G$.  There is at most one clique $X$ in $G[B_1\cup C_1]$ of size greater than $|B_1|$, by the structure of $G_1$.  If $X$ exists, construct $G'$ from $G$ by removing all edges from $G_1$ except those within $A_1$, $B_1$, $C_1$, and $X$.  If such an $X$ does not exist, set $X$ as $B_1$ and construct $G'$ from $G$ by removing all edges from $G_1$ except those within $A_1$, $B_1$, and $C_1$.  It is easy to confirm that $G'$ is claw-free and a proper subgraph of $G$.  We will show that $\chi(G')=\chi(G)$, contradicting the minimality of $G$.

We claim that there is an $\omega(G_1)$-colouring of $G_1$ using $|B_1|+|C_1|-|X|$ triads.  To see this, we remove $|A_1|-|X|$ vertices from $A_1$ one at a time, always taking one from the largest clique $X_s$ that still has a vertex in $A_1$.  If after removing $k$ vertices we have disjoint $X_s$ and $X_{s'}$ of size $|A_1|-k$, then we have $|A_1|+|B_1|+|C_1| \geq k+ 2(|A_1|-k)+2|C_1|$, contradicting the facts that there are at least four centres of degree $\geq |C_1|$ in $H$ and that $|B_1|\leq |A_1|-k$.  Thus we can see that we reach a perfect graph on $|X|+|B_1|+|C_1|$ vertices with clique number $|X|$.  In an $|X|$-colouring of this graph every colour class intersects both $A_1$ and $X$, thus the colouring uses exactly $|B_1|+|C_1|-|X|$ triads.  The other colour classes are diads intersecting $X$ and $A_1$.  Thus as in the previous case, we can rearrange the colour classes of a $\chi(G')$-colouring $\cC'$ of $G'$ to construct a $\chi(G')$-colouring of $G$.
\end{proof}

%%%%%%%%%%%%%%%%%%%%%%%%%%%%%%%%%%%%%%%%%%%%%%%%%%%%%%%%%%%%%%%%%%%%%%%%%%%%%%%%
%%%%%%%%%%%%%%%%%%%%%%%%%%%%%%%%%%%%%%%%%%%%%%%%%%%%%%%%%%%%%%%%%%%%%%%%%%%%%%%%
%%%%%%%%%%%%%%%%%%%%%%%%%%%%%%%%%%%%%%%%%%%%%%%%%%%%%%%%%%%%%%%%%%%%%%%%%%%%%%%%
\subsubsection{Completing the proof}

We now combine our lemmas to prove Theorem \ref{thm:local}.

\begin{proof}[Proof of Theorem \ref{thm:local}]
Let $(G,A,B,C)$ be a minimum counterexample to the theorem.  Then $G$ is skeletal and is not an antiprismatic thickening.  Theorem \ref{thm:3decomp} tells us that $(G,A,B,C)$ admits a hex-join into $(G_1,A_1,B_1,C_1)$ and $(G_2,A_2,B_2,C_2)$ such that $(G_1,A_1,B_1,C_1)$ is in one of $\TTC_1$, $\TTC_2$, $\TTC_3$, $\TTC_5$ or $\TTC_6$.  Lemmas \ref{lem:min31}, \ref{lem:min32}, \ref{lem:min33}, \ref{lem:min35}, and \ref{lem:min36} tell us that $(G_1,A_1,B_1,C_1)$ cannot be in $\TTC_1$, $\TTC_2$, $\TTC_3$, $\TTC_5$ or $\TTC_6$ respectively.  Thus $G$ cannot exist, proving the theorem.
\end{proof}

%%%%%%%%%%%%%%%%%%%%%%%%%%%%%%%%%%%%%%%%%%%%%%%%%%%%%%%%%%%%%%%%%%%%%%%%%%%%%%%%
%%%%%%%%%%%%%%%%%%%%%%%%%%%%%%%%%%%%%%%%%%%%%%%%%%%%%%%%%%%%%%%%%%%%%%%%%%%%%%%%
%%%%%%%%%%%%%%%%%%%%%%%%%%%%%%%%%%%%%%%%%%%%%%%%%%%%%%%%%%%%%%%%%%%%%%%%%%%%%%%%
%%%%%%%%%%%%%%%%%%%%%%%%%%%%%%%%%%%%%%%%%%%%%%%%%%%%%%%%%%%%%%%%%%%%%%%%%%%%%%%%
%%%%%%%%%%%%%%%%%%%%%%%%%%%%%%%%%%%%%%%%%%%%%%%%%%%%%%%%%%%%%%%%%%%%%%%%%%%%%%%%
%%%%%%%%%%%%%%%%%%%%%%%%%%%%%%%%%%%%%%%%%%%%%%%%%%%%%%%%%%%%%%%%%%%%%%%%%%%%%%%%
%%%%%%%%%%%%%%%%%%%%%%%%%%%%%%%%%%%%%%%%%%%%%%%%%%%%%%%%%%%%%%%%%%%%%%%%%%%%%%%%
%%%%%%%%%%%%%%%%%%%%%%%%%%%%%%%%%%%%%%%%%%%%%%%%%%%%%%%%%%%%%%%%%%%%%%%%%%%%%%%%
%%%%%%%%%%%%%%%%%%%%%%%%%%%%%%%%%%%%%%%%%%%%%%%%%%%%%%%%%%%%%%%%%%%%%%%%%%%%%%%%
%%%%%%%%%%%%%%%%%%%%%%%%%%%%%%%%%%%%%%%%%%%%%%%%%%%%%%%%%%%%%%%%%%%%%%%%%%%%%%%%
%%%%%%%%%%%%%%%%%%%%%%%%%%%%%%%%%%%%%%%%%%%%%%%%%%%%%%%%%%%%%%%%%%%%%%%%%%%%%%%%
%%%%%%%%%%%%%%%%%%%%%%%%%%%%%%%%%%%%%%%%%%%%%%%%%%%%%%%%%%%%%%%%%%%%%%%%%%%%%%%%
%%%%%%%%%%%%%%%%%%%%%%%%%%%%%%%%%%%%%%%%%%%%%%%%%%%%%%%%%%%%%%%%%%%%%%%%%%%%%%%%
%%%%%%%%%%%%%%%%%%%%%%%%%%%%%%%%%%%%%%%%%%%%%%%%%%%%%%%%%%%%%%%%%%%%%%%%%%%%%%%%
%%%%%%%%%%%%%%%%%%%%%%%%%%%%%%%%%%%%%%%%%%%%%%%%%%%%%%%%%%%%%%%%%%%%%%%%%%%%%%%%
\section{Compositions of strips and generalized 2-joins}\label{sec:compositions}

In this section we describe how to colour claw-free {\em compositions of strips}, an important class of graphs built as a generalization of line graphs.  For a discussion of this composition operation we refer the reader to \cite{kingthesis} \S 5.2 or \cite{cssurvey}.  Rather than concerning ourselves with the global structure of these graphs, we instead focus on decompositions that arise in these graphs, and how we might exploit these decompositions in order to extend partial colourings.  These decompositions are {\em generalized 2-joins}:

\begin{definition}
Suppose vertex sets $V_1$ and $V_2$ partition $V(G)$ and there are cliques $X_i$ and $Y_i$ in $V_i$ such that $X_1\cup X_2$ and $Y_1\cup Y_2$ are cliques, and there are no other edges between $V_1$ and $V_2$.  Then we say that $((X_1,Y_1),(X_2,Y_2))$ is a {\em generalized 2-join}.
\end{definition}

Let $G_1$ and $G_2$ denote $G[V_1]$ and $G[V_2]$, respectively.  In order to extend a $\gamma(G)$-colouring of $G_1$ to a $\gamma(G)$-colouring of $G$, merely having our generalized 2-join is not enough.  Rather, we need to know the structure of $G_2$ and exploit properties of restricted colourings based on that structure.  The structure of $G_2$ can be described in terms of {\em strips}.

\begin{definition}A {\em strip} $(G,A,B)$ is a claw-free graph $G$ with two cliques $A$ and $B$ such that for any vertex $v \in A$ (resp.\ $B$), the neighbourhood of $v$ outside $A$ (resp.\ $B$) is a clique.
\end{definition}

The strip $(G,A,B)$ will actually be $(G_2,X_2,Y_2)$.  We now examine five types of strips that will give us the five types of generalized 2-join that we need to deal with.

%%%%%%%%%%%%%%%%%%%%%%%%%%%%%%%%%%%%%%%%%%%%%%%%%%%%%%%%%%%%%%%%%%%%%%%%%%%%%%%%
%%%%%%%%%%%%%%%%%%%%%%%%%%%%%%%%%%%%%%%%%%%%%%%%%%%%%%%%%%%%%%%%%%%%%%%%%%%%%%%%
%%%%%%%%%%%%%%%%%%%%%%%%%%%%%%%%%%%%%%%%%%%%%%%%%%%%%%%%%%%%%%%%%%%%%%%%%%%%%%%%
%%%%%%%%%%%%%%%%%%%%%%%%%%%%%%%%%%%%%%%%%%%%%%%%%%%%%%%%%%%%%%%%%%%%%%%%%%%%%%%%
%%%%%%%%%%%%%%%%%%%%%%%%%%%%%%%%%%%%%%%%%%%%%%%%%%%%%%%%%%%%%%%%%%%%%%%%%%%%%%%%
%%%%%%%%%%%%%%%%%%%%%%%%%%%%%%%%%%%%%%%%%%%%%%%%%%%%%%%%%%%%%%%%%%%%%%%%%%%%%%%%
%%%%%%%%%%%%%%%%%%%%%%%%%%%%%%%%%%%%%%%%%%%%%%%%%%%%%%%%%%%%%%%%%%%%%%%%%%%%%%%%
%%%%%%%%%%%%%%%%%%%%%%%%%%%%%%%%%%%%%%%%%%%%%%%%%%%%%%%%%%%%%%%%%%%%%%%%%%%%%%%%
\subsection{Five types of strips}

The first strips we consider are linear interval strips, which are essential to the structure of quasi-line graphs.  The other four types contain a $W_5$, i.e.\ an induced $C_5$ with a universal vertex, and therefore can only appear in graphs that are not quasi-line.

%%%%%%%%%%%%%%%%%%%%%%%%%%%%%%%%%%%%%%%%%%%%%%%%%%%%%%%%%%%%%%%%%%%%%%%%%%%%%%%%
%%%%%%%%%%%%%%%%%%%%%%%%%%%%%%%%%%%%%%%%%%%%%%%%%%%%%%%%%%%%%%%%%%%%%%%%%%%%%%%%
%%%%%%%%%%%%%%%%%%%%%%%%%%%%%%%%%%%%%%%%%%%%%%%%%%%%%%%%%%%%%%%%%%%%%%%%%%%%%%%%
\subsubsection{Linear interval strips}

Let $G$ be a linear interval graph, and let cliques $A$ and $B$ be the $|A|$ leftmost and $|B|$ rightmost vertices of $G$ in some linear interval representation of $G$.  Then $(G,A,B)$ is a {\em linear interval strip}.

%%%%%%%%%%%%%%%%%%%%%%%%%%%%%%%%%%%%%%%%%%%%%%%%%%%%%%%%%%%%%%%%%%%%%%%%%%%%%%%%
%%%%%%%%%%%%%%%%%%%%%%%%%%%%%%%%%%%%%%%%%%%%%%%%%%%%%%%%%%%%%%%%%%%%%%%%%%%%%%%%
%%%%%%%%%%%%%%%%%%%%%%%%%%%%%%%%%%%%%%%%%%%%%%%%%%%%%%%%%%%%%%%%%%%%%%%%%%%%%%%%
\subsubsection{Antihat strips}

Let $G$ be an antihat graph, and let $G'$ be an antihat thickening, i.e.\ a thickening of $G\cup M$ under $M$ as defined in Section \ref{sec:structureantihat}.  We specify two cliques of $G'$:  $A' = I(A\setminus (X\cup \{a_0\}))$ and $B' = I(B\setminus (X\cup  \{b_0\}))$.  Then $(G'-I(a_0)-I(b_0), A', B')$ is a strip and if it contains a $W_5$ we say that it is an {\em antihat strip}.  These antihat strips are a slight generalization of the antihat strips used in Chudnovsky and Seymour's survey \cite{cssurvey}.

%%%%%%%%%%%%%%%%%%%%%%%%%%%%%%%%%%%%%%%%%%%%%%%%%%%%%%%%%%%%%%%%%%%%%%%%%%%%%%%%
%%%%%%%%%%%%%%%%%%%%%%%%%%%%%%%%%%%%%%%%%%%%%%%%%%%%%%%%%%%%%%%%%%%%%%%%%%%%%%%%
%%%%%%%%%%%%%%%%%%%%%%%%%%%%%%%%%%%%%%%%%%%%%%%%%%%%%%%%%%%%%%%%%%%%%%%%%%%%%%%%
\subsubsection{Strange strips}

Let $H$ be a claw-free graph on cliques $A=\{a_1, a_2 \}$, $B=\{ b_1, b_2, b_3 \}$, and $C=\{c_1, c_2 \}$ with adjacency as follows: $a_1, b_1$ are adjacent; $c_1$ is adjacent to $a_2$ and $b_2$ and $b_3$; $c_2$ is adjacent to $a_1$, $a_2$, $b_1$, and $b_2$.  All other pairs are nonadjacent.  Let $G$ be a thickening of $H$ under $M=\{b_3c_1,b_2c_2 \}$ (see Figure \ref{fig:strange}).  Then $(G,I(A),I(B))$ is a strip and we say that it is a {\em strange strip}.

\begin{figure}
\begin{center}
\includegraphics[scale=.6]{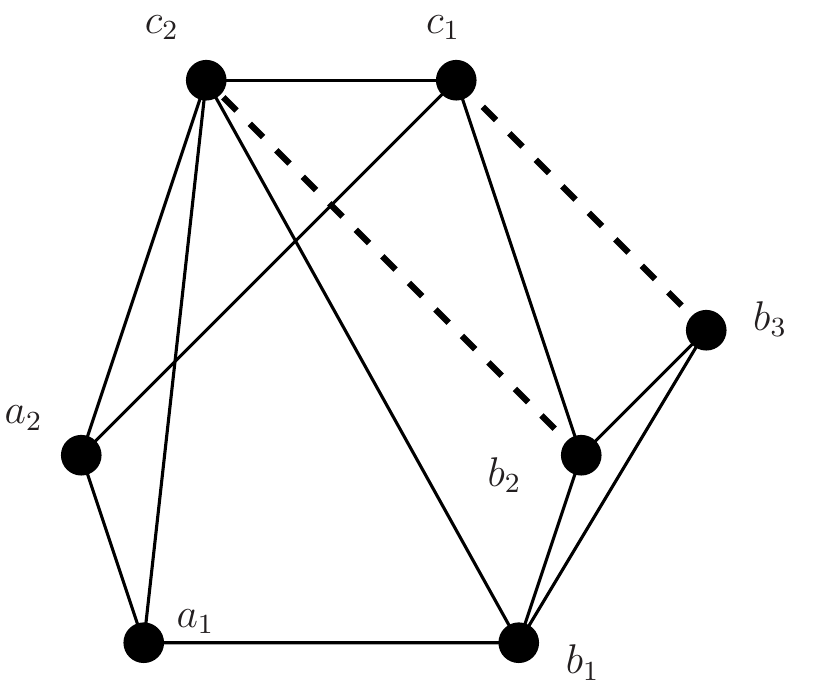}
\end{center}
\caption{\small{The graph underlying strange strips.  Dashed lines represent edges in $M$.}}
\label{fig:strange}
\end{figure}

%%%%%%%%%%%%%%%%%%%%%%%%%%%%%%%%%%%%%%%%%%%%%%%%%%%%%%%%%%%%%%%%%%%%%%%%%%%%%%%%
%%%%%%%%%%%%%%%%%%%%%%%%%%%%%%%%%%%%%%%%%%%%%%%%%%%%%%%%%%%%%%%%%%%%%%%%%%%%%%%%
%%%%%%%%%%%%%%%%%%%%%%%%%%%%%%%%%%%%%%%%%%%%%%%%%%%%%%%%%%%%%%%%%%%%%%%%%%%%%%%%
\subsubsection{Pseudo-line strips}

We will define a type of line graph and modify it slightly.  Let $J$ be a graph containing a path on vertices $j_1,j_2,j_3$ in order such that every edge of $J$ is incident to at least one of $j_1,j_2,j_3$.  Let $H=L(J)$, and for $i\in \{1,3\}$ let $X_i$ be the set of vertices of $H$ corresponding to edges incident to $j_i$ in $J$.  Both $X_1$ and $X_3$ are cliques.  Let $v_1$ and $v_2$ be the vertices of $H$ corresponding to the edges $j_1j_2$ and $j_2j_3$ respectively.  Let $G$ be a thickening of $H$ under $M=\{v_1v_2 \}$.  Then $(G,X_1,X_3)$ is a strip and if it contains a $W_5$ we say it is a {\em pseudo-line strip}\index{strip!pseudo-line}.

These strips correspond to thickenings of the class $\cZ_3$ defined in \cite{clawfree5}.  We call the vertices of $J$ other than $\{j_i \mid 1\leq i\leq 3\}$ {\em centres} of $J$.

%%%%%%%%%%%%%%%%%%%%%%%%%%%%%%%%%%%%%%%%%%%%%%%%%%%%%%%%%%%%%%%%%%%%%%%%%%%%%%%%
%%%%%%%%%%%%%%%%%%%%%%%%%%%%%%%%%%%%%%%%%%%%%%%%%%%%%%%%%%%%%%%%%%%%%%%%%%%%%%%%
%%%%%%%%%%%%%%%%%%%%%%%%%%%%%%%%%%%%%%%%%%%%%%%%%%%%%%%%%%%%%%%%%%%%%%%%%%%%%%%%
\subsubsection{Gear strips}

Take a graph $H$ on vertices $\{v_1, \ldots, v_{10}\}$ with adjacency as follows.  The vertices $v_1, \ldots, v_6$ are a 6-hole in order.  Next, $v_7$ is adjacent to $v_1,v_2,v_3,v_6$; $v_{8}$ is adjacent to $v_3, v_4, v_5, v_6, v_7$; $v_{9}$ is adjacent to $v_3,v_4,v_6,v_1,v_7,v_{8}$; $v_{10}$ is adjacent to $v_2,v_3,v_5,v_6,v_7,v_{8}$.  There are no other edges in $H$.  Let $X\subseteq\{v_{9},v_{10}\}$.  See Figure \ref{fig:xxx}.

\begin{figure}
\begin{center}
\includegraphics[scale=.6]{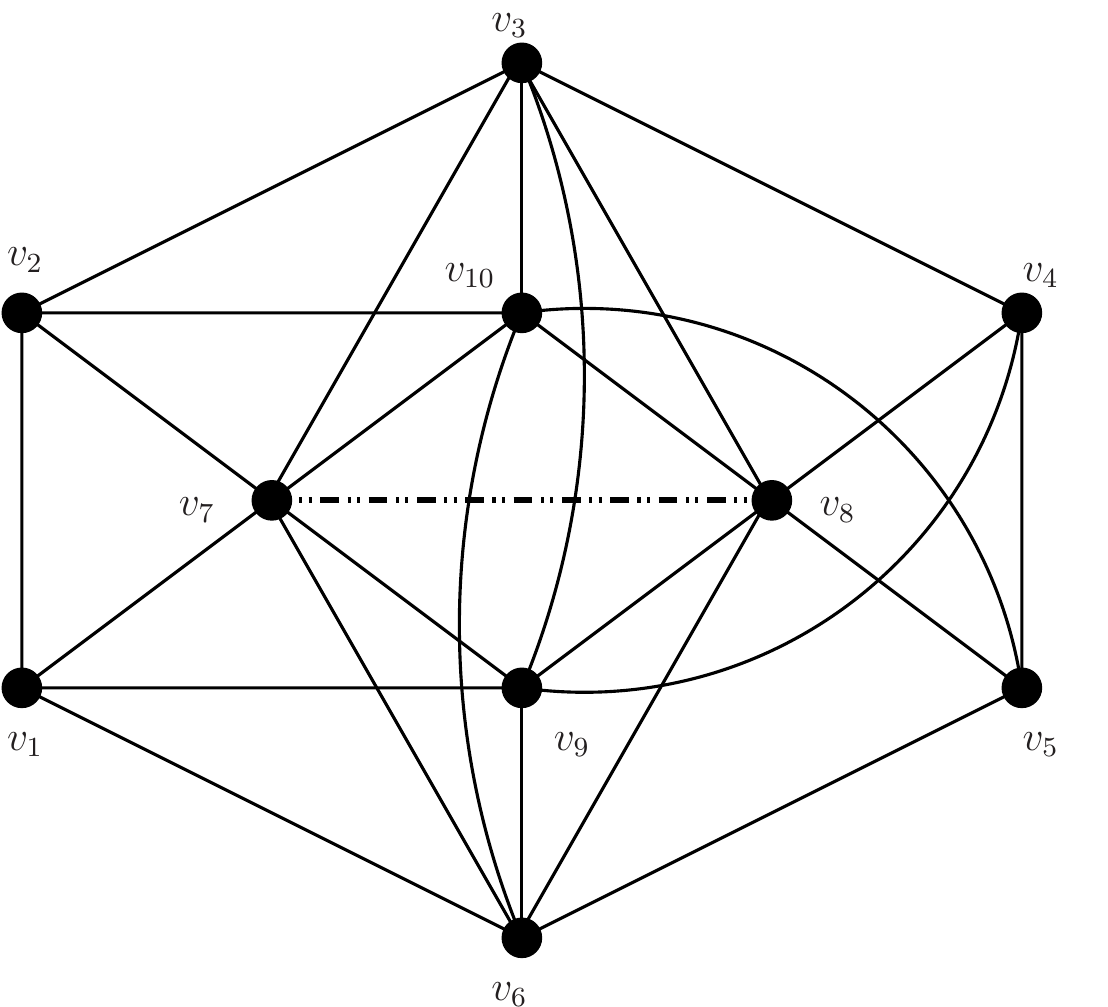}
\end{center}
\caption{\small{The graph underlying gear strips.}}
\label{fig:xxx}
\end{figure}

If $G$ is a thickening of $H\setminus X$ under a matching $M\subseteq \{v_7v_{8}\}$, then $(G,I(v_1)\cup I(v_2), I(v_4)\cup I(v_5))$ is a strip, and we say that it is a {\em gear strip}, following the terminology of Galluccio, Gentile, and Ventura \cite{gallucciogv08}.  These correspond to thickenings of the class $\cZ_5$ in \cite{clawfree5} and are a slight generalization of thickenings of XX-strips as defined in \cite{cssurvey}.

%%%%%%%%%%%%%%%%%%%%%%%%%%%%%%%%%%%%%%%%%%%%%%%%%%%%%%%%%%%%%%%%%%%%%%%%%%%%%%%%
%%%%%%%%%%%%%%%%%%%%%%%%%%%%%%%%%%%%%%%%%%%%%%%%%%%%%%%%%%%%%%%%%%%%%%%%%%%%%%%%
%%%%%%%%%%%%%%%%%%%%%%%%%%%%%%%%%%%%%%%%%%%%%%%%%%%%%%%%%%%%%%%%%%%%%%%%%%%%%%%%
%%%%%%%%%%%%%%%%%%%%%%%%%%%%%%%%%%%%%%%%%%%%%%%%%%%%%%%%%%%%%%%%%%%%%%%%%%%%%%%%
\subsection{Five types of generalized 2-joins}

We now define generalized 2-joins corresponding to these five types of strips.  Suppose in our claw-free graph $G$ there is a generalized 2-join $((X_1,Y_1),(X_2,Y_2))$ separating $G_1$ and $G_2$, such that $X_1$, $X_2$, $Y_1$, and $Y_2$ are cliques and are pairwise disjoint except for possibly $X_1$ and $Y_1$.

\begin{itemize*}
\item {\bf Canonical interval 2-joins.}  If $(G_2,X_2,Y_2)$ is a linear interval strip with $X_2$ and $Y_2$ disjoint such that $G_2$ is not a clique, then we say that $((X_1,Y_1),(X_2,Y_2))$ is a {\em canonical interval 2-join}.
\item {\bf Antihat 2-joins.}  If $(G_2,X_2,Y_2)$ is an antihat strip then we say that $((X_1,Y_1),(X_2,Y_2))$ is an {\em antihat 2-join}\index{2-join!antihat}.
\item {\bf Strange 2-joins.}  If $(G_2,X_2,Y_2)$ is a strange strip then we say that $((X_1,Y_1),(X_2,Y_2))$ is a {\em strange 2-join}\index{2-join!strange}.
\item {\bf Pseudo-line 2-joins.} If $(G_2,X_2,Y_2)$ is a pseudo-line strip then we say that $((X_1,Y_1),(X_2,Y_2))$ is a {\em pseudo-line 2-join}.
\item {\bf Gear 2-joins.}  If $(G_2,X_2,Y_2)$ is a gear strip then we say that $((X_1,Y_1),(X_2,Y_2))$ is a {\em gear 2-join}\index{2-join!gear}.
\end{itemize*}

Aside from a limited set of exceptions, every claw-free graph that is not quasi-line or three-cliqued or an antiprismatic thickening admits one of these five types of generalized 2-join; we will state this more formally in the next section.  We must now prove that no minimum counterexample to Theorem \ref{thm:main} admits any of these five types of generalized 2-join.

%%%%%%%%%%%%%%%%%%%%%%%%%%%%%%%%%%%%%%%%%%%%%%%%%%%%%%%%%%%%%%%%%%%%%%%%%%%%%%%%
%%%%%%%%%%%%%%%%%%%%%%%%%%%%%%%%%%%%%%%%%%%%%%%%%%%%%%%%%%%%%%%%%%%%%%%%%%%%%%%%
%%%%%%%%%%%%%%%%%%%%%%%%%%%%%%%%%%%%%%%%%%%%%%%%%%%%%%%%%%%%%%%%%%%%%%%%%%%%%%%%
%%%%%%%%%%%%%%%%%%%%%%%%%%%%%%%%%%%%%%%%%%%%%%%%%%%%%%%%%%%%%%%%%%%%%%%%%%%%%%%%
\subsection{Dealing with the first four types}

We begin by proving a lemma that implies that a minimum counterexample to Theorem \ref{thm:main} or Conjecture \ref{con:local} cannot admit a canonical interval 2-join, an antihat 2-join, a strange 2-join, or a gear 2-join.  Then we prove a lemma that implies that a that a minimum counterexample to Theorem \ref{thm:main} or Conjecture \ref{con:main} cannot admit a pseudo-line 2-join.  For each of these two tasks we need to define a special colouring invariant.

Given $G$ admitting a generalized 2-join $((X_1, Y_1),(X_2, Y_2))$, let $H_2$ denote $G[V_2 \cup X_1 \cup Y_1]$ and denote $G_2-X_2-Y_2$ by $Z_2$.  For the first four types of generalized 2-join, we will use a local invariant $\gamma_\ell^j(H_2)\leq \gamma_\ell(G)$ which is easier to control when extending a partial colouring across a generalized 2-join.  For pseudo-line 2-joins we will use an analogous global invariant $\gamma_g^j(H_2)\leq\gamma(G)$.

For a set of vertices $S$ we define $\Delta_G(S)$ as $\max_{v\in S}d_G(v)$.  Likewise we define $\omega(S)$ as $\max_{v\in S}\omega(v)$ and $\gamma_\ell(S)$ as $\max_{v\in S}\gamma_\ell(v)$.  For $v \in H_2$ we define $\omega'(v)$ as the size of the largest clique in $H_2$ containing $v$ and not intersecting both $X_1 \setminus Y_1$ and $Y_1 \setminus X_1$ (basically, $\omega'(v)$ is the largest clique containing $v$ that we can easily manage).  Let $\omega'(H_2)$ denote $\max_{v\in V(H_2)}\omega'(v)$.  Now we define:

$$\gamma_\ell^j(H_2) = \max_{v\in V_2\cup X_1\cup Y_1}\left\lceil\tfrac 12( d_G(v)+1+\omega'(v))\right\rceil.$$

$$\gamma_g^j(H_2) = \left\lceil\tfrac 12( \Delta_G(V(H_2))+1+\omega'(H_2))\right\rceil.$$

Observe that $\gamma_\ell^j(H_2) \leq \gamma_\ell(G)$ and $\gamma_g^j(H_2)\leq \gamma(G)$.  Note that if $v \in X_1 \cup Y_1$, then $\omega'(v)$ is $|X_1|+|X_2|$, $|Y_1|+|Y_2|$, or $|X_1\cap Y_1|+ \omega(G[X_2 \cup Y_2])$.  In \cite{kingthesis} (and \cite{chudnovskykps12}) we proved:

\begin{lemma}\label{lem:quasilinemce2}
Let $G$ be a graph and suppose $G$ admits a canonical interval 2-join $((X_1, Y_1),(X_2, Y_2))$.  Then given a proper $l$-colouring of $G_1$ for any $l \geq \gamma_\ell^j(H_2)$, we can find a proper $l$-colouring of $G$ in $O(nm)$ time.
\end{lemma}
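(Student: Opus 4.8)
The plan is to start with the given proper $l$-colouring $\cC_1$ of $G_1$, observe that it induces a colouring of the clique $X_1 \cup Y_1$, and then show that we can always extend it across the canonical interval 2-join by colouring $H_2 - X_1 - Y_1 = Z_2$ (together with recolouring within $X_1\cup Y_1$ if necessary) using only the colours available. First I would record the structure we have to work with: $X_1\cup X_2$ and $Y_1\cup Y_2$ are cliques of $G$, $X_2$ and $Y_2$ are disjoint, $G_2$ is a linear interval graph that is not a clique, and the only edges from $V_2$ to $V_1$ go from $X_2$ to $X_1$ and from $Y_2$ to $Y_1$. Since $G_2$ is a linear interval strip with $X_2$ on the left and $Y_2$ on the right, $Z_2$ is itself a linear interval graph sandwiched between them, and any clique of $G_2$ meeting both $X_2$ and $Y_2$ is ``short'' in a sense controlled by the interval representation. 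Fix a linear interval representation of $G_2$, with $X_2$ the leftmost $|X_2|$ vertices and $Y_2$ the rightmost $|Y_2|$ vertices.

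The colouring step I would carry out is a sweep from left to right. The colours already used on $X_1$ (in $\cC_1$) are forbidden on $X_2$; the colours used on $Y_1$ are forbidden on $Y_2$. I would first observe that the number of colours needed to properly colour $G_2$ while respecting the precolouring forced on $X_2$ and $Y_2$ is at most $\max$ over cliques $K$ of $G_2$ of (size of $K$ plus the number of precoloured-vertex-colours that must coexist with $K$); because $G_2$ is a linear interval graph, a clique $K$ either sits entirely on the left (so it only interacts with the $X_1$-colours, giving $|X_1|+|X_2|\le \omega'\le 2l$), entirely on the right (giving $|Y_1|+|Y_2|$), or straddles, in which case it is contained in $X_2\cup (\text{something})\cup Y_2$ and the relevant bound is $|X_1\cap Y_1| + \omega(G[X_2\cup Y_2])\le \omega'(v)\le 2l$ — these are exactly the three values that the definition of $\omega'$ was set up to capture for $v\in X_1\cup Y_1$. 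So in every case the local constraint is at most $2l$, i.e.\ $l$ after the round-up, and a greedy/interval colouring of $G_2$ consistent with the precolouring exists using the $l$ colours. The one subtlety is that the precolouring of $X_2$ and $Y_2$ is not free — it is forced by $\cC_1$ — so I would phrase the argument as: reserve for each vertex of $X_2$ the set of colours not on $X_1$, similarly for $Y_2$, and for a vertex $v$ of $Z_2$ forbid the colours on its $\le \omega'(v)-1$ other clique-neighbours; a left-to-right list-colouring argument (using that $G_2$ is an interval graph, so the auxiliary list-colouring instance is an interval list-colouring with each list of size at least its degree plus one) finishes it. All of this runs in $O(nm)$ time since it amounts to one pass over $G_2$ with bookkeeping of used colours.

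The main obstacle, and the place where the hypotheses ``$G_2$ is not a clique'' and ``$X_2,Y_2$ disjoint'' earn their keep, is the straddling case: a vertex $v$ of $Z_2$ whose maximum clique reaches both $X_2$ and $Y_2$ and simultaneously the colours on $X_1\setminus Y_1$ and on $Y_1\setminus X_1$ threaten to pile up to more than $l$ distinct forbidden colours. This is precisely why $\omega'(v)$ was defined to be the largest clique at $v$ \emph{not} intersecting both $X_1\setminus Y_1$ and $Y_1\setminus X_1$: I would argue that whenever a clique at $v$ does meet both sides, the interval structure forces it to be short enough that $|X_1|+|Y_1|+(\text{its interior})\le \Delta_G(V(H_2))+1$, so the count $d_G(v)+1+\omega'(v)$ still dominates; and if even that fails we can locally recolour within the clique $X_1\cup Y_1$ of $G_1$ (permuting colours there costs nothing in $G_1$ since it is a clique) to separate the two sides. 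Carefully checking that this local recolouring of $X_1\cup Y_1$ never collides with the rest of $\cC_1$ — it does not, because outside $V_2$ the sets $X_1\setminus Y_1$ and $Y_1\setminus X_1$ see the same things — and that it interacts correctly with the interval sweep on $G_2$, is the delicate bookkeeping that the full proof must spell out. The rest is the routine interval-colouring argument sketched above.
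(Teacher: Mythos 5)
The paper itself defers the proof of this lemma to~\cite{kingthesis} and~\cite{chudnovskykps12}; the cross-reference in the proof of Lemma~\ref{lem:compantihat} (``as we did in Case 6 of the proof of Lemma~\ref{lem:quasilinemce2}'') shows that the cited proof first minimizes the number $k$ of colours appearing on both $X_1$ and $Y_1$ and then argues by cases, in several of which a stable set is removed from $G_2$ to lower $\gamma_\ell^j$. Your left-to-right list-colouring sweep is therefore a genuinely different route, but as written it contains gaps that would need to be filled.

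The central problem is that you treat $X_1\cup Y_1$ as a clique whose colours can be permuted for free (``permuting colours there costs nothing in $G_1$ since it is a clique''). The definition of a canonical interval 2-join only requires $X_1$ and $Y_1$ to be cliques individually --- they may intersect but need not be joined --- and in the applications in Section~\ref{sec:compositions} the union is typically not a clique. Even if it were, permuting its colours would not be free in $G_1$: those vertices have other neighbours in $G_1$, so any recolouring must remain proper on the rest of $G_1$. Your supporting assertion that ``outside $V_2$ the sets $X_1\setminus Y_1$ and $Y_1\setminus X_1$ see the same things'' is not a consequence of the definition of a generalized 2-join and is false in general. Finally, the quantitative core of the argument is asserted rather than established: you appeal to a list-colouring of the interval graph $G_2$ with ``each list of size at least its degree plus one,'' but verifying exactly that --- deducing from $l\ge\gamma_\ell^j(H_2)$ that at every step of the sweep a colour remains available given the fixed colours on $X_1$, on $Y_1$, and on the vertices of $G_2$ coloured so far --- is the whole content of the lemma, and is precisely the ``delicate bookkeeping'' you postpone. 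You correctly identify $\omega'(v)$ as the right quantity and the straddling cliques as the danger, but the argument resolving that danger is not present.
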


Since $\gamma_\ell^j(H_2) \leq \gamma_\ell(G) \leq \gamma(G)$ this lemma implies that no minimum counterexample to Theorem \ref{thm:main} or Conjecture \ref{con:local} admits a canonical interval 2-join.  We now prove a corresponding lemma for antihat, strange, and gear 2-joins in a skeletal claw-free graph.

\begin{lemma}\label{lem:compositionlocal}
Suppose a skeletal claw-free graph $G$ admits a canonical interval 2-join or an antihat 2-join or a strange 2-join or a gear 2-join $((X_1, Y_1),(X_2, Y_2))$.  Then given a proper $l$-colouring of $G_1$ for any $l \geq \gamma_\ell^j(H_2)$, we can find a proper $l$-colouring of $G$.
\end{lemma}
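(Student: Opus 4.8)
The plan is to handle each of the four strip types by the same two-stage strategy: first, using the structure of the strip $(G_2,X_2,Y_2)$, analyze what a proper $l$-colouring of $G_1$ restricted to $X_1\cup Y_1$ looks like, and second, show that for any such restriction we can complete the colouring on $Z_2=G_2-X_2-Y_2$ and on $X_2,Y_2$. Since we are given a proper $l$-colouring $\cC_1$ of $G_1$ with $l\geq\gamma_\ell^j(H_2)$, the colours appearing on $X_1$ and on $Y_1$ are determined; what matters for the extension is how many colours are \emph{free} for $X_2$ (i.e.\ not used on $X_1$), how many are free for $Y_2$, and how many are free for \emph{both} (available on $Z_2$-vertices adjacent to neither side, via the clique structure). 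Because $\omega'$ was defined precisely to control the cliques through $X_1\cup Y_1$ that do not straddle $X_1\setminus Y_1$ and $Y_1\setminus X_1$, the bound $l\geq\gamma_\ell^j(H_2)$ guarantees, for each vertex $v$ of $H_2$, that $l-d_{G_1\cap H_2}(v)$ leaves enough slack to recolour greedily; I would first record this as a counting inequality, one for $v\in X_2$, one for $v\in Y_2$, and one for $v\in Z_2$.

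For the \textbf{canonical interval 2-join} this is exactly Lemma~\ref{lem:quasilinemce2}, so nothing new is needed. For the \textbf{antihat 2-join}, I would use that an antihat strip is a thickening of $G\cup M$ for the explicit antihat graph $H$ on $A\cup B\cup C$; since $G$ is skeletal, every homogeneous pair of cliques $(I(u),I(v))$ is skeletal, so by the structure recalled in Section~\ref{sec:skelsub} the vertices of each $I(v)\setminus\Omega$ are trumped and can be coloured last. The core of the antihat strip is thus essentially a bounded graph together with homogeneous cliques hanging off it, and an interval-like ordering on $A\cup B\cup C$ lets me colour $Z_2$ greedily from the ``$c$-end'' inward, then fill $X_2=I(A\setminus\cdots)$ and $Y_2=I(B\setminus\cdots)$ with the colours the counting inequalities reserve. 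For \textbf{strange strips} and \textbf{gear strips}, $H$ is a fixed finite graph (on $7$ and $10$ vertices respectively), so the skeletal hypothesis again reduces colouring $Z_2$ to distributing colours among a bounded number of homogeneous cliques subject to the same three counting inequalities; here I would essentially check a finite case analysis, using the $W_5$ only to know which configuration we are in. In each case the deletion order should be: colour $Z_2$ first (its vertices are the ``interior'' ones and have the most constrained neighbourhoods into $G_1$ only through $X_2\cup Y_2$, not through $X_1\cup Y_1$), then extend to $X_2$ and $Y_2$, which see $X_1$ and $Y_1$ but whose available-colour counts are governed directly by $\omega'$.

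The step I expect to be the main obstacle is the simultaneous extension to $X_2$ \emph{and} $Y_2$ when $X_1$ and $Y_1$ overlap (i.e.\ $X_1\cap Y_1\neq\emptyset$): a colour used on $X_1\cap Y_1$ is blocked on both $X_2$ and $Y_2$, and the relevant clique through such a vertex has size $|X_1\cap Y_1|+\omega(G[X_2\cup Y_2])$, which is exactly the third case in the definition of $\omega'(v)$. Making sure the pigeonhole works out — that after placing the colours forced on $Z_2$ there are still enough colours simultaneously avoiding $X_1$ for $X_2$ and avoiding $Y_1$ for $Y_2$, while respecting edges between $Z_2$ and $X_2\cup Y_2$ — is where the hypothesis $l\geq\gamma_\ell^j(H_2)$ must be used sharply, and it is the only place the three strip types genuinely differ from the linear-interval case. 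I would organize this as a single lemma on ``colouring a strip given a prescribed palette on $X_1\cup Y_1$'', prove it for linear interval strips by quoting Lemma~\ref{lem:quasilinemce2}, and then verify it for antihat, strange, and gear strips by the bounded-structure-plus-homogeneous-cliques argument above; the $O(nm)$-type running time follows since the only unbounded work is the greedy pass over $Z_2$ and the two cliques $X_2,Y_2$.
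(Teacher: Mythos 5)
There is a genuine gap between your sketch and what a complete proof requires, and the plan you outline is not the one the paper actually follows. Your proposal is built around a ``palette extension'' lemma: fix the restriction of $\cC_1$ to $X_1\cup Y_1$ and show one can always complete it, using counting plus a greedy pass over $Z_2$. But the paper does \emph{not} extend a fixed palette. Its first move, in each of Lemmas~\ref{lem:compantihat}, \ref{lem:compstrange}, and \ref{lem:compgear}, is to \emph{modify} the colouring of $G_1$ on $X_1\cup Y_1$ to normalize $k$, the number of colours appearing on both $X_1$ and $Y_1$ (minimizing $k$ for antihat and strange joins, maximizing $k$ for gear joins). This normalization is essential because it produces the quantitative inequalities (e.g.\ $l\geq |X_2|+|Y_2|+k$) that the subsequent counting actually relies on; an arbitrary restriction of $\cC_1$ to $X_1\cup Y_1$ need not satisfy these, and a greedy pass over $Z_2$ can fail for bad palettes. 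Your remark that ``the colours appearing on $X_1$ and on $Y_1$ are determined'' is therefore the wrong starting assumption.

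The second, larger missing ingredient is the inductive engine. The paper repeatedly removes a colour class of $G_1$ together with a carefully chosen stable set (a diad or triad) from $G_2$, chosen so that every vertex of $H_2$ either is trumped, loses a twin, or loses two neighbours, which drops $\gamma_\ell^j(H_2)$ by one; one then recurses with $l-1$ colours. When this process degenerates — say $I(v_i)$ becomes empty, or $X_1\setminus S_1$ empties, or $Y_2\cup Z_2$ becomes a clique — the resulting structure is handled by reducing a now-nonskeletal homogeneous pair of cliques via Theorem~\ref{thm:skelhp} to obtain a canonical interval 2-join and invoking Lemma~\ref{lem:quasilinemce2}, or by appealing to Theorem~\ref{thm:local} (three-cliqued) or perfection. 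None of this appears in your sketch. Finally, your suggestion that the strange and gear cases reduce to a ``finite case analysis'' because the underlying graphs $H$ are bounded underestimates the difficulty: the thickening makes the class sizes $|I(v_i)|$ unbounded, and the gear proof in particular requires an auxiliary perfect graph $G'$ and a nontrivial sequence of inequalities (involving $|X_1|,|Y_1|,|X_2|,|Y_2|,|Z_2|,t,t_1$) to bound $\omega(G')$. You do correctly flag the role of the third option in the definition of $\omega'(v)$ when $X_1\cap Y_1\neq\emptyset$, but the argument you would need to exploit it is precisely the normalization-plus-removal induction you have left out.
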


As Lemma \ref{lem:quasilinemce2} deals with canonical interval 2-joins, we can split the remainder of the proof up into three lemmas corresponding to antihat 2-joins, strange 2-joins, and gear 2-joins.  Our approach in each case is to set up the colouring of $G_1$ so that we can do one of two things.  When possible, we colour $G_2$ directly by constructing an auxiliary graph from $G_2$ and appealing to perfection or Theorem \ref{thm:local}.  If that is not possible then we remove stable sets, reducing $\gamma_\ell^j(H_2)$ each time, until $G_2$ becomes degenerate and we can appeal to a previous result.

Observe that it suffices to prove the case $l=\gamma_\ell^j(H_2)$.  For if $l>\gamma_\ell^j(H_2)$, we can simply remove $l-\gamma_\ell^j(H_2)$ arbitrarily chosen colour classes and deal with what remains.

%%%%%%%%%%%%%%%%%%%%%%%%%%%%%%%%%%%%%%%%%%%%%%%%%%%%%%%%%%%%%%%%%%%%%%%%%%%%%%%%
\subsubsection{Antihat 2-joins}

\begin{lemma}\label{lem:compantihat}
Suppose a skeletal claw-free graph $G$ admits an antihat 2-join $((X_1, Y_1),(X_2, Y_2))$.  Then given a proper $l$-colouring of $G_1$ for any $l \geq \gamma_\ell^j(H_2)$, we can find a proper $l$-colouring of $G$.
\end{lemma}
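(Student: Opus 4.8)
The plan is to recast the statement as a precolouring-extension problem on $G_2$ and then imitate, for antihat strips, the good-triad arguments behind Theorems~\ref{thm:2local}, \ref{thm:antiprismaticlocal} and~\ref{thm:local}. Since it suffices to treat $l=\gamma_\ell^j(H_2)$, I would fix a proper $l$-colouring $\mathcal C_1$ of $G_1$ and write $L_X$, $L_Y$ for the colour sets it uses on $X_1$ and on $Y_1$; because the only edges of $G$ between $V_1$ and $V_2$ run from $X_1$ to $X_2$ and from $Y_1$ to $Y_2$, extending $\mathcal C_1$ to $G$ amounts to properly $l$-colouring $G_2$ with $X_2$ avoiding $L_X$ and $Y_2$ avoiding $L_Y$. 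The structural input I would use is that, by definition of an antihat strip, $(G_2,X_2,Y_2)=(G'-I(a_0)-I(b_0),A',B')$ for a thickening $G'$ of an antihat graph, and since an antihat graph minus $\{a_0,b_0\}$ is antiprismatic, $G_2$ is (after checking the residual matching stays changeable) an antiprismatic thickening whose vertex set splits into the three cliques $X_2$, $Y_2$ and $Z_2=I(C\setminus X)$. I would also assume $G_2$ skeletal, which costs nothing: by Lemma~\ref{lem:reduction} one can delete edges inside a nonskeletal homogeneous pair of cliques of $G_2$ without changing $\chi(G)$, $G_1$, or the $2$-join, and iterate.

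Then I would induct on $|V(G_2)|$. If $\alpha(G_2)\le 2$, colouring $G_2$ subject to the two forbidden lists corresponds to finding a matching in an auxiliary graph, and I would finish by the Edmonds--Gallai argument used for Theorem~\ref{thm:2local} (or by perfection when that auxiliary graph is the line graph of a bipartite multigraph). If $Z_2=\emptyset$ then $V(G_2)=X_2\cup Y_2$, $(X_2,Y_2)$ is a homogeneous pair of cliques in $G$ hence, being skeletal, linear, so $(G_2,X_2,Y_2)$ is a linear interval strip and the $2$-join is a canonical interval $2$-join, handled by Lemma~\ref{lem:quasilinemce2}. Otherwise $X_2,Y_2,Z_2$ are all nonempty and $\alpha(G_2)=3$; here $G_2$ is a skeletal antiprismatic thickening with a triad, so by the corollary following Theorem~\ref{thm:antiprismaticlocal} it has a good triad, which, each of $X_2,Y_2,Z_2$ being a clique, has the form $T=\{x,y,z\}$ with $x\in X_2$, $y\in Y_2$, $z\in Z_2$, and every vertex of $G_2-T$ has exactly two neighbours in $T$. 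I would colour $T$ with a single colour $c\notin L_X\cup L_Y$; since $x$, $y$, $z$ see all of $X_2\setminus\{x\}$, $Y_2\setminus\{y\}$, $Z_2\setminus\{z\}$ respectively, the residual problem is to colour $G_2-T$ with the $l-1$ remaining colours, still avoiding $L_X$ on $X_2\setminus\{x\}$ and $L_Y$ on $Y_2\setminus\{y\}$ --- the same type of problem for a strictly smaller antihat strip over the unchanged $G_1$ --- and I would check that $\gamma_\ell^j$ of the new configuration is at most $l-1$ so the induction closes. When no colour $c\notin L_X\cup L_Y$ is available, I would instead colour $G_2$ directly, assembling from $G_2$ and the precoloured cliques $X_1,Y_1$ an auxiliary graph that is either perfect (say a line graph of a bipartite multigraph, using $\alpha(G_2)\le 3$) or three-cliqued claw-free, and invoking perfection or Theorem~\ref{thm:local} after confirming its chromatic parameter is at most $l$.

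The hard part will be the colour accounting around the good triad, which is exactly what forces the somewhat delicate definition of $\gamma_\ell^j$ through the restricted clique parameter $\omega'$. I would need, first, that $\gamma_\ell^j(H_2-T)\le\gamma_\ell^j(H_2)-1$: deleting $T$ lowers $d_G(v)$ by one at every vertex of $X_1\cup Y_1$ and by two at every vertex of $V(G_2)\setminus T$, so the work is to see it also lowers the relevant $\omega'(v)$, and this is where one exploits that $\omega'$ discards precisely the cliques of $H_2$ straddling $X_1\setminus Y_1$ and $Y_1\setminus X_1$ (those are already fixed by $\mathcal C_1$ and cost no fresh colour), together with the trumping observations of Section~\ref{sec:skelsub} applied inside the skeletal homogeneous pairs of cliques sitting in $G_2$. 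Second, I would need a colour $c\notin L_X\cup L_Y$, i.e.\ $l>|L_X\cup L_Y|$: since $|L_X|=|X_1|$, $|L_Y|=|Y_1|$, and $l\ge\gamma_\ell^j(H_2)\ge\max\{|X_1|+|X_2|,\,|Y_1|+|Y_2|,\,|X_1\cap Y_1|+\omega(G[X_2\cup Y_2])\}$, the crux is to keep $|L_X\cap L_Y|$ as large as possible, ideally equal to $|X_1\cap Y_1|$, by choosing or locally recolouring $\mathcal C_1$, and only when this genuinely fails to fall back on the direct-colouring route above. The routine bookkeeping that keeps $G_2-T$ presented as an antihat (or, once $Z_2$ empties, a canonical interval) strip over an unchanged $G_1$ is the remaining, less interesting, part of the work.
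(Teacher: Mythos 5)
Your structural observations are sound (removing $\{a_0,b_0\}$ from an antihat graph leaves an antiprismatic graph, so $G_2$ is indeed an antiprismatic thickening with clique partition $X_2,Y_2,Z_2$), but the central mechanism of your plan has a gap that the paper's proof is specifically designed to avoid. You propose to peel off a good triad $T$ of $G_2$ and pay for it with a fresh colour $c\notin L_X\cup L_Y$; but $l=\gamma_\ell^j(H_2)$ need not exceed $|X_1|+|Y_1|-k$ even after maximizing $k$ (e.g.\ $|X_1|,|Y_1|$ large and $|X_2|,|Y_2|$ small), so the ``fresh colour'' step frequently fails from the outset, and your fallback --- ``assemble an auxiliary perfect or three-cliqued graph'' --- is exactly the hard part, left entirely unspecified. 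The paper avoids this by making $k$ \emph{minimal} (not maximal), which yields the usable bound $l\geq|X_2|+|Y_2|+k$ when $k>0$, and then pairs an \emph{existing} colour class of $G_1$ hitting only one of $X_1,Y_1$ with a carefully chosen stable set of size two in $G_2$; the triad $S\cup S'$ so formed is free (it reuses a colour already present in $\mathcal C_1$), and the choice of $S'$ ensures $\gamma_\ell^j$ drops. Your proposal has no analogue of this ``free'' pairing.

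Two further points. First, the recursion isn't set up cleanly: after deleting $T$ and reserving a fresh colour, what remains is a precolouring-extension problem with $l-1$ colours over the \emph{same} $G_1$, not a smaller instance of the lemma, and the residual $(G_2-T,X_2\setminus\{x\},Y_2\setminus\{y\})$ may cease to be an antihat strip (when some $I(b_i)$ or $I(c_i)$ empties); the paper handles these degenerations explicitly by noticing the 2-join becomes a canonical interval 2-join after one application of Theorem~\ref{thm:skelhp}, and then invoking Lemma~\ref{lem:quasilinemce2}. Second, a good triad in an antiprismatic thickening does \emph{not} guarantee that every other vertex has exactly two neighbours in it --- the definition allows twins and trumping, and it is the twin/trump cases (arising from the skeletal homogeneous pairs $\Omega(\cdot,\cdot)$) that require the argument you are implicitly relying on from Theorem~\ref{thm:antiprismaticlocal}. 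You would also need to verify that trumping in $G_2$ survives as trumping in $G$ once the $X_1,Y_1$ adjacencies are restored, which fails when the trumping vertex lies in $Z_2$ and the trumped vertex in $X_2$ or $Y_2$; the paper sidesteps this by choosing $S'$ inside a single skeletal pair rather than relying on an abstract good triad.
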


\begin{proof}
Consider a minimum counterexample for some fixed $l$.  If $G_2$ contains a skeletal homogeneous pair of cliques $(A,B)$ then one of $A$ and $B$ is partially but not completely contained in one of $X_2$ or $Y_2$.

Let $k$ be the number of colours appearing in both $X_1$ and $Y_1$.  We begin by making $k$ minimal, as we did in Case 6 of the proof of Lemma \ref{lem:quasilinemce2}.  To do this, we simply find a vertex $v$ in $X_1\cup Y_1$ with a colour appearing in both $X_1$ and $Y_1$, such that some colour $i$ does not appear in $X_1\cup Y_1\cup N(v)$, and recolour $v$ with $i$.  This minimality of $k$ ensures a bound on $l$, as long as $k\geq 1$:  Let vertices $u\in X_1$ and $v\in Y_1$ have the same colour.  Then $d(u)+1 \geq |X_2|+ (l-|Y_1|+k)$, since minimality ensures that $u$ has a neighbour in $G_1$ of every colour except possibly those in $Y_1$ not appearing in $X_1$.  Similarly, $d(v)+1\geq |Y_2|+(l-|X_1|+k)$.  Therefore since $\omega'(u)$ and $\omega'(v)$ are at least $|X_1|+|X_2|$ and $|Y_1|+|Y_2|$ respectively, $l\geq |X_2|+\frac 12 (l+k+|X_1|-|Y_1|)$ and $l \geq |Y_2|+\frac 12 (l+k+|Y_1|-|X_1|)$.  Consequently $l \geq |X_2|+|Y_2|+k$ if $k>0$.

Suppose there is a colour class $S$ in $G_1$ hitting $X_1$ but not $Y_1$.  Then add to this colour class a stable set $S'$ of size two intersecting $Y_2$ and $Z_2$.  By the structure of antihat thickenings, we can assume that $S'$ intersects $I(b_1)$ and $I(c_1)$ without loss of generality.  If $\Omega(b_1c_1)$ is nonempty, we insist that $S'$ intersect it.

Note first that every vertex in $I(b_1)\cup I(c_1)$ is trumped or has a twin in $S'$ or has two neighbours in $S'$.  Every vertex in $Z_2 \setminus I(c_1)$ has two neighbours in $S'$, as does every vertex in $Y_2 \setminus I(b_1)$.  Every vertex in $X_2$ has a neighbour in $S$ and a neighbour in $S'$ -- this neighbour will be in $Y_2$ for vertices in $I(a_1)$, and in $Z_2$ for all other vertices in $X_2$ (recall from the definition of an antihat 2-join that since $I(b_1)$ and $I(c_1)$ are both nonempty, $I(a_1)$ is complete to $I(b_1)$ and anticomplete to $I(c_1)$).  Thus $\gamma_\ell^j(v)$ drops for any $v\in G_2$.  Since $S\cup S'$ intersects both $X_1\cup X_2$ and $Y_1\cup Y_2$, $\gamma_\ell^j(v)$ drops for any $v\in X_1\cup Y_1$.  Therefore we remove $S\cup S'$ and lower $\gamma_\ell^j(H_2)$.

We repeat this approach until either $Y_2\cup Z_2$ is a clique, or all colours in $X_1$ appear in $Y_1$.  Suppose we remove $t_1$ stable sets in this way.  We then take colour classes of $G_1$ hitting $Y_1$ but not $X_1$, and remove them along with stable sets of size two in $X_2\cup Z_2$, using the symmetric argument to show that $\gamma_\ell^j(H_2)$ drops each time.  We do this until either all colours appearing in $Y_1$ are in $X_1$, or until $X_2\cup Z_2$ is a clique.  Let $t_2$ be the number of stable sets we remove in this way, let $S_1$ be the set of all vertices we have removed from $G$, and let $t=t_1+t_2$.  Notice that $\gamma_\ell^j(H_2-S_1)\leq \gamma_\ell^j(H_2)-t$.

Suppose $X_1\setminus S_1$ is empty.  Then we can colour $G_2-S_1$ using $l-t$ colours by Theorem \ref{thm:local}, since $G_2$ is three-cliqued.  Since $Y_1\setminus S_1$ is a clique cutset in $G-S_1$, this immediately gives us an $(l-t)$-colouring of $G-S_1$ and therefore an $l$-colouring of $G$.  So we can assume $X_1\setminus S_1$ and symmetrically $Y_1\setminus S_1$ are nonempty.

Now suppose every colour hitting $Y_1\setminus S_1$ also hits $X_1 \setminus S_1$.  Again we $(l-t)$-colour $G_2-S_1$, noting that at most $|X_2|+|Y_2|-t$ colours appear on $(X_2\cup Y_2)\setminus S_2$ because $|(X_2\cup Y_2)\setminus S_2|=|X_2|+|Y_2|-t$.  We ensure that no colour hits both $X_1$ and $X_2$, and that no colour hits both $Y_1$ and $Y_2$.  This is possible because $l-t> |(X_1\cup X_2)\setminus S_1|$ and $l-t \geq |X_2|+|Y_2|+k-t$, as we proved above.  This gives us a proper $(l-t)$-colouring of $G-S_1$, and therefore an $l$-colouring of $G$.

By symmetry, this covers the case in which every colour hitting $X_1\setminus S_1$ also hits $Y_1\setminus S_1$.  Thus there is a colour in $X_1$ but not $Y_1$, and one in $Y_1$ but not $X_1$.  So our method stopped because both $(Y_2\cup Z_2)\setminus S_1$ and $(X_2\cup Z_2)\setminus S_1$ are cliques.

In this final case, we $(l-t)$-colour $G_2-S_1$ by applying Lemma \ref{lem:quasilinemce2} as follows.  Notice that $(X_2\setminus S_1, Y_2\setminus S_1)$ is a homogeneous pair of cliques in $G-S_1$.  We reduce it to a skeletal homogeneous pair of cliques without changing the chromatic number using Theorem \ref{thm:skelhp}; the result is a graph $G'$ in which $((X_1\setminus S_1, Y_1\setminus S_1),(X_2\setminus S_1, Y_2\setminus S_1))$ is a canonical interval 2-join.  We can therefore apply Lemma \ref{lem:quasilinemce2} to find an $(l-t)$-colouring of $G'$.  Again using Lemma \ref{lem:reduction}, we can construct an $(l-t)$-colouring of $G-S_1$.  This immediately gives us an $l$-colouring of $G$, proving the lemma.
\end{proof}

%%%%%%%%%%%%%%%%%%%%%%%%%%%%%%%%%%%%%%%%%%%%%%%%%%%%%%%%%%%%%%%%%%%%%%%%%%%%%%%%
\subsubsection{Strange 2-joins}

The next case is strange 2-joins; we use a similar approach.

\begin{lemma}\label{lem:compstrange}
Suppose a skeletal claw-free graph $G$ admits a strange 2-join $((X_1, Y_1),(X_2, Y_2))$.  Then given a proper $l$-colouring of $G_1$ for any $l \geq \gamma_\ell^j(H_2)$, we can find a proper $l$-colouring of $G$.
\end{lemma}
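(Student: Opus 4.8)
The plan is to mirror the structure of the antihat 2-join proof (Lemma~\ref{lem:compantihat}), adapting it to the much smaller strange strip, whose underlying trigraph $H$ has cliques $A=\{a_1,a_2\}$, $B=\{b_1,b_2,b_3\}$, $C=\{c_1,c_2\}$ with $M=\{b_3c_1,b_2c_2\}$, and where the strip is $(G_2,I(A),I(B))$, so $X_2=I(A)$, $Y_2=I(B)$, and $Z_2=I(C)$. As before, consider a minimum counterexample for fixed $l$. First I would normalize the colouring of $G_1$ by minimizing the number $k$ of colours appearing on both $X_1$ and $Y_1$, using exactly the recolouring trick from the antihat case: find $v\in X_1\cup Y_1$ whose colour is shared, with some colour $i$ missing from $X_1\cup Y_1\cup N(v)$, and recolour $v$ with $i$. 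This minimality forces, when $k\geq 1$, the inequality $l\geq |X_2|+|Y_2|+k$, via the same degree estimate: a vertex $u\in X_1$ sharing a colour with $v\in Y_1$ must see a neighbour of every colour except possibly those on $Y_1\setminus X_1$, so $d(u)+1\geq |X_2|+(l-|Y_1|+k)$, and symmetrically for $v$, combined with $\omega'(u)\geq |X_1|+|X_2|$, $\omega'(v)\geq |Y_1|+|Y_2|$.

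The core of the argument is the stable-set removal loop. Whenever there is a colour class $S$ in $G_1$ hitting $X_1$ but not $Y_1$, I would augment $S$ by a size-two stable set $S'$ inside $Y_2\cup Z_2=I(B)\cup I(C)$, chosen to exploit the strange strip's structure: a natural choice is a vertex in $I(b_3)$ together with a vertex in $I(c_1)$ (since $b_3c_1\in M$, these lie in a homogeneous pair of cliques), insisting $S'$ meet $\Omega(b_3,c_1)$ if nonempty. One then checks, using the adjacencies of $H$, that $\gamma_\ell^j(v)$ drops for every $v\in G_2$: vertices in $I(b_3)\cup I(c_1)$ are trumped by or twinned with or have two neighbours in $S'$; remaining vertices of $Y_2\cup Z_2$ have two neighbours in $S'$; and each vertex of $X_2=I(A)$ has a neighbour in $S$ and a neighbour in $S'$ — here I would verify case-by-case that $a_1$ and $a_2$ each have the right neighbour in $I(b_3)\cup I(c_1)$ given the specified adjacency ($c_1$ sees $a_2$; $c_2$ sees $a_1,a_2$). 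Since $S\cup S'$ meets both $X_1\cup X_2$ and $Y_1\cup Y_2$, $\gamma_\ell^j$ also drops for vertices in $X_1\cup Y_1$. We iterate (symmetrically removing colour classes hitting $Y_1$ but not $X_1$, paired with size-two stable sets in $X_2\cup Z_2$) until either all colours on $X_1$ appear on $Y_1$, or $Y_2\cup Z_2$ becomes a clique, and likewise on the other side; call the removed vertices $S_1$ and the count $t$, so $\gamma_\ell^j(H_2-S_1)\leq \gamma_\ell^j(H_2)-t$.

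After the loop, I would dispatch the terminal cases exactly as in the antihat proof. If $X_1\setminus S_1=\emptyset$ (or symmetrically $Y_1\setminus S_1=\emptyset$), then $Y_1\setminus S_1$ is a clique cutset and Theorem~\ref{thm:local} colours $G_2-S_1$ with $l-t$ colours since the strange strip $G_2$ is three-cliqued (its complement is three-colourable via $A,B,C$), yielding an $l$-colouring of $G$. If both are nonempty but every colour on one side also appears on the other, then $l-t\geq |X_2|+|Y_2|+k-t$ lets us $(l-t)$-colour $G_2-S_1$ keeping the $X$-side and $Y$-side colour sets disjoint from those on $X_1,Y_1$ respectively. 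The only remaining case has a colour on $X_1$ but not $Y_1$ and vice versa, so the loop stopped because both $(Y_2\cup Z_2)\setminus S_1$ and $(X_2\cup Z_2)\setminus S_1$ are cliques; then $(X_2\setminus S_1, Y_2\setminus S_1)$ is a homogeneous pair of cliques, which we reduce via Theorem~\ref{thm:skelhp} to turn the 2-join into a canonical interval 2-join, apply Lemma~\ref{lem:quasilinemce2}, and pull the colouring back through Lemma~\ref{lem:reduction}. The main obstacle I anticipate is the bookkeeping in the terminal "both sides collapse to cliques" case — specifically, confirming that $G_2-S_1$ with the $Z_2$ part absorbed really does present as a linear interval strip with disjoint ends once the residual homogeneous pair is made skeletal — since the strange strip contains a $W_5$ and one must be sure the stable-set removals have genuinely destroyed that obstruction; but this is the same phenomenon handled in Lemma~\ref{lem:compantihat}, so the argument should carry over with the small finite adjacency check being the only new ingredient.
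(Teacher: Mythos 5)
Your high-level template (minimize $k$, loop removing colour classes paired with size-two stable sets in the strip, then dispatch terminal cases) is reasonable, but your specific choice of $S'$ is wrong and the paper's proof for strange 2-joins is not a symmetric antihat-style loop.

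The concrete flaw: you pair a colour class $S$ hitting $X_1$ but not $Y_1$ with a stable pair $S'$ consisting of a vertex in $I(b_3)$ and a vertex in $I(c_1)$. Recall the adjacency of the strange strip: $a_1$ sees $a_2,b_1,c_2$; $b_1$ sees $b_2,b_3,a_1,c_2$; $c_2$ sees $c_1,a_1,a_2,b_1,b_2$. In particular $a_1$ sees neither $b_3$ nor $c_1$, so a vertex in $I(a_1)$ has no neighbour in your $S'$, and its degree drops by only one (the lone neighbour in $X_1 \cap S$). Moreover, the clique $I(a_1)\cup I(a_2)\cup I(c_2)$ is untouched by $S\cup S'$, so $\omega'$ for $I(a_1)$ need not drop either. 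Hence $d(v)+1+\omega'(v)$ may drop by only $1$ for $v \in I(a_1)$, and $\gamma_\ell^j(v)$ need not drop. The same problem occurs for $I(b_1)$ (sees $b_3$ but not $c_1$, and has no neighbour in $S$ since $S$ misses $Y_1$) and for $I(c_2)$ (sees $c_1$ but not $b_3$, and $Z_2$ has no neighbours in $G_1$). So the claim ``$\gamma_\ell^j(v)$ drops for every $v\in G_2$'' fails at three of the seven underlying vertices.

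The paper instead exploits the asymmetry of the strange strip ($|A|=2$ but $|B|=3$) and runs the removal in one direction only: it takes $S$ hitting $Y_1$ but not $X_1$ and pairs it with $S'$ consisting of a vertex of $I(a_1)\subseteq X_2$ and a vertex of $I(c_1)\cap\Omega(c_1,b_3)\subseteq Z_2$. Here placing a vertex of $S'$ inside $X_2$ is essential: combined with $S$ hitting $Y_1$, every vertex of $Y_2$ gets a neighbour in $S$ and a neighbour in $S'$, and the choice inside $\Omega(c_1,b_3)$ handles the remaining thickened pairs via the trump relation. The terminal cases are also different from those in the antihat proof: if $I(a_1)$ is exhausted, $(Z_2\setminus S_1, Y_2\setminus S_1)$ becomes a homogeneous pair whose reduction yields a canonical interval 2-join; if $I(c_1)\cap\Omega(c_1,b_3)$ is exhausted, $I(b_3)$ becomes simplicial and after removing it one relabels to obtain an antihat 2-join and invokes Lemma~\ref{lem:compantihat}. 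Your terminal dispatch (``both $(Y_2\cup Z_2)\setminus S_1$ and $(X_2\cup Z_2)\setminus S_1$ become cliques'') does not arise here, and the ``small finite adjacency check'' you postpone is precisely where the argument breaks. You would need to choose $S'$ so that it dominates (by adjacency, twinning, or trumping) all of $I(a_1), I(a_2), I(b_1), I(b_2), I(c_1), I(c_2)$ given which of $X_1$ or $Y_1$ the paired colour class $S$ hits; the pair $\{b_3,c_1\}$ does not do this.
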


\begin{proof}
As in the proof of the previous lemma, assume $\gamma_\ell^j(H_2)=l$ and let $k$ denote the number of colours appearing in both $X_1$ and $Y_1$.  We begin by modifying the colouring of $G_1$ so that $k$ is minimal, so again we can assume that either $k=0$ or $l\geq |X_2|+|Y_2|+k$.  Denote $G_2-X_2-Y_2$ by $Z_2$.

Let $t = \min\{|I(a_1)|, |I(c_1) \cap \Omega(c_1,b_3)|, |Y_2|-k  \}$.  We remove $t$ colours hitting $Y_1$ but not $X_1$.  With each colour class we remove a vertex of $I(a_1)$ and a vertex of $I(c_1) \cap \Omega(c_1,b_3)$.  Together these vertices form $t$ stable sets; call their union $S_1$.  As in the proof of the previous lemma, we now consider our situation depending on the value of $t$.  Note that each time we remove a stable set, every vertex in $G_2$ is either trumped or loses two neighbours or loses a twin.  It is therefore easy to see that $\gamma_\ell^j(H_2-S_1)\leq \gamma_\ell^j(H_2)-t$.

Suppose $I(a_1)$ is empty.  We apply Lemma \ref{lem:quasilinemce2} to $(l-t)$-colour $G-S_1$ as follows.  First observe that removing $S_1$ turns $((X_1\setminus S_1, Y_1\setminus S_1),(X_2\setminus S_1, Y_2\setminus S_1))$ into a {\em fuzzy linear interval 2-join}, meaning that we can turn it into a canonical interval 2-join by reducing nonskeletal homogeneous pairs of cliques:  $(Z_2\setminus S_1, Y_2 \setminus S_1)$ is a homogeneous pair of cliques in $G-S_1$, so we can reduce it to a skeletal homogeneous pair of cliques using Theorem \ref{thm:skelhp}, at which point $((X_1\setminus S_1, Y_1\setminus S_1),(X_2\setminus S_1, Y_2\setminus S_1))$ becomes a canonical linear interval 2-join in a proper claw-free subgraph $G'$ of $G-S_1$.  We can therefore apply Lemma \ref{lem:quasilinemce2} to $G'$, since we already have an $(l-t)$-colouring of $G_1-S_1$, to find an $(l-t)$-colouring of $G'$.  Theorem \ref{thm:skelhp} tells us that we can use this colouring to construct an $(l-t)$-colouring of $G-S_1$.  Combining this with a $t$-colouring of $G[S_1]$ gives us an $l$-colouring of $G$.

Now suppose $I(c_1) \cap \Omega(c_1,b_3)$ is empty but $I(c_1)$ is not empty.  To $(l-t)$-colour $G-S_1$, we first remove the vertices of $I(b_3)$, which have become simplicial.  Now observe that $((X_1\setminus S_1, Y_1\setminus S_1),(X_2\setminus S_1, Y_2\setminus (I(b_3)\cup S_1))$ is an antihat 2-join.  The remaining sets of $G_2$ are $I(a_1)$, $I(a_2)$, $I(b_1)$, $I(b_2)$, $I(c_1)$, and $I(c_2)$.  To see the antihat 2-join, we relabel these sets as in the definition of an antihat thickening as follows:  $(I(a_1),I(a_2))\rightarrow (I(a_1),I(a_2))$, $(I(b_1),I(b_2))\rightarrow (I(b_1),I(b_3))$, and $(I(c_1),I(c_2))\rightarrow (I(c_3),I(c_1))$.  %Respectively, these sets become $I(a_1)$, $I(a_2)$, $I(b_1)$, $I(b_3)$, $I(c_3)$, and $I(c_1)$ (see Section \ref{sec:structureantihat}).
  We can therefore apply Lemma \ref{lem:compantihat} to find an $(l-t)$-colouring of $G-(S_1\cup I(b_0))$, then replace and colour the simplicial vertices in $I(b_0)$ to get an $(l-t)$-colouring of $G-S_1$.  This gives us an $l$-colouring of $G$, completing the case of strange 2-joins.
\end{proof}

%%%%%%%%%%%%%%%%%%%%%%%%%%%%%%%%%%%%%%%%%%%%%%%%%%%%%%%%%%%%%%%%%%%%%%%%%%%%%%%%
\subsubsection{Gear 2-joins}

The final and most difficult case is that of gear 2-joins.

\begin{lemma}\label{lem:compgear}
Suppose a skeletal claw-free graph $G$ admits a gear 2-join $((X_1, Y_1),(X_2, Y_2))$.  Then given a proper $l$-colouring of $G_1$ for any $l \geq \gamma_\ell^j(H_2)$, we can find a proper $l$-colouring of $G$.
\end{lemma}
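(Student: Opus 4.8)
The plan is to follow the template of Lemmas \ref{lem:compantihat} and \ref{lem:compstrange}. Fix a minimum counterexample for some $l$; as remarked just after Lemma \ref{lem:compositionlocal}, we may assume $l=\gamma_\ell^j(H_2)$. Here $X_2=I(v_1)\cup I(v_2)$, $Y_2=I(v_4)\cup I(v_5)$, and $Z_2=G_2-X_2-Y_2$ is the union of $I(v_3),I(v_6),I(v_7),I(v_8)$ together with whichever of $I(v_9),I(v_{10})$ survive the deletion of $I(X)$; since $M\subseteq\{v_7v_8\}$, the only homogeneous pair of cliques inside $G_2$ that can fail to be linear is the one controlled by $\Omega(v_7,v_8)$, and skeletality of $G$ pins down its shape. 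A useful preliminary observation is that the gear strip graph $G_2$ is itself three-cliqued: the base graph $H$ has the clique partition $\{v_1,v_2,v_7\}\cup\{v_3,v_4,v_8,v_9\}\cup\{v_5,v_6,v_{10}\}$, which is inherited by any thickening of $H\setminus X$. Finally, as in the earlier lemmas, we recolour $G_1$ so that the number $k$ of colours appearing on both $X_1$ and $Y_1$ is minimal; the resulting degree bounds on $X_1\cup Y_1$, together with $\omega'(v)\ge|X_1|+|X_2|$ for $v\in X_1$ and $\omega'(v)\ge|Y_1|+|Y_2|$ for $v\in Y_1$, give $k=0$ or $l\ge|X_2|+|Y_2|+k$.

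With this in hand there are two modes of attack, exactly as flagged before Lemma \ref{lem:compositionlocal}. In the first mode the colouring of $G_1$ is ``balanced'' enough (for instance $k$ small, or every colour on $X_1$ also appearing on $Y_1$, or one of $X_1,Y_1$ already exhausted) that we can extend it to $G_2$ directly: we encode the colours forbidden on $X_2$ and on $Y_2$ by an auxiliary three-cliqued claw-free graph built from $G_2$ and appeal to perfection or to Theorem \ref{thm:local}, checking that its $\gamma_\ell$ is at most $l$; a colouring of the auxiliary graph restricts to the desired extension. When $X_1$ or $Y_1$ is empty this degenerates to a clique cutset, and the same invocation of Theorem \ref{thm:local} — using $\gamma_\ell(G_2-S_1)\le\gamma_\ell^j(H_2-S_1)\le l-t$ since a maximum clique of $G_2-S_1$ misses $X_1\cup Y_1$ entirely — finishes immediately.

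In the second mode the auxiliary graph is over-constrained and we instead whittle $G_2$ down. While there is a colour class $S$ of $G_1$ meeting exactly one side of the $2$-join, we enlarge it by a stable set $S'$ of $G_2$ meeting the other side (when $S$ meets $X_1$, take $S'\subseteq Y_2\cup Z_2$; when $S$ meets $Y_1$, take $S'\subseteq X_2\cup Z_2$), chosen to meet $\Omega(v_7,v_8)$ where relevant and so that $S\cup S'$ dominates $G_2$ in the usual sense — every remaining vertex of $G_2$ has two lost neighbours, a lost twin, or is trumped by a vertex of $S'$ — using that $S$ already covers all of $X_2$ (resp.\ $Y_2$) because $X_1\cup X_2$ (resp.\ $Y_1\cup Y_2$) is a clique. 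Removing $S\cup S'$ then lowers $\gamma_\ell^j(H_2)$ by one. We iterate on both sides, removing $t=t_1+t_2$ stable sets forming a set $S_1$ with $\gamma_\ell^j(H_2-S_1)\le l-t$, stopping each side when the relevant part of $G_2$ collapses to a clique or some $I(v_i)$ is exhausted. At the stopping point, after deleting any vertices that have become simplicial and reducing any nonskeletal homogeneous pair via Theorem \ref{thm:skelhp}, the residual strip $G_2-S_1$ is a linear interval strip (yielding a canonical interval $2$-join, to which Lemma \ref{lem:quasilinemce2} applies) or an antihat strip (to which Lemma \ref{lem:compantihat} applies); either way we obtain an $(l-t)$-colouring of $G-S_1$ and hence, using Lemma \ref{lem:reduction}, an $l$-colouring of $G$ — the final contradiction.

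The main obstacle is making the dichotomy between the two modes precise: deciding exactly which colourings of $G_1$ admit the direct extension of the first mode, and verifying that the auxiliary three-cliqued graph has $\gamma_\ell\le l$ without being over-constrained (the clique-partition of $G_2$ does not put $X_2$ and $Y_2$ in a common part, so the two sets of forbidden colours must be grafted on with some care). In the second mode the delicate part is choosing the stable sets $S'$ so that $S\cup S'$ genuinely dominates all ten vertex types $v_1,\dots,v_{10}$ of $G_2$ — including the cone vertices $v_9,v_{10}$, present or absent — and so that the iteration halts at a configuration recognizably equal to one of the already-handled strips. The $W_5$ buried in the gear is what makes both of these checks fiddlier than for the antihat and strange strips, and tracking the sub-clique $\Omega(v_7,v_8)$ through the deletions is the most error-prone bookkeeping.
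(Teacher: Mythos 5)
Your proposal transplants the template of Lemmas~\ref{lem:compantihat} and~\ref{lem:compstrange} -- minimize $k$, then repeatedly remove a colour class hitting exactly one of $X_1,Y_1$ together with a dominating diad on the opposite side of $G_2$ -- but the paper's proof of the gear case is built on the opposite strategic choice, and I do not think your version can be pushed through.

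The paper \emph{maximizes} $k$ rather than minimizing it, and this is not cosmetic. When $k$ is maximal, the residual case $k=0$, $l=|X_1|+|Y_1|$ gives bounds of the form $l\geq|X_2|+|X_1|+\tfrac12|Y_1|$ and $l\geq|Y_2|+|Y_1|+\tfrac12|X_1|$, hence $|X_2|+|Y_2|\leq\tfrac12 l$, which is exactly what is needed to control the auxiliary cobipartite graph in Case~3 (the graph $G'$ obtained from $G_2-S$ by completing $X_2$ to $Y_2$, whose clique number must be bounded by $l-t$). Your bound $l\geq|X_2|+|Y_2|+k$, extracted from minimality, is the one used for the antihat/strange strips and does not yield $|X_2|+|Y_2|\leq\tfrac12 l$; your ``first mode'' therefore does not have the estimate it needs to close. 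Moreover the paper never removes colour classes hitting exactly one side of the join: with $k$ maximal, it removes a class hitting \emph{both} sides (Case~1, $k>0$) or a class hitting \emph{neither} side together with a triad in $G_2$ (Case~2), before falling into the perfection argument (Case~3). Your iteration on one-sided colour classes is the wrong reduction for the gear strip: for instance a vertex $w\in I(v_8)\setminus\Omega(v_7,v_8)$ has no neighbour in $X_2$, so a class $S$ meeting only $X_1$ contributes nothing to $w$, and a single diad $S'\subset Y_2\cup Z_2$ cannot in general hit $w$ twice or trump it -- so $\gamma_\ell^j$ is not guaranteed to drop, which is precisely the ``fiddly bookkeeping'' you flag but do not resolve. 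Finally, your description of the tight case as ``encode the forbidden colours in an auxiliary three-cliqued graph and appeal to Theorem~\ref{thm:local}'' is not what the paper does (it strips the size-two colour classes from an optimal colouring of $Z_2$, then bounds $\omega$ of a cobipartite completion directly by a clique-by-clique case analysis using the skeletality of $(I(v_7),I(v_8))$); you would need to verify that such an auxiliary graph has $\gamma_\ell\leq l$, which is exactly the delicate inequality the paper's Case~3 spends a page establishing.

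In short: the gear lemma is not an instance of the antihat/strange template, and the two places you identify as ``the main obstacle'' are genuine gaps, not bookkeeping. The correct proof reverses the optimization on $k$, splits on $k>0$ versus $k=0$ with $l>|X_1|+|Y_1|$ versus the tight case, and replaces the dominating-diad iteration by a cobipartite perfection argument with explicit clique bounds.
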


\begin{proof}
We proceed by induction on $|G|$, taking as our basis the trivial case in which $\min\{|X_1|,|Y_1|\}=0$; in this case we have a 1-join and the result follows from Theorem \ref{thm:local} since gear strips are three-cliqued.  So assume both $X_1$ and $Y_1$ are nonempty.  Let $Z_2$ denote $G_2\setminus (X_2\cup Y_2)$.  Again we can let $G$ be a minimum counterexample and assume that $l= \gamma_\ell^j(H_2)$.

In this case we make $k$, the overlap between $X_1$ and $Y_1$ in the colouring of $G_1$, maximal.\\

\noindent{\bf Case 1: } $k>0$.

If $k>0$, we remove a colour class hitting both $X_1$ and $Y_1$, along with one vertex each of $I(v_9)$ and $I(v_{10})$, if they are both nonempty.  In this case every vertex of $G_2$ loses a twin or two neighbours.  Since we remove a vertex in both $X_1$ and $Y_1$, it is easy to see that $\gamma_\ell^j(H_2)$ drops.  Since removing vertices from $I(v_9)$ and $I(v_{10})$ will not change the fact that we have a gear 2-join, we can proceed by induction, having reduced both $\gamma_\ell^j(H_2)$ and $l$.

So assume that $I(v_9)\cup I(v_{10})$ is a clique, i.e.\ one of $I(v_9)$ and $I(v_{10})$ is empty.  We do the same thing, but instead we remove a colour class hitting both $X_1$ and $Y_1$, along with a vertex of $I(v_3)$ and a vertex of $I(v_6)$.  Clearly $\gamma_\ell^j(H_2)$ drops as before and we can proceed by induction, since as long as neither $I(v_3)$ nor $I(v_6)$ becomes empty we will still have a gear 2-join.

Suppose $I(v_6)$ becomes empty, and one of $I(v_9)$ and $I(v_{10})$ is empty.  By symmetry we can assume that $I(v_{9})$ is empty.  We are now left with a fuzzy linear interval 2-join: Reducing (if necessary) the possibly nonlinear homogeneous pairs of cliques $(I(v_7),I(v_8))$ and $(I(v_3)\cup I(v_{10}), I(v_4)\cup I(v_5))$ leaves us with a canonical interval 2-join.  The vertices, in linear order, are $I(v_1)$, $I(v_2)$, $I(v_7)$, $I(v_3)\cup I(v_{10})$, $I(v_8)$, $I(v_4) \cup (v_5)$.  The reader can confirm this, along with symmetry between $v_9$ and $v_{10}$, by consulting Figure \ref{fig:xxx}.  So, as in the proof of the previous lemma, we can find our $l$-colouring of $G$ by reducing on these two homogeneous pairs of cliques and invoking Lemma \ref{lem:quasilinemce2}.

This completes the proof of the lemma when $k>0$.\\

\noindent{\bf Case 2:} $k=0$; $l > |X_1|+|Y_1|$.

In this case we remove a colour class hitting neither $X_1$ nor $Y_1$, along with a stable set of size three in $G_2$.  Call their union $S$.  If $I(v_{10})$ is nonempty, we remove a vertex of $I(v_{10})$ along with on vertex each of $I(v_1)$ and $I(v_4)$.  Every vertex in $G_2$ loses a twin or two neighbours, so it is easy to confirm that $\gamma_\ell^j(H_2)$ drops.  Thus we can proceed by induction, provided that both $I(v_1)$ and $I(v_4)$ are still nonempty.

If $I(v_1)$ and $I(v_4)$ are both empty, then we extend the colouring of $G_1$ to an $l$-colouring of $G_1\cup I(v_2)\cup I(v_5)$.  We then note that $((I(v_2)\cup I(v_{10})\setminus S, I(v_5)\cup I(v_{10})\setminus S), (I(v_3)\cup I(v_7), I(v_6)\cup I(v_8)  ))$ is a fuzzy linear interval 2-join, in which $(I(v_3)\cup I(v_7), I(v_6)\cup I(v_8))$ is the only possible nonlinear homogeneous pair of cliques.  So we can construct an $(l-1)$-colouring of $G-S$ by Lemma \ref{lem:quasilinemce2} as in the previous two proofs.  This gives us an $l$-colouring of $S$.

So assume $I(v_1)$ is now empty but $I(v_4)$ is not.  Clearly we can extend the $(l-1)$-colouring of $G_1-S$ to a proper $(l-1)$-colouring of $(G_1-S)\cup I(v_2)$.  We claim that we now have an antihat 2-join and we can find an $(l-1)$-colouring of $G-S$ using Lemma \ref{lem:compantihat}.

The 2-join in $G-S$ is $((I(v_2), Y_1\setminus S), ((I(v_3)\cup I(v_7))\setminus S, Y_2\setminus S))$.  To see that $(G_2-S)-(I(v_1)\cup I(v_2))$ is an antihat strip, we will relabel the vertices to conform with the definition of an antihat thickening.  We relabel the sets $I(v_3)$, $I(v_{10})$, and $I(v_7)$ as $I(a_1)$, $I(a_2)$, and $I(a_3)$ respectively.  We relabel $I(v_{4})$ and $I(v_{5})$ as $I(b_{1})$ and $I(b_{2})$ respectively.  Finally, we relabel $I(v_{6})$, $I(v_{9})$, and $I(v_8)$ as $I(c_1)$, $I(c_2)$, and $I(c_3)$ (or $I(c_4)$ if $I(v_7)\cup I(v_8)$ is a clique) respectively.  It is straightforward to confirm that this is an antihat strip.  We therefore have an antihat 2-join in $G-S$, so by Lemma \ref{lem:compantihat} we can find an $(l-1)$-colouring of $G-S$ and an $l$-colouring of $G$.

If $I(v_{10})$ is empty, then instead of taking vertices from $I(v_{10})$, $I(v_{1})$ and $I(v_4)$, we take vertices from $I(v_1)$, $I(v_3)$ and $I(v_5)$, and proceed symmetrically.  This time, we may worry that $I(v_3)$ will become empty, but in this case, since $I(v_{10})$ is also empty, we get a fuzzy linear interval 2-join exactly as in Case 1.\\

\noindent{\bf Case 3: }$k=0$; $l=|X_1|+|Y_1|$.

In this final case, every colour appears in $X_1\cup Y_1$, and no colour appears twice.  Therefore $X_2$ and $Y_2$ must receive colours appearing in $Y_1$ and $X_1$ respectively.  Since $k$ is maximal, $l\geq |X_2|+|X_1|+\frac 12 |Y_1|$ (from a vertex in $X_1$), and $l\geq |Y_2|+|Y_1|+\frac 12 |X_1|$ (from a vertex in $Y_1$).  It follows that $2l \geq \frac 32(|X_1|+|Y_1|)+|X_2|+|Y_2|$, so $|X_2|+|Y_2|\leq \frac 12 l$.

Notice that $Z_2$ is cobipartite, and that the only non-edges in $Z_2$ are in $I(v_3)\cup I(v_6)$, $I(v_7)\cup I(v_8)$, and $I(v_9)\cup I(v_{10})$.  We begin with an optimal colouring of $Z_2$, removing the colour classes of size two.  Let $t_1$ be the number of such colour classes in $I(v_3)\cup I(v_6)$, and let $t$ be the total number of such colour classes.  Denote these $2t$ vertices by $S$, noting that $Z_2-S$ is a clique.

We construct an auxiliary graph $G'$ from $G_2-S$ by adding all possible edges between $X_2$ and $Y_2$.  Now $G'$ is cobipartite and perfect, and since a proper colouring of $G'$ will give vertices in $X_2$ and $Y_2$ distinct colours, it suffices to prove that $\omega(G') \leq l-t$.  This gives us an $l$-colouring of $G_2$ in which no colour appears twice on $X_2\cup Y_2$, so we can use it to extend the $l$-colouring of $G_1$ to an $l$-colouring of $G$.

Suppose there is a clique $W$ of size greater than $l-t$ in $G'$.  We will now prove that $l-|X_2|-|Y_2|\geq \frac 12 |Z_2|\geq t$, which implies that $W$ cannot be $X_2\cup Y_2$.  Consider vertices $u, v, x, y$ in $I(v_1)$, $I(v_2)$, $I(v_4)$, and $I(v_5)$ respectively.  Since every vertex in $Z_2$ has two neighbours in this set, the sum of the four degrees is at least $2(|X_1|+|X_2|+|Y_1|+|Y_2|+|Z_2|)-4$.  Therefore the sum $\gamma_\ell^j(u)+ \gamma_\ell^j(v)+ \gamma_\ell^j(x)+ \gamma_\ell^j(y)$ is at least $4l \geq 2(|X_1|+|X_2|+|Y_1|+|Y_2|)+|Z_2|$.  Thus $2l \geq |Z_2| + 2(|X_2|+|Y_2|)$, so $\frac 12 |Z_2|+|X_2|+|Y_2|\leq l$.

A maximal clique $W$ in $G'$ intersecting both $I(v_1)$ and $I(v_2)$ as well as $Z_2$ must be $(I(v_1)\cup I(v_2)\cup I(v_7))\setminus S$.  But a vertex $v$ in $(I(v_7)\cap \Omega(v_7,v_8))\setminus S$ (this set is nonempty because $(I(v_7),I(v_8))$ is a skeletal homogeneous pair) has either two neighbours or a twin in each stable set of size two in $S$.  This means that if $|W|>l-t$, then $\gamma_\ell^j(v)>l$, a contradiction.  So $W$ is not such a clique, and by symmetry $W$ does not intersect all three of $I(v_4)$, $I(v_5)$, and $I(v_8)$.  A similar argument implies that $W$ cannot intersect only one of $I(v_1)$, $I(v_2)$, $I(v_4)$, and $I(v_5)$.  Since $|X_2|+|Y_2|\leq l-t$ we can see that $W$ cannot intersect three of these sets.  Furthermore $|Z_2-S| = \omega(Z_2)-t\leq l-t$, so $W$ cannot be contained in $Z_2-S$.  Therefore $W$ intersects all three of $X_2$, $Y_2$, and $Z_2$, and we can assume by symmetry that $W$ is $I(v_4)\setminus S$ and its neighbourhood in $X_2\cup Y_2$, i.e.\ $(I(v_2)\cup I(v_3) \cup I(v_4))\setminus S$.  

Suppose that $|W|>l-t$.  This inequality will provide us with new bounds on $l$, giving us a contradiction and completing the proof of the lemma.  Let $u$ and $v$ be vertices in $I(v_2)$ and $I(v_4)$ respectively.  Observe that $d(u)+1 \geq |X_1|+|X_2|+|I(v_3)\setminus S|+t$, since $u$ sees one vertex in every stable set in $S$.  Thus $d(u)+1 \geq |X_1|+|X_2|+|I(v_3)|+(t-t_1)$, and likewise $d(v)+1 \geq |Y_1|+|Y_2|+|I(v_3)|+(t-t_1)$.  Since $I(v_2)\cup I(v_3)\cup I(v_{10})\cup I(v_7)$ is a clique, it follows that $\omega'(u) \geq |I(v_2)|+|I(v_3)|+(t-t_1)$, because every stable set of $S$ hits $I(v_3)\cup I(v_7)\cup I(v_{10})$ exactly once.  Likewise, $\omega'(v) \geq |I(v_4)|+|I(v_3)|+(t-t_1)$.  The sum of these figures is at most $2\gamma_\ell^j(u)+2\gamma_\ell^j(v)$, which is at most $4l$.  This implies:
$$4l \geq  (|X_1|+|Y_1|) + (|X_2|+|Y_2|) + 4(t-t_1) + 4|I(v_3)| + |I(v_2)|+ I(v_4)|.$$

We know that $|X_1|+|Y_1| = l$, $|X_2|+|Y_2|>|I(v_2)|+|I(v_4)|$, and by assumption, $|I(v_2)|+|I(v_4)|+|I(v_3)|-t_1 > l-t$.  Therefore,
\begin{eqnarray*}
3l &\geq& 2( |I(v_2)|+|I(v_4)|+|I(v_3)| ) + 2|I(v_3)| + 4(t-t_1) \\
 &\geq& 2l   + 2|I(v_3)| + 2(t-t_1)
\end{eqnarray*}
Thus $|I(v_3)|-t_1 \leq \frac l2 -t$.  And since $|X_2|+|Y_2|\leq \frac l2$, we get $|W| = |I(v_2)|+|I(v_3)|+|I(v_4)| - t_1 \leq l-t$, contrary to our assumption.

It follows that $\omega(G')\leq l-t$, so we can indeed complete the $l$-colouring of $G_2$ that is compatible with the colouring of $G_1$.  This proves the lemma.
\end{proof}

Lemmas \ref{lem:compantihat}, \ref{lem:compstrange}, and \ref{lem:compgear} together immediately imply Lemma \ref{lem:compositionlocal}.\\

%%%%%%%%%%%%%%%%%%%%%%%%%%%%%%%%%%%%%%%%%%%%%%%%%%%%%%%%%%%%%%%%%%%%%%%%%%%%%%%%
%%%%%%%%%%%%%%%%%%%%%%%%%%%%%%%%%%%%%%%%%%%%%%%%%%%%%%%%%%%%%%%%%%%%%%%%%%%%%%%%
%%%%%%%%%%%%%%%%%%%%%%%%%%%%%%%%%%%%%%%%%%%%%%%%%%%%%%%%%%%%%%%%%%%%%%%%%%%%%%%%
%%%%%%%%%%%%%%%%%%%%%%%%%%%%%%%%%%%%%%%%%%%%%%%%%%%%%%%%%%%%%%%%%%%%%%%%%%%%%%%%
\subsection{Dealing with pseudo-line 2-joins}

To deal with pseudo-line 2-joins we use $\gamma_g^j(H_2)$ rather than $\gamma_\ell^j(H_2)$.

\begin{lemma}\label{lem:composition}
Suppose a skeletal claw-free graph $G$ admits a canonical interval 2-join or an antihat 2-join or a strange 2-join or a gear 2-join or a pseudo-line 2-join $((X_1, Y_1),(X_2, Y_2))$.  Then given a proper $l$-colouring of $G_1$ for any $l \geq \gamma_g^j(H_2)$, we can find a proper $l$-colouring of $G$.
\end{lemma}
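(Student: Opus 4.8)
The plan is to reduce Lemma \ref{lem:composition} to Lemma \ref{lem:compositionlocal} for the first four types of 2-join, and to handle the one genuinely new case---pseudo-line 2-joins---separately. For the first four types this is nearly immediate: observe that $\gamma_g^j(H_2) \geq \gamma_\ell^j(H_2)$ always holds, since $\gamma_g^j(H_2)$ takes the maximum of $d_G(v)$ over all $v\in V(H_2)$ and the maximum of $\omega'(v)$ independently, whereas $\gamma_\ell^j(H_2)$ pairs them vertex-by-vertex. Hence any $l \geq \gamma_g^j(H_2)$ also satisfies $l \geq \gamma_\ell^j(H_2)$, and Lemma \ref{lem:compositionlocal} applies directly to give the desired $l$-colouring of $G$. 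So the substance of the proof is the pseudo-line case.

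For a pseudo-line 2-join, recall that $(G_2,X_2,Y_2)$ is a thickening of $L(J)$ under $M=\{v_1v_2\}$, where $J$ has a path $j_1 j_2 j_3$ with every edge incident to some $j_i$. First I would strip the thickening: since $G$ is skeletal, Proposition \ref{prop:line} (or rather the argument behind it) lets us assume $G_2$ is essentially a line graph $L(J')$ with $M$ reduced to simplicial parts, so that $H_2 = G[V_2\cup X_1\cup Y_1]$ is itself (after adding two pendant edges or a bounded gadget at the $X_1$ and $Y_1$ ends) the line graph of a graph $J^+$ obtained from $J$ by attaching structure encoding the cliques $X_1$, $Y_1$ and their overlap. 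The key point: $H_2$ is a line graph, and more importantly it is a \emph{quasi-line} graph, indeed one with only three ``centres'' $j_1,j_2,j_3$ plus the centres of $J$. Then I would proceed by induction on $|G|$, mirroring the structure of the gear-2-join proof: repeatedly remove a stable set (a triad in $G_2$ corresponding to a matching of size three in $J^+$ hitting three centres of largest degree) together with a colour class of $G_1$ chosen to control the overlap between $X_1$ and $Y_1$, each removal dropping $\gamma_g^j(H_2)$ by one. When $G_2$ degenerates---when one of $X_2$, $Y_2$ empties, or $Z_2$ becomes a clique, or we reach a linear interval configuration---we finish either by Theorem \ref{thm:quasiline} / Lemma \ref{lem:quasilinemce2} on the residual linear interval 2-join, or by the fact that $Y_1\setminus S$ (resp.\ $X_1\setminus S$) is a clique cutset and $G_2$ alone is three-cliqued so Theorem \ref{thm:local} colours it. As in Lemma \ref{lem:compantihat}, when the overlap $k$ is positive one gets a clean inequality $l \geq |X_2|+|Y_2|+k$ by minimality of $k$, and this inequality is what makes the final ``glue the colourings together'' step go through.

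The main obstacle I expect is the bookkeeping in the degenerate cases of the pseudo-line induction: one must verify that after removing the chosen stable sets the residual 2-join really is of one of the already-handled types (canonical interval, or antihat after relabelling), and that the relabelling of the $I(v_i)$'s into the antihat template is consistent---exactly the kind of careful but routine case analysis done in the gear-strip proof. A secondary subtlety is that here we are working with the \emph{global} invariant $\gamma_g^j$ rather than the local $\gamma_\ell^j$, so when we remove a stable set we only need $\Delta_G(V(H_2)) + \omega'(H_2)$ to drop, not $d(v)+\omega'(v)$ at every individual vertex; this is actually easier to arrange (it suffices that the stable set meets a vertex of maximum degree in $V(H_2)$ and meets every maximum $\omega'$-clique), so the global setting buys us some slack that compensates for the extra centre-counting one has to do in a line-graph strip with an unbounded number of centres. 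Once those verifications are in place, the five implications combine exactly as the statement asserts.
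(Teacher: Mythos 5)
Your reduction of the first four $2$-join types to Lemma \ref{lem:compositionlocal} is exactly right and matches the paper: $\gamma_\ell^j(H_2)\leq\gamma_g^j(H_2)$, so those cases are immediate.  The pseudo-line case, however, has a genuine gap and at least one wrong turn.

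First, the overlap parameter $k$.  You propose to \emph{minimize} the number of colours hitting both $X_1$ and $Y_1$, as in the antihat case, and to use the resulting bound $l\geq|X_2|+|Y_2|+k$.  The paper does the opposite: it \emph{maximizes} $k$.  Maximality is what gives you the leverage in the no-spare-colour case: if every colour class of $G_1$ hits $X_1\cup Y_1$ and some colour in $X_1$ is absent from $Y_1$, then the vertex carrying it must see at least $l-1$ colours in $G_1$ (else we could recolour it and increase $k$), which yields $l\geq|X_1|+2|X_2|$, hence $|X_2|\leq\frac12|Y_1|-\frac12 k$ and symmetrically for $|Y_2|$.  Those two bounds, plus the observation that $G[X_2\cup Z_2]$ is cobipartite, are what let the paper finish the ``all colours already used'' subcase without induction.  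Minimizing $k$ doesn't obviously deliver an analogue; in particular your inequality $l\geq|X_2|+|Y_2|+k$ does not by itself bound $\omega(G[X_2\cup Z_2])$ against $l-|Y_2|$, which is the inequality one actually needs to glue.

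Second, the inductive step is underspecified.  When a colour class $S$ missing both $X_1$ and $Y_1$ does exist, the paper pairs it with a stable set $S_2\subseteq G_2$ chosen so that $S\cup S_2$ actually drops $\gamma_g^j(H_2)$.  The nontrivial point is that a naive choice of $S_2$ (a diad in $I(e_1)\cup I(e_2)$) can fail to hit a maximum clique $C_t$ of $G_2$ arising from a centre $t$ of $J$ adjacent to all of $j_1,j_2,j_3$.  The paper then counts: at most two such ``centre cliques'' can have size $\omega'(H_2)$, since each has more than a third of $|V(G_2)|$ vertices and they are disjoint, so a stable set corresponding to a size-$3$ matching in $J$ hitting those centres suffices.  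Your plan says to pick a matching hitting ``three centres of largest degree,'' which is in the right spirit but skips the verification that such a matching exists and that it hits \emph{all} the maximum cliques; your remark that the global invariant ``buys slack'' because you only need $\Delta_G(V(H_2))+\omega'(H_2)$ to drop is true but doesn't substitute for that verification.

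Third, the line-graph gadget is a departure from the paper, which never rewrites $H_2$ as a line graph.  The difficulty is that $X_1$ and $Y_1$ may overlap and $G_1$ is arbitrary, so there is no canonical $J^+$ whose line graph is $H_2$; the paper instead reasons directly about maximal cliques in $G_2$ and their provenance in $J$.  Also note the paper handles the boundary case ``$J$ has at most two centres'' by observing that the $2$-join is then already an antihat $2$-join; you would need a similar base case for your induction on $|G|$.

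So: the reduction of four of the five cases is correct and matches the paper, but the pseudo-line case as written would not close, chiefly because the $k$-maximality argument is replaced by an inapplicable $k$-minimality bound and the key centre-clique counting step is missing.
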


\begin{proof}
We prove the lemma by induction on $l$.  We let $G$ be a minimum counterexample, noting that $l=\gamma_g^j(H_2)$.  Assume that $|X_1|\geq |Y_1|$.

If $((X_1, Y_1),(X_2, Y_2))$ is a canonical interval 2-join or an antihat 2-join or a strange 2-join or a gear 2-join, then the lemma is immediately implied by Lemma \ref{lem:compositionlocal} given the observation that $\gamma_\ell^j(H_2)\leq \gamma_g^j(H_2)$.  So we can assume that we have a pseudo-line 2-join.

Recall that $G_2$ is based on the line graph of a graph $J$, and the vertices of $J$ other than $j_1$, $j_2$, and $j_3$ are called {\em centres}.  For a centre $t$ in $J$, we call the corresponding clique $C_t$.  That is, $C_t = \cup I(e)$ over all vertices $e$ of $H$ whose corresponding edge in $J$ is incident to $t$.  Let the edges $j_1j_2$ and $j_2j_3$ be $e_1$ and $e_2$ respectively.  Note that $Z_2$ is a clique and so is $Z_2 \cup \Omega(e_1,e_2)$.

We begin by making the number $k$ of colours in $G_1$ that hit both $X_1$ and $Y_1$ maximal.  First suppose that there is no colour class appearing in neither $X_1$ nor $Y_1$.  As in the previous proofs, $l>|X_1|$.  Since $k$ is maximal, there is a vertex $v\in X_1$ with a colour not appearing in $Y_1$, and it must have at least $l-1$ neighbours in $G_1$.  This vertex is in $X_1 \cup X_2$, so $l = \gamma_g^j(H_2)\geq \frac 12l+\frac 12|X_1|+|X_2|$.  Hence $l \geq |X_1|+2|X_2|$.  Since $l=|X_1|+|Y_1|-k$, we have $|X_2|\leq \frac 12 |Y_1|-\frac 12 k$.  Now since $|X_2|$ is nonempty, $|Y_1|> k$ and there is a vertex in $Y_1$ with a colour not appearing in $X_1$.  We can therefore apply the symmetric argument to prove that $l \geq \frac 12l+\frac 12|Y_1|+|Y_2|$, and consequently $|Y_2|\leq \frac 12 |X_1|-\frac 12 k$.

Observe that if $|Z_2| \leq \frac 12 (|X_1|+|Y_1|)$ we can easily finish the colouring by giving $X_2$ colours appearing in $Y_1$ but not $X_1$, $Y_2$ colours appearing in $X_1$ but not $Y_1$, and $Z_2$ colours appearing in both $X_1$ and $Y_1$, and any leftover colours.  In fact we can do this whenever $|Z_2| \leq l - |X_2|-|Y_2|$.  So assume $|Z_2|> l- |X_2|-|Y_2|$.  Let $A$ be a maximum clique in $G[X_2\cup Z_2]$.  Since $G[X_2\cup Z_2]$ is cobipartite, we can colour it with $|A|$ colours, $|X_2|$ of which intersect $X_2$.  Therefore if $|A|\leq l-|Y_2|$ we can colour $Y_2$ using colours that appear in $X_1$ but not in $Y_1$, then colour $X_2$ and $Z_2$ using $|A|$ colours such that those colours appearing in $X_2$ do not appear in $X_1$.

To see that $|A|\leq l-|Y_2|$, note that $\omega'(H_2)\geq |A|$ and since the degree of any vertex in $I(e_1)$ is at least $|X_1|+|X_2|+|Z_2|-1$, $l = \gamma_g^j(H_2)\geq \frac 12(|A|+|Z_2|+|X_2|+|X_1|)$.  Since $|Z_2| > l - |X_2|-|Y_2|$, this implies that $l> |A|+|X_1|-|Y_2|\geq |A|+\frac 12 |Y_2|$.  Therefore $|A|\leq l-|Y_2|$ and we can complete the $\gamma_g^j(G)$-colouring of $G$.
\\

We can now assume that there is a colour class $S$ in $G_1$ that appears in neither $X_1$ nor $Y_1$.  We will find a stable set $S_2$ in $G_2$ such that removing $S\cup S_2$ lowers $\gamma_g^j(H_2)$; this will imply that $\chi(G)\leq l$ by induction.

First note that if there are at most two centres then we actually have an antihat 2-join -- this is straightforward to confirm as there are only five vertices in $J$.  So we can assume that there are at least three centres.

Suppose we set $S_2$ to be a diad (i.e.\ a stable set of size two) in $G[I(e_1)\cup I(e_2)]$ such that $S_2$ intersects $\Omega(e_1,e_2)$ if it is nonempty.  $S_2$ exists because $G[I(e_1)\cup I(e_2)]$ is not a clique.  If removing $S\cup S_2$ does not lower $\omega_j(G)$, then there must be a maximal clique in $G_2$ disjoint from $S_2$.  Such a clique must be $C_t$ for some centre $t$ that sees $j_1$, $j_2$, and $j_3$ in $J$.

The size of $C_t$ must be at least $\max\{|X_1\cup X_2|,|Y_1\cup Y_2|, |Z_2|\} > \frac 13 |V(G_2)|$, so by the number of vertices in $G_2$ there can be at most two such ``centre cliques'' of size $\omega'(H_2)$, since they must be disjoint --  call the other one $C_{t'}$ if it exists.  If we let $S_2$ be a stable set corresponding to a matching in $J$ that hits three centres and in particular hits $t$ and (if it exists) $t'$, we can see that removing $S\cup S_2$ lowers $\omega_j(G)$ so we are done.  This $S_2$ must exist because $C_t$ intersects all of $X_2$, $Y_2$, and $Z_2$, so we can find $S_2$ unless every other centre has neighbourhood $j_2$ in $J$.  If this is the case we can again easily confirm that we have an antihat 2-join, so we are done.
\end{proof}

To prove Theorem \ref{thm:main}, it only remains to deal with icosahedral thickenings.

%%%%%%%%%%%%%%%%%%%%%%%%%%%%%%%%%%%%%%%%%%%%%%%%%%%%%%%%%%%%%%%%%%%%%%%%%%%%%%%%
%%%%%%%%%%%%%%%%%%%%%%%%%%%%%%%%%%%%%%%%%%%%%%%%%%%%%%%%%%%%%%%%%%%%%%%%%%%%%%%%
%%%%%%%%%%%%%%%%%%%%%%%%%%%%%%%%%%%%%%%%%%%%%%%%%%%%%%%%%%%%%%%%%%%%%%%%%%%%%%%%
%%%%%%%%%%%%%%%%%%%%%%%%%%%%%%%%%%%%%%%%%%%%%%%%%%%%%%%%%%%%%%%%%%%%%%%%%%%%%%%%
%%%%%%%%%%%%%%%%%%%%%%%%%%%%%%%%%%%%%%%%%%%%%%%%%%%%%%%%%%%%%%%%%%%%%%%%%%%%%%%%
%%%%%%%%%%%%%%%%%%%%%%%%%%%%%%%%%%%%%%%%%%%%%%%%%%%%%%%%%%%%%%%%%%%%%%%%%%%%%%%%
%%%%%%%%%%%%%%%%%%%%%%%%%%%%%%%%%%%%%%%%%%%%%%%%%%%%%%%%%%%%%%%%%%%%%%%%%%%%%%%%
%%%%%%%%%%%%%%%%%%%%%%%%%%%%%%%%%%%%%%%%%%%%%%%%%%%%%%%%%%%%%%%%%%%%%%%%%%%%%%%%
%%%%%%%%%%%%%%%%%%%%%%%%%%%%%%%%%%%%%%%%%%%%%%%%%%%%%%%%%%%%%%%%%%%%%%%%%%%%%%%%
%%%%%%%%%%%%%%%%%%%%%%%%%%%%%%%%%%%%%%%%%%%%%%%%%%%%%%%%%%%%%%%%%%%%%%%%%%%%%%%%
%%%%%%%%%%%%%%%%%%%%%%%%%%%%%%%%%%%%%%%%%%%%%%%%%%%%%%%%%%%%%%%%%%%%%%%%%%%%%%%%
%%%%%%%%%%%%%%%%%%%%%%%%%%%%%%%%%%%%%%%%%%%%%%%%%%%%%%%%%%%%%%%%%%%%%%%%%%%%%%%%
%%%%%%%%%%%%%%%%%%%%%%%%%%%%%%%%%%%%%%%%%%%%%%%%%%%%%%%%%%%%%%%%%%%%%%%%%%%%%%%%
%%%%%%%%%%%%%%%%%%%%%%%%%%%%%%%%%%%%%%%%%%%%%%%%%%%%%%%%%%%%%%%%%%%%%%%%%%%%%%%%
%%%%%%%%%%%%%%%%%%%%%%%%%%%%%%%%%%%%%%%%%%%%%%%%%%%%%%%%%%%%%%%%%%%%%%%%%%%%%%%%
\section{Icosahedral thickenings}\label{sec:icosahedral}

The icosahedron is the unique vertex-transitive graph on twelve vertices in which the neighbourhood of every vertex induces a $C_5$.  A result of Fouquet \cite{fouquet93} tells us that a claw-free graph with $\alpha \geq 3$ is quasi-line precisely if no neighbourhood contains an induced $C_5$, so the icosahedron is the epitome of a claw-free graph that is not quasi-line.

There are several graphs related to the icosahedron that we must treat as a structural exception, as they are not three-cliqued or antiprismatic, and they do not arise as a composition of strips, which we will define shortly.  The first is the icosahedron itself, which we define explicitly.  Let the graph $G_0$ have vertices $v_0, v_1, \ldots, v_{11}$.  For $i = 1, \ldots, 10$, $v_i$ is adjacent to $v_{i+1}$ and $v_{i+2}$ with indices modulo 10.  The neighbourhood of $v_0$ is $\{v_i: 1\leq i \leq 10$, $i$ is odd$\}$, and the neighbourhood of $v_{11}$ is $\{v_i: 1\leq i \leq 10$, $i$ is even$\}$.  $G_0$ is the icosahedron (see Figure \ref{fig:icosahedral}).

\begin{figure}
\begin{center}
\includegraphics[scale=.6]{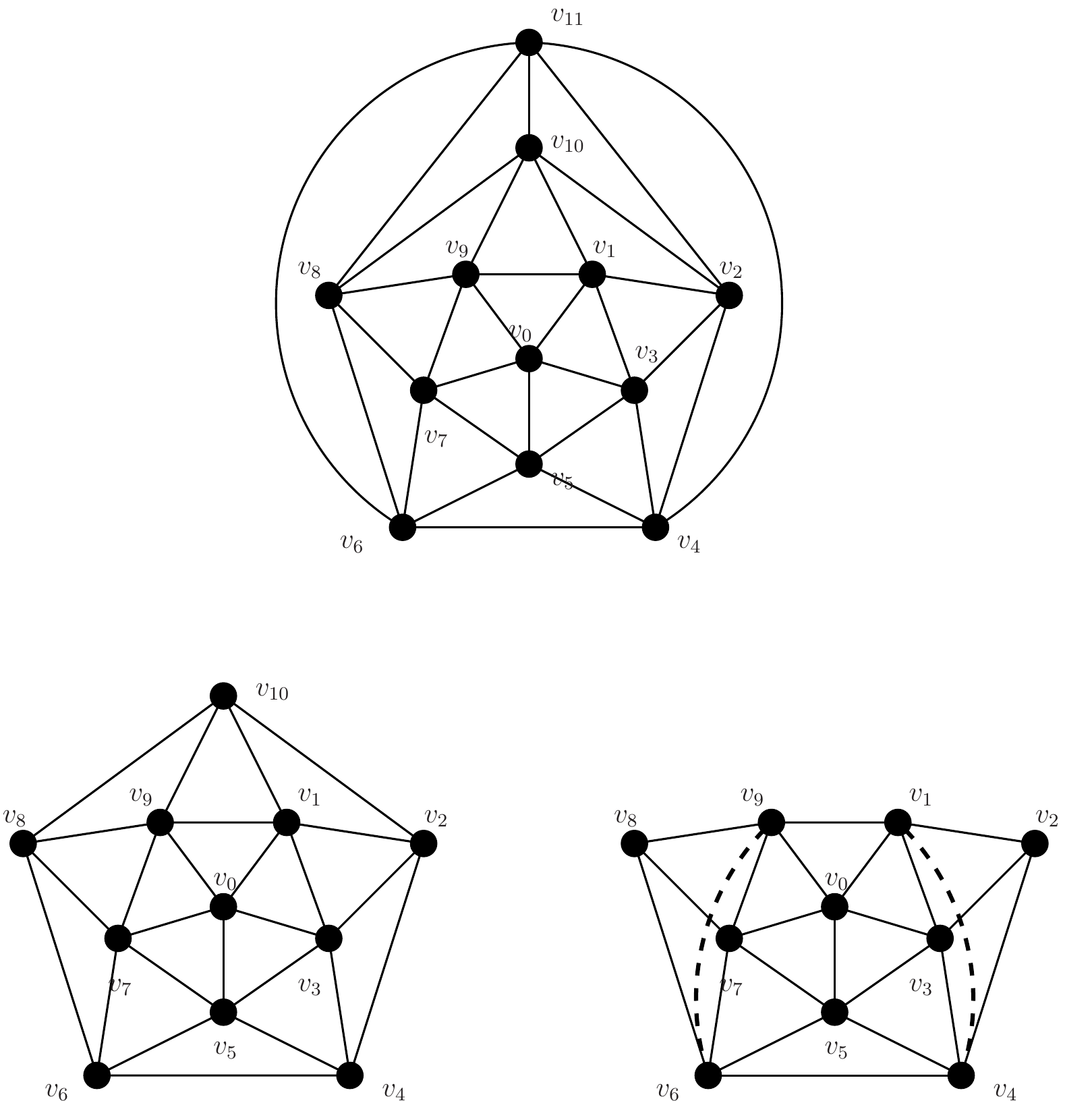}
\end{center}
\caption{\small{The icosahedron $G_0$ (top), with its derivative graphs $G_1$ (left) and $G_2\cup M$ (right).  In $G_2\cup M$, each of $\{v_1,v_4\}$ and $\{v_6,v_9\}$ is a nonadjacent pair or is in $M$.}}
\label{fig:icosahedral}
\end{figure}

We obtain $G_1$ from $G_0$ by deleting $v_{11}$, and we obtain $G_2$ from $G_1$ by deleting $v_{10}$.  Note that if the edge set $M$ is a subset of $\{v_1v_4, v_6v_9\}$, then $M$ is a claw-neutral matching in $G_2\cup M$.  We say that $G'$ is an {\em icosahedral thickening} if it is a proper thickening of $G_0$ or $G_1$, or is a thickening of $G_2\cup M$ under some $M\subseteq \{v_1v_4, v_6v_9\}$.  Any icosahedral thickening $G'$ has $\alpha(G')=3$ and $\chi(\overline{G'})=4$.

%%%%%%%%%%%%%%%%%%%%%%%%%%%%%%%%%%%%%%%%%%%%%%%%%%%%%%%%%%%%%%%%%%%%%%%%%%%%%%%%
%%%%%%%%%%%%%%%%%%%%%%%%%%%%%%%%%%%%%%%%%%%%%%%%%%%%%%%%%%%%%%%%%%%%%%%%%%%%%%%%
%%%%%%%%%%%%%%%%%%%%%%%%%%%%%%%%%%%%%%%%%%%%%%%%%%%%%%%%%%%%%%%%%%%%%%%%%%%%%%%%
%%%%%%%%%%%%%%%%%%%%%%%%%%%%%%%%%%%%%%%%%%%%%%%%%%%%%%%%%%%%%%%%%%%%%%%%%%%%%%%%
%%%%%%%%%%%%%%%%%%%%%%%%%%%%%%%%%%%%%%%%%%%%%%%%%%%%%%%%%%%%%%%%%%%%%%%%%%%%%%%%
\subsection{Colouring icosahedral thickenings}

We now prove that any icosahedral thickening satisfies $\chi(G)\leq \gamma_\ell(G)$.  To do so we remove triads from a supposed minimum counterexample, so first we need to consider induced subgraphs of icosahedral thickenings.

\begin{lemma}\label{lem:icossubgraph}
Let $G$ be an icosahedral thickening.  Then any skeletal induced subgraph $G'$ of $G$ is an icosahedral thickening or is three-cliqued or contains a clique cutset or admits a canonical linear interval 2-join.
\end{lemma}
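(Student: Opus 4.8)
The plan is to prove Lemma~\ref{lem:icossubgraph} by a case analysis on how the vertex set of $G'$ intersects the ``structural'' parts of the icosahedral thickening $G$, combined with the reduction machinery of Section~\ref{sec:skeletal}. Suppose $G$ is a thickening of one of $G_0$, $G_1$, or $G_2\cup M$; write $H$ for the underlying graph and recall that $H$ has at most twelve vertices $v_0,\ldots,v_{11}$. Since $G'$ is an induced subgraph of $G$, it is itself a thickening: for each $i$ the set $I(v_i)\cap V(G')$ is a (possibly empty) clique $I'(v_i)$, and $G'$ is a thickening of $H'$ under $M'$, where $H'=H[\{v_i : I'(v_i)\neq\emptyset\}]$ and $M'$ is the restriction of $M$. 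So the whole question reduces to: which induced subgraphs $H'$ of $H$ can arise, and for each such $H'$, is the resulting (skeletal) thickening icosahedral, three-cliqued, equipped with a clique cutset, or admitting a canonical linear interval 2-join?

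First I would dispose of the ``large'' cases. If $H'=H$ (i.e.\ every $I'(v_i)$ is nonempty) and $H=G_0$ or $G_1$, then $G'$ is a proper thickening of $G_0$ or $G_1$ and hence is itself an icosahedral thickening; if $H=G_2\cup M$ and $H'=H$ then $G'$ is a thickening of $G_2\cup M$ under $M'\subseteq M$, again icosahedral. If $H'$ is obtained from $G_0$ by deleting $v_{11}$ we land inside $G_1$-thickenings, and deleting $v_{10}$ as well lands inside $G_2\cup M$-thickenings; these are still icosahedral. So the interesting situation is when $H'$ misses at least one of the ``rim'' vertices $v_1,\ldots,v_{10}$, or misses one of the ``hubs'' $v_0,v_{11}$ from a configuration where the other hub (or $v_{10}$) survives. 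The key combinatorial observation is that the icosahedron is very symmetric, so up to the automorphism group one only needs to check a handful of deletion patterns, and in each the leftover graph $H'$ either becomes a long linear/circular interval graph (yielding a canonical linear interval 2-join, or, with a clique cutset, directly), becomes three-colourable in the complement (three-cliqued — e.g.\ once enough of the $C_5$-neighbourhoods are broken, $\alpha$ drops or the complement $3$-colours), or retains a clique cutset coming from a vertex whose closed neighbourhood in $H'$ separates the rest.

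The main structural lever, and the reason we invoke skeletality, is this: once $H'\neq G_0,G_1,G_2$ (after the reductions above), removing a rim vertex $v_i$ tends to leave a vertex $v$ whose neighbourhood ``wraps around'' the punctured pentagon and is now a clique or nearly so, so that either $v$'s neighbourhood is a clique cutset, or the graph decomposes along two cliques. Concretely, for each deletion pattern I would: (i) list the surviving vertices, (ii) identify a vertex $w$ such that $H'-w$ (hence $G'-I'(w)$) is disconnected or such that $N(w)$ is a clique — in the thickened graph this is a clique cutset since $I'(w)\cup N_{G'}(I'(w))$ separates; or (iii) exhibit cliques $X_1,Y_1,X_2,Y_2$ realizing a canonical linear interval 2-join, verifying that the ``$G_2$'' side is a linear interval strip with $X_2,Y_2$ disjoint and not a clique, using that skeletal homogeneous pairs of cliques inside a linear interval graph cause no trouble (every skeletal HPOC is linear, as noted after the definition of skeletal). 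The observation from Section~\ref{sec:skelsub} — that vertices in $A\setminus\Omega(A,B)$ are trumped, and the four bulleted facts about removing $\Omega$-vertices — is what lets us ignore the ``fuzzy'' parts $I'(v_i)\setminus\Omega$ when checking the 2-join or cutset conditions, so the verification really is just on the underlying graph $H'$ plus the single clique $\Omega(v_1,v_4)$, $\Omega(v_6,v_9)$ when those matching edges survive.

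The hard part will be organizing the case analysis so it is genuinely exhaustive rather than just plausible: there are many induced subgraphs of a $12$-vertex graph, and although the icosahedral symmetry collapses most of them, one must be careful with the asymmetric graphs $G_1$ and $G_2\cup M$ (deleting $v_{11}$, then $v_{10}$, breaks the vertex-transitivity), and with the two optional matching edges $v_1v_4,v_6v_9$ in $G_2\cup M$, which interact with skeletality (if such a pair is an edge that got thickened into a nonskeletal HPOC, $G'$ being skeletal forces $\Omega(v_1,v_4)$ to be a single clique, simplifying things). I expect the cleanest route is to first reduce, via the paragraph-one argument, to $H'$ a proper induced subgraph of $G_2\cup M$ together with the case $H'\subseteq G_1$ missing some rim vertex, then handle $G_2\cup M$ minus at least one more vertex by brute symmetry-reduced enumeration, showing in each case that either $\alpha(G')\le 2$ (so it is three-cliqued trivially, or rather handled by the $\alpha\le 2$ results — though here we want it to land in one of the four named outcomes, and $\alpha\le 2$ graphs are three-cliqued since their complement is bipartite hence $2$-colourable), or $G'$ is three-cliqued, or a short direct argument produces a clique cutset or a canonical linear interval 2-join from an explicitly named pentagon-arc structure.
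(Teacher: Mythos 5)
Your plan is essentially the paper's proof: $G'$ is a thickening of the induced subgraph $H'=H[\{v_i : I(v_i)\cap V(G')\neq\emptyset\}]$, and you case-analyze on which $I(v_i)$ are empty. The paper organizes this exactly as you suggest in your last paragraph — it treats $G_2\cup M$ first, then $G_1$ (reducing to the $G_2\cup M$ analysis whenever an appropriate rim vertex vanishes), then $G_0$ (reducing to the previous two) — so later cases simply defer to earlier ones and the enumeration stays short.

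One caution worth flagging before you carry out the enumeration: in the $G_1$ case, when $I(v_0)\cap V(G')=\emptyset$ but all rim classes survive, $G'$ is a proper thickening of $C_{10}(1,2)$, a genuine circular interval graph. You describe this outcome as ``yielding a canonical linear interval 2-join, or, with a clique cutset, directly,'' but that is not true here: $C_{10}(1,2)$ has $\omega=3$ on ten vertices (so its complement is not $3$-colourable, i.e.\ it is not three-cliqued), it has no clique cutset, it is not an icosahedral thickening, and because the cyclic adjacency wraps around, it does not admit a canonical interval $2$-join in the sense of Section~\ref{sec:compositions}. The paper's own proof notices the graph is a circular interval graph but then simply moves on without reconciling that with the four stated conclusions; that is a soft spot in the argument rather than something your proof should inherit. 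When you write this up, either add ``circular interval graph'' (equivalently, quasi-line) as a permitted outcome of the lemma — the downstream use in Theorem~\ref{thm:icosahedral} absorbs it via Theorem~\ref{thm:quasiline} — or treat that one deletion pattern separately rather than trying to force it into the existing four buckets.
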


The proof of this lemma is straightforward but technical, and we leave it to the end of this section.  This lemma allows us to prove the desired result:

\begin{theorem}\label{thm:icosahedral}
Suppose $G$ is an induced subgraph of an icosahedral thickening.  Then $\chi(G)\leq \gamma_\ell(G)$.
\end{theorem}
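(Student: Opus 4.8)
The plan is to proceed by induction on $|V(G)|$, taking $G$ to be a minimum counterexample. Since $\gamma_\ell$ is a monotone graph invariant, Theorem~\ref{thm:skelhp} lets us assume $G$ is skeletal (reducing a nonskeletal homogeneous pair of cliques preserves $\chi$, $\gamma_\ell$, and the property of being an induced subgraph of an icosahedral thickening, since that property is closed under taking induced subgraphs and $G'$ is a subgraph of $G$). Now apply Lemma~\ref{lem:icossubgraph}: either $G$ is an icosahedral thickening, or it is three-cliqued, or it has a clique cutset, or it admits a canonical linear interval 2-join. In the three-cliqued case we are done by Theorem~\ref{thm:local}. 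In the clique-cutset case, if $K$ is a clique cutset separating $G$ into $G_A$ and $G_B$ (each containing $K$), then $\chi(G) = \max\{\chi(G_A),\chi(G_B)\}$ and $\gamma_\ell(G)\geq \max\{\gamma_\ell(G_A),\gamma_\ell(G_B)\}$, so minimality finishes this case. In the canonical linear interval 2-join case, $G_2$ is a linear interval strip and hence $H_2$ is (up to reduction) quasi-line; applying Theorem~\ref{thm:skelhp} again to reach a skeletal subgraph if needed, we have an $l$-colouring of $G_1$ with $l = \gamma_\ell(G_1)\leq \gamma_\ell(G)$ by minimality, and then Lemma~\ref{lem:quasilinemce2} (using $\gamma_\ell^j(H_2)\leq\gamma_\ell(G)$) extends it to an $l$-colouring of $G$. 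So the remaining case is that $G$ itself is an icosahedral thickening.

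\textbf{The core case: $G$ is an icosahedral thickening.} Here the approach mirrors the antiprismatic and three-cliqued arguments: exhibit a good triad $T$ (in the sense that every vertex of $G-T$ has two neighbours in $T$, a twin in $T$, or is trumped by a vertex of $T$), so that $\gamma_\ell(G-T)\leq \gamma_\ell(G)-1$; then since $\chi(G-T)\leq\gamma_\ell(G-T)$ by minimality and we can add $T$ back as a colour class, $\chi(G)\leq\gamma_\ell(G)$. To find such a $T$, recall $G$ is a proper thickening of $G_0$ or $G_1$, or a thickening of $G_2\cup M$ under $M\subseteq\{v_1v_4,v_6v_9\}$. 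In the icosahedron $G_0$, the three vertices $v_0$, $v_4$, $v_8$ (say) form a triad, and every other vertex of $G_0$ has exactly two neighbours among them; so in a proper thickening we pick $T$ with one vertex in each of $I(v_0)$, $I(v_4)$, $I(v_8)$, and every vertex of $G-T$ either has a twin in $T$ (if it lies in one of those three cliques) or two neighbours in $T$. For $G_1$ and $G_2\cup M$ one uses the analogous triads visible in Figure~\ref{fig:icosahedral} (a triad meeting all three of the "directions," chosen to intersect $\Omega(v_1,v_4)$ and $\Omega(v_6,v_9)$ when those are nonempty, exactly as in the proofs of Lemmas~\ref{lem:min35} and~\ref{lem:min36}); since $G$ is skeletal, the structure of $\Omega$ lets one verify that vertices in $I(v_1),I(v_4),I(v_6),I(v_9)$ are either twins of, trumped by, or have two neighbours among the chosen triad.

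\textbf{Main obstacle.} The main work is the verification that a good triad exists in every icosahedral thickening — i.e.\ checking, across the three underlying graphs $G_0, G_1, G_2\cup M$ and the possible choices of $M$, that the candidate triad genuinely dominates every vertex in the required sense, paying attention to the thickening cliques $I(v_i)$ and, in the $G_2\cup M$ case, to the skeletal homogeneous pairs $(I(v_1),I(v_4))$ and $(I(v_6),I(v_9))$ and their cliques $\Omega$. This is the same flavour of case analysis as in Theorem~\ref{thm:antiprismaticlocal} and Lemmas~\ref{lem:min35}--\ref{lem:min36}, and is routine but must be done carefully; one should also note separately that every icosahedral thickening does contain a triad (indeed $\alpha=3$), so the strategy is never vacuous. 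Combining all cases yields $\chi(G)\leq\gamma_\ell(G)$, contradicting the choice of $G$ and proving the theorem.
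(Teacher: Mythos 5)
Your reduction to the case where $G$ is itself an icosahedral thickening matches the paper exactly (minimum counterexample, Theorem~\ref{thm:skelhp} to get skeletality, Lemma~\ref{lem:icossubgraph}, Theorem~\ref{thm:local}, and Lemma~\ref{lem:quasilinemce2}). The core case, however, contains a fatal error: no good triad exists in the icosahedron $G_0$ or in any proper thickening of it, so your strategy of removing a single good triad cannot work there.

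Concretely, your claim that every vertex of $G_0$ outside $\{v_0,v_4,v_8\}$ has two neighbours in that triad is false. The vertex $v_1$ is adjacent to $v_0$ but to neither $v_4$ (whose neighbours are $v_2,v_3,v_5,v_6,v_{11}$) nor $v_8$ (whose neighbours are $v_6,v_7,v_9,v_{10},v_{11}$), so it has only one $T$-neighbour; the same holds for $v_2$ and others. More fundamentally, a degree count rules this out for \emph{any} triad: $G_0$ is $5$-regular and a triad $T$ is a stable set of size three, so only $3\cdot 5=15$ edges leave $T$, whereas nine outside vertices each with at least two $T$-neighbours would require $18$. And since $G_0$ is vertex-transitive with no twins, neither the twin clause nor the trumping clause of the definition of a good triad can rescue the shortfall; the same conclusion transfers to proper thickenings of $G_0$ because distinct $I(v_i)$'s inherit incomparable closed neighbourhoods. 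The paper avoids this by removing \emph{four} disjoint stable sets simultaneously (one vertex from each $I(v_i)$, giving a $4$-colouring of the underlying icosahedron or of $G_1$): then every remaining vertex loses six (resp.\ five) neighbours, one of them a twin, and $\omega(v)$ drops by three since every maximal clique meets three of the $I(v_i)$'s, so $d(v)+\omega(v)$ falls by nine (resp.\ eight) and $\gamma_\ell$ falls by at least four. For the $G_2\cup M$ case it removes two stable sets chosen to meet $\Omega(v_1,v_4)$ and $\Omega(v_6,v_9)$, showing $\gamma_\ell$ drops by two. To repair your proof you would need to replace the single-triad step with this kind of multi-stable-set removal.
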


\begin{proof}
Let $G$ be a minimum counterexample to the theorem.  By Lemma \ref{lem:icossubgraph} we know $G$ is an icosahedral thickening or contains a clique cutset or is three-cliqued or admits a canonical interval 2-join.  But $G$ is vertex-critical so it cannot contain a clique cutset.  Lemma \ref{lem:quasilinemce2} (proved in \cite{kingthesis} and \cite{chudnovskykps12}) and Theorem \ref{thm:local} tell us that $G$ is in fact an icosahedral thickening.

First suppose that $G$ is a proper thickening of $G_0$, the icosahedron.  We remark that the icosahedron is 4-colourable, so we remove four stable sets with union denoted by $X$ containing exactly one vertex in $I(v_i)$ for every vertex $v_i$ of $G_0$.  When $X$ is removed, every remaining vertex $v$ in $G$ loses six neighbours (one of which is a twin), and since every maximal clique in $G$ corresponds to a triangle in $G_0$, $\omega(v)$ drops by three.  Thus $d(v)+\omega(v)$ drops by nine and it follows that $\gamma_\ell(G)$ drops by at least four, contradicting the minimality of $G$.

Now suppose that $G$ is a proper thickening of $G_1$ (see Figure \ref{fig:icosahedral}).  Again we remove one vertex from each $I(v_i)$, this time for $0\leq i\leq 10$, again using four stable sets.  When we remove the vertices, every remaining vertex loses at least five neighbours, one of which is a twin.  And as with $G_0$, every vertex $v$ of $G$ has $\omega(v)$ drop by three.  Thus $\gamma_\ell(G)$ drops by at least four, contradicting the minimality of $G$.

Finally suppose that $G$ is a thickening of $G_2\cup M$ under a matching $M$; we know that $M\subseteq \{v_1v_4, v_6v_9 \}$.  By minimality of $G$, $(I(v_1),I(v_4))$ and $(I(v_6),I(v_9))$ are skeletal homogeneous pairs of cliques.  We remove two stable sets with union $X$:  One intersects $I(v_1)$, $I(v_4)$, and $I(v_7)$ and intersects $\Omega(v_1v_4)$ if it is not empty.  The other intersects $I(v_3)$, $I(v_6)$, and $I(v_9)$ and intersects $\Omega(v_6v_9)$ if it is not empty.  These stable sets must exist because neither $I(v_1)\cup I(v_4)$ nor $I(v_6)\cup I(v_9)$ is a clique.

It is straightforward to confirm that $X$ intersects every maximal clique in $G$, so $\omega(v)$ drops by at least one for every $v\in G-X$, thus $\gamma_\ell(v)$ drops by at least two for any vertex with three neighbours in $X$.  Observe that any vertex in $G-X$ with only two neighbours in $X$ must be in $(I(v_1)\cup I(v_4))  \setminus \Omega(v_1v_4)$ or $(I(v_6)\cup I(v_9))\setminus \Omega(v_1v_4)$.  Furthermore, every such vertex has a twin in $X$.  Thus we can easily confirm that $\omega(v)$ drops by two for every such vertex.  So for any $v$ with only two neighbours in $X$, $\omega(v)$ drops by two.  Therefore $\gamma_\ell(G-X)\leq \gamma_\ell(G)-2$, contradicting the minimality of $G$.  This completes the proof.
\end{proof}

We now prove Lemma \ref{lem:icossubgraph}.

\begin{proof}[Proof of Lemma \ref{lem:icossubgraph}]
Suppose first that $G$ is a thickening of $G_2\cup M$ under $M\subseteq \{v_1v_4, v_6v_9\}$ (see Figure \ref{fig:icosahedral}).  If $G'$ has $I(v_i)$ nonempty for all $0\leq i\leq 9$ then clearly $G_2$ is an icosahedral thickening unless $I(v_1)\cup I(v_4)$ or $I(v_6)\cup I(v_9)$ becomes a clique, in which case we have a clique cutset.  If $I(v_i)$ is empty for some $i\in \{0,2,5,8\}$ then it is not hard to check that $G'$ is three-cliqued.  If $I(v_i)$ is empty for some $i\in \{1,4,6,9\}$ then $G'$ contains a clique cutset.  If none of these aforementioned sets $I(v_i)$ is empty but one of $I(v_3)$ and $I(v_7)$ is empty, then $G'$ admits a canonical interval 2-join.  For example, if $G'$ is reached from $G$ by deleting $I(v_3)$, then $((I(v_0)\cup I(v_9),I(v_5)\cup I(v_6)),(I(v_1),I(v_4)))$ is a canonical interval 2-join.

Now suppose that $G$ is a thickening of $G_1$.  Obviously $G'$ is an icosahedral thickening if $I(v_i)$ is nonempty for all $0\leq i\leq 10$.  If $I(v_i)$ is empty for any $i\in \{2,4,6,8,10 \}$ then the desired result follows from the previous paragraph.  If $I(v_0)$ is empty then $G'$ is a circular interval graph.  If $I(v_i)$ is empty for some $i\in \{1,3,5,7,9\}$ then it is easy to see from Figure \ref{fig:icosahedral} that $G'$ admits a canonical interval 2-join or a clique cutset.

Finally, suppose that $G$ is a thickening of $G_0$.  If $G'$ has any $I(v_i)$ empty for $0\leq i\leq 11$ then the desired result follows from the previous two paragraphs.  Otherwise $G'$ is clearly a thickening of $G_0$.  This completes the proof.
\end{proof}

%%%%%%%%%%%%%%%%%%%%%%%%%%%%%%%%%%%%%%%%%%%%%%%%%%%%%%%%%%%%%%%%%%%%%%%%%%%%%%%%
%%%%%%%%%%%%%%%%%%%%%%%%%%%%%%%%%%%%%%%%%%%%%%%%%%%%%%%%%%%%%%%%%%%%%%%%%%%%%%%%
%%%%%%%%%%%%%%%%%%%%%%%%%%%%%%%%%%%%%%%%%%%%%%%%%%%%%%%%%%%%%%%%%%%%%%%%%%%%%%%%
%%%%%%%%%%%%%%%%%%%%%%%%%%%%%%%%%%%%%%%%%%%%%%%%%%%%%%%%%%%%%%%%%%%%%%%%%%%%%%%%
%%%%%%%%%%%%%%%%%%%%%%%%%%%%%%%%%%%%%%%%%%%%%%%%%%%%%%%%%%%%%%%%%%%%%%%%%%%%%%%%
%%%%%%%%%%%%%%%%%%%%%%%%%%%%%%%%%%%%%%%%%%%%%%%%%%%%%%%%%%%%%%%%%%%%%%%%%%%%%%%%
%%%%%%%%%%%%%%%%%%%%%%%%%%%%%%%%%%%%%%%%%%%%%%%%%%%%%%%%%%%%%%%%%%%%%%%%%%%%%%%%
\subsection{Proving the main result}

%%%%%%%%%%%%%%%%%%%%%%%%%%%%%%%%%%%%%%%%%%%%%%%%%%%%%%%%%%%%%%%%%%%%%%%%%%%%%%%%
%%%%%%%%%%%%%%%%%%%%%%%%%%%%%%%%%%%%%%%%%%%%%%%%%%%%%%%%%%%%%%%%%%%%%%%%%%%%%%%%
%%%%%%%%%%%%%%%%%%%%%%%%%%%%%%%%%%%%%%%%%%%%%%%%%%%%%%%%%%%%%%%%%%%%%%%%%%%%%%%%
\subsubsection{A decomposition theorem}

To prove Theorem \ref{thm:main} we use a decomposition theorem for claw-free graphs; it is a weakening of Theorem 7.2 in \cite{clawfree5}:

\begin{theorem}\label{thm:structure}
Let $G$ be a skeletal claw-free graph containing no clique cutset.  Then one of the following is true:
\begin{enumerate*}
\item $G$ is quasi-line
\item $G$ is an antiprismatic thickening
\item $G$ is three-cliqued
\item $\chi(\overline{G}) \geq 4$ and $G$ admits a canonical interval 2-join, an antihat 2-join, a strange 2-join, a pseudo-line 2-join, or a gear 2-join
\item $\chi(\overline{G}) \geq 4$ and $G$ is an icosahedral thickening.
\end{enumerate*}
\end{theorem}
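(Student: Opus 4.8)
The plan is to deduce the theorem almost directly from Chudnovsky and Seymour's global structure theorem for claw-free trigraphs, Theorem~7.2 of \cite{clawfree5}, which classifies every claw-free trigraph with no clique cutset as either a member of one of finitely many basic classes, or as admitting a hex-join (possibly a worn hex-join), or as admitting a nontrivial strip-structure --- equivalently, as a nontrivial composition of strips. Every claw-free graph is a thickening of a claw-free trigraph, so we regard $G$ as a thickening of such a trigraph $T$; since $G$ has no clique cutset we may take $T$ to have none either, and since $G$ is skeletal, every homogeneous pair of cliques arising from the thickening has the simple structure catalogued in Section~\ref{sec:skelsub}. The bulk of the work is then translating each conclusion of Theorem~7.2 from trigraph language back into the graph classes we have defined, using the dictionary already in place: Proposition~\ref{prop:line} for line trigraphs, and the definitions of antiprismatic, antihat, and icosahedral thickenings, and of $\TTC_1,\dots,\TTC_6$, for the remaining basic classes.

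First I would report outcome~(3) whenever $\chi(\overline G)\le 3$, since by definition such a graph is three-cliqued; from then on we may assume $\chi(\overline G)\ge 4$, matching the hypotheses attached to outcomes~(4) and~(5). Running through the basic trigraph classes of \cite{clawfree5}: thickenings of line trigraphs, long circular interval graphs, and the other basic quasi-line classes all give outcome~(1) --- in the first case a skeletal thickening of a line trigraph is a line graph by Proposition~\ref{prop:line}, and in every case the graph is quasi-line; antiprismatic basic trigraphs give antiprismatic thickenings, outcome~(2); the icosahedral basic class ($G_0$ and its two named subgraphs) gives an icosahedral thickening, outcome~(5), or outcome~(1) in the degenerate subcases that happen to be quasi-line; and the three-cliqued basic classes $\TC_1,\dots,\TC_5$ give three-cliqued graphs, outcome~(3). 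Finally, if $T$ admits a hex-join (or, by the Remark following Theorem~\ref{thm:3decomp}, a worn hex-join), then $G$ admits a hex-join, whose two factors are again claw-free; iterating, and using that a hex-join of three-cliqued graphs is three-cliqued, we reach factors in the base classes and conclude $\chi(\overline G)\le 3$, so again we are in outcome~(3).

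The remaining case --- $T$ admits a nontrivial strip-structure --- is where the real content lies. Here I would separate one nontrivial strip from the rest: this exhibits a generalized $2$-join $((X_1,Y_1),(X_2,Y_2))$ in $G$, with $(G_2,X_2,Y_2)$ the selected strip (after, if necessary, reducing a nonskeletal homogeneous pair of cliques inside it via Lemma~\ref{lem:reduction}, which changes neither the chromatic number nor claw-freeness). The strips occurring in \cite{clawfree5} are, up to these reductions and up to discarding pieces that are themselves quasi-line or that force $G$ to be quasi-line or three-cliqued, precisely the linear interval, antihat, strange, pseudo-line (from $\cZ_3$), and gear (from $\cZ_5$) strips of Section~\ref{sec:compositions}. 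Since $G$ is not quasi-line, not every strip is a linear interval strip (a composition of linear interval strips is quasi-line), so at least one strip --- the one we select --- is of antihat, strange, pseudo-line, or gear type; arranging the $2$-join so that $X_2$ and $Y_2$ are disjoint cliques and $X_1,Y_1$ form the only possibly overlapping pair then places $G$ in one of the five sub-cases of outcome~(4), with $\chi(\overline G)\ge 4$ as already assumed (and the degenerate case $\min\{|X_1|,|Y_1|\}=0$ of a $1$-join being absorbed into, say, a canonical interval $2$-join or outcome~(3)).

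I expect the bookkeeping in this last step to be the main obstacle: Theorem~7.2 of \cite{clawfree5} is phrased for trigraphs and permits many small strips and degeneracies --- short strips that collapse to $1$-joins, strips simultaneously of two types, and strip-structures whose composition is globally quasi-line, antiprismatic, or three-cliqued --- and one must verify that none of these escapes the five listed outcomes. The skeletal reduction of Theorem~\ref{thm:skelhp} together with the ``weakly skeletal'' conventions of Section~\ref{sec:3cbase} are exactly what make the translation go through, and checking that each reduction preserves claw-freeness, the absence of a clique cutset (or else lands us directly in outcome~(3) or~(4)), and the relevant $2$-join or hex-join structure is routine but lengthy; the complete verification, following \cite{clawfree5}, is carried out in \cite{kingthesis}.
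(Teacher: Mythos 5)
Your proposal takes essentially the same route as the paper's own (brief) argument: regard $G$ as a thickening of a claw-free trigraph $G'$, apply Theorem~7.2 of \cite{clawfree5}, and translate each trigraph outcome back into graph language using the established dictionary --- $\fs_1\to$ icosahedral thickening, $\fs_3\to$ quasi-line, $\fs_7\to$ antiprismatic thickening, unions of three strong cliques $\to$ three-cliqued --- and then map each strip type in a nontrivial strip structure to the appropriate generalized 2-join. The paper likewise defers the detailed verification to \cite{kingthesis}, so in outline you are following the intended proof.

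One substantive inaccuracy: you say the strip types not on your list are discarded because they ``force $G$ to be quasi-line or three-cliqued.'' That is not the mechanism. The paper's key observation for the remaining strips in $\fz_6\cup\cdots\cup\fz_{15}$ is that any claw-free graph arising as a thickening of a trigraph with such a strip admits a \emph{clique cutset}, and clique cutsets are excluded by the hypothesis of the theorem. Your final paragraph gestures at this with ``short strips that collapse to $1$-joins,'' but you never explicitly invoke the no-clique-cutset hypothesis at the point where it does the real work of eliminating those ten strip classes, and you cannot land those cases in outcome~(3) or~(4) as you suggest. Similarly, your claim that ``a composition of linear interval strips is quasi-line, so at least one strip \dots is of antihat, strange, pseudo-line, or gear type'' needs the complementary observation that a strip structure containing only linear interval and trivial strips yields a quasi-line graph, which is the ``confirm'' step the paper leaves to the reader; as stated you have not verified that a strip structure with some non-linear-interval strip necessarily exhibits one of the four listed 2-joins after the $\fz_6,\ldots,\fz_{15}$ strips have been ruled out by the clique-cutset hypothesis.
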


Getting from Chudnovsky and Seymour's structure theorem for claw-free trigraphs to Theorem \ref{thm:structure} is complicated but not difficult.  Still, we owe some explanation to the reader who is unfamiliar with trigraphs.  First note that the structure of a graph is precisely the same as the structure of a trigraph in which no two (distinct) vertices are semiadjacent -- only the terminology differs.  The class of claw-free graphs is precisely the class of claw-free trigraphs in which no two vertices are semiadjacent.  As a warm-up, one can easily check that if a claw-free graph $G$ is a thickening of a trigraph $G'$, and $G'$ is the union of three strong cliques, then $G$ is a three-cliqued claw-free graph.  Next, check that every graph which is a thickening of a member of $\fs_3$ is quasi-line.  Similarly, any graph which is a thickening of a member of $\fs_1$ or $\fs_7$ is an icosahedral thickening or an antiprismatic thickening, respectively.

This leaves {\em non-trivial strip structures}, discussed in Section 7 of \cite{clawfree5}.  Noting that in \cite{clawfree5}, $\fz_0=\fz_1 \cup \ldots \cup \fz_{15}$, observe that if a graph $G$ is a thickening of a trigraph $G'$ admitting a non-trivial strip structure involving a strip in $\fz_{6} \cup \ldots \cup \fz_{15}$, then $G$ admits a clique cutset.  Suppose now that $G$ is a thickening of a trigraph $G'$ admitting a non-trivial strip structure involving a strip in $\fz_{2} \cup \ldots \cup \fz_{5}$.  Then $G$ admits an antihat 2-join (arising from $\fz_2$), or a strange 2-join (arising from $\fz_3$), or a pseudo-line 2-join (arising from $\fz_4$), or a gear 2-join (arising from $\fz_5$).  It now suffices to confirm that if $G$ is a thickening of a trigraph $G'$ admitting a non-trivial strip structure in which all strips are in $\fz_2$ or are trivial (i.e.\ $(J,Z)$ where $|V(J)|=3$ and $|Z|=2$), then $G$ is quasi-line.

%%%%%%%%%%%%%%%%%%%%%%%%%%%%%%%%%%%%%%%%%%%%%%%%%%%%%%%%%%%%%%%%%%%%%%%%%%%%%%%%
%%%%%%%%%%%%%%%%%%%%%%%%%%%%%%%%%%%%%%%%%%%%%%%%%%%%%%%%%%%%%%%%%%%%%%%%%%%%%%%%
%%%%%%%%%%%%%%%%%%%%%%%%%%%%%%%%%%%%%%%%%%%%%%%%%%%%%%%%%%%%%%%%%%%%%%%%%%%%%%%%
\subsubsection{Proof of Theorem \ref{thm:main}}

We can now combine our results to prove the second main result of the paper.

\begin{proof}[Proof of Theorem \ref{thm:main}]
Let $G$ be a minimum counterexample to the theorem; clearly $G$ cannot contain a clique cutset.  Theorem \ref{thm:skelhp} tells us that $G$ is skeletal.  Theorem \ref{thm:quasiline} tells us that $G$ is not quasi-line,  Theorem \ref{thm:antiprismaticlocal} tells us that $G$ is not an antiprismatic thickening, and Theorem \ref{thm:local} tells us that $G$ is not three-cliqued.  Lemma \ref{lem:composition} tells us that $G$ does not admit a canonical interval 2-join, an antihat 2-join, a strange 2-join, a gear 2-join, or a pseudo-line 2-join.  Theorem \ref{thm:icosahedral} tells us that $G$ is not an icosahedral thickening.  Therefore $G$ cannot exist.
\end{proof}

%%%%%%%%%%%%%%%%%%%%%%%%%%%%%%%%%%%%%%%%%%%%%%%%%%%%%%%%%%%%%%%%%%%%%%%%%%%%%%%%
%%%%%%%%%%%%%%%%%%%%%%%%%%%%%%%%%%%%%%%%%%%%%%%%%%%%%%%%%%%%%%%%%%%%%%%%%%%%%%%%
%%%%%%%%%%%%%%%%%%%%%%%%%%%%%%%%%%%%%%%%%%%%%%%%%%%%%%%%%%%%%%%%%%%%%%%%%%%%%%%%
%%%%%%%%%%%%%%%%%%%%%%%%%%%%%%%%%%%%%%%%%%%%%%%%%%%%%%%%%%%%%%%%%%%%%%%%%%%%%%%%
%%%%%%%%%%%%%%%%%%%%%%%%%%%%%%%%%%%%%%%%%%%%%%%%%%%%%%%%%%%%%%%%%%%%%%%%%%%%%%%%
%%%%%%%%%%%%%%%%%%%%%%%%%%%%%%%%%%%%%%%%%%%%%%%%%%%%%%%%%%%%%%%%%%%%%%%%%%%%%%%%
%%%%%%%%%%%%%%%%%%%%%%%%%%%%%%%%%%%%%%%%%%%%%%%%%%%%%%%%%%%%%%%%%%%%%%%%%%%%%%%%
%%%%%%%%%%%%%%%%%%%%%%%%%%%%%%%%%%%%%%%%%%%%%%%%%%%%%%%%%%%%%%%%%%%%%%%%%%%%%%%%
%%%%%%%%%%%%%%%%%%%%%%%%%%%%%%%%%%%%%%%%%%%%%%%%%%%%%%%%%%%%%%%%%%%%%%%%%%%%%%%%
%%%%%%%%%%%%%%%%%%%%%%%%%%%%%%%%%%%%%%%%%%%%%%%%%%%%%%%%%%%%%%%%%%%%%%%%%%%%%%%%
%%%%%%%%%%%%%%%%%%%%%%%%%%%%%%%%%%%%%%%%%%%%%%%%%%%%%%%%%%%%%%%%%%%%%%%%%%%%%%%%
%%%%%%%%%%%%%%%%%%%%%%%%%%%%%%%%%%%%%%%%%%%%%%%%%%%%%%%%%%%%%%%%%%%%%%%%%%%%%%%%
%%%%%%%%%%%%%%%%%%%%%%%%%%%%%%%%%%%%%%%%%%%%%%%%%%%%%%%%%%%%%%%%%%%%%%%%%%%%%%%%
%%%%%%%%%%%%%%%%%%%%%%%%%%%%%%%%%%%%%%%%%%%%%%%%%%%%%%%%%%%%%%%%%%%%%%%%%%%%%%%%
%%%%%%%%%%%%%%%%%%%%%%%%%%%%%%%%%%%%%%%%%%%%%%%%%%%%%%%%%%%%%%%%%%%%%%%%%%%%%%%%
\section{Algorithmic considerations}\label{sec:algorithmic}

We now show that our proofs of Theorems \ref{thm:main} and \ref{thm:local} yield polynomial time algorithms for $\gamma(G)$- and $\gamma_\ell(G)$-colouring $G$, respectively.

It is well known that we can restrict our attention to graphs containing no clique cutset -- see e.g.\ \cite{kingthesis} \S 3.4.3 for an explanation.  By Theorem \ref{thm:skelhp} we can restrict our attention to skeletal graphs.  Furthermore we can identify maximal sets of twin vertices (i.e.\ equivalence classes of the ``twin'' equivalence relation) in $G$ in polynomial time \cite{cournierh94}.  This immediately implies that we can recognize skeletal icosahedral thickenings in polynomial time.  We can easily check whether or not a triad in a  graph is good in polynomial time, so in polynomial time we can determine whether or not a graph contains a good triad by checking all triples of vertices.

If $G$ is an icosahedral thickening, then observe that since $G$ is skeletal there are at most 14 equivalence classes of twin vertices.  Therefore there are at most $14^3$ different types of stable sets.  We can formulate the problem of colouring $G$ as an integer program in which each variable represents the number of stable sets of a given type we use in the colouring.  Each variable has size at most $n$, so we can exhaustively solve the problem in $O(n^{14^3})$ time to find an optimal colouring of $G$ (following the proof of Theorem \ref{thm:icosahedral} yields a much more efficient $\gamma_\ell(G)$-colouring algorithm).

We now consider the problem of colouring three-cliqued claw-free graphs and antiprismatic thickenings.

%%%%%%%%%%%%%%%%%%%%%%%%%%%%%%%%%%%%%%%%%%%%%%%%%%%%%%%%%%%%%%%%%%%%%%%%%%%%%%%%
%%%%%%%%%%%%%%%%%%%%%%%%%%%%%%%%%%%%%%%%%%%%%%%%%%%%%%%%%%%%%%%%%%%%%%%%%%%%%%%%
%%%%%%%%%%%%%%%%%%%%%%%%%%%%%%%%%%%%%%%%%%%%%%%%%%%%%%%%%%%%%%%%%%%%%%%%%%%%%%%%
%%%%%%%%%%%%%%%%%%%%%%%%%%%%%%%%%%%%%%%%%%%%%%%%%%%%%%%%%%%%%%%%%%%%%%%%%%%%%%%%
%%%%%%%%%%%%%%%%%%%%%%%%%%%%%%%%%%%%%%%%%%%%%%%%%%%%%%%%%%%%%%%%%%%%%%%%%%%%%%%%
%%%%%%%%%%%%%%%%%%%%%%%%%%%%%%%%%%%%%%%%%%%%%%%%%%%%%%%%%%%%%%%%%%%%%%%%%%%%%%%%
%%%%%%%%%%%%%%%%%%%%%%%%%%%%%%%%%%%%%%%%%%%%%%%%%%%%%%%%%%%%%%%%%%%%%%%%%%%%%%%%
%%%%%%%%%%%%%%%%%%%%%%%%%%%%%%%%%%%%%%%%%%%%%%%%%%%%%%%%%%%%%%%%%%%%%%%%%%%%%%%%
\subsection{Antiprismatic thickenings}

We already know that any skeletal antiprismatic thickening contains a good triad, but we have not proven that reducing a nonskeletal homogeneous pair of cliques in an antiprismatic thickening leaves another antiprismatic thickening.  It is enough to appeal to an easy result on {\em antiprismatic trigraphs}, which are defined in \cite{clawfree5}, Section 3.  The proof is trivial but in the language of trigraphs.

\begin{lemma}
If an antiprismatic trigraph $G$ is a thickening of a trigraph $H$, then $H$ is antiprismatic.
\end{lemma}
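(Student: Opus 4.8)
The plan is to argue by contraposition: if $H$ is not antiprismatic then neither is $G$. The antiprismatic property is a local condition on $4$-element vertex sets that is robust under resolving semiadjacent pairs to edges or nonedges, so it suffices to show that any ``bad'' $4$-set of $H$ can be lifted to a bad $4$-set of $G$ by choosing representatives in the corresponding cliques.

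First I would recall the relevant structure. Since $G$ is a thickening of $H$, there is a partition $(X_v : v\in V(H))$ of $V(G)$ into nonempty strong cliques such that, for a strongly adjacent pair $uv$ of $H$, $X_u$ is strongly complete to $X_v$; for a strongly antiadjacent pair, $X_u$ is strongly anticomplete to $X_v$; and for a semiadjacent pair, $X_u$ is neither strongly complete nor strongly anticomplete to $X_v$. I would also recall that the semiadjacent pairs of any trigraph form a matching, so each vertex of $H$ lies in at most one semiadjacent pair.

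Next, suppose for contradiction that $H$ is not antiprismatic, witnessed by a $4$-set $Y\subseteq V(H)$ and a resolution $\sigma$ of the semiadjacent pairs inside $Y$ that yields a forbidden $4$-vertex graph. I build a corresponding $4$-set of $G$. For each semiadjacent pair $e=uv$ with $u,v\in Y$: if $\sigma(e)$ is a nonedge, then since $X_u$ is not strongly complete to $X_v$ I may pick $x_u\in X_u$ and $x_v\in X_v$ that are not strongly adjacent in $G$; if $\sigma(e)$ is an edge, then since $X_u$ is not strongly anticomplete to $X_v$ I may pick $x_u,x_v$ that are not strongly antiadjacent in $G$. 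Because the semiadjacent pairs of $H$ form a matching, the vertices $u,v$ lie in no other semiadjacent pair inside $Y$, so these choices for distinct semiadjacent pairs do not conflict, and I choose the remaining representatives arbitrarily. By the thickening clauses the strongly adjacent and strongly antiadjacent pairs of $Y$ are reproduced exactly on $\{x_u:u\in Y\}$, and each semiadjacent pair $e$ of $Y$ is realized in $G$ either as $\sigma(e)$ itself or as a semiadjacent pair of $G$; resolving any such semiadjacent pair of $G$ to agree with $\sigma(e)$ is legitimate precisely because $G$ is antiprismatic and hence behaves correctly under every resolution. The four chosen vertices then induce the forbidden configuration, so $G$ is not antiprismatic, a contradiction.

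The only real content is the verification in the third step that the single clause of the thickening definition governing semiadjacent pairs is exactly strong enough to realize, for each semiadjacency of $H$, whichever direction $\sigma$ demands, together with the observation that the matching structure of semiadjacent pairs keeps these local choices independent. Everything else is a direct transcription of adjacencies, which is why the lemma is ``trivial but in the language of trigraphs''.
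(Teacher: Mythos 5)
Your proof is correct and follows essentially the same contraposition argument as the paper: lift a bad $4$-set of $H$ to $G$ by choosing one representative per part $X_v$, using the clauses of the thickening definition (and the matching property of semiadjacent pairs to avoid conflicts) to reproduce the forbidden trigraph pattern --- a claw or a $4$-set with at most one strongly adjacent pair --- in $G$. One small caution on phrasing: the antiprismatic trigraph condition is strictly stronger than ``every resolution is an antiprismatic graph'' (the latter does not force the required two \emph{strongly} adjacent pairs in each $4$-set), but your argument only ever uses the implication in the safe direction, so this imprecision does not affect the correctness of the proof.
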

\begin{proof}
Assume for a contradiction that either $H$ contains a claw (in the trigraph sense) or that $H$ contains four vertices among which at most one pair is strongly adjacent.  In either case, the thickening from $H$ to $G$ provides us with a claw in $G$ or a set of four vertices of $G$, among which at most one pair is strongly adjacent, a contradiction.
\end{proof}

\begin{corollary}
If $(A,B)$ is a nonskeletal homogeneous pair of cliques in an antiprismatic graph $G$, and we obtain the graph $G'$ by contracting $A$ and $B$ down to adjacent vertices $a$ and $b$ respectively, then $G'$ is antiprismatic and $G$ is a thickening of $G'$ under a matching that contains $ab$.
\end{corollary}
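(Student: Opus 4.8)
The plan is to exhibit $G$ as a thickening of $G'$ and then invoke the lemma just proved. The only property of nonskeletality I would use is that there is at least one edge \emph{and} at least one non-edge between $A$ and $B$ in $G$: if there were no edge between $A$ and $B$ then no edge could be removed and $(A,B)$ would be skeletal, while if $A\cup B$ were a clique then deleting any edge between $A$ and $B$ would lower $\omega(G[A\cup B])$, so again $(A,B)$ would be skeletal.

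Next I would define $G'$ explicitly. Its vertex set is $(V(G)\setminus(A\cup B))\cup\{a,b\}$; vertices of $V(G)\setminus(A\cup B)$ retain their adjacencies from $G$; $a$ is adjacent to $v\notin A\cup B$ precisely when $v$ is adjacent (in $G$) to all of $A$, and similarly for $b$ and $B$ (these are well defined because $(A,B)$ is homogeneous); and $ab\in E(G')$. Then $G$ is a thickening of $G'$ under the matching $M=\{ab\}$: take $I(a)=A$, $I(b)=B$, and $I(v)=\{v\}$ otherwise; the edges deleted between $I(a)$ and $I(b)$ are exactly the non-edges of $G$ between $A$ and $B$, which by the previous paragraph form a nonempty proper subset of the pairs between $A$ and $B$. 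Moreover $M$ is claw-neutral in $G'$: deleting edges between the two cliques of a homogeneous pair can never create a claw, since such a claw would have its centre in $A\cup B$, at least two of its leaves outside $A\cup B$, and those leaves --- being common neighbours of the centre --- are adjacent to all of the clique containing the centre, hence to the third leaf (or else $G$ would already contain a claw).

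Now the conclusion drops out. Viewing $G$ as an antiprismatic trigraph with no semiadjacent pair and $G'$ as the trigraph in which $a$ and $b$ are semiadjacent --- which is exactly what allows the above edge deletions to count as a trigraph thickening --- $G$ is a thickening of $G'$, so by the lemma $G'$ is antiprismatic. Since $\{a,b\}$ is the only semiadjacent pair of $G'$ and there is both an edge and a non-edge between $A$ and $B$, the edge $ab$ is changeable, so $G'$ is an honest antiprismatic graph containing $ab$, and $G$ is a thickening of it under the matching $\{ab\}$, as required.

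The routine parts are the homogeneity computations that make $G'$ well defined and the ``nonempty proper subset'' bookkeeping. The one place to be careful --- and the step I would treat as the real content --- is the graph/trigraph translation: one must check that multiplying $a$ and $b$ and then deleting a nonempty proper set of edges between them genuinely is a thickening of the trigraph $G'$ (so that the lemma, stated for trigraph thickenings, applies), and, conversely, that an antiprismatic trigraph whose unique semiadjacent pair is $\{a,b\}$ corresponds to an antiprismatic graph in which $ab$ is a present, changeable edge.
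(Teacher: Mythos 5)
This is essentially the paper's own argument: the corollary is meant to follow directly from the lemma via the trigraph/graph dictionary, and you set that dictionary up correctly, including the observation that nonskeletality supplies both an edge and a non-edge between $A$ and $B$ (so the deleted set is a nonempty proper subset). One slip: the claw-neutrality argument is wrong as stated. The blanket claim that deleting edges between the cliques of a homogeneous pair can never create a claw is false in general --- if $A\cup B$ is a clique and $v$, $x$ are respectively complete and anticomplete to $A\cup B$, then deleting $ab$ creates the claw $(v;a,b,x)$ --- and the case you analyse (centre in $A\cup B$, two leaves outside) is not the one that matters. A new claw in $G'-ab$ must use the new non-edge $ab$, so $a$ and $b$ appear as two of its \emph{leaves}, with the centre $v$ and third leaf $x$ outside $\{a,b\}$; one lifts this to $G$ by noting $v$ is complete and $x$ anticomplete to $A\cup B$ and replacing $a,b$ by a non-adjacent pair $a'\in A$, $b'\in B$, which exists precisely because $(A,B)$ is nonskeletal. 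In any case the step is dispensable: once the lemma tells you the contracted trigraph is antiprismatic, it is in particular claw-free as a trigraph, so both realizations $G'$ and $G'-ab$ are claw-free graphs, which is exactly the claw-neutrality of $\{ab\}$. With that repaired, the proof is sound.
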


As a consequence of this corollary, reducing a nonskeletal homogeneous pair of cliques in an antiprismatic thickening will leave us with an antiprismatic thickening.  We may therefore colour antiprismatic thickenings in the obvious way.

\begin{theorem}
Given an antiprismatic thickening $G$, we can find a $\gamma_\ell(G)$-colouring of $G$ in polynomial time.
\end{theorem}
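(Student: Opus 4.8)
The plan is to make the proof of Theorem~\ref{thm:antiprismaticlocal}, together with Theorem~\ref{thm:2local}, constructive, arguing by induction on $|V(G)|$. Given an antiprismatic thickening $G$, first apply Theorem~\ref{thm:skelhp} to obtain in polynomial time a skeletal subgraph $G'$ of $G$ with $\chi(G')=\chi(G)$. Since $G'$ is reached from $G$ by at most $m$ reductions of nonskeletal homogeneous pairs of cliques, each of which preserves the class of antiprismatic thickenings (this is exactly the corollary immediately preceding the present theorem), $G'$ is again an antiprismatic thickening; and as $\gamma_\ell$ is monotone under taking subgraphs, $\gamma_\ell(G')\le\gamma_\ell(G)$. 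A colouring of $G'$ with at most $\gamma_\ell(G')$ colours therefore lifts, via Theorem~\ref{thm:skelhp}, to a colouring of $G$ with at most $\gamma_\ell(G)$ colours, so it suffices to colour the skeletal antiprismatic thickening $G'$.

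Test all $O(n^3)$ triples of vertices of $G'$ to decide whether $G'$ contains a triad. If it does not, then $\alpha(G')\le 2$, and an optimal colouring of $G'$ corresponds to a maximum matching of $\overline{G'}$: we compute such a matching $M^*$ in polynomial time and return the colouring whose classes are the edges of $M^*$ together with a singleton for each unmatched vertex. This colouring uses $|V(G')|-|M^*|=\chi(G')$ colours, and $\chi(G')\le\gamma_\ell(G')$ by Theorem~\ref{thm:2local}.

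If $G'$ does contain a triad, then $\alpha(G')\ge 3$, so the corollary to Theorem~\ref{thm:antiprismaticlocal} guarantees that $G'$ has a good triad. We find one by brute force: for every triad $T$ and every $v\in G'-T$ we check, in polynomial time, whether $v$ has two neighbours in $T$, or a twin in $T$, or is trumped by a vertex of $T$. Given a good triad $T$, the graph $G'-T$ is again an antiprismatic thickening --- it is a thickening of the same underlying antiprismatic graph under the restriction of the same changeable matching, since deleting vertices preserves both ``antiprismatic'' and ``changeable matching'' --- so by the induction hypothesis we can colour $G'-T$ with at most $\gamma_\ell(G'-T)$ colours. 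Since $T$ is a good triad, $\gamma_\ell(G'-T)\le\gamma_\ell(G')-1$, and hence adjoining $T$ as a new colour class yields a colouring of $G'$ with at most $\gamma_\ell(G')$ colours, as required.

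The recursion clearly terminates: the only recursive call is on $G'-T$, which has three fewer vertices than $G$, so there are at most $n/3$ levels, each performing one skeletalization (polynomial by Theorem~\ref{thm:skelhp}), one triad search, and either a matching computation or a good-triad search, all polynomial. I expect the only point needing care to be the bookkeeping that both operations used --- reducing a nonskeletal homogeneous pair of cliques and deleting a good triad --- keep us inside the class of antiprismatic thickenings; this is what the preceding corollary and the stability of the defining properties under vertex deletion deliver, and it is precisely what makes the induction close.
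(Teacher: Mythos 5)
Your proof is correct and follows essentially the same route as the paper's: skeletalize via the reduction lemma, observe the result is still an antiprismatic thickening, handle the triad-free case by maximum matching in the complement, and otherwise find and remove a good triad and recurse. You spell out the $\alpha\le 2$ case and the preservation of the class under vertex deletion a bit more explicitly than the paper does, but the argument and its structure are the same.
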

\begin{proof}
Starting with $G$, we repeatedly apply Lemma \ref{lem:reduction}, removing edges to reach a subgraph $G'$ such that a $k$-colouring of $G'$ gives us a $k$-colouring of $G$ for any $k$.  As we just showed, $G'$ is an antiprismatic thickening, and therefore contains either no triad, in which case we can easily colour $G'$ and therefore $G$ in polynomial time, or contains a good triad $T$.  In the latter case, we remove $T$ and recursively $\gamma_\ell(G-T)$-colour $G-T$, noting that $G-T$ is again an antiprismatic thickening.

Since we can perform the recursion steps in polynomial time and there are $O(m)$ possible steps, we can $\gamma_\ell(G)$-colour $G$ in polynomial time.
\end{proof}

%%%%%%%%%%%%%%%%%%%%%%%%%%%%%%%%%%%%%%%%%%%%%%%%%%%%%%%%%%%%%%%%%%%%%%%%%%%%%%%%
%%%%%%%%%%%%%%%%%%%%%%%%%%%%%%%%%%%%%%%%%%%%%%%%%%%%%%%%%%%%%%%%%%%%%%%%%%%%%%%%
%%%%%%%%%%%%%%%%%%%%%%%%%%%%%%%%%%%%%%%%%%%%%%%%%%%%%%%%%%%%%%%%%%%%%%%%%%%%%%%%
%%%%%%%%%%%%%%%%%%%%%%%%%%%%%%%%%%%%%%%%%%%%%%%%%%%%%%%%%%%%%%%%%%%%%%%%%%%%%%%%
%%%%%%%%%%%%%%%%%%%%%%%%%%%%%%%%%%%%%%%%%%%%%%%%%%%%%%%%%%%%%%%%%%%%%%%%%%%%%%%%
%%%%%%%%%%%%%%%%%%%%%%%%%%%%%%%%%%%%%%%%%%%%%%%%%%%%%%%%%%%%%%%%%%%%%%%%%%%%%%%%
%%%%%%%%%%%%%%%%%%%%%%%%%%%%%%%%%%%%%%%%%%%%%%%%%%%%%%%%%%%%%%%%%%%%%%%%%%%%%%%%
%%%%%%%%%%%%%%%%%%%%%%%%%%%%%%%%%%%%%%%%%%%%%%%%%%%%%%%%%%%%%%%%%%%%%%%%%%%%%%%%
\subsection{Three-cliqued graphs}

Maffray and Preissmann proved that it is $\mathit{NP}$-complete to decide whether or not a triangle-free graph is three-colourable \cite{maffrayp96}.  Consequently it is $\mathit{NP}$-complete to decide whether or not a claw-free graph is three-cliqued.  This makes dealing with three-cliqued claw-free graphs a slightly delicate issue.  However, consider a claw-free graph $G$.  If $\alpha(G)\leq 2$ we know we can optimally colour it in polynomial time.  We will show that if $\alpha(G)=3$, then in polynomial time we can either $\gamma_\ell(G)$-colour $G$, or determine that $G$ is not three-cliqued.

\begin{lemma}\label{lem:3algo}
Let $G$ be a skeletal claw-free graph with $\alpha(G)=3$, and suppose $G$ contains no good triad.  Then in polynomial time we can $\gamma_\ell(G)$-colour $G$ or determine that $G$ is not three-cliqued.
\end{lemma}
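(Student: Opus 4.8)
The plan is to turn the proof of Theorem~\ref{thm:local} into a recursive colouring procedure, exploiting that clique cutsets, good triads, and the relevant Chudnovsky--Seymour decompositions can all be detected in polynomial time. We recurse with measure $|V(G)|+|E(G)|$, each reduction strictly decreasing it. First we reduce to the case with no clique cutset: such a cutset can be found in polynomial time (\cite{kingthesis}, \S 3.4.3), and if one separates $G$ into $G_a$ and $G_b$ we recurse on each and glue the colourings along the common clique; three-cliquedness is inherited by $G_a$ and $G_b$, and if either piece is reported not three-cliqued then neither is $G$. Similarly, whenever $\alpha$ drops to at most $2$ we colour optimally---hence with at most $\gamma_\ell$ colours---in polynomial time via a maximum matching in the complement, as behind Theorem~\ref{thm:2local}.

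\emph{Good-triad step.} We can decide in polynomial time whether $G$ has a good triad by testing each of the $O(n^3)$ triples. If $T$ is a good triad, delete it; since $G-T$ need not be skeletal we apply Theorem~\ref{thm:skelhp} to obtain a skeletal claw-free $G''$ with $\chi(G'')=\chi(G-T)$, with $\chi(\overline{G''})=3$ whenever $\chi(\overline{G-T})=3$, and with a polynomial-time rule lifting any $k$-colouring of $G''$ to one of $G-T$. We recurse on $G''$ (or colour it directly if $\alpha(G'')\le 2$). If this returns a $\gamma_\ell(G'')$-colouring we lift it to a $\gamma_\ell(G'')$-colouring of $G-T$ and add $T$ as a fresh colour class; since a good triad satisfies $\gamma_\ell(G-T)\le\gamma_\ell(G)-1$, this uses at most $\gamma_\ell(G)$ colours. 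If instead the recursion reports ``not three-cliqued,'' then so is $G$: if $\overline G$ were $3$-colourable, so would be its induced subgraph $\overline{G-T}$, and hence $\overline{G''}$.

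\emph{No-good-triad step.} Now $G$ has no clique cutset and no good triad. If $G$ admits a hex-join it is in particular three-cliqued; conversely, if $G$ is three-cliqued then, since $G$ is skeletal with $\alpha(G)\ge 3$ and has no good triad, $G\notin\TTC_4$ (a skeletal graph in $\TTC_4$ with $\alpha\ge 3$ contains a good triad), so Theorem~\ref{thm:3decomp} yields a hex-join of $G$ into $(G_1,A_1,B_1,C_1)$ and $(G_2,A_2,B_2,C_2)$ with $G_1$ in one of $\TTC_1,\dots,\TTC_6$; the Observation on good triads and hex-joins together with Lemmas~\ref{lem:min32}, \ref{lem:min33}, \ref{lem:min35} and \ref{lem:min36} rule out $G_1\in\TTC_2\cup\TTC_3\cup\TTC_5\cup\TTC_6$, forcing $G_1\in\TTC_1$. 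Accordingly we search in polynomial time for a hex-join of $G$ with a term in $\TTC_1$ (recognising $\TTC_1$ via line-graph recognition and the defining conditions of $\TTC_1$); if none exists we output ``$G$ is not three-cliqued.'' Otherwise $G_1=L(H)$ for a bipartite multigraph $H$ with stable set $\{a,b,c\}$, and we run the case analysis of the proof of Lemma~\ref{lem:min31}: its Case~1 exhibits a good triad, already excluded, while its Cases~2 and 3 delete a specified edge set of $G_1$, producing a proper claw-free three-cliqued subgraph $G'$ with $\chi(G')=\chi(G)$ together with an explicit polynomial-time rule lifting any $k$-colouring of $G'$ to one of $G$. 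We re-skeletalise $G'$ via Theorem~\ref{thm:skelhp} and recurse (strictly fewer edges, and $\gamma_\ell$ has not increased), then lift the resulting colouring back to $G$.

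\emph{Main obstacle.} The delicate ingredient is the polynomial-time detection of a hex-join of $G$ and of the membership of a hex-join term in $\TTC_1$; this relies on the algorithmic accessibility of the Chudnovsky--Seymour decompositions together with polynomial line-graph recognition. The rest is bookkeeping: checking that the ``not three-cliqued'' verdict propagates correctly at every step---which holds because three-cliquedness is inherited under induced subgraphs, under clique-cutset pieces, under the edge reductions of Lemma~\ref{lem:reduction}, and under the edge deletions of Lemma~\ref{lem:min31}---and that each colouring-lifting step (across clique cutsets, across the good-triad deletion, across the $\TTC_1$ edge deletions, and across skeletalisation via Lemma~\ref{lem:reduction}) runs in polynomial time without increasing the number of colours. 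Since each reduction decreases $|V(G)|+|E(G)|$, the recursion terminates after polynomially many steps.
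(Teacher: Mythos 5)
Your overall strategy — isolate a hex-join term $(G_1,A_1,B_1,C_1)$ in $\TTC_1$ (the only possibility once good triads and the other base classes are excluded) and then run the Case 2/Case 3 edge-deletion machinery of Lemma~\ref{lem:min31} — matches the paper's plan. But the step you flag as the ``main obstacle'' is in fact a genuine gap: you assert we can ``search in polynomial time for a hex-join of $G$ with a term in $\TTC_1$'' by appealing to ``the algorithmic accessibility of the Chudnovsky--Seymour decompositions,'' and this is precisely what cannot be taken for granted here. The lemma's whole point is that three-cliquedness itself is $\mathit{NP}$-complete to recognize for claw-free graphs (as the paper notes via Maffray--Preissmann), so you cannot first find the three-cliqued partition and then look for a hex-join respecting it; you need a way to locate the candidate term $G_1$ \emph{without} already knowing $(A,B,C)$.

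The paper's resolution, which is absent from your proposal, is the \emph{triad graph} $t(G)$: put an edge between two vertices of $G$ iff some triad of $G$ contains both, and take connected components. No triad crosses a hex-join, so every component of $t(G)$ lies entirely inside one hex-join term; and the paper proves (using that every vertex of a $\TTC_1$ term is in a triad, plus the structure of the underlying bipartite multigraph $H$) that the minimal $\TTC_1$ term $G_1$ is in fact exactly a single component $X$ of $t(G)$. This reduces the search to iterating over the $O(n)$ components of $t(G)$, testing each induced subgraph for membership in $\TTC_1$ (tractable, via line-graph recognition of a specific bipartite $H$ and the defining triad condition), and then reading off $A_2,B_2,C_2$ from adjacencies to $G_1$. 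Without this triad-graph reduction — or an equivalent way to bound the candidate hex-join cuts — the ``search for a hex-join'' step in your proposal has no polynomial bound, and the recursion does not go through.
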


\begin{proof}
We define the {\em triad graph} $t(G)$ of $G$.  We let $V(t(G))=V(G)$, and two vertices are adjacent in $t(G)$ precisely if some triad in $G$ contains both of them.  We can easily find the components of $t(G)$ in polynomial time; there is at least one which is not a singleton.

Suppose first that $G$ is three-cliqued.  Then it admits a hex-join into terms $(G_1,A_1,B_1,C_1)$ and (possibly empty) $(G_2,A_2,B_2,C_2)$ such that $G_1$ is minimal and contains a triad.  Since $G$ contains no good triad, it follows from the proofs of Lemmas \ref{lem:min32}, \ref{lem:min33}, \ref{lem:min35}, and \ref{lem:min36} that $(G_1,A_1,B_1,C_1)$ is in $\TTC_1$.  Furthermore the graph from which $G_1$ arises, i.e.\ $H$ such that $G_1=L(H)$, has more than three centres and hence more than six vertices, otherwise $G$ would contain a good triad.

Suppose first that $G$ is three-cliqued, and let $X$ be a non-singleton component of $t(G)$.  Then it admits a hex-join into terms $(G_1,A_1,B_1,C_1)$ and (possibly empty) $(G_2,A_2,B_2,C_2)$ such that $G_1$ is minimal and contains a triad.  Since $G$ contains no good triad, it follows from the proofs of Lemmas \ref{lem:min32}, \ref{lem:min33}, \ref{lem:min35}, and \ref{lem:min36} that $(G_1,A_1,B_1,C_1)$ is in $\TTC_1$.  Furthermore the graph from which $G_1$ arises, i.e.\ $H$ such that $G_1=L(H)$, has more than three centres and hence more than six vertices, otherwise $G$ would contain a good triad.

We claim that there is a component $X$ of $t(G)$ such that $X = V(G_1)$.  First note that any component of $t(G)$ is either contained in $V(G_1)$ or disjoint from $V(G_1)$, since no triad can span both sides of a hex-join.  Since every vertex of $G_1$ is in a triad, $V(G_1)$ is covered by non-singleton components of $t(G)$.  In the case that $G_1$ contains a simplicial vertex $v$, it is easy to show that $V(G_1)$ is a component of $t(G)$: since every vertex is in a triad, every vertex not in $N(v)$ (in $G_1$) is in a vertex with $v$, and every vertex in $N(v)$ is in a triad, which is necessarily not contained in $N(v)\cup \{v\}$.  So we may assume that $G_1$ contains no simplicial vertices.

Now it is sufficient to prove that every vertex in $A_1$ is in the same component of $t(G)$.  Bearing in mind the structure of the bipartite multigraph $H$ from which $G_1$ is constructed, the fact that $G_1$ has no simplicial vertex implies that the simple graph underlying $H$ is a complete bipartite graph minus a matching.  Therefore given two distinct edges of $H$ incident to $a$, there must be two triads in $G_1$ containing their corresponding vertices, such that the triads intersect in two vertices.  Therefore there is a component $X$ of $t(G)$ such that $X=V(G_1)$.

For every component $X$ of $t(G)$ we can test $G[X]$ for membership in $\TTC_1$ in polynomial time, because any graph in $\TTC_1$ is a proper thickening of a line graph of a specific bipartite graph $H$.  In particular we can find $(G_1,A_1,B_1,C_1)$ efficiently, because we can find $H$ efficiently and the definition of $\TTC_1$ implies that the choice of vertices $\{a,b,c\}$ of $H$ is unique.  Thus since $G_1$ is a term in a hex-join, we can determine $A_2$, $B_2$, and $C_2$ by taking a vertex in $G_2$ and looking at its neighbourhood in $G_1$, assuming that $G$ is three-cliqued.

We now proceed as in the proof of Lemma \ref{lem:min31}.  With our base graph $H$ of $(G_1,A_1,B_1,C_1)$ in hand, it is not hard to see that we can decide which action is necessary in polynomial time.  In each case we find a triad whose removal is guaranteed to lower $\gamma_\ell(G)$ or we remove edges from $G$ to reach a proper subgraph $G'$ such that $\chi(G')=\chi(G)$.  From the proof of Lemma \ref{lem:min31} it is clear that we can find $G'$ in polynomial time, and given a $k$-colouring of $G'$ we can find a $k$-colouring of $G$ in polynomial time.  We can recursively $\gamma_\ell(G)$-colour $G'$ in polynomial time, possibly appealing to the fact that we can find good triads efficiently.

Now suppose $G$ is not three-cliqued, which must be the case if no component of $t(G)$ induces a subgraph in $\TTC_1$.  If there is a component $X$ of $t(G)$ such that $G[X]$ is in $\TTC_1$, then again we have a unique choice of $\{a,b,c\}$ in $H$ and a unique expression of $G[X]$ as $(G_1,A_1,B_1,C_1)$.  Let $A_2$ be the set of vertices in $G-X$ which are complete to $A_1\cup B_1$; we define $B_2$ and $C_2$ accordingly.  Since $G$ is not three-cliqued, either $A_2$, $B_2$, and $C_2$ do not partition the vertices of $G-X$, or they are not all cliques.  Either way we can determine this in polynomial time.
\end{proof}

Using these two lemmas we can prove the desired result:

\begin{theorem}
Let $G$ be a claw-free graph with $\alpha(G)\geq 3$.  Then in polynomial time we can either $\gamma_\ell(G)$-colour $G$ or determine that $\chi(\gbar)\geq 4$.
\end{theorem}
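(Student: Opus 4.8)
The plan is to stitch together the reductions of this section around Lemma~\ref{lem:3algo}, tracking carefully how the clique cover number $\chi(\gbar)$ behaves under each one. First, if $\alpha(G)\ge 4$ then $\chi(\gbar)\ge\omega(\overline G)=\alpha(G)\ge 4$ automatically, so we report this; whether $\alpha(G)\ge 4$ can be decided in $O(n^{4})$ time by inspecting all $4$-element subsets, or via a polynomial maximum stable set algorithm for claw-free graphs. So assume $\alpha(G)=3$; this already forces $\chi(\gbar)\ge\omega(\overline G)=3$, so ``determine $\chi(\gbar)\ge 4$'' becomes ``determine that $G$ is not three-cliqued''. We then proceed recursively.

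If $G$ has a clique cutset, apply the standard clique-cutset decomposition (see \cite{kingthesis}, \S 3.4.3): a $\gamma_\ell(G)$-colouring is assembled from $\gamma_\ell$-colourings of the two strictly smaller sides by permuting colours on the cutset clique, and each side is an induced subgraph of $G$, hence has $\alpha\le 3$, inherits a $3$-clique partition if $G$ has one, and has clique cover number at most $\chi(\gbar)$; so recursing on the two sides either colours $G$ or exhibits a side with $\chi(\overline{\cdot})\ge 4$, giving $\chi(\gbar)\ge 4$. If $G$ is clique-cutset-free but not skeletal, apply Theorem~\ref{thm:skelhp} to pass to a skeletal claw-free spanning subgraph $G'$ with $\chi(G')=\chi(G)$ and $\gamma_\ell(G')\le\gamma_\ell(G)$, from which a $k$-colouring lifts back to $G$ in polynomial time, and with $G$ three-cliqued $\Rightarrow$ $G'$ three-cliqued (item~3). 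Since $\alpha(G')\ge\alpha(G)=3$: if $\alpha(G')\ge 4$ then $G'$ is not three-cliqued, hence by the contrapositive of item~3 neither is $G$, so we report $\chi(\gbar)\ge 4$; and if $G$ were three-cliqued then $\alpha(G')\le\chi(\overline{G'})=3$ by item~3, so that case does not arise and we may assume $\alpha(G')=3$. We then recurse on $G'$, noting that edge deletion may create a new clique cutset, so a fresh clique-cutset check is made.

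This leaves the core case: $G$ skeletal, clique-cutset-free, claw-free, with $\alpha(G)=3$. Test in polynomial time (over all triples) whether $G$ has a good triad. If not, invoke Lemma~\ref{lem:3algo}: in polynomial time it either returns a $\gamma_\ell(G)$-colouring, which we output, or certifies that $G$ is not three-cliqued, whence $\chi(\gbar)\ge 4$. If $G$ has a good triad $T$, remove it; $G-T$ is claw-free with $\alpha(G-T)\le 3$, so either $\alpha(G-T)\le 2$ and we colour $G-T$ with at most $\gamma_\ell(G-T)\le\gamma_\ell(G)-1$ colours via the constructive proof of Theorem~\ref{thm:2local} (a maximum matching in the complement), or $\alpha(G-T)=3$ and we recurse. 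If $G-T$ is coloured with at most $\gamma_\ell(G)-1$ colours, we add $T$ as a new colour class to get a $\gamma_\ell(G)$-colouring of $G$; if instead the recursion reports $\chi(\overline{G-T})\ge 4$, then since $G-T$ is an induced subgraph of $G$ we conclude $\chi(\gbar)\ge 4$. In particular graphs such as icosahedral thickenings, which are never three-cliqued, need no special treatment here: either they contain a good triad and the recursion eventually colours or diagnoses the resulting reduct, or Lemma~\ref{lem:3algo} reports them non-three-cliqued.

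For termination and running time: every recursive call removes a vertex, removes an edge, or splits at a clique cutset into strictly smaller graphs, so the recursion terminates; with the usual bookkeeping bounding the number of atoms produced, and since each non-recursive step (Theorem~\ref{thm:skelhp}, enumerating $3$- and $4$-subsets, one call to Lemma~\ref{lem:3algo} or to Theorem~\ref{thm:2local}) is polynomial, the whole algorithm runs in polynomial time. For correctness, note that if $G$ is three-cliqued then every reduction above preserves three-cliquedness and keeps $\alpha\le 3$, so no step fails and Lemma~\ref{lem:3algo} is never applied to a non-three-cliqued input, so $G$ is coloured; and the algorithm reports ``$\chi(\gbar)\ge 4$'' only when some induced subgraph, skeletal reduct, or atom is certified non-three-cliqued or has $\alpha\ge 4$, each of which forces $\chi(\gbar)\ge 4$. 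The one point needing real attention — and essentially the only obstacle — is this last certification: $\chi(\gbar)$ is monotone under passing to induced subgraphs (clique-cutset pieces, good-triad deletion), but for the edge-deleting skeletal reduction monotonicity runs the wrong way, so there one must instead invoke item~3 of Theorem~\ref{thm:skelhp} and its contrapositive, which is legitimate precisely because $\alpha(G)\ge 3$ pins $\chi(\gbar)\ge 3$.
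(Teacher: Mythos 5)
Your proof is correct and follows essentially the same approach as the paper's: reduce to a skeletal graph via Theorem~\ref{thm:skelhp}, recursively strip off good triads, and invoke Lemma~\ref{lem:3algo} when none remains. You are more explicit than the paper about the bookkeeping — the $\alpha(G)\geq 4$ shortcut, the $\alpha(G-T)\leq 2$ base case, clique cutsets, and the use of the contrapositive of item~3 of Theorem~\ref{thm:skelhp} to propagate the certificate $\chi(\gbar)\geq 4$ backward through an edge-deleting reduction — all details the paper leaves implicit but which your treatment correctly pins down.
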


\begin{proof}
By Theorem \ref{thm:skelhp} we can assume $G$ is skeletal.  If $G$ contains a good triad $T$, we can find $T$ in polynomial time and recursively $\gamma_\ell(G)-1$ colour $G-T$, or determine that $\chi(\gbar - T)\geq 4$.  If $G$ does not contain a good triad, then the result follows immediately from Lemma \ref{lem:3algo}.
\end{proof}

%%%%%%%%%%%%%%%%%%%%%%%%%%%%%%%%%%%%%%%%%%%%%%%%%%%%%%%%%%%%%%%%%%%%%%%%%%%%%%%%
%%%%%%%%%%%%%%%%%%%%%%%%%%%%%%%%%%%%%%%%%%%%%%%%%%%%%%%%%%%%%%%%%%%%%%%%%%%%%%%%
%%%%%%%%%%%%%%%%%%%%%%%%%%%%%%%%%%%%%%%%%%%%%%%%%%%%%%%%%%%%%%%%%%%%%%%%%%%%%%%%
%%%%%%%%%%%%%%%%%%%%%%%%%%%%%%%%%%%%%%%%%%%%%%%%%%%%%%%%%%%%%%%%%%%%%%%%%%%%%%%%
%%%%%%%%%%%%%%%%%%%%%%%%%%%%%%%%%%%%%%%%%%%%%%%%%%%%%%%%%%%%%%%%%%%%%%%%%%%%%%%%
%%%%%%%%%%%%%%%%%%%%%%%%%%%%%%%%%%%%%%%%%%%%%%%%%%%%%%%%%%%%%%%%%%%%%%%%%%%%%%%%
%%%%%%%%%%%%%%%%%%%%%%%%%%%%%%%%%%%%%%%%%%%%%%%%%%%%%%%%%%%%%%%%%%%%%%%%%%%%%%%%
%%%%%%%%%%%%%%%%%%%%%%%%%%%%%%%%%%%%%%%%%%%%%%%%%%%%%%%%%%%%%%%%%%%%%%%%%%%%%%%%
\subsection{Graphs that are not three-cliqued}

By Theorem \ref{thm:structure}, if $G$ is a skeletal claw-free graph that is not three-cliqued and does not contain a clique cutset, then one of the following applies:
\begin{enumerate*}
\item $G$ is an antiprismatic thickening
\item $G$ is an icosahedral thickening
\item $G$ is quasi-line
\item $G$ admits a canonical interval 2-join or an antihat 2-join or a pseudo-line 2-join or a strange 2-join or a gear 2-join.
\end{enumerate*}

We already know how to deal with the first three cases efficiently, either by colouring in polynomial time or reducing to a smaller colouring problem.  For each of the four latter types of generalized 2-join, of the form  $((X_1,Y_1),(X_2,Y_2))$, there is a $W_5$ in $G_2$ whose neighbourhood contains $G_2$.  Given the correct choice of a $W_5$ in $G$, it is straightforward to find an appropriate generalized 2-join separating $G_1$ from $G_2$ in polynomial time (see \cite{kingthesis} \S 8.2 for further details).  There are $O(n^6)$ 5-wheels in $G$, so we can find such a generalized 2-join in polynomial time.

Since $G$ is skeletal, we can easily check whether or not $G_2$ is a gear strip in polynomial time: a skeletal gear strip has at most twelve equivalence classes of twin vertices.  So assume that we have an antihat 2-join or a pseudo-line 2-join or a strange 2-join.  We can easily check for a strange 2-join similarly to checking for a gear 2-join.  Checking if we have an antihat 2-join is straightforward once we determine the adjacency between $X_2$ and $Y_2$.  Otherwise we have a pseudo-line 2-join.  In this case, $I(e_1)$ and $I(e_2)$ are precisely those vertices in $X_2$ and $Y_2$ respectively that are complete to $G_2-X_2-Y_2$.  Furthermore, adding all edges between $I(e_1)$ and $I(e_2)$ leaves us with a line graph, the structure of which we can easily determine.  Thus we can find these desired generalized 2-joins in polynomial time.

To reduce on these generalized 2-joins, we now consider the proof of Lemmas \ref{lem:compantihat}, \ref{lem:compstrange}, \ref{lem:compgear}, and \ref{lem:composition}.  We do one of two things: reduce the size of the graph and apply induction, or complete the $l$-colouring of $G$ in one step.  Just as with Lemma \ref{lem:quasilinemce2} in \cite{kingr08}, showing that we can do this in polynomial time is straightforward given the proof of the lemma.  Thus we get the desired algorithmic result:

\begin{theorem}
For any claw-free graph $G$, we can $\gamma(G)$-colour $G$ in polynomial time.
\end{theorem}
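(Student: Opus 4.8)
The plan is to make the proof of Theorem~\ref{thm:main} constructive: follow its case analysis and check that every structural decision and every colouring step runs in polynomial time. Since $\omega(G)$ is computable in polynomial time for claw-free graphs, we may compute the target value $\gamma(G)=\lceil\tfrac12(\Delta(G)+1+\omega(G))\rceil$ up front, so it suffices to \emph{exhibit} some $\gamma(G)$-colouring, existence being guaranteed by Theorem~\ref{thm:main}. We may assume $G$ has no clique cutset, since decomposing along clique cutsets reduces the problem to the cutset-free pieces in polynomial time (\cite{kingthesis}~\S3.4.3) and colourings are reassembled by permuting colours on the cutset clique; and by Theorem~\ref{thm:skelhp} we may assume $G$ is skeletal, lifting any colouring of the skeletal reduction back to $G$ within the stated bound.

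First I would dispose of the easy regimes. If $\alpha(G)\le 2$, an optimal colouring corresponds to a maximum matching in $\gbar$ (the Edmonds--Gallai machinery behind Theorem~\ref{thm:2local}), computable in polynomial time. Recognising quasi-line graphs is polynomial (test that each neighbourhood induces the complement of a bipartite graph), and Theorem~\ref{thm:quasiline} then produces a $\gamma_\ell(G)$-colouring, hence a $\gamma(G)$-colouring. So assume $G$ is skeletal, cutset-free, not quasi-line, with $\alpha(G)\ge 3$. Next I would run the preceding theorem's algorithm, which for any claw-free graph with $\alpha(G)\ge 3$ either produces a $\gamma_\ell(G)$-colouring or certifies $\chi(\gbar)\ge 4$; this step in particular handles all antiprismatic thickenings (via good triads, using the corollary to Theorem~\ref{thm:antiprismaticlocal}) and all three-cliqued claw-free graphs. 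It is essential here that we never need to \emph{decide} three-cliquedness outright, since that is $\mathit{NP}$-hard; the algorithm either colours $G$ or certifies $\chi(\gbar)\ge 4$ without making that decision. If it returns a colouring we are done.

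Otherwise $\chi(\gbar)\ge 4$, and Theorem~\ref{thm:structure} leaves two possibilities: $G$ is an icosahedral thickening, or $G$ admits a canonical interval, antihat, strange, pseudo-line, or gear $2$-join. Skeletal icosahedral thickenings have a bounded number of twin classes (computable in polynomial time \cite{cournierh94}), so they are recognised in polynomial time and coloured either by the bounded-size integer program of \S\ref{sec:algorithmic} or, more efficiently, by following the proof of Theorem~\ref{thm:icosahedral}. For the $2$-join case, every such join has a $W_5$ in $G_2$ whose closed neighbourhood contains $G_2$, so I would enumerate all $O(n^6)$ induced $5$-wheels of $G$ and, for each, try to read off the corresponding generalized $2$-join $((X_1,Y_1),(X_2,Y_2))$ in polynomial time. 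The type is then identified efficiently: gear and strange strips have boundedly many twin classes; an antihat $2$-join is recognised once the adjacency between $X_2$ and $Y_2$ is known; and otherwise, after adding all edges between $I(e_1)$ and $I(e_2)$, $G_2$ becomes a line graph whose structure is recoverable, giving a pseudo-line $2$-join. Having the join, I would recursively $\gamma(G)$-colour $G_1$ (an induced, strictly smaller, claw-free subgraph, itself first split along any clique cutsets), pad it to $\gamma(G)$ colours, and, noting $\gamma(G)\ge\gamma_\ell^j(H_2),\gamma_g^j(H_2)$, apply Lemma~\ref{lem:compantihat}, \ref{lem:compstrange}, \ref{lem:compgear}, or \ref{lem:composition} to extend the colouring to all of $G$; each of those proofs either completes the colouring in one step (a perfect-graph colouring, or an appeal to Theorem~\ref{thm:local} or Lemma~\ref{lem:quasilinemce2}, all polynomial) or deletes a bounded number of stable sets and recurses on a strictly smaller instance.

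The main obstacle is bookkeeping the total running time across the nested decompositions: clique-cutset splitting, skeletal reduction, $2$-join peeling, and the internal recursions of Lemmas~\ref{lem:compantihat}--\ref{lem:composition} all invoke one another. What keeps this polynomial is that every reduction strictly decreases a simple measure---the vertex count, or the palette size~$l$---while costing only polynomial time per step, exactly as in the treatment of Lemma~\ref{lem:quasilinemce2} in \cite{kingr08}; together with the fact that the one ingredient that looks non-constructive, deciding three-cliquedness, is never needed, because the $\alpha(G)\ge 3$ algorithm either colours $G$ or certifies $\chi(\gbar)\ge 4$. Assembling these observations yields a polynomial-time algorithm producing a $\gamma(G)$-colouring of any claw-free graph $G$.
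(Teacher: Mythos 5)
Your proposal is correct and follows essentially the same route as the paper's Section~\ref{sec:algorithmic}: skeletal reduction and clique-cutset splitting, handling $\alpha\le 2$ and quasi-line graphs directly, invoking the $\alpha\ge 3$ algorithm (which sidesteps the NP-hard three-cliquedness decision by either colouring or certifying $\chi(\gbar)\ge 4$), recognising icosahedral thickenings by their bounded number of twin classes, and locating the remaining generalized $2$-joins by enumerating $5$-wheels. The slight reorganisation of the case order does not change the substance, and the bookkeeping observation---that the recursion is controlled by a decreasing measure (vertex count or palette size $l$) at polynomial cost per step---is exactly the argument the paper relies on.
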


%%%%%%%%%%%%%%%%%%%%%%%%%%%%%%%%%%%%%%%%%%%%%%%%%%%%%%%%%%%%%%%%%%%%%%%%%%%%%%%%
%%%%%%%%%%%%%%%%%%%%%%%%%%%%%%%%%%%%%%%%%%%%%%%%%%%%%%%%%%%%%%%%%%%%%%%%%%%%%%%%
%%%%%%%%%%%%%%%%%%%%%%%%%%%%%%%%%%%%%%%%%%%%%%%%%%%%%%%%%%%%%%%%%%%%%%%%%%%%%%%%
%%%%%%%%%%%%%%%%%%%%%%%%%%%%%%%%%%%%%%%%%%%%%%%%%%%%%%%%%%%%%%%%%%%%%%%%%%%%%%%%
%%%%%%%%%%%%%%%%%%%%%%%%%%%%%%%%%%%%%%%%%%%%%%%%%%%%%%%%%%%%%%%%%%%%%%%%%%%%%%%%
%%%%%%%%%%%%%%%%%%%%%%%%%%%%%%%%%%%%%%%%%%%%%%%%%%%%%%%%%%%%%%%%%%%%%%%%%%%%%%%%
%%%%%%%%%%%%%%%%%%%%%%%%%%%%%%%%%%%%%%%%%%%%%%%%%%%%%%%%%%%%%%%%%%%%%%%%%%%%%%%%
%%%%%%%%%%%%%%%%%%%%%%%%%%%%%%%%%%%%%%%%%%%%%%%%%%%%%%%%%%%%%%%%%%%%%%%%%%%%%%%%
%%%%%%%%%%%%%%%%%%%%%%%%%%%%%%%%%%%%%%%%%%%%%%%%%%%%%%%%%%%%%%%%%%%%%%%%%%%%%%%%
%%%%%%%%%%%%%%%%%%%%%%%%%%%%%%%%%%%%%%%%%%%%%%%%%%%%%%%%%%%%%%%%%%%%%%%%%%%%%%%%
%%%%%%%%%%%%%%%%%%%%%%%%%%%%%%%%%%%%%%%%%%%%%%%%%%%%%%%%%%%%%%%%%%%%%%%%%%%%%%%%
%%%%%%%%%%%%%%%%%%%%%%%%%%%%%%%%%%%%%%%%%%%%%%%%%%%%%%%%%%%%%%%%%%%%%%%%%%%%%%%%
%%%%%%%%%%%%%%%%%%%%%%%%%%%%%%%%%%%%%%%%%%%%%%%%%%%%%%%%%%%%%%%%%%%%%%%%%%%%%%%%
\section{Proofs on homogeneous pairs of cliques}\label{sec:lemmas}

Finally, we give the postponed proofs of Lemmas \ref{lem:reduction} and \ref{lem:findhp}.

%%%%%%%%%%%%%%%%%%%%%%%%%%%%%%%%%%%%%%%%%%%%%%%%%%%%%%%%%%%%%%%%%%%%%%%%%%%%%%%%
%%%%%%%%%%%%%%%%%%%%%%%%%%%%%%%%%%%%%%%%%%%%%%%%%%%%%%%%%%%%%%%%%%%%%%%%%%%%%%%%
%%%%%%%%%%%%%%%%%%%%%%%%%%%%%%%%%%%%%%%%%%%%%%%%%%%%%%%%%%%%%%%%%%%%%%%%%%%%%%%%
%%%%%%%%%%%%%%%%%%%%%%%%%%%%%%%%%%%%%%%%%%%%%%%%%%%%%%%%%%%%%%%%%%%%%%%%%%%%%%%%
\subsection{Reducing on a nonskeletal homogeneous pair of cliques}

We now prove Lemma \ref{lem:reduction}, which is a straightforward generalization of Lemma 9 in \cite{kingr08}.  This tells us exactly how we reduce on a nonskeletal homogeneous pair of cliques $(A,B)$ and how we can manipulate colourings on $(A,B)$.

\begin{proof}[Proof of Lemma \ref{lem:reduction}]
Assume $|A| \geq |B|$.  We can find a maximum clique $X$ of $G[A\cup B]$ in $O(n^{5/2})$ time, choosing $X$ to be $A$ if $A$ is a maximum clique.  To construct $G'$ from $G$, we remove precisely the edges between $A$ and $B$ that are not in $X$.  Clearly $\omega(G'[A\cup B]) = \omega(G[A \cup B]) = |X|$, and $(A,B)$ is a skeletal homogeneous pair of cliques in $G'$, so (1) holds.  Since $(A,B)$ is not skeletal, $G'$ is a proper subgraph of $G$.  We can find $G'$ in $O(n^{5/2})$ time because we can find $X$ in $O(n^{5/2})$ time \cite{hopcroftk73}.

We must prove that $G'$ is claw-free.  Suppose there is a vertex $v$ seeing three mutually nonadjacent vertices $a, b, c$ in $G'$.  Then without loss of generality, $a \in A$, $b \in B$, and $c \notin A \cup B$ since $G$ is claw-free.  Since $c$ sees neither $a$ nor $b$ in $G'$, $c$ sees nothing in $A \cup B$ in $G$.  It follows that $v \notin A\cup B$, so $v$ sees all of $A \cup B$ in $G$.  Therefore since $A$ and $B$ are not complete to each other in $G$, $G$ contains a claw centred at $v$, a contradiction.  So $G'$ is claw-free.

Now suppose $G$ is quasi-line; we must show that $G'$ is quasi-line.  Suppose a vertex $v$ is not bisimplicial in $G'$ and let $(S,T)$ be a partitioning of $N_G(v)$ into two cliques.  If $v$ has a neighbour $w \in S \setminus (A\cup B)$ that sees $A$ but not $B$, then $B \subseteq T$ and thus $S \cup A$ and $T \setminus A$ are two cliques covering $N_{G'}(v)$ in $G'$.  By symmetry we can assume that if no such $w$ exists then all of $N_{G'}(v) \setminus (A \cup B)$ sees $A \cup B$, therefore $(S \cup A) \setminus B$ and $(T\cup B)\setminus A$ are two cliques covering $N_{G'}(v)$ in $G'$.  Therefore $G'$ is quasi-line if $G$ is quasi-line.  This proves (2).

Let $c_{G'}$ be a proper colouring of $G'$ using $k \geq \chi(G')$ colours.  Since $(A,B)$ is a homogeneous pair, to construct a $k$-colouring of $G$ it is enough to find a colouring of $G[A\cup B]$ that uses the same set of colours as $c_{G'}$ on $A$ and on $B$.  We can do this in $O(n^{5/2})$ time because the number of colours which appear on both $A$ and $B$ in the colouring of $G'$ is at most the maximum size of a matching in $\gbar'$, which is the same as the size of a maximum matching in $\gbar$, i.e.\ $|(A \cup B)-X|$.

Since $G[A\cup B]$ is perfect, this extends to fractional colourings.  Specifically, for any $l\geq \omega(G[A\cup B])$ there is a fractional $l$-colouring of $G[A\cup B]$.  Suppose we have a fractional $k$-colouring of $G'$.  This colouring uses weight $l\geq \omega(G[A\cup B])$ on $A\cup B$, so since $(A,B)$ is a homogeneous pair of cliques we can combine the colouring of $G'-(A\cup B)=G-(A\cup B)$ with a fractional $l$-colouring of $G[A\cup B]$ to find a fractional $k$-colouring of $G$.  This proves (3).

Suppose that $G$ is three-cliqued.  To prove (4), it suffices to prove that $\gbar$ has a 3-colouring in which no colour appears in both $A$ and $B$.  If colour $c_1$ appears in both $A$ and $B$ then since $G[A\cup B]$ is not a clique, a second colour $c_2$ must appear in $A\cup B$; assume $c_2$ appears in $A$.  In this case we can give all vertices of $A$ colour $c_2$ and give all colours in $B$ colour $c_1$ and since $(A,B)$ is a homogeneous pair of cliques in $G$, the result is a valid 3-colouring of $\overline{G'}$.  This proves (4).
\end{proof}

%%%%%%%%%%%%%%%%%%%%%%%%%%%%%%%%%%%%%%%%%%%%%%%%%%%%%%%%%%%%%%%%%%%%%%%%%%%%%%%%
%%%%%%%%%%%%%%%%%%%%%%%%%%%%%%%%%%%%%%%%%%%%%%%%%%%%%%%%%%%%%%%%%%%%%%%%%%%%%%%%
%%%%%%%%%%%%%%%%%%%%%%%%%%%%%%%%%%%%%%%%%%%%%%%%%%%%%%%%%%%%%%%%%%%%%%%%%%%%%%%%
%%%%%%%%%%%%%%%%%%%%%%%%%%%%%%%%%%%%%%%%%%%%%%%%%%%%%%%%%%%%%%%%%%%%%%%%%%%%%%%%
\subsection{Finding homogeneous pairs of cliques}

Everett, Klein, and Reed gave a $O(mn^3)$ algorithm for finding homogeneous pairs \cite{everettkr97}, but did not consider the restricted case of homogeneous pairs of cliques.

In \cite{kingr08} we gave an $O(n^2m)$-time algorithm for finding a nonlinear homogeneous pair of cliques; in \cite{chudnovskyk11} (Proposition 10) the same algorithm is shown to be implementable in $O(m^2)$ time, even in the setting of trigraphs.

\begin{lemma}\label{lem:nonlinearhpalgorithmic}
For any graph $G$ we can find a nonlinear homogeneous pair of cliques in $G$, or determine that none exists, in $O(m^2)$ time.
\end{lemma}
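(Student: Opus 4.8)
The plan is to recast this as the problem of locating a small ``witness'' configuration that certifies nonlinearity and then growing it into a full homogeneous pair of cliques, following the algorithms of \cite{kingr08} and \cite{chudnovskyk11}. The starting point is the characterisation already recorded in the paper: a homogeneous pair of cliques $(A,B)$ is nonlinear precisely when $G[A,B]$ contains an induced $C_4$, i.e.\ there are distinct $a_1,a_2\in A$ and $b_1,b_2\in B$ with $a_1b_1,a_2b_2\in E(G)$ and $a_1b_2,a_2b_1\notin E(G)$ (the edges $a_1a_2$ and $b_1b_2$ being automatic since $A$ and $B$ are cliques). Conversely, any homogeneous pair of cliques containing such a quadruple in this configuration is nonlinear. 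So it suffices, for a suitable collection of candidate seeds, to decide whether some homogeneous pair of cliques places the seed in the required positions, and to output one if so.

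The engine is a closure (saturation) procedure. Given disjoint cliques $F_A,F_B$ that we insist lie in $A$ and $B$ respectively, we repeat: if some $v\notin F_A\cup F_B$ is adjacent to some but not all of $F_A$, then every homogeneous pair of cliques $(A,B)$ with $F_A\subseteq A$, $F_B\subseteq B$ must have $v\in B$ --- $v$ cannot lie outside $A\cup B$, since it sees part of $A$ hence would see all of $A\supseteq F_A$; and it cannot lie in the clique $A$, since it misses part of $F_A$ --- so we move $v$ into $F_B$, failing if $v$ is nonadjacent to some vertex of $F_B$. We do the symmetric step for $F_B$, and fail if a vertex is forced into both sides. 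On termination we have either failed (no homogeneous pair of cliques extends the seed) or reached a fixed point, at which point $(F_A,F_B)$ is the unique minimal homogeneous pair of cliques with $F_A\subseteq A$, $F_B\subseteq B$, provided $|F_A\cup F_B|\geq 3$ (which holds whenever the seed is a $C_4$). Correctness of the overall procedure is then immediate: if $G$ has a nonlinear homogeneous pair of cliques $(A,B)$, pick a $C_4$ inside $G[A,B]$; the closure from that quadruple returns a homogeneous pair of cliques contained in $(A,B)$ that still contains the quadruple, hence nonlinear; and any nonlinear pair the algorithm outputs is a genuine witness.

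The remaining, and main, difficulty is the running time: a naive implementation --- $\Theta(m^2)$ quadruple seeds, each closure costing $O(nm)$ --- is far too slow. The refinements of \cite{kingr08} (yielding $O(n^2m)$) and \cite{chudnovskyk11} (yielding $O(m^2)$, even for trigraphs) show that one need run the closure only from suitably few seeds, reconstructing the ``crossing'' neighbours from within the closure, and that with appropriate incremental data structures the closures can be amortised so that the total work is $O(m^2)$. I would simply invoke these results. The only other thing to watch is a handful of degenerate cases --- cliques of size one, the possibility $A\cup B=V(G)$, and maintaining disjointness and $|A\cup B|\geq 3$ throughout --- all of which are routine. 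Thus the statement follows from \cite{kingr08,chudnovskyk11} together with the observations above, and I expect the write-up to consist mostly of a pointer to \cite[Proposition~10]{chudnovskyk11}.
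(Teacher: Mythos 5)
Your proposal is correct and takes essentially the same route as the paper, which simply cites that the $O(n^2m)$ algorithm from \cite{kingr08} was shown in \cite[Proposition 10]{chudnovskyk11} to run in $O(m^2)$ time even for trigraphs. Your sketch of the seed-plus-saturation mechanism underlying those references is accurate, but the paper itself treats this lemma as a direct appeal to prior work rather than reproving it.
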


Now we need to find linear nonskeletal homogeneous pairs of cliques.  First we prove a structural result that renders the task almost trivial.

\begin{lemma}
Suppose a graph $G$ contains a nonskeletal linear homogeneous pair of cliques $(A,B)$.  Then $G$ contains three nonempty disjoint cliques $A_1$, $A_2$, $B_1$ such that
\begin{itemize*}
\item $|A_1|\geq |B_1|$.
\item Each of $A_1$, $A_2$, and $B_1$ is either a singleton or a homogeneous clique.
\item $A_1\cup A_2$ is a clique, $A_2\cup B_1$ is a clique, and there are no edges between $A_1$ and $B_1$.
\item $(A_1\cup A_2, B_1)$ is a nonskeletal linear homogeneous pair of cliques.
\end{itemize*}
\end{lemma}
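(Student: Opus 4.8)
The plan is to start from a nonskeletal linear homogeneous pair of cliques $(A,B)$ and repeatedly delete vertices that are irrelevant to nonskeletality until we arrive at a minimal structure, which we can then show has the desired shape. Since $(A,B)$ is skeletal if and only if the edges between $A$ and $B$ lie in a single clique, the hypothesis that $(A,B)$ is nonskeletal says precisely that $G[A,B]$ contains an induced $2K_2$: there exist $a_1, a_1' \in A$ and $b_1, b_1' \in B$ with $a_1b_1 \in E(G)$, $a_1'b_1' \in E(G)$, and $a_1b_1', a_1'b_1 \notin E(G)$. (Here I use $2K_2$ rather than $C_4$ because linearity rules out an induced $C_4$, so the obstruction to being skeletal in the linear case is exactly an induced $2K_2$.) First I would fix such a four-vertex configuration and name $A_2$ to be the set of vertices of $A$ adjacent (in $G$) to at least one vertex of $B$ that lies in a $2K_2$ witnessing nonskeletality — more carefully, I would take $A_2$ to be (a suitable clique containing) $a_1'$, $A_1$ to contain $a_1$, and $B_1$ to contain $b_1$; the point is that within $A\cup B$ the structure is just a cobipartite graph, so partitioning $A$ according to adjacency to the chosen witnesses and reducing to the smallest surviving witness is routine.

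The key steps, in order: (1) Observe that $(A,B)$ being a homogeneous pair means every vertex outside $A\cup B$ sees all or none of $A$ and all or none of $B$; hence any sub-pair $(A',B')$ with $A'\subseteq A$, $B'\subseteq B$ and $|A'\cup B'|\geq 3$ is again a homogeneous pair of cliques, and it is linear since $G[A'\cup B']$ is an induced subgraph of the $C_4$-free graph $G[A\cup B]$. (2) Use the induced $2K_2$ on $\{a_1, a_1', b_1, b_1'\}$; set $A_1 := \{a_1\}$, $A_2 := \{a_1'\}$, $B_1 := \{b_1\}$. Then $A_1\cup A_2 = \{a_1,a_1'\}$ is a clique (both in $A$), $A_2\cup B_1 = \{a_1', b_1\}$ — wait, we need $a_1' b_1 \in E$; so instead I would choose the witnesses so that the adjacency pattern is $a_1 b_1 \notin E$ and $a_1' b_1 \in E$, $a_1' b_1' \notin E$, i.e. relabel the $2K_2$ so that the edge used is $a_1'b_1$ and the two non-edges are $a_1b_1$ and $a_1'b_1'$; this is just a renaming of the four vertices of the $2K_2$. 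Then $A_2\cup B_1$ is a clique, $A_1\cup B_1$ has no edge, $(A_1\cup A_2, B_1) = (\{a_1,a_1'\},\{b_1\})$ has $|A_1\cup A_2\cup B_1| = 3$, and it contains the induced $2K_2$ on $\{a_1, a_1', b_1, b_1'\}$... but $b_1'\notin B_1$. So in fact to keep $(A_1\cup A_2, B_1)$ nonskeletal I need $|B_1|\geq 2$ or more vertices; this is the crux.

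(3) The honest construction: let $B_1$ consist of the vertices of $B$ not adjacent to $a_1$ (a homogeneous clique or singleton inside $B$ since $B$ is a clique), let $A_2$ be the vertices of $A$ adjacent to all of $B_1$, and let $A_1$ be the remaining vertices of $A$, namely those with a non-neighbour in $B_1$; arrange by symmetry that $|A_1|\geq|B_1|$ (possible because we may also swap the roles of the two sides and re-split, using $|A|\geq|B|$ from the relevant ambient assumption, or simply re-choose which side plays which role). Each of $A_1, A_2, B_1$ is a singleton or a homogeneous clique because it is a nonempty subset of a clique that is itself a homogeneous clique or singleton, and every vertex outside $A\cup B$ sees it entirely or not at all — one should double-check $A_2$ is nonempty (it contains $a_1'$) and $A_1$ is nonempty (it contains $a_1$) and $B_1$ is nonempty (it contains $b_1$). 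By construction $A_1\cup A_2\subseteq A$ is a clique, $A_2$ is complete to $B_1$ so $A_2\cup B_1$ is a clique, and $A_1$ is anticomplete to $B_1$. Finally $(A_1\cup A_2, B_1)$ is linear (being an induced sub-pair) and nonskeletal because the original $2K_2$ survives: $a_1, a_1'\in A_1\cup A_2$, $b_1\in B_1$, and some $b_1'\in B_1$ gives the second edge — here one must verify that the $2K_2$ can always be taken with both $B$-vertices non-adjacent to $a_1$, i.e. inside $B_1$; if the given $2K_2$ has its $a_1'$-neighbour $b_1'$ adjacent to $a_1$, then $\{a_1, b_1, b_1'\}$ together with... one chases a short case analysis to relocate the $2K_2$ into $B_1$.

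**Main obstacle.** The one genuinely fiddly point is step (3)'s last claim — ensuring that after defining $B_1$ as the $a_1$-non-neighbours in $B$, we still retain a full induced $2K_2$ inside $(A_1\cup A_2, B_1)$, and simultaneously that $|A_1|\geq|B_1|$ can be enforced. The first is a finite case analysis on which of the four $2K_2$-vertices land where after the re-partition (using that an induced $C_4$ is forbidden, so a bipartite $2K_2$ cannot "collapse"); the second follows by choosing at the outset the orientation of the pair so that the larger side is called $A$, or by observing that if $|B_1|>|A_1|$ one simply performs the symmetric construction starting from $B$. Everything else is bookkeeping about homogeneous cliques being inherited by nonempty subsets, which is immediate from the definition of a homogeneous pair of cliques.
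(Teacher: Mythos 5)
Your opening characterization of nonskeletality is incorrect, and the rest of the proposal rests on it.  You assert that a linear homogeneous pair of cliques $(A,B)$ is nonskeletal precisely when $G[A,B]$ contains an induced $2K_2$, i.e.\ $a_1,a_1'\in A$, $b_1,b_1'\in B$ with $a_1b_1,a_1'b_1'\in E$ and $a_1b_1',a_1'b_1\notin E$.  But those four vertices then induce a $C_4$ in $G[A\cup B]$, so $G[A\cup B]$ fails to be a linear interval graph: the $2K_2$ is exactly the obstruction to $(A,B)$ being \emph{linear}, not the obstruction to its being skeletal.  Under the lemma's hypothesis (linear and nonskeletal) no such $2K_2$ can exist, so there is nothing to ``start from.''  Conversely, a linear pair can easily be nonskeletal with no $2K_2$ at all: take $A=\{a_1,a_2,a_3\}$, $B=\{b_1\}$, with $a_1b_1$ the only $A$-$B$ edge.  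Here $\omega(G[A\cup B])=|A|=3$ and removing $a_1b_1$ leaves $\omega$ unchanged, so the pair is nonskeletal; it is linear; and $G[A,B]$ has a single edge.  The correct picture is that a linear pair is skeletal exactly when every $A$-$B$ edge lies in every maximum clique of $G[A\cup B]$, and nonskeletality can come simply from $A$ or $B$ being a maximum clique on its own.

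Your ``honest construction'' in step (3) has an independent defect as well: setting $B_1$ to be all non-neighbours of a chosen $a_1$ in $B$, and $A_1$ to be the vertices of $A$ with \emph{a} non-neighbour in $B_1$, does not make $A_1$ anticomplete to $B_1$ (a vertex of $A$ can see some but not all of $B_1$), nor does it make $A_1$ a singleton or homogeneous clique.  The paper avoids both traps by working directly in a linear interval representation of $G[A\cup B]$: after a WLOG step that disposes of the degenerate (already skeletal) case, it takes $a_p$ to be the leftmost vertex of $A$ lying in a maximum clique, $a_q$ the first non-twin to its right, and then $A_1=$ the twins of $a_p$, $A_2=$ the twins of $a_q$, and $B_1$ the vertices of $B$ seen by $a_q$ but not by $a_p$.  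Because $B_1$ is only the \emph{marginal} difference of two consecutive neighbourhoods rather than all of $B\setminus N(a_p)$, each of $A_1,A_2,B_1$ is automatically a singleton or homogeneous clique with the required completeness/anticompleteness, and $|A_1|\geq|B_1|$ follows from comparing the maximum cliques through $a_p$ and $a_q$.  Without first obtaining this kind of ``consecutive twin-classes'' structure, the construction cannot be made to satisfy the lemma's second and third bullet points.
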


\begin{proof}
Suppose the vertices of $G[A\cup B]$ are $a_1,\ldots, a_{|A|}, b_1,\ldots, b_{|B|}$ in linear order.

By swapping the names of $A$ and $B$, we can make an important assumption without loss of generality:  Either $A$ is a maximum clique in $G[A\cup B]$, or there is a maximum clique $X$ of $G[A\cup B]$ and some vertex in $B$ that sees some but not all of $X\setminus B$.  If we cannot assume this, then $\omega(G[A\cup B]) > \max\{|A|,|B|\}$ and there is a unique maximum clique $X$ in $(G[A\cup B])$.  Furthermore since $(G[A\cup B])$ is a linear interval graph, no vertex in $A\setminus X$ (resp.\ $B\setminus X$) has a neighbour in $B$ (resp.\ $A$), contradicting the assumption that $(A,B)$ is nonskeletal.

To construct $A_1$, $A_2$, and $B_1$ we first select two vertices $a_p$ and $a_q$ in $A$.  Let $p$ be minimum such that $a_p$ is in a maximum clique $X$ of $G[A\cup B]$; note that $p=1$ if $\omega(G[A\cup B]) = |A|$.  We claim that there is some minimum $q>p$ such that $\tilde N(a_p) \subset \tilde N(a_q)$, i.e.\ $a_p$ and $a_q$ are not twins.  If $q$ does not exist then by our above assumption either (i) $X=A$ and there are no edges between $A$ and $B$, a contradiction since $(A,B)$ is nonskeletal, or (ii) $|X|>|A|$ and no vertex in $B$ sees some but not all of $X\setminus B$, a contradiction since in this case $X$ must be the unique maximum clique of $G[A\cup B]$.

Let $A_1$ be $a_p$ along with its twins, and let $B_1$ be the set of vertices that see that see $a_p$ but not $a_q$.  Clearly $B_1 \subseteq B$, and observe that $|A_1|\geq |B_1|$, otherwise $a_p$ would not be in a maximum clique in $G[A\cup B]$, whereas $a_q$ would.  So let $A_2$ be $q$ along with its twins.  An example is shown in Figure \ref{fig:linearhpoc}.

\begin{figure}
\begin{center}
\includegraphics[scale=0.6]{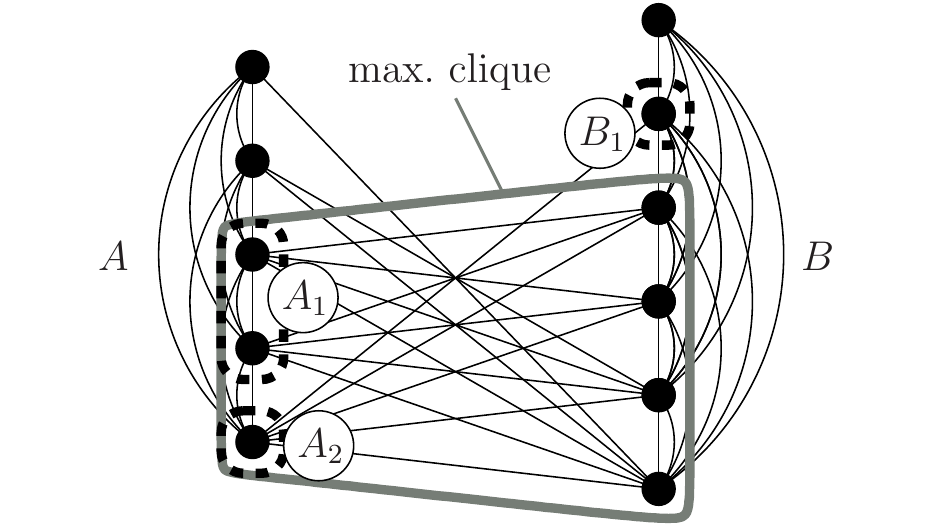}
\end{center}
\caption{If a linear homogeneous pair of cliques is not skeletal, we can find within it a homogeneous pair of cliques with a very specific structure.}
\label{fig:linearhpoc}
\end{figure}

To see that $(A_1\cup A_2, B_1)$ is a homogeneous pair of cliques, it is enough to show that $(\{a_p,a_q\},B_1)$ is a homogeneous pair of cliques.  By the structure of linear interval graphs, every vertex in $A\setminus (A_1\cup A_2)$ sees either all of $B_1$ or none of $B_1$, so $B_1$ is a singleton or a homogeneous clique.  Therefore $(\{a_p,a_q\},B_1)$ is a homogeneous pair of cliques, following from the fact that $(A,B)$ is a homogeneous pair of cliques.  Furthermore since $B_1$ is complete to $A_2$ and anticomplete to $A_1$, and $|A_1|\geq |B_1|$, it is easy to see that $(A_1\cup A_2, B_1)$ is a nonskeletal linear homogeneous pair of cliques (in particular, $A_1\cup A_2$ is a maximum clique in $G[A_1\cup A_2\cup B_1]$).
\end{proof}

Thus when searching for a linear nonskeletal homogeneous pair of cliques, we can focus on this specific structure.

\begin{lemma}\label{lem:nonskeletalhpalgorithmic}
Let $G$ be a graph containing no nonlinear homogeneous pair of cliques.  Then in $O(nm)$ time we can find some nonskeletal linear homogeneous pair of cliques $(A,B)$ in $G$, or determine that $G$ is skeletal.
\end{lemma}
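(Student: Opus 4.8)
The plan is to reduce the search to the rigid configuration supplied by the preceding structural lemma. That lemma says that, under our hypothesis, $G$ fails to be skeletal exactly when it contains three cliques $A_1, A_2, B_1$ with $A_1\cup A_2$ and $A_2\cup B_1$ cliques, no edges between $A_1$ and $B_1$, $|A_1|\ge|B_1|\ge 1$, and each of $A_1,A_2,B_1$ a singleton or a homogeneous clique. So we only need an $O(nm)$ procedure that hunts for such a triple, returns $(A_1\cup A_2,\,B_1)$ if it finds one, and otherwise declares $G$ skeletal --- the latter being legitimate by the contrapositive of the structural lemma.

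First I would compute the partition of $V(G)$ into twin classes; this costs well under $O(nm)$. This is the right opening move: the homogeneous cliques of $G$ are precisely its cliques of pairwise twins of size between $2$ and $n-1$, and, unwinding the lemma's proof, in the target triple $A_1$ and $A_2$ may be taken to be entire twin classes while $B_1$ lies inside one. Concretely, the triple is pinned down by an ordered pair of vertices $a_p,a_q$ with $\tilde N(a_p)\subset \tilde N(a_q)$ (so $a_p$ sees $a_q$): one then sets $A_1$ to the twin class of $a_p$, $A_2$ to the twin class of $a_q$, and $B_1:=\tilde N(a_q)\setminus\tilde N(a_p)$, which is nonempty and automatically anticomplete to $A_1$ and complete to $A_2$. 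The only remaining conditions to verify for such a pair are that $B_1$ induces a clique, that every vertex outside $A_1\cup A_2\cup B_1$ sees all or none of $A_1\cup A_2$ and all or none of $B_1$, and that $|A_1|\ge|B_1|$ --- each an $O(n+m)$ test, and passing all of them certifies $(A_1\cup A_2,\,B_1)$ as a nonskeletal (necessarily linear) homogeneous pair of cliques.

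So the algorithm iterates over candidate pairs $(a_p,a_q)$ with strict closed-neighbourhood containment, runs the linear-time verification on each, and returns the first success; if none succeeds, the structural lemma forces $G$ to be skeletal. The main obstacle is keeping the total cost at $O(nm)$: there may be $\Theta(n^2)$ comparable pairs, so a blind sweep costs $\Theta(n^2m)$. The fix I would pursue is to show that for each fixed $a_q$ it suffices to examine only a bounded number of partners $a_p$ --- morally, those whose closed neighbourhood is inclusion-maximal among closed neighbourhoods strictly inside $\tilde N(a_q)$, so that the forced set $B_1$ is as small and as ``block-like'' as possible --- and that this pruned search still meets any witness, exploiting the linear-interval structure that a linear homogeneous pair of cliques carries (the same structure used in the proof of the structural lemma). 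Proving that this pruning is safe, and that with it each $a_q$ is handled in $O(n+m)$ time, is where the real work lies; the rest is routine maintenance of twin-class sizes and constant-time adjacency queries so that each verification is genuinely linear.
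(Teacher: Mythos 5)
You correctly reduce the search to the rigid configuration supplied by the preceding structural lemma: three cliques $A_1$, $A_2$, $B_1$ with $A_1\cup A_2$ and $A_2\cup B_1$ cliques, $A_1$ anticomplete to $B_1$, $|A_1|\ge|B_1|$, each of the three a singleton or homogeneous clique, and $(A_1\cup A_2,B_1)$ a nonskeletal linear homogeneous pair. And you correctly identify that the triple is pinned down by a pair $a_p,a_q$ with $\tilde N(a_p)\subset\tilde N(a_q)$, with $B_1=\tilde N(a_q)\setminus\tilde N(a_p)$ and $A_1,A_2$ the twin classes of $a_p,a_q$. But the algorithm you actually describe iterates over $\Theta(n^2)$ comparable pairs, which costs $\Theta(n^2 m)$, and you explicitly concede that the pruning idea meant to bring this down to $O(nm)$ --- bounding the number of partners $a_p$ per fixed $a_q$ --- is unproved and is ``where the real work lies''. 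That is precisely the step that is missing, so as written this is not a proof of the $O(nm)$ bound.

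The paper avoids this difficulty by changing what is enumerated: instead of guessing the pair $(a_p,a_q)$, it guesses the single vertex $b_1$. Having partitioned $V(G)$ into twin classes once in $O(m)$ time, for each candidate $b_1$ one deletes its twin class $B_1$ and recomputes twin classes of the residual graph in $O(m)$ time; if two formerly distinct twin classes have merged, the merge identifies $a_p$ and $a_q$ (this is exactly the condition that $B_1=\tilde N(a_q)\setminus\tilde N(a_p)$), and the remaining checks --- that $a_1,a_2,b_1$ induce a path and $|A_1|\ge|B_1|$ --- are $O(1)$ given twin-class sizes. That is $n$ iterations of $O(m)$ work, hence $O(nm)$ total, with no need for the unverified pruning. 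I would suggest reorganizing your algorithm around this enumeration-by-$b_1$ trick rather than trying to salvage the pairwise sweep.
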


Observe that Lemma \ref{lem:findhp} follows immediately from this lemma and Lemma \ref{lem:nonlinearhpalgorithmic}.

\begin{proof}
We find a nonskeletal homogeneous pair of cliques $(A,B)$ by finding the cliques $A_1$, $A_2$, and $B_1$ guaranteed by the previous lemma, as follows.  First we partition the vertices of $G$ into maximal homogeneous cliques in $O(m)$ time.  After that we just need to find three vertices $a_1$, $a_2$, and $b_1$ inducing a path such that $a_1$ has at least as many twins as $b_1$, no vertex sees $a_1$ but not $a_2$, and $b_1$ and its twins are the only vertices that see $a_2$ but not $a_1$.  We can easily do this in $O(nm)$ time by first guessing $b_1$, then deleting $b_1$ and checking for the appropriate resulting twins in $O(m)$ time.
\end{proof}

Finally, we remark that we can find a skeletal homogeneous pair of cliques in $O(m)$ time.  First we search for twins in time $O(m)$ -- twins immediately lead to a homogeneous pair of cliques if the graph has at least four vertices.  But the existence of a skeletal homogeneous pair $(A,B)$ implies the existence of twins:  Either $(A\cap \Omega(A,B), B\cap \Omega(A,B))$ is a homogeneous pair of cliques with all edges between them, or $(A,B)$ is a homogeneous pair of cliques with no edges between them.  Either case leads to twins.  With the results of this section, this implies the following:

\begin{theorem}
In $O(m^2)$ we can find a homogeneous pair of cliques in a graph or determine that none exists.
\end{theorem}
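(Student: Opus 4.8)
The plan is to exploit the trichotomy of homogeneous pairs of cliques: every homogeneous pair of cliques in $G$ is either nonlinear, or nonskeletal and linear, or skeletal (recall from the discussion preceding Theorem~\ref{thm:skelhp} that every skeletal homogeneous pair of cliques is linear, so these three possibilities are mutually exhaustive). The approach is to search for a representative of each kind in turn, reusing the algorithmic machinery already in hand.

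First I would run the algorithm of Lemma~\ref{lem:nonlinearhpalgorithmic}, which in $O(m^2)$ time either returns a nonlinear homogeneous pair of cliques or certifies that $G$ has none; if it succeeds we are done. Otherwise $G$ has no nonlinear homogeneous pair of cliques, which is exactly the hypothesis of Lemma~\ref{lem:nonskeletalhpalgorithmic}, so I would invoke it next: in $O(nm)$ time it either returns a nonskeletal (necessarily linear) homogeneous pair of cliques or certifies that $G$ is skeletal, i.e.\ that $G$ contains no nonskeletal homogeneous pair of cliques at all. Isolated vertices are discarded first in $O(n)$ time, since they never lie in a nontrivial part of a homogeneous pair; thereafter $n = O(m)$, so this step also costs $O(m^2)$.

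In the remaining case $G$ has neither a nonlinear nor a nonskeletal homogeneous pair of cliques, so any homogeneous pair of cliques it has must be skeletal. Here I would use the observation recorded just above the theorem statement: the existence of a skeletal homogeneous pair $(A,B)$ forces the existence of a pair of twins in $G$ — either $(A \cap \Omega(A,B),\, B \cap \Omega(A,B))$ is a homogeneous pair of cliques with all edges present between its parts (valid since skeletality forces $|\Omega(A,B)| > \max\{|A|,|B|\} \geq 2$, hence total size $\geq 3$), or $(A,B)$ has no edges between its parts; in both situations any two vertices of the larger part have the same closed neighbourhood. So I would search for twins in $O(m)$ time by refining $V(G)$ into closed-neighbourhood equivalence classes. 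If a twin pair turns up and $|V(G)| \geq 4$ it immediately yields a homogeneous pair of cliques; if no twin pair exists then no skeletal homogeneous pair exists either, and combined with the previous two steps we conclude that $G$ has no homogeneous pair of cliques. The finitely many graphs on at most three vertices are settled by inspection. Summing, the total running time is $O(m^2) + O(nm) + O(m) = O(m^2)$.

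Essentially all of the genuine difficulty is packaged into the cited subroutines — above all Lemma~\ref{lem:nonlinearhpalgorithmic}, whose $O(m^2)$ bound dominates the whole computation. Given those, the only points that need care are checking that the three-way case split is exhaustive and, crucially, that each negative answer returned by a subroutine really does exclude its entire class of homogeneous pairs (so that the final "no homogeneous pair of cliques" conclusion is justified); once this bookkeeping is done the theorem assembles immediately.
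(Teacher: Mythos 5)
Your proposal is correct and follows the paper's proof essentially verbatim: run the $O(m^2)$ algorithm of Lemma~\ref{lem:nonlinearhpalgorithmic} for a nonlinear homogeneous pair of cliques, then the $O(nm)$ algorithm of Lemma~\ref{lem:nonskeletalhpalgorithmic} for a nonskeletal linear one, then test for twins in $O(m)$ time to catch any skeletal pair. Your two additions — discarding isolated vertices so that $n = O(m)$, and the observation that skeletality forces $|\Omega(A,B)| > \max\{|A|,|B|\}$ (which is what guarantees $(A\cap\Omega(A,B),\, B\cap\Omega(A,B))$ has total size at least three) — are valid fill-ins of details the paper leaves implicit rather than departures from its argument.
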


%%%%%%%%%%%%%%%%%%%%%%%%%%%%%%%%%%%%%%%%%%%%%%%%%%%%%%%%%%%%%%%%%%%%%%%%%%%%%%%%
%%%%%%%%%%%%%%%%%%%%%%%%%%%%%%%%%%%%%%%%%%%%%%%%%%%%%%%%%%%%%%%%%%%%%%%%%%%%%%%%
%%%%%%%%%%%%%%%%%%%%%%%%%%%%%%%%%%%%%%%%%%%%%%%%%%%%%%%%%%%%%%%%%%%%%%%%%%%%%%%%
%%%%%%%%%%%%%%%%%%%%%%%%%%%%%%%%%%%%%%%%%%%%%%%%%%%%%%%%%%%%%%%%%%%%%%%%%%%%%%%%
%%%%%%%%%%%%%%%%%%%%%%%%%%%%%%%%%%%%%%%%%%%%%%%%%%%%%%%%%%%%%%%%%%%%%%%%%%%%%%%%
%%%%%%%%%%%%%%%%%%%%%%%%%%%%%%%%%%%%%%%%%%%%%%%%%%%%%%%%%%%%%%%%%%%%%%%%%%%%%%%%
%%%%%%%%%%%%%%%%%%%%%%%%%%%%%%%%%%%%%%%%%%%%%%%%%%%%%%%%%%%%%%%%%%%%%%%%%%%%%%%%
%%%%%%%%%%%%%%%%%%%%%%%%%%%%%%%%%%%%%%%%%%%%%%%%%%%%%%%%%%%%%%%%%%%%%%%%%%%%%%%%
%%%%%%%%%%%%%%%%%%%%%%%%%%%%%%%%%%%%%%%%%%%%%%%%%%%%%%%%%%%%%%%%%%%%%%%%%%%%%%%%
%%%%%%%%%%%%%%%%%%%%%%%%%%%%%%%%%%%%%%%%%%%%%%%%%%%%%%%%%%%%%%%%%%%%%%%%%%%%%%%%
%%%%%%%%%%%%%%%%%%%%%%%%%%%%%%%%%%%%%%%%%%%%%%%%%%%%%%%%%%%%%%%%%%%%%%%%%%%%%%%%
%%%%%%%%%%%%%%%%%%%%%%%%%%%%%%%%%%%%%%%%%%%%%%%%%%%%%%%%%%%%%%%%%%%%%%%%%%%%%%%%
%%%%%%%%%%%%%%%%%%%%%%%%%%%%%%%%%%%%%%%%%%%%%%%%%%%%%%%%%%%%%%%%%%%%%%%%%%%%%%%%
\section{Conclusion}

The glaring open problem is Conjecture \ref{con:local} for claw-free graphs.  The only remaining case is that of compositions of pseudo-line strips, whose structure closely resembles that of line graphs.  It is possible that a refinement of the approach taken in \cite{chudnovskykps12} would do the trick.  For questions relating to more general local versions of the conjectures, we refer the reader to \cite{edwardsk12b}.

%%%%%%%%%%%%%%%%%%%%%%%%%%%%%%%%%%%%%%%%%%%%%%%%%%%%%%%%%%%%%%%%%%%%%%%%%%%%%%%%
%%%%%%%%%%%%%%%%%%%%%%%%%%%%%%%%%%%%%%%%%%%%%%%%%%%%%%%%%%%%%%%%%%%%%%%%%%%%%%%%
%%%%%%%%%%%%%%%%%%%%%%%%%%%%%%%%%%%%%%%%%%%%%%%%%%%%%%%%%%%%%%%%%%%%%%%%%%%%%%%%
%%%%%%%%%%%%%%%%%%%%%%%%%%%%%%%%%%%%%%%%%%%%%%%%%%%%%%%%%%%%%%%%%%%%%%%%%%%%%%%%
%%%%%%%%%%%%%%%%%%%%%%%%%%%%%%%%%%%%%%%%%%%%%%%%%%%%%%%%%%%%%%%%%%%%%%%%%%%%%%%%
%%%%%%%%%%%%%%%%%%%%%%%%%%%%%%%%%%%%%%%%%%%%%%%%%%%%%%%%%%%%%%%%%%%%%%%%%%%%%%%%
%%%%%%%%%%%%%%%%%%%%%%%%%%%%%%%%%%%%%%%%%%%%%%%%%%%%%%%%%%%%%%%%%%%%%%%%%%%%%%%%
%%%%%%%%%%%%%%%%%%%%%%%%%%%%%%%%%%%%%%%%%%%%%%%%%%%%%%%%%%%%%%%%%%%%%%%%%%%%%%%%
%%%%%%%%%%%%%%%%%%%%%%%%%%%%%%%%%%%%%%%%%%%%%%%%%%%%%%%%%%%%%%%%%%%%%%%%%%%%%%%%
%%%%%%%%%%%%%%%%%%%%%%%%%%%%%%%%%%%%%%%%%%%%%%%%%%%%%%%%%%%%%%%%%%%%%%%%%%%%%%%%
%%%%%%%%%%%%%%%%%%%%%%%%%%%%%%%%%%%%%%%%%%%%%%%%%%%%%%%%%%%%%%%%%%%%%%%%%%%%%%%%
%%%%%%%%%%%%%%%%%%%%%%%%%%%%%%%%%%%%%%%%%%%%%%%%%%%%%%%%%%%%%%%%%%%%%%%%%%%%%%%%
%%%%%%%%%%%%%%%%%%%%%%%%%%%%%%%%%%%%%%%%%%%%%%%%%%%%%%%%%%%%%%%%%%%%%%%%%%%%%%%%
\section{Acknowledgements}

The authors are very grateful to Maria Chudnovsky, Anna Galluccio, and Bruce Shepherd for their extremely helpful input on this work.

\bibliography{masterbib}
\end{document}